\declaretheoremstyle[
numberwithin=chapter,
headfont=\normalfont\bfseries,
notefont=\normalfont\scshape, 
notebraces={$\lbrack$}{$\rbrack$},
bodyfont=\normalfont\itshape,
mdframed={
    splittopskip=20pt, 
    skipabove = 16pt, 
    innertopmargin=6pt,
    innerbottommargin=6pt},
postheadspace=\newline
]{thstyle}
\newtheoremstyle{remstyle}
  {12pt} 
  {12pt} 
  {} 
  {} 
  {\normalfont\bfseries\itshape} 
  {.} 
  {.5em} 
  {} 
\newtheoremstyle{defstyle}
  {12pt} 
  {12pt} 
  {} 
  {} 
  {\normalfont\bfseries} 
  {.} 
  {.5em} 
  {} 
\declaretheorem[
    style=thstyle,
    name=Theorem,
    numberwithin=section
]{theorem}
\declaretheorem[sibling = theorem]{lemma}
\declaretheorem[sibling = theorem, style = defstyle]{definition}
\declaretheorem[sibling = theorem]{proposition}
\declaretheorem[sibling = theorem, style = remstyle]{remark}
\title{Classification of the anyon sectors of Kitaev's quantum double model}
\author[1]{Alex Bols \thanks{email: \href{abols01@phys.ethz.ch}{abols01@phys.ethz.ch}}}
\author[2]{Siddharth Vadnerkar \thanks{email: \href{svadnerkar@ucdavis.edu}{svadnerkar@ucdavis.edu}; site: \href{https://sites.google.com/view/siddharthvadnerkar}{https://sites.google.com/view/siddharthvadnerkar}}}
\affil[1]{Institute for Theoretical Physics, ETH Z{\"u}rich}
\affil[2]{Department of Physics, University of California, Davis}
\begin{document}

\setcounter{tocdepth}{2}
\appto\appendix{\addtocontents{toc}{\protect\setcounter{tocdepth}{1}}}

\appto\listoffigures{\addtocontents{lof}{\protect\setcounter{tocdepth}{1}}}
\appto\listoftables{\addtocontents{lot}{\protect\setcounter{tocdepth}{1}}}

\date{\today}

\maketitle
\begin{abstract}
We give a complete classification of the anyon sectors of Kitaev's quantum double model on the infinite triangular lattice and for finite gauge group $G$, including the non-abelian case. As conjectured, the anyon sectors of the model correspond precisely to equivalence classes of irreducible representations of the quantum double algebra of $G$.
\end{abstract}

\tableofcontents

\subsection*{Acknowledgements}
A.B. was supported by the Simons Foundation for part of this project. S.V. was supported by the NSF grant DMS-2108390. We thank Yoshiko Ogata for pointing out a serious error in the first version of the manuscript. We also thank Gian Michele Graf, Pieter Naaijkens, and Bruno Nachtergaele for useful conversations.

\subsection*{Data availability and conflict of interest}
Data availability is not applicable to this article as no new data were created or analyzed in this study. There are no known conflicts of interest.

\section{Introduction}

Over the decades since the discovery of the integer quantum Hall effect, the notion of topological phases of matter has come to be a central paradigm in condensed matter physics. In contrast to the conventional Landau-Ginsburg paradigm of spontaneous symmetry breaking, topological phases of matter are not distinguished by any local order parameter. Instead they are characterised by a remarkably wide variety of topological properties, ranging from toplogically non-trivial Bloch bands to topological ground-state degeneracy. What all these topological materials seem to have in common is that they are characterised by robust patterns in the entanglement structure of their ground states \cite{Li2008-le, Fidkowski2010-gj}.

Within this zoo of topological phases, the \emph{topologically ordered} phases in two dimensions have received a great deal of attention. The reason for this is in part because of their possible applications to quantum computation \cite{Kitaev2003-qr, Freedman1998-tz, Nayak2008-ef}. Topologically ordered materials exhibit robust ground state degeneracy depending on the genus of the surface on which they sit, and they support anyonic excitations which have braiding statistics that differs from that of bosons or fermions.

With the ever increasing experimental control of quantum many-body systems in the lab in mind, it is desirable to understand topological order from a microscopic point of view. On the one hand, an important role is played in this endeavor by exactly solvable quantum spin models that exhibit topologcial order, such as Kitaev's quantum double models \cite{Kitaev2003-qr} and, more generally, the Levin-Wen models \cite{levin2005string}. On the other hand, one wants to obtain a good understanding of the mathematical structures involved in characterising topological orders in general models \cite{Kitaev2006-ts, 
Shi2019-tl, Kawagoe2020-vt}. The latter problem has proven to be a rich challenge for mathematical physics \cite{Naaijkens2010-aq, Naaijkens2012-fh, Cha2018-ke, Cha2020-rz, Ogata2022-wp}. These works have yielded a rigorous, albeit still incomplete, description of topological order in gapped quantum spin systems in two dimensions. They provide robust definitions of anyon types, their fusion rules, and their braiding statistics, as well as a rigorous understanding of how these data fit together in a braided $C^*$-tensor category.

In this paper we study Kitaev's quantum double models from this mathematical physics point of view. The quantum double models can be thought of as discrete gauge theories with a finite gauge group $G$. These models are of particular interest because for non-abelian $G$, they are paradigmatic examples of models that support \emph{non-abelian anyons} \cite{Kitaev2003-qr}. We take a first step towards integrating the quantum double models for general $G$ into the mathematical framework referred to above. In particular, we classify all the anyon types of these models.

Roughly, an anyon type corresponds to a superselection sector, \ie a unitary equivalence class of representations of the observable algebra that are unitarily equivalent to the ground state representation when restricted to the complement of any cone-like region of the plane. We call such sectors \emph{anyon sectors}. Intuitively an anyon sector contains states that can be made to look like the ground state locally by moving the anyon somewhere else, but globally, the anyon is always detectable by braiding operations.

In order to completely classify the anyon sectors of the quantum double model we construct states $\omega_{s}^{RC;u}$ labeled by an irreducible representation $RC$ of the quantum double algebra $\caD(G)$, a site $s$, and additional microscopic data $u$. These states look like the ground state when evaluated on any local observable whose support does not contain or encircle the site $s$. We characterise these states by showing that they are the unique states that satisfy certain local constraints depending on the site $s$ and the data $RC$ and $u$. In particular, the states $\omega_s^{RC;u}$ are pure. In the particular case where $RC$ corresponds to the trivial representation of the quantum double algebra, the state $\omega_s^{RC;u}$ is the frustration free ground state, so we get existence and uniqueness of the frustration free ground state as a corollary, a result which was first proven in \cite{naaijkens2012anyons}.

We continue by showing that the pure states $\omega_{s}^{RC;u}$ belong to different superselection sectors if and only if they differ in their $RC$ label. It follows that the GNS representations of the states $\omega_{s}^{RC;u}$ give us a collection of pairwise disjoint irreducible representations $\pi^{RC}$ labeled by irreducible representations of the quantum double algebra. By relating the representations $\pi^{RC}$ to so-called amplimorphism representations \cite{Naaijkens2015-xj, vecsernyes1994quantum, fuchs1994quantum, nill1997quantum}, we show that these representations do in fact belong to anyon sectors. Finally, we show that any anyon sector must contain one of the states $\omega_s^{RC;u}$, thus showing that all anyon sectors contain one of the $\pi^{RC}$.

The paper is structured as follows. In Section \ref{sec:results} we set up the problem and state our main results. In Section \ref{sec:excited states} we construct the states $\omega_{s}^{RC;u}$ that `contain an anyon' at site $s$ and prove that these states are pure. Section \ref{sec:anyon representations} is devoted to constructing for each irreducible representation $RC$ of the quantum double algebra a representation $\pi^{RC}$ of the observable algebra that contains the states $\{\omega_s^{RC;u}\}_{s, u}$, and proving that these representations are disjoint. Finally, in section \ref{sec:completeness} we show that any anyon sector contains one of the $\pi^{RC}$, thus showing that the $\pi^{RC}$ exhaust all anyon sectors of the model.

\section{Setup and main results}
\label{sec:results}

\subsection{Algebra of observables}

Let $\Gamma$ be the regular triangular lattice (see Figure \ref{fig:oriented lattice snapshot}) whose set of vertices $\latticevert$ we regard as a subset of the plane $\R^2$ such that nearest neighbouring vertices are separated by unit distance.

The set of oriented edges of $\Gamma$ is identified with the set of ordered pairs of neighbouring vertices:
$$\orientededges := \{ (v_0, v_1) \in \latticevert \times \latticevert \, : \, v_0 \, \text{and} \, v_1 \, \text{are nearest neighbours}  \}.$$
We let $\latticeedge \subset \orientededges$ consist of the oriented edges pointing from left to right as in Figure \ref{fig:oriented lattice snapshot}. Note that $\latticeedge$ contains exactly one oriented edge for every edge of $\Gamma$. We denote the set of faces of $\Gamma$ by $\latticeface$.

\begin{figure}
    \centering
    \includegraphics[ width=0.4\textwidth]{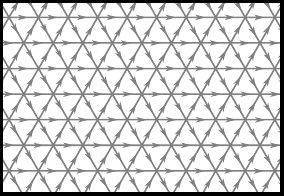}
    \caption{A snapshot of the triangular lattice with all edges oriented towards the right.}
    \label{fig:oriented lattice snapshot}
\end{figure}

Any oriented edge $e = (v_0, v_1)$ has an initial vertex $\partial_0 e = v_0$, a final vertex $\partial_1 e = v_1$, and an opposite oriented edge $\bar e = (v_1, v_0)$. The vertices $\latticevert$ are equipped with the graph distance $\mathrm{dist}( \cdot, \cdot )$ and similarly for the faces (regarded as elements of the dual graph). \\

We fix a finite group $G$ and associate to each edge $e \in \latticeedge$ a Hilbert space $\hilb_e = \mathbb{C}^{|G|}$ and a matrix algebra $\cstar[e] = \mathrm{End}(\hilb_e)$. For any finite $S \subset \latticeedge$  we have a Hilbert space $\hilb_S = \otimes_{e \in S} \hilb_e$ and the algebra of operators $\cstar[S] = \mathrm{End}(\hilb_S)$ on this space.

We employ the following graphical representation of states $\ket{\alpha}$. For any edge $e$, the basis state $\ket{g}_{e}$ of $\hilb_e$ is represented by the edge $e$ being crossed from right to left by an oriented \emph{string} labeled $g$. An equivalent representation of $\ket{g}_e$ is the edge $e$ being crossed from left to right by a string labeled $\bar g$, see Figure \ref{fig:graphical_representation}. The basis element $\ket{1}_e$ is represented by the edge $e$ not being crossed by any string at all. A tensor product of several of such basis states is represented by a figure where each participating edge is crossed by a labeled oriented string by the rules just described. See Figure \ref{fig:graphical_representation} for an example.

\begin{figure}
    \centering
    \includegraphics[width=0.8\textwidth]{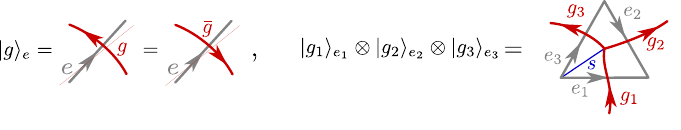}
    \caption{A graphical representation of the basis vector $\ket{g}_e \in \hilb_e$ and of the tensor product vector $\ket{g_1}_{e_1} \otimes \ket{g_2}_{e_2} \otimes \ket{g_3}_{e_3} \in \hilb_{\{e_1, e_2, e_3\}}$ for edges $e_1, e_2, e_3$ belonging to a single face $f$. We have also indicated a site $s$ with $f(s) = f$ in blue (see Section \ref{sec:excited states}).}
    \label{fig:graphical_representation}
\end{figure}

Let $S_1, S_2 \subset \latticeedge$ be finite sets of edges such that $S_1 \subset S_2$, then there is a natural embedding $\iota_{S_1, S_2}: \cstar[S_1] \hookrightarrow \cstar[S_2]$ given by tensoring with the identity, i.e.
$$\iota_{S_1, S_2}(O) = O \otimes \mathds{1}_{\cstar[{S_2 \setminus S_1}]}$$
for all $O \in \cstar[S_1]$. With these embeddings the algebras $\cstar[S]$ for finite $S \subset \latticeedge$ form a directed system of matrix algebras. Their direct limit is called the \emph{local algebra}, and is denoted by $\cstar[loc]$. The norm closure of the local algebra
$$\cstar = \overline{\cstar[loc]}^{||\cdot||}$$
is called the \emph{quasi-local algebra} or \emph{observable algebra}.

Similarly, for any (possibly infinite) $S \subset \latticeedge$ we have the algebra $\cstar[S] \subset \cstar$ of quasi-local observables supported on $S$.\\

A \emph{state} on $\cstar$ is a positive linear functional $\omega: \cstar \rightarrow \mathbb{C}$ with $\omega(\mathds{1}) =1$. Given a state $\omega$ on $\cstar$ there is a representation $\pi_{\omega} : \cstar \rightarrow \mathcal{B}({\hilb_{\omega}})$ for some separable Hilbert space $\hilb_{\omega}$ containing a unit vector $\ket{\Omega}$ that is cyclic for the representation $\pi_{\omega}$ and such that $\omega(O) = \langle \Omega, \pi_{\omega} (O) \Omega\rangle$ for all $O \in \cstar$. The triple $(\pi_{\omega}, \hilb_{\omega}, \ket{\Omega})$ satisfying these properties is unique up to unitary equivalence, and is called the GNS triple of the state $\omega$.\\

Throughout this paper, we will use the word `projector' to mean a self-adjoint operator that squares to itself. A collection of projectors is called \emph{orthogonal} if the product of each pair of projectors in the set vanishes. A set of projectors is called \emph{commuting} if each pair of projectors in the set commutes with each other.

\subsection{The quantum double Hamiltonian and its frustration free ground state}
\label{sec:model}

We say an edge $e$ belongs to a face $f$ and write $e \in f$ when $e$ is an edge on the boundary of $f$. Similarly, we say a vertex $v$ belongs to $f$ and write $v \in f$ if $v$ neighbours $f$, and we say a vertex $v$ belongs to an edge $e$ and write $v \in e$ if $v$ is the origin or endpoint of $e$.\\

We fix for each edge $e$ an orthonormal basis $\{ \ket{g} \}_{g \in G}$ for $\hilb_e$ labeled by elements of the group $G$. For $g \in G$ we denote its inverse by $\overline{g}$, and we define the left group action $L_e^h := \sum_{g \in G} \ket{hg}\bra{g}$, the right group action $R^h_e := \sum_{g \in G} \ket{g \bar h} \bra{g}$, and the projectors $T_e^g := | g \rangle \langle g |$.

For each vertex $v$ and edge $e$ such that $v$ belongs to $e$ we set $L^h(e, v) = L^h_e$ if $\partial_0 e = v$ and $L^h(e, v) = R^{h}_e$ if $\partial_1 e = v$. For each $h \in G$ we define a unitary $A_v^h := \prod_{e \in v} L^h(e, v)$. These are called the \emph{gauge transformations at $v$}. Graphically,
\begin{center}
    \includegraphics[width=0.6\textwidth]{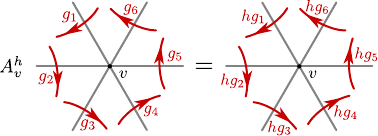}
\end{center}
We then define the \emph{gauge constraint} $A_v := \frac{1}{\abs{G}} \sum_{h \in G} A_v^h$, which is the projector enforcing gauge invariance at the site $v$. Similarly, for each face $f$ and edge $e \in f$ we set $T^h(e, f) = T_e^h$ if $f$ is to the left of $e$, and $T^h(e, f) = T_e^{\bar h}$ if $f$ is to the right of $e$. If the face $f$ has bounding edges $(e_1, e_2, e_2)$ ordered counterclockwise (with arbitrary initial edge), then we define the \emph{flat gauge projector} $B_f := \sum_{ \substack{ g_1, g_2, g_3 \in G : \\ g_1 g_2 g_3 = 1 }} T^{g_1}(e_1, f) \, T^{g_2}(e_2, f) \, T^{g_3}(e_3, f)$ which is also a projector. Note that this expression does not depend on which edge goes first in the triple $(e_1, e_2, e_3)$. Graphically,
\begin{center}
    \includegraphics[width=0.6\textwidth]{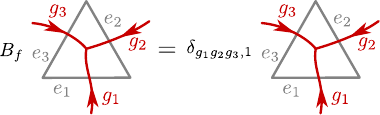}
\end{center}
 The set of projectors $\{ A_v \}_{v \in \latticevert} \cup \{ B_f \}_{f \in \latticeface}$ all commute with each other.\\

The quantum double Hamiltonian is the formal sum (interaction) of commuting projectors 
\begin{equation*}
    H = \sum_{v \in \latticevert} ( \mathds{1} - A_v ) + \sum_{f \in \latticeface} ( \mathds{1} - B_f ).
\end{equation*}

\begin{definition} \label{def:ffgs}
    A state $\omega$ on $\cstar$ is a \ffgs{} of $H$ if $\omega(A_v) = \omega(B_f) = 1$ for all $v \in \latticevert$ and all $f \in \latticeface$.
\end{definition}

If a state $\omega$ satisfies $\omega(A_v) = 1$ then we say it is \emph{gauge-invariant} at $v$, and if $\omega(B_f) = 1$ then we say it is \emph{flat} at $f$. A \ffgs{} is a state that is gauge invariant and flat everywhere.

The following Proposition was first proven for the Toric code (the case $G = \Z_2$) in \cite{alicki2007statistical}, and for general $G$ in \cite{naaijkens2012anyons}. See also \cite{chuah2024boundary} which uses general results on commuting projector Hamiltonians from \cite{jones2023local}. We obtain a new proof of this Proposition as a Corollary to Proposition \ref{prop:characterisation of S^RCu}.
\begin{proposition}[\cite{naaijkens2012anyons}]
\label{prop:ffgsunique}
    The quantum double Hamiltonian has a unique frustration free ground state.
\end{proposition}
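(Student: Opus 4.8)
The plan is to establish existence and uniqueness separately, both by exploiting the structure of a frustration-free ground state as a state annihilated by every local defect projector.

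\textbf{Existence.} First I would construct the frustration free ground state explicitly. On any finite subregion $S$ of the lattice one can form the operator $P_S = \prod_{v} A_v \prod_{f} B_f$ over all vertices $v$ and faces $f$ whose supporting projectors act within (a suitable closure of) $S$; since all the $A_v$ and $B_f$ commute and are projectors, $P_S$ is itself a projector. One checks that these finite-volume projectors are nonzero (e.g.\ by exhibiting a vector in the range, such as the image of a product basis vector under the group averaging), and that they are compatible under the net of inclusions, so that the functionals $O \mapsto \mathrm{Tr}(P_S O)/\mathrm{Tr}(P_S)$ converge (along the net, using compatibility to get a well-defined limit on the local algebra, then extending by norm-density) to a state $\omega_0$ on $\cstar$. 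By construction $\omega_0(A_v) = \omega_0(B_f) = 1$ for every $v$ and $f$, so $\omega_0$ is a frustration free ground state. Alternatively, and more in the spirit of the paper, I would simply invoke the forthcoming construction of $\omega_s^{RC;u}$ with $RC$ the trivial representation: the excerpt states this yields a frustration free ground state, giving existence for free.

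\textbf{Uniqueness.} Suppose $\omega$ is any frustration free ground state. The key observation is that $\omega(A_v) = 1$ together with $0 \le A_v \le \mathds{1}$ forces, via Cauchy--Schwarz in the GNS representation, that $\pi_\omega(A_v)\ket{\Omega} = \ket{\Omega}$, and likewise $\pi_\omega(B_f)\ket{\Omega} = \ket{\Omega}$; equivalently $\omega(A_v^h O) = \omega(O) = \omega(O A_v^h)$ for all $h$ and all $O \in \cstar$, and $\omega(B_f O) = \omega(O B_f) = \omega(O)$. This means $\omega$ is determined by its values on the subalgebra of "gauge-invariant and flat" observables: concretely, for any local $O$ supported in a finite region, one can replace $O$ by $P\, O\, P$ where $P$ is a product of the relevant $A_v$ and $B_f$ without changing $\omega(O)$. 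The strategy is then to show that on each finite region the compressed algebra $P_S\, \cstar[S]\, P_S$ is (isomorphic to) the matrix algebra on the range of $P_S$, which — after checking that $P_S$ has rank one in the appropriate local picture, or more carefully, after taking the limit over larger and larger regions so that boundary degrees of freedom are quenched — pins down $\omega$ on all local observables, hence on all of $\cstar$ by density. A clean way to package this: show that the two-sided ideal-like condition above implies $\omega = \omega \circ E$ for a conditional expectation $E$ onto the frustration-free subalgebra, and that this subalgebra is trivial enough (the net of finite-volume projectors $P_S$ shrinking to a rank-one projection on the relevant quotient) to admit only one state.

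\textbf{Main obstacle.} The delicate point is the last one: controlling the "frustration-free subalgebra" and showing it leaves no residual freedom. Naively $P_S$ on a finite region with boundary is \emph{not} rank one — there are uncondensed boundary fluxes and gauge modes — so uniqueness is genuinely an infinite-volume statement and one must argue that these boundary degrees of freedom are washed out in the thermodynamic limit, e.g.\ by showing that for any $\epsilon$ and any local $O$ there is a finite region $S$ large enough that $P_S O P_S$ is within $\epsilon$ of a scalar multiple of $P_S$. This is exactly the kind of estimate that the paper defers to its Proposition~\ref{prop:characterisation of S^RCu}, whose proof presumably sets up the relevant local constraint analysis; so in practice I would reduce Proposition~\ref{prop:ffgsunique} to that characterization result, taking $RC$ trivial, rather than redo the estimate from scratch.
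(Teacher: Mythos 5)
Your proposal is correct and ultimately takes the same route as the paper: Proposition \ref{prop:ffgsunique} is obtained as a corollary of Proposition \ref{prop:characterisation of S^RCu} by taking $RC$ to be the trivial representation of $\caD(G)$, for which the constraint $\omega(D_{s_0}^{RC;u})=1$ reduces to $\omega(A_{v_0})=\omega(B_{f_0})=1$ and $\S_{s_0}^{RC;u}$ becomes exactly the set of frustration free ground states. You also correctly isolate the genuine difficulty --- the finite-volume projectors $P_S$ are far from rank one because of residual boundary gauge and flux modes, so one must show the bulk is insensitive to them --- which is precisely what the paper's string-net analysis (Lemma \ref{lem:purerest} and Proposition \ref{prop:characterisation of S^RCu}) supplies.
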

We will denote the unique \ffgs{} by $\omega_0$, and let $(\pi_0, \hilb_0, \ket{\Omega_0})$ be its GNS triple. Note that since $\omega_0$ is the \emph{unique} \ffgs{} of the quantum double model it is a pure state, and therefore $\pi_0$ is an irreducible representation.

\subsection{Classification of anyon sectors}

In the context of infinite volume quantum spin systems or field theories, types of anyonic excitations over a ground state $\omega_0$ have a very nice mathematical characterisation. They correspond to the irreducible representations of the observable algebra that satisfy a certain superselection criterion w.r.t. the GNS representation $\pi_0$ of the ground state (\cite{Doplicher1971-jd}, \cite{Doplicher1974-hb}, \cite{fredenhagen1989superselection}, \cite{fredenhagen1992superselection}, \cite{frohlich1990braid}). In our setting of quantum spin systems, the appropriate superselection criterion was first formulated in \cite{Naaijkens2010-aq} in the special case of the Toric code.\\

The cone with apex at $a \in \R^2$, axis $\hat v \in \R^2$ of unit length, and opening angle $\theta \in (0, 2\pi)$ is the open subset of $\R^2$ given by
\begin{equation*}
 	\Lambda_{a, \hat v, \theta} := \{ x \in \R^2 \, : \, (x - a) \cdot \hat v > \norm{x-a} \cos (\theta/2)   \}. \quad\quad\quad\quad \adjincludegraphics[width=2.5cm,valign=c]{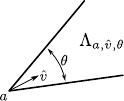}
\end{equation*}
Any subset $\Lambda \subset \R^2$ of this form will be called a cone.

For any subset $S \subset \R^2$ of the plane we denote by $\overline S \subset \latticeedge$ the set of edges whose midpoints lie in $S$, and write $\cstar[S] := \cstar[\overline S]$ for the algebra of observables supported on $\overline S$. With this definition we have $\overline S \cup \overline{ S^c } = \latticeedge$ for any $S \subset \R^2$.

\begin{definition}
\label{def:anyonsector}
    An irreducible representation $\pi: \cstar \rightarrow \mathcal{B}(\hilb)$ is said to satisfy the superselection criterion w.r.t. $\pi_0$ if for any cone $\Lambda$, there is a unitary $U_\Lambda : \hilb_0 \rightarrow \hilb$ such that
    $$\pi(A) = U_\Lambda \pi_0 (A) U^*_\Lambda$$
    for all $A\in \cstar[\Lambda]$. We will call such a representation an \emph{anyon representation} for $\omega_0$. A unitary equivalence class of anyon representations we call an \emph{anyon sector}.
\end{definition}

Let us denote by $\caD(G)$ the quantum double algebra of $G$. The irreducible representations of $\caD(G)$ are, up to isomorphism, uniquely labeled by a conjugacy class $C$ of $G$ together with an irreducible representation $R$ of the centralizer $N_C$ of $C$ (see for example \cite{Gould1993-bt}). We denote the irreducible representation of $\caD(G)$ corresponding to conjugacy class $C$ and irreducible representation $R$ by $RC$.

Our main result is the complete classification of the anyon sectors of $\omega_0$ in terms of the irreducible representations of $\caD(G)$. This result was first obtained for the Toric code in \cite{naaijkens2013kosaki} using very different methods. See also \cite{Fiedler2015-na} where it is indicated how the methods of \cite{naaijkens2013kosaki} can be applied to quantum double models for abelian $G$.
\begin{theorem} \label{thm:main theorem}
    For each irreducible representation $RC$ of $\caD(G)$ there is an anyon representation $\pi^{RC}$. The representations $\{ \pi^{RC} \}_{RC}$ are pairwise disjoint, and any anyon representation is unitarily equivalent to one of them.
\end{theorem}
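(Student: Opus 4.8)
The plan is to prove Theorem \ref{thm:main theorem} in three essentially independent parts, matching the structure announced in the introduction, and I would organize the argument around the explicit excited states $\omega_s^{RC;u}$.

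\textbf{Construction and purity of the anyon states.} First I would construct, for each site $s$ and each irreducible representation $RC$ of $\caD(G)$ together with microscopic data $u$, a state $\omega_s^{RC;u}$ on $\cstar$ that agrees with the frustration-free ground state $\omega_0$ on every local observable supported away from (and not encircling) $s$. Concretely, one takes a ``ribbon'' emanating from $s$ to infinity, applies the ribbon operators of the quantum double model dressed by the representation $RC$ at the site $s$, and takes a suitable limit to obtain a state. The key technical claim here (which the paper isolates as Proposition \ref{prop:characterisation of S^RCu}) is that $\omega_s^{RC;u}$ is the \emph{unique} state satisfying the list of local constraints: it is gauge-invariant and flat away from $s$, and at $s$ it satisfies the modified constraint determined by $RC$ and $u$. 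Uniqueness is proven by a net-of-conditional-expectations argument: one exhibits, for each finite region, a projection onto the constraint-satisfying subspace, shows these projections have one-dimensional range in the relevant quotient, and concludes that any two states agreeing with the constraints agree on all local observables, hence are equal. Purity of $\omega_s^{RC;u}$ follows because the GNS representation is generated by a minimal projection. The trivial-representation case recovers Proposition \ref{prop:ffgsunique}.

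\textbf{Disjointness.} Next I would show that the GNS representations $\pi^{RC}$ built from the states $\{\omega_s^{RC;u}\}_{s,u}$ are pairwise disjoint precisely when the $RC$ labels differ, and that all choices of $s$ and $u$ within a fixed $RC$ give unitarily equivalent representations. For the latter, moving $s$ and changing $u$ is implemented by finite ribbon operators, hence by unitaries in $\cstar$ (up to limits handled by the cone structure), so the GNS representations are equivalent. For disjointness across distinct $RC$, the natural tool is a topological charge measurement: one constructs, for a large loop $\gamma$ encircling $s$, a family of mutually orthogonal projectors in $\cstar[\text{ann}(\gamma)]$ — the algebra on an annular region around $\gamma$ — labeled by irreducible representations of $\caD(G)$, such that $\pi^{RC}$ assigns expectation value $1$ to the projector labeled $RC$ and $0$ to the others, uniformly as $\gamma\to\infty$. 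Since these projectors are quasi-local observables, any intertwiner between $\pi^{RC}$ and $\pi^{R'C'}$ with $RC\neq R'C'$ must vanish, giving disjointness.

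\textbf{Anyon property and completeness.} Then I would verify that each $\pi^{RC}$ actually satisfies the superselection criterion of Definition \ref{def:anyonsector}: given a cone $\Lambda$, one chooses a ribbon for $\omega_s^{RC;u}$ running out to infinity \emph{inside the complement of $\Lambda$}, so that $\omega_s^{RC;u}$ literally agrees with $\omega_0$ on $\cstar[\Lambda]$; together with irreducibility (from purity) and a standard argument upgrading state-equality on a cone complement to unitary equivalence of the restricted representations, this produces the required $U_\Lambda$. The cleanest route is via the amplimorphism/ribbon-representation formalism of \cite{Naaijkens2015-xj, vecsernyes1994quantum, nill1997quantum}, identifying $\pi^{RC}$ with a localized amplimorphism of charge $RC$. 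Finally, completeness: given any anyon representation $\pi$, restricting to a cone complement gives a representation unitarily equivalent to $\pi_0$ there, and one shows $\pi$ contains a vector state satisfying the local constraints of some $\omega_s^{RC;u}$ — the charge is extracted by the same annular measurement operators used for disjointness, which in any anyon sector must take a definite value on some irreducible label $RC$ (because the total charge on a large annulus is quantized and conserved). Having located such a state inside $\pi$, irreducibility forces $\pi\cong\pi^{RC}$.

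\textbf{Main obstacle.} I expect the hard part to be the uniqueness statement underlying Proposition \ref{prop:characterisation of S^RCu} and, relatedly, the completeness argument: controlling the net of finite-volume constraint projectors and showing their ranges collapse to one dimension in the appropriate quotient requires a careful combinatorial analysis of how the $A_v$, $B_f$, and the site-$s$ constraints interact along ribbons, and the non-abelian structure of $\caD(G)$ means the ribbon operators no longer commute in the simple way they do for abelian $G$, so bookkeeping of the centralizer data $u$ and of the ordering of ribbon operators is delicate. Establishing that the annular charge-measurement projectors are well-defined quasi-local observables with the stated sharp values on every anyon sector — rather than just on the $\pi^{RC}$ — is the crux of completeness and will likely demand the bulk of the technical work.
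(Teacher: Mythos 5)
Your outline follows the paper's strategy in its main contours: explicit states $\omega_s^{RC;u}$ characterized as the unique states satisfying local constraints (with the finite-volume constraint spaces collapsing to a single bulk state once the boundary labels are quotiented out), closed-ribbon charge detectors for disjointness, amplimorphisms for the superselection criterion, and a localization argument for completeness. The first two parts of your sketch are essentially the paper's argument. Two later steps, however, would not go through as written.

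For the anyon property you propose to route the ribbon through the complement of the cone $\Lambda$, so that $\omega_s^{RC;u}$ and $\omega_0$ agree on $\cstar[\Lambda]$, and then invoke ``a standard argument upgrading state-equality on a cone complement to unitary equivalence of the restricted representations.'' No such standard argument is available here: equality of the two states on $\cstar[\Lambda]$ only identifies the cyclic subrepresentations of $\pi^{RC}|_{\cstar[\Lambda]}$ and $\pi_0|_{\cstar[\Lambda]}$ generated by the respective GNS vectors, and upgrading this to a single unitary $U_\Lambda:\hilb_0\to\hilb$ intertwining the full restrictions requires structural information about $\pi_0(\cstar[\Lambda])''$ (this is where Haag duality normally enters, and it has not been established for non-abelian $G$). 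The amplimorphism route you mention parenthetically as ``cleanest'' is in fact the essential one: Proposition \ref{prop:reduction to H_0} produces a representation on $\hilb_0$ that is unitarily equivalent to $\chi_{\rho}^{RC}$ as a representation of all of $\cstar$ and literally equals $\pi_0$ on $\cstar[\Lambda^c]$, which is what makes Proposition \ref{prop:the pi^RC are anyon representations} work.

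In the completeness step you extract the charge of an arbitrary anyon representation with the annular detectors, asserting they ``must take a definite value\dots because the total charge on a large annulus is quantized and conserved.'' This is circular: Lemma \ref{lem:detection Lemma} gives sharp values of $K_{\beta_n}^{RC}$ only on states already known to satisfy the local constraints defining $\S_{s_0}^{RC}$, and the whole difficulty is to produce such a state inside an arbitrary anyon representation. The paper does this in two stages that are absent from your sketch: Proposition \ref{prop:equivtofinite} uses two cones covering the plane, the asymptotic agreement of vector states of an irreducible representation (Corollary 2.6.11 of \cite{Bratteli2012-gd}), and strong limits of constraint projectors in $\pi(\cstar)''$ to find a vector state violating only finitely many of the $A_v, B_f$; Proposition \ref{prop:sweeping} then moves these finitely many violations onto a single site using the ribbon identities of Lemma \ref{lem:sweep identities}. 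Only after these reductions can the decomposition of states in $\overline{\S}_{s_*}$ (Proposition \ref{prop:omegadecomp}) be applied to conclude that a pure state in the sector belongs to a definite $\pi^{RC}$.
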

This Theorem is restated and proven in Section \ref{sec:completeness}, Theorem \ref{thm:classification theorem}.

\section{Anyon states}
\label{sec:excited states}

In this section we construct states containing a single anyonic excitation of arbitrary type, and show that these states are pure. These are states that satisfy the frustration free ground state constraints everywhere except at a fixed site $\site$, where instead they are constrained by some Wigner projector onto an irreducible representation for the quantum double action on that site (see Remark \ref{rem:quantum double at s}). We completely classify the states satisfying such constraints by first classifying all states that satisfy appropriate local versions of these constraints. We note that the methods presented in this section are sufficient to establish that anyon types of the quantum double model are in one-to-one correspondence with the irreducible representations of the quantum double algebra $\caD(G)$ in the context of the entanglement bootstrap program \cite{Shi2019-tl}.\\

\subsection{Ribbon operators, gauge configurations, and gauge transformations} \label{subsec:preliminary notions}

In this subsection we introduce sites, triangles, and ribbons in order to then describe the ribbon operators introduced by \cite{Kitaev2003-qr}. These ribbon operators will play a crucial role from Section \ref{sec:anyon representations} onward. We then introduce the notion of local gauge transformations.

Let $\Gamma^*$ be the dual lattice to $\Gamma$. To each oriented edge $e \in \orientededges$ we associate a unique oriented dual edge $\de \in (\vec{\Gamma^*})^E$ with orientation such that along $e$, the dual edge $\de$ passes from right to left. Here
$$(\vec{\Gamma^*})^E = \{ (f_0, f_1) \in \latticeface \times \latticeface \, : \, f_0 \, \text{and} \, f_1 \, \text{are neighbouring faces}  \}$$
is the set of oriented dual edges of $\Gamma$.

\subsubsection{Sites and triangles}

A \emph{site} $s$ is a pair $s = (v, f)$ of a vertex and a face such that $v$ is on the boundary of $f$. We write $v(s) = v$ for the vertex of $s$ and $f(s) = f$ for the face of $s$. We represent a site graphically by a line from the site's vertex to the center of its face.

A direct triangle $\tau = (s_0, s_1, e)$ consists of a pair of sites $s_0, s_1$ that share the same face, and the edge $e \in \latticeedge$ that connects the vertices of $s_0$ and $s_1$. We write $\partial_0 \tau = s_0$ and $\partial_1 \tau = s_1$ for the initial and final sites of the direct triangle $\tau$, and $e_{\tau} = (v(s_0), v(s_1))$ for the oriented edge associated to $\tau$, see Figure \ref{fig:direct and dual triangles}. Note that $e$ and $e_{\tau}$ need not be the same, $e$ always has the left to right orientation used in the definition of $\latticeedge$ while $e_{\tau}$ is oriented in the direction of the direct triangle. The opposite triangle to $\tau$ is the direct triangle $\bar \tau = (s_1, s_0, e)$. A direct triangle $\tau = (s_0, s_1, e)$ is \emph{positive} if the face $f = f(s_0) = f(s_1)$ lies to the left of $e_{\tau}$ and \emph{negative} otherwise.

Similarly, a dual triangle $\tau = (s_0, s_1, e)$ consists of a pair of sites $s_0, s_1$ that share the same vertex, and the edge $e \in \latticeedge$ whose associated dual edge $\de$ connects the faces of $s_0$ and $s_1$. We write again $\partial_0 \tau = s_0$ and $\partial_1 \tau = s_1$ and write $e^*_{\tau} = (f(s_0), f(s_1))$ for the oriented dual edge associated to $\tau$. We also write $e_{\tau}$ for the oriented edge whose dual is $e_{\tau}^*$. Note again that $e^*$ and $e_{\tau}^*$ need not be the same. The orientation of $e^*$ is determined by the left to right orientation of $e \in \latticeedge$ while $e^*_{\tau}$ is oriented in the direction of the dual triangle. We define the opposite dual triangle by $\bar \tau = (s_1, s_0, e)$. A dual triangle $\tau = (s_0, s_1, e)$ is called \emph{positive} if the vertex $v = v(s_0) = v(s_1)$ lies to the right of $e^*_{\tau}$ and $\emph{negative}$ otherwise.

\begin{figure}
    \centering
    \includegraphics[ width=0.4\textwidth]{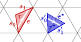}
    \caption{A direct triangle $\tau = (s_0, s_1, e)$ in red, and a dual triangle $\tau^* = (s'_0, s'_1, e')$ in blue. The dual edge $e^*$ associated to $e'$ is indicated as a dotted blue line. Note  that for this particular dual triangle, $e^*$ is oriented opposite to the white arrow, which instead follows the orientation of $(e_{\tau^*})^*$.}
    \label{fig:direct and dual triangles}
\end{figure}

To each dual triangle $\tau = (s_0, s_1, e)$ we associate unitaries $L^h_{\tau}$ supported on the edge $e$. The way $L^h_{\tau}$ acts depends on whether the edge $e^*$ dual to $e$ satisfies $e^* = (f(s_0), f(s_1))$ or $e^* = (f(s_1), f(s_0))$, and on whether $v(s_0) = \partial_0 e$ or $v(s_0) = \partial_1 e$ as follows. If $e^* = (f(s_0), f(s_1))$ and $v(s_0) = \partial_0 e$ then $L^h_{\tau} := L_e^h$. If $e^* = (f(s_0), f(s_1))$ and $v(s_0) = \partial_1 e$ then $L^h_{\tau} := R_e^{\bar h}$. If $e^* = (f(s_1), f(s_0))$ and $v(s_0) = \partial_0 e$ then $L^h_{\tau} := L_e^{\bar h}$ ($L^h_e, R^h_e$ were defined in section \ref{sec:model}). Finally, If $e^* = (f(s_1), f(s_0))$ and $v(s_0) = \partial_1 e$ then $L^h_{\tau} := R_e^h$. Similarly, to each direct triangle $\tau = (s_0, s_1, e)$ we associate projectors $T^g_\tau := T^g_{e}$ if $e = (v(s_0), v(s_1))$ and $T^g_{\tau} := T_e^{\bar g}$ if $e = (v(s_1), v(s_0))$.

\subsubsection{Ribbons}

We define a finite ribbon $\rho:= \{\tau_i\}_{i = 1}^l$ to be an ordered tuple of triangles such that $\partial_1 \tau_i = \partial_0 \tau_{i+1}$ for all $i = 1, \cdots, l-1$, and such that for each edge $e \in \latticeedge$ there is at most one triangle $\tau_i$ for which $\tau_i = (\partial_0 \tau_i, \partial_1 \tau_i, e)$. The empty ribbon is denoted by $\ep$. For non-empty ribbons $\rho$ we write $\partial_0 \rho := \partial_0 \tau_1$ for the initial site of the ribbon and $\partial_1 \rho := \partial_1 \tau_n$ for the final site of the ribbon. See Figure \ref{fig:finite ribbon}. If all triangles belonging to a ribbon $\rho$ are direct, we say that $\rho$ is a direct ribbon, and if all triangles belonging to a ribbon $\rho$ are dual, we say that $\rho$ is a dual ribbon.

A ribbon is said to be positive if all of its triangles are positive, and negative if all of its triangles are negative. All non-empty ribbons are either positive or negative.

\begin{figure}
    \centering
    \includegraphics[ width=0.6\textwidth]{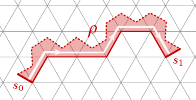}
    \caption{A positive finite ribbon $\rho$ with $\partial_0 \rho = s_0$ and $\partial_1 \rho = s_1$. The direct path of $\rho$ consists of the solid red edges.}
    \label{fig:finite ribbon}
\end{figure}

If we have two tuples $\rho_1 = \{ \tau_i \}_{i=1}^{l_1}$ and $\rho_2 = \{ \tau_i \}_{i = l_1 + 1}^{l_2}$ then we can concatenate them to form a tuple $\rho = \{ \tau_i \}_{i = 1}^{l_1 + l_2}$. We denote this concatenated tuple by $\rho = \rho_1 \rho_2$. Note that if $\rho$ is a finite ribbon and $\rho = \rho_1 \rho_2$, then $\rho_1$ and $\rho_2$ are automatically finite ribbons and (if $\rho_1$ and $\rho_2$ are non-empty), $\partial_0 \rho = \partial_0 \rho_1$, $\partial_1 \rho = \partial_1 \rho_2$ and $\partial_1 \rho_1 = \partial_0 \rho_2$.

The orientation reversal of a ribbon $\rho = \{ \tau_i \}_{i = 1}^l$ is the ribbon $\bar \rho = \bar\tau_l \cdots \bar \tau_1$. We say a finite ribbon $\rho$ is \emph{closed} if $\partial_0 \rho = \partial_1 \rho$.

The support of a ribbon $\rho = \{ \tau_i = (s_0^{(i)}, s_1^{(i)}, e_i)  \}_{i=1}^l$ is $\supp(\rho) := \{ e_i \}_{i = 1}^l.$ If $\supp(\rho) \subseteq S \subseteq \latticeedge$ we say $\rho$ is supported on $S$.

\subsubsection{Direct paths}
\label{sec:direct paths}
A direct path $\gamma = \{ e_i \}_{i=1}^l$ is an ordered tuple  of oriented edges $e_i \in \orientededges$ such that $\partial_1 e_i = \partial_0 e_{i+1}$ for $i = 1, \cdots, l-1$. We write $\partial_0 \gamma = \partial_0 e_1$ for the initial vertex, and $\partial_1 \gamma = \partial_1 e_l$ for the final vertex of $\gamma$. Given two direct paths $\gamma_1 = \{ e_i \}_{i = 1}^{l_1}$ and $\gamma_2 = \{ e_i \}_{i= l_1 + 1}^{l_1 + l_2}$ such that $\partial_1 \gamma_1 = \partial_0 \gamma_2$ we can concatenate them to form a new direct path $\gamma = \{ e_i \}_{i = 1}^{l_1+l_2}$. We denote the concatenated path by $\gamma = \gamma_1 \gamma_2$. The orientation reversal of a direct path $\gamma = \{ e_i \}_{i=1}^l$ is the direct path $\bar \gamma = \bar e_l \cdots \bar e_1$. The support of a direct path $\gamma = \{ e_i \}_{i=1}^l$ is
$$\supp(\gamma) := \{ e \in \latticeedge \, : \, e = e_i \,\, \text{or} \,\, \bar e = e_i \,\, \text{for some} \,\, i = 1, \cdots, l  \}.$$
If $\supp(\gamma) \subseteq S \subseteq \latticeedge$ we say $\gamma$ is supported on $S$.

To each ribbon $\rho$ we can associate a direct path as follows. Let $\rho = \{ \tau_1, \cdots, \tau_l  \}$ be a finite ribbon, and let $J = \{ j_1, \cdots, j_{l'} \} \subset \{1, \cdots, l\}$ be the ordered subset such that $\tau_{j}$ is a direct triangle if and only if $j \in J$. Then $\rho^{dir} := \{ e_{\tau_j} \,: \, j \in J  \}$ is the direct path of $\rho$. To see that this is indeed a direct path, take indices $j_{\nu}, j_{\nu+1} \in J$ and suppose $j_{\nu+1} = j_{\nu} + m$. Then we want to show that $\partial_1 e_{\tau_{j_{\nu}}} = \partial_0 e_{\tau_{j_{\nu+1}}}$. By construction all triangles $\tau_{j_{\nu} + n}$ for $n = 1, \cdots, m-1$ are dual and therefore $v = v(\partial_0 \tau_{j_{\nu}+n}) = v(\partial_1 \tau_{j_{\nu}+n})$ are equal for all these $n$. We therefore have $\partial_1 e_{\tau_{j_{\nu}}} = v(\partial_1 \tau_{j_{\nu}}) = v$ and $\partial_0 e_{\tau_{j_{\nu}+m}} = v(\partial_0 \tau_{j_{\nu+1}}) = v$ as required. We have $\bar \rho^{dir} = \overline{ \rho^{dir} }$, and if $\rho$ is supported in $S \subseteq \latticeedge$ then $\rho^{dir}$ is also supported in $S$.

\subsubsection{Ribbon operators}

Here we describe the ribbon operators introduced by \cite{Kitaev2003-qr}, and state some of their elementary properties. For proofs and many more properties, see Appendix \ref{app:ribbonprops} of this paper or appendices B and C of \cite{Bombin2007-uw}. To each ribbon $\rho$ we associate a ribbon operator $F^{h,g}_\rho$ as follows. If $\epsilon$ is the trivial ribbon, then we set $F_{\epsilon}^{h, g} = \delta_{1, g} \mathds{1}$. For ribbons composed of a single direct triangle $\tau$ we put $F^{h,g}_\tau =  T_\tau^g$. For ribbons composed of a single dual triangle $\tau$, we put $F^{h,g}_{\tau} = \delta_{g,1} L_{\tau}^h$. For longer ribbons the ribbon operators are defined inductively as
\begin{equation} \label{eq:F inductive def}
    F^{h,g}_\rho = \sum_{k \in G} F^{h, k}_{\rho_1} F^{\overline{k}hk,\overline{k}g}_{\rho_2}
\end{equation}
for $\rho = \rho_1 \rho_2$. It follows from the discussion at the beginning of appendix \ref{app:ribbonprops} that this definition is independent of the way $\rho$ is split into two smaller ribbons. By construction, the ribbon operator $F_{\rho}^{h, g}$ is supported on $\supp(\rho)$. Let us define
$$T^g_\rho := F^{1,g}_\rho, \qquad L^h_\rho := \sum_{g \in G} F^{h,g}_\rho$$
so that $F^{h,g}_\rho = L^h_\rho T^g_\rho = T^g_\rho L^h_\rho$ (Lemma \ref{eq:F breaks into LT}).\\

We define gauge transformations $A_s^h$ and flux projectors $B_s^g$ at site $s$ in terms of the ribbon operators as follows:
$$A_s^h := F^{h,1}_{\rho_{\star}(s)}, \qquad B_s^g := F^{1, g}_{\rho_{\triangle}(s)}$$
where $\rho_{\triangle}(s)$ (resp. $\rho_{\star}(s)$) is the unique counterclockwise closed direct (dual) ribbon with end sites at $s$, see Figure \ref{fig:elementary direct and dual ribbons}. It is easily verified that $A_s^{h_1} A_s^{h_2} = A_s^{h_1 h_2}$ for all $h_1, h_2 \in G$, so the gauge transformations at $s$ form a representation of $G$. Similarly, one verifies that $B_s^{g_1} B_s^{g_2} = \delta_{g_1, g_2} B_s^{g_1}$ for all $g_1, g_2 \in G$. We further note that the gauge transformations $A_s^h$ depend only on the vertex $v(s)$, so we may put $A_v^h := A_s^h$ for any site $s$ such that $v = v(s)$ and speak of the gauge transformations at the vertex $v$. Similarly, the projectors $B_s^1$ onto trivial flux depend only on the face $f(s)$ so we may put $B_f^1 = B_{s}^1$ for any site $s$ such that $f = f(s)$.

\begin{remark} \label{rem:quantum double at s}
    For each site $s$ the operators $A_s^h$ and $B_s^g$ generate a realisation of the \emph{quantum double algebra} of $G$. This fact justifies the name of the model, and will be central to our analysis.
\end{remark}

The projectors $A_v, B_f$ appearing in the quantum double Hamiltonian can now be written as follows:
$$ A_v = \frac{1}{\abs{G}} \sum_h A_v^{h}, \quad B_f = B_f^1.$$
They are the projectors onto states that are gauge invariant at $v$, and that have trivial flux at $f$, respectively.

\begin{figure}
    \centering
    \includegraphics[ width=0.4\textwidth]{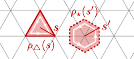}
    \caption{The elementary direct ribbon $\rho_{\triangle}(s)$ associated to the site $s$, and the elementary dual ribbon $\rho_{\star}(s')$ associated to the site $s'$.}
    \label{fig:elementary direct and dual ribbons}
\end{figure}

\subsubsection{Gauge configurations and gauge transformations}

It is very helpful to think of the frustration free ground state of the quantum double model as a string-net condensate, see \cite{levin2005string}. In what follows we establish the language of string-nets, which in this case correspond to gauge configurations. For any $S \subseteq \latticeedge$ we denote by $\gc[S]$ the set of maps $\alpha : S \rightarrow G$. We will denote by $\alpha_e$ the evaluation of $\alpha$ on an edge $e \in S$. We call such maps \emph{gauge configurations} on $S$. Let us write $\orientededges[S] = \{ e \in \orientededges \, : \, e \in S \,\, \text{or} \,\, \bar e \in S \}$ for the set of oriented edges corresponding to $S$. Any gauge configuration $\alpha$ on $S \subseteq \latticeedge$ extends to a function $\alpha : \orientededges[S] \rightarrow G$ on oriented edges by setting $\alpha_{\bar e} = \bar \alpha_{e}$. The meaning of $\alpha_e$ is the parallel transport of a discrete gauge field as one traverses the edge $e$.\\ 

For any finite $V \subset \latticevert$ we define $\gauge[V]$ to be the group of unitaries generated by $\{ A_{v}^{g_v} \, : \, v \in V , \,\,  g_v \in G  \}$. Since $A_v^g$ and $A_{v'}^{g'}$ commute whenever $v \neq v'$, any element $U \in \gauge[V]$ is uniquely determined by an assignment $V \ni v \mapsto g_v \in G$ of a group element to each vertex in $V$ so that
$$U = U[\{ g_v \}] = \prod_{v \in V} \, A_v^{g_v}.$$
We call $\gauge[V]$ the group of gauge transformations on $V$.\\

If each $U \in \gauge[V]$ is supported on a set $S \subseteq \latticeedge$ then the gauge group $\gauge[V]$ acts on the gauge configurations $\gc[S]$ as follows. The gauge transformation $U = U[\{g_v\}] \in \gauge[V]$ acts on a gauge configuration $\al \in \gc[S]$, yielding a new gauge configuration $\al' := U(\al) \in \gc[S]$ given by $\al'_e = g_{\partial_0 e} \, \al_e \, \bar g_{\partial_1 e}$, where we set $g_v = 1$ whenever $v \not\in V$.\\

If $S \subset \latticeedge$ is finite then we let $\caH_{S} := \bigotimes_{e \in S} \caH_e$. The set of gauge configurations $\gc[S]$ then labels an orthonormal basis of $\caH_{S}$ given by $\ket{\al} := \bigotimes_{e \in S} \ket{\al_e}$. If the gauge transformations $\gauge[V]$ for some finite $V \subset \latticevert$ are supported in $S$ then these gauge transformations act on the Hilbert space $\caH_{S}$ as $U \ket{\al} = \ket{ U(\al) }$, i.e. Gauge transformations map basis states to basis states.\\

\subsection{Local gauge configurations and boundary conditions} \label{subsec:local gauge configurations and boundary conditions}

Recall that $\mathrm{dist}(\cdot, \cdot)$ is the graph distance on $\latticevert$. We fix an arbitrary site $\site = (v_0, f_0)$ and define (see figure \ref{fig:regions}):
\begin{align*}
\vregion(\site) &:= \{ v \in \latticevert \, : \,  \mathrm{dist}(v,v_0) \leq n\},\\
\fregion(\site) &:= \{f \in \latticeface \, : \, \exists v \in f \text{ such that } v \in \vregion \}, \\  
\eregion(\site) &:= \{e \in \latticeedge  \, : \, \exists f \in \fregion \text{ such that } e \in f \},\\
\partial \eregion(\site) &:= \{e \in \eregion \, : \,  \exists! f \in \fregion  \text{ with } e \in f \}, \\
\partial \vregion(\site) &:=  \vregion[n+1] \setminus \vregion.
\end{align*}
Note that these regions depend on the choice of an origin $\site$. Throughout this paper, we will want to consider different sites as the origin. In order to unburden the notation we will nevertheless drop $\site$ from the notation and simply write $\vregion, \fregion, \eregion$ and $\partial \eregion$ whenever it is clear from context which site is to serve as the origin.\\

\begin{figure}
    \centering
    \includegraphics[ width=0.4\textwidth]{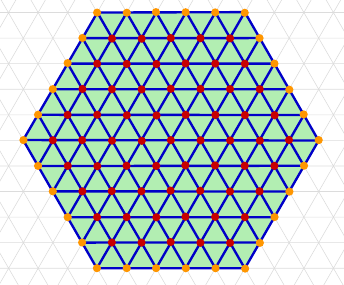}
    \caption{The sets $\vregion(\site)$ (red), $\fregion(\site)$ (green), $\eregion(\site)$ (blue), and $\partial \vregion(\site)$ (orange) are depicted for a site $\site = (v_0, f_0)$. The vertex $v_0$ sits in the center of the figure. The set $\partial \eregion$ consists of the blue edges on the boundary of the figure.}
    \label{fig:regions}
\end{figure}

For the remainder of this section, we fix a site $\site$ as our origin. We write $\gc := \gc[\eregion]$ for the gauge configurations on $\eregion$ and let
$$ \hilb_n := \hilb_{\eregion} = \bigotimes_{e \in \eregion} \hilb_e $$
be the Hilbert space associated to the region $\eregion$. The set of gauge configurations $\gc$ then labels an orthonormal basis of $\hilb_n$, given by $\ket{\alpha} = \bigotimes_{e \in \eregion} \ket{\alpha_e}$ for all $\alpha \in \gc$.\\

For any $\alpha \in \gc$, the corresponding basis state $\ket{\alpha} \in \hilb_n$ has a graphical representation, see Figure \ref{fig:string_net_example} for a schematic example.

\begin{figure}
    \centering
    \includegraphics[width = 0.6 \textwidth]{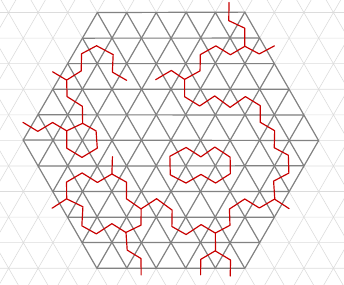}
    \caption{An example of a gauge configuration on $\eregion[5]$. The dark grey edges belong to $\eregion[5]$, they are crossed by red oriented strings that carry group labels. For edges that are not crossed by any red string, the gauge degree of freedom takes the value $1 \in G$. The orientations and group labels of the red strings are not shown. This picture corresponds to a definite basis state $\ket{\alpha} \in \hilb_n$ for some $\alpha \in \gc$.}
    \label{fig:string_net_example}
\end{figure}

\begin{definition} \label{def:flux thourgh ribbon}
    For a gauge configuration $\alpha$, the \emph{flux of $\alpha$ through a ribbon $\rho$} is defined as
    $$\phi_\rho(\alpha) := \prod_{e \in \rho^{dir}} \alpha_e$$
    where the product is ordered by the order of $\rho^{dir}$, the direct path of the ribbon $\rho$ as defined in section \ref{sec:direct paths}. 
\end{definition}

We will be interested in gauge configurations that satisfy certain constraints. Recall that to any site $s$ we can associate the elementary closed direct ribbon $\rho_{\triangle}(s)$ that starts and ends at $s$ and circles $f(s)$ in a counterclockwise direction. Let $\alpha$ be a gauge configuration on a region that contains all edges of $f(s)$ and define
$$\phi_s(\alpha) := \phi_{\rho_{\triangle}(s)}(\alpha)$$
to be the flux of $\alpha$ at $s$. By construction, we have $B_s^{g} \ket{\alpha} = \delta_{g, \phi_s(\alpha)} \ket{\alpha}$. For example, the flux at $s$ for the gauge configuration $\alpha$ depicted in Figure \ref{fig:graphical_representation} is $\phi_s(\alpha) = g_1 \bar g_2 \bar g_3$.\\

Let $\bc := \gc[\partial \eregion]$ be the set of gauge configurations on $\partial \eregion$. We call its elements $b : \partial \eregion \rightarrow G$ \emph{boundary conditions}. For any gauge configuration $\alpha \in \gc$ we denote by $b(\alpha) = \alpha|_{\partial \eregion}$ the \emph{boundary condition of $\alpha$} given by restriction of $\alpha$ to the boundary $\partial \eregion$. We write $b = \emptyset$ for the trivial boundary condition $\emptyset_e = 1 \in G$ for all $e \in \partial \eregion$.\\

Having fixed a site $\site = (v_0, f_0)$ we can regard $v_0$ as the origin of the plane and define unit vectors in $\R^2$ as follows. We let $\hat y$ be the unit vector with base at $v_0$ pointing towards the center of the face $f_0$, and we let $\hat x$ be the unit vector with base at $v_0$, perpendicular to $\hat y$ and such that $\hat x \times \hat y = 1$, \ie $(\hat x, \hat y)$ is a positive basis for $\R^2$. Let us now set $\hat l_1 = \hat x$ and $\hat l_2 = \cos(\pi/3) \hat x + \sin(\pi/3) \hat y$. Then each vertex $v \in \latticevert$ can be identified with its coordinate $(n_1, n_2)$ relative to $v_0$, \ie $v = v_0 + n_1 \hat l_1 + n_2 \hat l_2$. Using these coordinates, let $v_i = (i, 0)$ for $i \in \Z$ and consider the direct path $\nu_n^{dir} = ( (v_0, v_1), (v_1, v_2), \cdots, (v_{n-1}, v_n) )$.

We define the \emph{fiducial ribbon} $\fidu$ to be the unique positive ribbon such that $\partial_0 \fidu = \site$, such that $\nu_n^{dir}$ is the direct path of $\fidu$, and such that the final triangle of $\fidu$ is direct. We let $s_n = \partial_1 \fidu$ denote the final site of $\fidu$. See Figure \ref{fig:fiducial_and_boundary_ribbons}.

We define the \emph{boundary ribbon} $\bdy$ to be the unique closed positive ribbon starting and ending at $s_n$ such that its direct path consists of the edges in $\partial \eregion$, oriented counterclockwise around $\eregion$. See Figure \ref{fig:fiducial_and_boundary_ribbons}.

\begin{definition} \label{def:boundary condition projector}
    For any boundary condition $b \in \bc$ we define a projector $P_b \in \cstar[\eregion]$ given by
    $$P_b = \prod_{\{\tau_e \in \bdy| \tau_e \text{ direct}\}} T^{b_e}_{\tau_e}.$$
\end{definition}

\begin{definition}
    We call $\phi_{\bdy}(\alpha)$ the \emph{boundary flux} of the gauge configuration $\alpha \in \gc$.
\end{definition}

\begin{definition}
    For any boundary condition $b \in \bc$ we write $\phi_{\bdy}(b)$ for the associated \emph{boundary flux} as measured through the boundary ribbon $\bdy$.
\end{definition}

\begin{figure}
    \centering
    \includegraphics[width=0.8\textwidth]{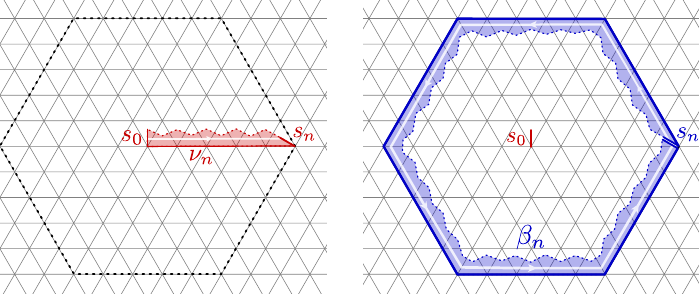}
    \caption{The fiducial ribbon $\fidu$ in red (left) and the boundary ribbon $\bdy$ in blue (right) for a given site $\site$ and $n = 5$. For the fiducial ribbon $\fidu$, we have $\start[ \fidu] = \site$ and $\en[ \fidu] = s_n$. For the boundary ribbon $\bdy$, we have $\start[\bdy] = \en[\bdy] = s_n$, and a counterclockwise orientation around $\partial \eregion$. Any site $s$ is related to the site $\site$ by lattice rotation and translation. We define the fiducial ribbons and boundary ribbons for arbitrary $s$ by the corresponding rotations and translations of the fiducial and boundary ribbons of $\site$.}
    \label{fig:fiducial_and_boundary_ribbons}
\end{figure}

\subsection{Irreducible representations of \texorpdfstring{$\caD(G)$}{G(G)}, Wigner projectors, and local constraints} \label{subsection:irreps and Wigner projections}

As mentioned above, we have at each site $s$ a realisation of the quantum double algebra $\caD(G)$ generated by the gauge transformations and flux projectors at $s$.

Let us introduce some terminology and conventions that will allow us to analyse representations of the quantum double algebra. Denote by $(G)_{cj}$ the set of conjugacy classes of $G$. For each conjugacy class $C \in (G)_{cj}$ let $C = \{c_i\}_{i = 1}^{|C|}$ be a labeling of its elements. Any $g \in C$ has $g = c_i$ for a definite label $i$, and we define the label function $i := i(g)$. Pick an arbitrary \emph{representative element} $r_C \in C$. All elements of $C$ are conjugate to the chosen representative $r_C$ so we can fix group elements $q_i$ such that for all $c_i \in C$ we have $c_i = q_i r_C \dash{q}_i$. We let $Q_C:= \{q_i\}_{i = 1}^{|C|}$ be the \emph{iterator set} of $C$. Let $N_C:=\{n\in G| n r_C = r_C n \}$ be the commutant of $r_C$ in $G$. Note that the group structure of $N_C$ does not depend on the choice of $r_C$, it is a realization of the centralizer of $C$. Denote by $(N_C)_{irr}$ the collection of irreducible representations of $N_C$.

As mentioned in the setup, the irreducible representations of the quantum double algebra of $G$ are in one-to-one correspondence with pairs $RC$ where $C \in (G)_{cj}$ is a conjugacy class and $R \in (N_C)_{irr}$ is an irreducible representation of the group $N_C$.

For each $R \in (N_C)_{irr}$ we fix a concrete unitary matrix representation $N_C \ni m \mapsto R(m) \in \caM_{\dimR}(\C)$ with components $R^{j j'}(m)$.

In what follows we will often consider a label $i \in \{1, \cdots, \abs{C}\}$ for $C$ together with a label $j \in \{ 1, \cdots, \dimR \}$. We define $I_{RC} := \{1, \cdots, \abs{C}\} \times \{ 1, \cdots, \dimR \}$ so that $(i, j) \in I_{RC}$.

\begin{definition} \label{def:Wigner projectors}
    Let us define the Wigner projector to $RC$ at site $s$ by
    $$\qd := \frac{\dimR}{|N_C|} \sum_{m\in N_C} \chi_R(m)^* \sum_{q \in Q_C} A_{s} ^{q  m \dash{q}} B_{s}^{q r_C \dash{q}}.$$
    $\qd$ decomposes as a sum of commuting projectors $\{ \qd[RC;u] \}_{u \in I_{RC}}$ (Lemma \ref{lem:DRC decomposes into DRCu}). For $u = (i, j)$ these are given by
    $$\qd[RC;u] := \frac{\dimR}{|N_C|} \sum_{m \in N_C} R^{jj}(m)^* A_{s} ^{q_i m\dash{q}_i} B_{s}^{c_i}.$$ 
\end{definition}

Fix a site $\site = (v_0, f_0)$ and introduce the following notations
$$\dlatticeface := \latticeface \setminus \{f_0\} \qquad \dlatticevert := \latticevert \setminus \{v_0\}.$$
The site $s_0$ will be fixed throughout this section, and will therefore often not be made explicit in the notation. \\

Let us define the following sets of states.

\begin{definition} \label{def:state spaces}
    Let $\overline{\S}_{s_0}$ be the set of states $\omega$ on $\cstar$ that satisfy
    \begin{equation}
        \label{eq:cons1}
        \omega(A_v) = \omega(B_f) = 1 \quad  \text{ for all } \,\,\,  v \in \dlatticevert, \quad f \in \dlatticeface.
    \end{equation}
    Similarly, we denote by $\S_{s_0}^{RC}$ the set of states $\omega$ on $\cstar$ that in addition to \eqref{eq:cons1} also satisfy
    \begin{equation}
        \label{eq:cons2}
        \omega(D^{RC}_{s_0}) = 1,
    \end{equation}
    and by $\S_{s_0}^{RC;u}$ the set of states that in addition to \eqref{eq:cons1} also satisfy
    \begin{equation}
        \label{eq:cons3}
        \omega(\qdsu) = 1.
    \end{equation}
\end{definition}

In this section we prove that the set $\S_{\site}^{RC;u}$ contains a single pure state. Considering the case where $C = C_1 := \{1\}$ is the trivial conjugacy class and $R$ is the trivial representation of $N_{\{1\}} = G$, we see that $\S_{\site}^{RC_1}$ is precisely the set of frustration free ground states, so we get in particular a new proof of Proposition \ref{prop:ffgsunique}.

\subsection{Local constraints}

We will characterise the state spaces $\overline{\S}_{\site}, \S_{\site}^{RC}$ and $\S_{\site}^{RC;u}$ by investigating the restrictions of states belonging to these spaces to finite volumes $\eregion$. These restrictions correspond to density matrices acting on $\caH_n$ that are supported on subspaces of $\caH_n$ defined by local versions of the constraints \eqref{eq:cons1}, \eqref{eq:cons2} and \eqref{eq:cons3}. Here we introduce these subspaces.

Let us write
$$\dfregion := \fregion \setminus \{f_0\} \qquad \dvregion := \vregion \setminus \{v_0\}.$$

\begin{definition} \label{def:local constraints}
    Let $\overline{\V} \subset \hilb_n$ be the subspace consisting of vectors $\ket{\psi} \in \hilb_n$ that satisfy
    \begin{equation}
        \label{eq:lcons1}
        A_v \ket{\psi} = B_f \ket{\psi} = \ket{\psi} \quad \text{for all} \,\, v \in \dvregion, f \in \dfregion.
    \end{equation}
    Let $\VRC \subset \overline{\V}$ be the subspace consisting  of vectors $\ket{\psi} \in \overline{\V}$ that in addition to \eqref{eq:lcons1} also satisfy
    \begin{equation}
        \label{eq:lcons2}
        D_{\site}^{RC} \ket{\psi} = \ket{\psi},
    \end{equation}
    and let $\VRCu \subset \VRC$ be the subspace consisting of vectors $\ket{\psi} \in \VRC$ that in addition to \eqref{eq:lcons1} also satisfy
    \begin{equation}
        \label{eq:lcons3}
        D_{\site}^{RC;u} \ket{\psi} = \ket{\psi}.
    \end{equation}
\end{definition}

Note that since $\mathds{1} = \sum_{RC} D_{\site}^{RC}$ (Lemma \ref{lem:DRCprops}) and  $D_{\site}^{RC} = \sum_{u} D_{\site}^{RC;u}$ (Lemma \ref{lem:DRC decomposes into DRCu}) we have orthogonal decompositions
$$\overline{\V} = \bigoplus_{RC} \VRC \quad \text{and} \quad \VRC = \bigoplus_u \VRCu$$

\subsection{Imposing flux constraints and boundary conditions}

For each $RC$ and $u = (i, j) \in I_{RC}$, and each site $s$, we define
\begin{equation*}
    A_s^{RC;u} := \frac{\dimR}{|N_C|} \, \sum_{m \in N_C} \, R^{jj}(m)^* \, A_s^{q_i m \bar q_i},
\end{equation*}
so $D_s^{RC;u} = A_s^{RC;u} B_s^{c_i}$. From Lemma \ref{lem:a and B commute with A_v and B_f} we have that the $A_{s}^{RC;u}$ are projectors that commute with $B_{s}^{c_i}$. In other words, the projector $D_{s_0}^{RC;u}$ really imposes two independent constraints, namely a flux constraint $B_{\site}^{c_i}$ and a gauge constraint $A_{\site}^{RC;u}$.\\

Throughout this section we will find the following Lemma useful. Recall the projectors $P_b$ from Definition \ref{def:boundary condition projector} that project on the boundary condition $b \in \bc$.
\begin{lemma} \label{lem:commuting local constraints}
    For any $C \in (G)_{cj}$, any $R \in (N_C)_{irr}$, any $u = (i, j) \in I_{RC}$, and any boundary condition $b \in \bc$, the set
    $$\{ B_f  \}_{f \in \dfregion} \cup \{ A_v \}_{v \in \dvregion} \cup \{ B_{\site}^{c_i}, \, A_{\site}^{RC;u}, \, P_b \}$$
    is a set of commuting projectors.
\end{lemma}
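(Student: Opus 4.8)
The plan is to verify that all pairs of projectors in the list commute, organizing the cases by locality and by appealing to previously established commutation facts. First I would recall the fundamental locality observations: the projectors $B_f$ and $A_v$ for $f \in \dfregion$ and $v \in \dvregion$ are supported on edges of $\eregion$, and each $B_{\site}^{c_i}$, $A_{\site}^{RC;u}$, $B_f$, $A_v$ is built from the local gauge transformations and flux projectors at a single vertex or face, so that two such operators attached to disjoint (vertex, face) data with disjoint edge support commute trivially. The only potentially nontrivial pairs are those sharing an edge in their support.

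\medskip

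\noindent I would then go through the pairs systematically. (i) For $B_f, B_{f'}$ with $f, f' \in \latticeface$: all flux projectors mutually commute, since each is diagonal in the group basis $\{\ket{\alpha}\}$; in particular $B_f$ commutes with $B_{\site}^{c_i}$ (this is part of the statement just after Definition \ref{def:Wigner projectors} that the $A_s^h, B_s^g$ realize the quantum double algebra, together with $B_s^{g_1}B_s^{g_2} = \delta_{g_1,g_2}B_s^{g_1}$). (ii) For $A_v, A_{v'}$ with $v, v' \in \latticevert$: gauge transformations at distinct vertices commute by the remark in Section \ref{sec:model} that $\{A_v^h\}$ at one vertex commute with those at another; since $A_{\site}^{RC;u}$ is a linear combination of the $A_{\site}^{q_i m \bar q_i}$, it commutes with every $A_v$ for $v \neq v_0$, and in particular for $v \in \dvregion$. (iii) For $A_v$ and $B_f$ with $v \neq v_0$ or $f \neq f_0$: here I invoke Lemma \ref{lem:a and B commute with A_v and B_f} (referenced in the text), which gives that $A_{\site}^{RC;u}$ and $B_{\site}^{c_i}$ commute with all $A_v$ and $B_f$; combined with the standard fact that $\{A_v\} \cup \{B_f\}$ is a commuting family (stated in Section \ref{sec:model}), this handles every mixed pair involving an $A$ and a $B$, including the pair $A_{\site}^{RC;u}, B_{\site}^{c_i}$ itself (already noted in the text just before the Lemma). (iv) Finally, $P_b$ commutes with everything: $P_b = \prod T^{b_e}_{\tau_e}$ is a product of single-edge diagonal projectors $T^{b_e}_{e}$ on edges $e \in \partial\eregion$, so it is diagonal in the group basis and hence commutes with every $B_f$ and with $B_{\site}^{c_i}$; and it commutes with $A_v$ for $v \in \dvregion$ and with $A_{\site}^{RC;u}$ because — by the definition of $\partial\eregion$ and $\partial\vregion$, and the fact that the $A_v^h$ for $v \in \dvregion \subset \vregion$ act only on edges with both endpoints in $\vregion[n+1]$ — one checks that the gauge transformations at vertices of $\dvregion$ do not touch the boundary edges $\partial\eregion$; more carefully, if an edge of $\partial\eregion$ is incident to $v \in \dvregion$, the action of $A_v^h$ on that edge is by a left/right multiplication, which commutes with the diagonal projector $T^{b_e}$ precisely when $v$ is \emph{not} adjacent to that boundary edge, so the argument must instead observe that $A_{\site}^{RC;u}$ is supported at $v_0$, which is at distance $\geq n+1 \geq 1$ from $\partial\eregion$, hence disjoint support, and that for $v \in \dvregion$ one uses that $P_b$ and $A_v$ commute as elements of the quantum double algebra even when they share an edge, since $[L_e^h \text{ or } R_e^h, T_e^{b_e}]$ does not vanish in general — so the correct route is to note that $\partial\eregion$ edges have \emph{exactly one} bounding face in $\fregion$, forcing their far endpoint outside $\vregion$, and that the near endpoint contributes a factor to $A_v$ only through the \emph{other} end, making $A_v$'s action on boundary edges consist of a multiplication that commutes with $T^{b_e}$ iff one is careful about orientation.

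\medskip

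\noindent The main obstacle I anticipate is precisely item (iv): showing $P_b$ commutes with the bulk gauge constraints $A_v$ for $v \in \dvregion$. A single-edge left-multiplication $L_e^h$ and the diagonal projector $T_e^{b_e}$ do \emph{not} commute in general, so this is not automatic from edge-support considerations; it must follow from a more structural fact, namely that $P_b$ is built from the \emph{boundary ribbon} $\bdy$, whose direct path consists of edges in $\partial\eregion$, and that the gauge transformations at vertices of $\dvregion$ either miss these edges entirely or, where they meet them, act in a way compatible with the flux-projector structure of $P_b$ (this is the content of standard ribbon-operator commutation relations, which in this paper live in Appendix \ref{app:ribbonprops}). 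Concretely, I would reduce the claim to: the operator $P_b = \prod_{\{\tau_e \in \bdy \text{ direct}\}} T^{b_e}_{\tau_e}$ is a function of the gauge configuration restricted to $\partial\eregion$ that is invariant under all gauge transformations at interior vertices, because such a transformation permutes the basis $\ket{\alpha}$ by $\alpha_e \mapsto g_{\partial_0 e}\alpha_e \bar g_{\partial_1 e}$ and an edge $e \in \partial\eregion$ that is direct in $\bdy$ has at most one endpoint in $\dvregion$; if it has none we are done, and if it has one, say $\partial_0 e = v \in \dvregion$ (the other endpoint lying in $\partial\vregion$, hence $g = 1$ there), then $A_v^{g}$ changes $\alpha_e$ to $g\alpha_e$ and simultaneously changes the \emph{other} direct boundary edge $e'$ incident to $v$ in $\bdy$ — but boundary edges of $\partial\eregion$ cannot be incident to a vertex of $\dvregion$ at all, since being incident to $v \in \vregion$ would force both bounding faces of $e$ into $\fregion$, contradicting $e \in \partial\eregion$. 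This last observation is the crux: \emph{no edge of $\partial\eregion$ is incident to any vertex of $\dvregion$}, from which $P_b$ and $A_v$ trivially commute for $v \in \dvregion$, and likewise $P_b$ and $A_{\site}^{RC;u}$ commute since $v_0 \notin \partial\eregion$-incident vertices either (as $\site$ is the center). Once this geometric fact is nailed down, the remaining commutations are immediate from the cited lemmas.
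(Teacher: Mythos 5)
Your proposal is correct and follows essentially the same route as the paper: the bulk pairs are handled by Eq.~\eqref{eq:ABcommproj} and Lemma \ref{lem:a and B commute with A_v and B_f}, the commutation of $P_b$ with the flux projectors follows from diagonality in the gauge-configuration basis, and the commutation of $P_b$ with the gauge transformations follows from disjoint support — your explicit geometric check that no edge of $\partial\eregion$ is incident to a vertex of $\vregion$ (since such incidence would put both bounding faces in $\fregion$) is precisely the fact the paper implicitly relies on when asserting that the $A_v$ for $v \in \dvregion$ are supported on $\eregion \setminus \partial\eregion$. The only criticism is stylistic: the written argument records several abandoned false starts before landing on the correct observation, and these should be excised from a final write-up.
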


\begin{proof}
    The set
    $$\{ B_f  \}_{f \in \dfregion} \cup \{ A_v \}_{v \in \dvregion} \cup \{ B_{\site}^{c_i}, \, A_{\site}^{RC;u}\}$$
    is a set of commuting projectors by Eq.  \eqref{eq:ABcommproj} and Lemma \ref{lem:a and B commute with A_v and B_f}. The projectors $\{ A_v \}_{v \in \dvregion} \cup \{ B_{\site}^{c_i}, \, A_{\site}^{RC;u}\}$ are all supported on $\eregion \setminus \partial \eregion$ while $P_b$ is supported on $\partial \eregion$, so these projectors commute with $P_b$. The projectors $\{ B_f  \}_{f \in \dfregion}$ and $P_b$ are all diagonal in the basis of gauge configurations, so they also commute.
\end{proof}

We first investigate the space of vectors in $\caH_n$ that satisfy flux constraints.

\begin{definition} \label{def:WCi}
    Let $\WCi \subset \caH_n$ be the subspace consisting of vectors $| \psi \rangle \in \caH_n$ such that
    \begin{equation*}
        \ket{\psi} = B_f \ket{\psi} = B_{s_0}^{c_i} \ket{\psi}
    \end{equation*}
    for all $f \in \dfregion$.
\end{definition}

The space $\WCi$ is spanned by vectors $| \al \rangle$ for certain $\al \in \gc$ that satisfy these constraints.

\begin{definition} \label{def:string nets Ci}
    For any conjugacy class $C$ and any $i = 1, \cdots, \abs{C}$ we define
    $$\packi :=\{\alpha \in \gc \, : \, \ket{\al} = B_{\site}^{c_i} \ket{\al} = B_f \ket{\al} \quad \text{for all} \,\,\, f \in \dfregion \}. $$
\end{definition}

See Figure \ref{fig:defect string net examlpe} for an example of a string net $\al \in \packi$.

\begin{figure}
    \centering
    \includegraphics[width=0.6\textwidth]{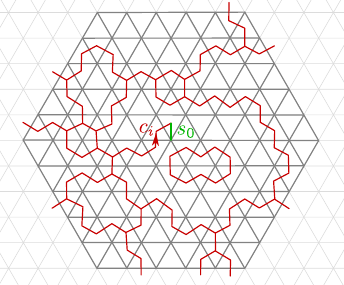}
    \caption{The graphical representation of a string net $\alpha \in \packC[C;i]$. The orientations and labels of the red strings are not shown, except for the piece that ensures that the constraint $B_{\site}^{c_i} \ket{\al} = \ket{\al}$ is satisfied.}
    \label{fig:defect string net examlpe}
\end{figure}

\begin{lemma} \label{lem:WCi spanned by string nets}
    We have
    $$\WCi = \Span{\ket{\al} \, : \, \al \in \packi }.$$
\end{lemma}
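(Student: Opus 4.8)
The plan is to observe that all of the projectors $\{B_f\}_{f \in \dfregion} \cup \{B_{\site}^{c_i}\}$ that define $\WCi$ are simultaneously diagonal in the orthonormal gauge-configuration basis $\{\ket{\al}\}_{\al \in \gc}$ of $\caH_n$, so that their common $+1$-eigenspace is automatically the span of the basis vectors it contains.

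First I would note that each $B_f$ with $f \in \dfregion$ is supported on the edges of $f$, all of which lie in $\eregion$ since $f \in \fregion$; likewise $B_{\site}^{c_i} = F^{1,c_i}_{\rho_{\triangle}(\site)}$ is supported on the edges of $f_0 \in \fregion$. Hence all these operators act on $\caH_n$, and the fluxes $\phi_s$ and $\phi_{\site}$ are well defined on $\gc$. Next I would invoke the identity $B_s^g \ket{\al} = \delta_{g, \phi_s(\al)} \ket{\al}$ established after Definition \ref{def:flux thourgh ribbon}: for a site $s$ with $f(s) = f$ this gives $B_f \ket{\al} = B_s^1 \ket{\al} = \delta_{1, \phi_s(\al)} \ket{\al}$, and similarly $B_{\site}^{c_i}\ket{\al} = \delta_{c_i, \phi_{\site}(\al)} \ket{\al}$. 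Thus each of these projectors is diagonal in the $\{\ket{\al}\}$ basis with eigenvalues in $\{0,1\}$, and a basis vector $\ket{\al}$ is a simultaneous $+1$-eigenvector of all of them precisely when $\al \in \packi$, by Definition \ref{def:string nets Ci}.

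To conclude I would check both inclusions. The inclusion $\supseteq$ is immediate: every $\ket{\al}$ with $\al \in \packi$ satisfies $\ket{\al} = B_f \ket{\al} = B_{\site}^{c_i}\ket{\al}$ for all $f \in \dfregion$, so it lies in $\WCi$, and $\WCi$ is a linear subspace. For $\subseteq$, expand an arbitrary $\ket{\psi} \in \WCi$ as $\ket{\psi} = \sum_{\al \in \gc} c_\al \ket{\al}$. Applying any of the diagonal projectors $B_f$ or $B_{\site}^{c_i}$ multiplies each coefficient $c_\al$ by the corresponding eigenvalue $\delta_{1,\phi_s(\al)}$ or $\delta_{c_i,\phi_{\site}(\al)}$; since $\ket{\psi}$ is fixed by all of them, any $\al$ with $c_\al \neq 0$ must satisfy every flux condition simultaneously, i.e. $\al \in \packi$. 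Hence $\ket{\psi} \in \Span\{\ket{\al} : \al \in \packi\}$.

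I do not expect a genuine obstacle here; the only points needing a little care are verifying that the relevant flux projectors are supported inside $\eregion$ (so that they truly act on $\caH_n$) and recording that $\eregion$ contains all edges of every face in $\fregion$, which makes the fluxes appearing in the constraints well defined on $\gc$.
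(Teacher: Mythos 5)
Your proof is correct and follows essentially the same route as the paper's: both rest on the observation that the flux projectors $\{B_f\}_{f\in\dfregion}\cup\{B_{\site}^{c_i}\}$ are diagonal in the gauge-configuration basis with $\{0,1\}$ eigenvalues, so their joint $+1$-eigenspace is the span of exactly those basis vectors $\ket{\al}$ with $\al\in\packi$. The paper phrases this by writing $\WCi$ as the image of the product of these commuting projectors and evaluating that product on basis vectors, while you expand an arbitrary vector and track coefficients, but the content is identical.
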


\begin{proof}
    That $\ket{\al} \in \WCi$ if $\al \in \packi$ is immediate from the definitions. Conversely, since $\{ B_f \}_{f \in \dfregion} \cup \{ B_{s_0}^{c_i} \}$ is a set of commuting projectors we have
    $$\WCi = \left( \prod_{f \in \dfregion}  B_f \right) \, B_{s_0}^{c_i} \, \caH_n.$$
    Now note that the vectors $\ket{\al}$ for $\al \in \gc$ from an orthonormal basis for $\caH_n$ and that
    $$\left( \prod_{f \in \dfregion}  B_f \right) \, B_{s_0}^{c_i} \ket{\al} = \begin{cases} \ket{\al} \quad &\text{if } \,\,\, \al \in \packi \\
    0 &\text{otherwise}.\end{cases}$$
    The claim follows.
\end{proof}

We can further refine the spaces $\WCi$ by specifying boundary conditions.

\begin{definition} \label{def:compatible boundary conditions}
    We say a boundary condition $b \in \bc$ is compatible with the conjugacy class $C$ if $\phi_{\bdy}(b) \in C$. We denote by $\bc[C]$ the set of boundary conditions compatible with $C$. For $b \in \bc[C]$ we have $\phi_{\bdy}(b) = c_i \in C$ for a definite index $i \in \{1, \cdots, \abs{C}\}$. We write $i = i(b)$.
\end{definition}

\begin{definition} \label{def:WCib}
    For any conjugacy class $C \in (G)_{cj}$, any $i = 1, \cdots, \abs{C}$, and any boundary condition $b \in \bc$ we let $\WCib \subset \caH_n$ be the space of vectors $\ket{\psi} \in \caH_n$ that satisfy
    $$\ket{\psi} = B_f \ket{\psi} = B_{\site}^{c_i} \ket{\psi} = P_b \ket{\psi}$$
    for all $f \in \dfregion$. Here $P_b$ is the projector on the boundary condition $b$ from Definition \ref{def:boundary condition projector}.
\end{definition}

\begin{definition} \label{def:string nets Cib}
    For any boundary condition $b \in \bc$ we let
    $$\packib := \{  \al \in \packi \, : \, b(\al) = b \}.$$
\end{definition}

\begin{lemma} \label{lem:WCib spanned by string nets}
    We have
    $$\WCib := \Span{ \ket{\al} \, : \, \al \in \packib }.$$
\end{lemma}

\begin{proof}
    This follows from $\WCi = \Span{ \ket{\al} \, : \, \al \in \packi  }$ (Lemma \ref{lem:WCi spanned by string nets}), the fact that $\al \in \packib$ if and only if $\al \in \packi$ and $P_b \ket{\al} = \ket{\al}$, and the fact (Lemma \ref{lem:commuting local constraints}) that $\{B_f\}_{f \in \dfregion} \cup \{ B_{\site}^{c_i}, \, P_b \}$ is a set of commuting projectors.
\end{proof}

\begin{lemma} \label{lem:decomposition of WCi according to boundary conditions}
    We have a disjoint union
    $$\packi = \bigsqcup_{b \in \bc[C]} \packib$$
    and an orthogonal decomposition
    $$\WCi = \bigoplus_{b \in \bc[C]}  \WCib.$$

    In particular, $\packib$ is empty if $b$ is not compatible with $C$.
\end{lemma}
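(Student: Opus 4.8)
The plan is to reduce the whole statement to its last clause, namely that $\packib = \emptyset$ whenever $b$ is not compatible with $C$. Granting this, the disjoint union $\packi = \bigsqcup_{b \in \bc[C]} \packib$ is immediate: each $\al \in \packi$ restricts to a definite boundary condition $b(\al) = \al|_{\partial \eregion} \in \bc$, so the sets $\packib = \{ \al \in \packi : b(\al) = b \}$ partition $\packi$ as $b$ runs over \emph{all} of $\bc$, and the emptiness clause says only the $b \in \bc[C]$ can occur. The orthogonal decomposition $\WCi = \bigoplus_{b \in \bc[C]} \WCib$ then follows as well: by Lemma \ref{lem:WCi spanned by string nets} and Lemma \ref{lem:WCib spanned by string nets} we have $\WCi = \Span{\ket{\al} : \al \in \packi}$ and $\WCib = \Span{\ket{\al} : \al \in \packib}$, so the partition of $\packi$ gives $\WCi = \sum_{b \in \bc[C]} \WCib$, and this sum is orthogonal because $\{ \ket{\al} \}_{\al \in \gc}$ is an orthonormal basis of $\caH_n$ and configurations with different boundary conditions are distinct.

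Thus I only need to show: if $\al \in \packi$, then $\phi_{\bdy}(b(\al)) \in C$. First I would note that $\phi_{\bdy}(b(\al)) = \phi_{\bdy}(\al)$: by definition the direct path of the boundary ribbon $\bdy$ consists of edges of $\partial \eregion$, so the ordered product over that direct path defining the flux (Definition \ref{def:flux thourgh ribbon}) depends on $\al$ only through its restriction $\al|_{\partial \eregion} = b(\al)$, which is exactly what $\phi_{\bdy}$ evaluated on a boundary condition means. Next, using the multiplicativity of the flux under concatenation of ribbons, $\phi_{\rho_1 \rho_2}(\al) = \phi_{\rho_1}(\al)\, \phi_{\rho_2}(\al)$, together with $\phi_{\bar\rho}(\al) = \phi_{\rho}(\al)^{-1}$ (from $\bar\rho^{dir} = \overline{\rho^{dir}}$ and $\al_{\bar e} = \bar\al_e$), and applying this to the concatenation $\fidu\,\bdy\,\bar\fidu$ of the fiducial ribbon $\fidu$ (running from $\site$ to $s_n$), the boundary ribbon (closed at $s_n$), and the reversed fiducial ribbon, I obtain the identity
\[
    \phi_{\bdy}(\al) \;=\; \phi_{\fidu}(\al)^{-1}\,\phi_{\fidu\bdy\bar\fidu}(\al)\,\phi_{\fidu}(\al),
\]
so that $\phi_{\bdy}(\al)$ lies in $C$ exactly when the flux of the closed loop $\fidu\bdy\bar\fidu$ based at $\site$ does.

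Finally I would identify that flux. The loop $\fidu\bdy\bar\fidu$ is based at $\site = (v_0, f_0)$ and winds once counterclockwise around the single face $f_0$, just like the elementary closed direct ribbon $\rho_{\triangle}(\site)$; the two differ only by sweeping across the faces of $\dfregion = \fregion \setminus \{ f_0 \}$, each of which carries trivial flux in $\al$ since $\al \in \packi$ forces $B_f \ket{\al} = \ket{\al}$ for all $f \in \dfregion$. Invoking the ribbon-operator properties of Appendix \ref{app:ribbonprops} — concretely, that the flux of a closed ribbon is unchanged when it is deformed across a face on which the configuration has trivial flux — I conclude $\phi_{\fidu\bdy\bar\fidu}(\al) = \phi_{\rho_{\triangle}(\site)}(\al) = \phi_{\site}(\al)$. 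Since $\al \in \packi$ also forces $B_{\site}^{c_i} \ket{\al} = \ket{\al}$, and $B_{\site}^{g} \ket{\al} = \delta_{g, \phi_{\site}(\al)}\ket{\al}$, we get $\phi_{\site}(\al) = c_i$, and therefore $\phi_{\bdy}(\al) = \phi_{\fidu}(\al)^{-1}\, c_i\, \phi_{\fidu}(\al) \in C$, as required. (In fact this pins down $i(b(\al)) = i$.)

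The step I expect to be the real obstacle is this last identification $\phi_{\fidu\bdy\bar\fidu}(\al) = \phi_{\site}(\al)$: everything before it is formal manipulation of the flux product, while here one uses the geometric content that the flux of a closed ribbon is a deformation invariant as long as one only crosses faces with trivial flux. If the relevant appendix statement is available only for elementary deformations (pushing a ribbon across a single triangle or face at a time), one runs it as an induction, peeling $\fregion$ down to $f_0$ one ring of faces at a time, each step leaving the flux unchanged because the faces stripped off lie in $\dfregion$ and hence have trivial flux in $\al$.
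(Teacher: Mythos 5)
Your reduction is exactly the paper's: the whole lemma boils down to showing $\packib = \emptyset$ for $b \notin \bc[C]$, and the orthogonal decomposition then falls out of Lemmas \ref{lem:WCi spanned by string nets} and \ref{lem:WCib spanned by string nets}. The difference is that the paper simply cites Lemma \ref{lem:bcinC} for the emptiness claim, whereas you prove that fact inline via the flux-conjugation identity $\phi_{\bdy}(\al) = \phi_{\fidu}(\al)^{-1}\,\phi_{\site}(\al)\,\phi_{\fidu}(\al)$ and a discrete Stokes argument; that is precisely the content of the deferred appendix lemma, and your peeling-off-rings induction is the right way to make the deformation step rigorous. Two small points to tighten. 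First, $\fidu\bdy\bar\fidu$ is not literally a ribbon in the paper's sense (it reuses every edge of $\fidu$ twice), so you should phrase the conjugation identity at the level of the closed \emph{direct path} $\nu_n^{dir}\,(\bdy)^{dir}\,\overline{\nu_n^{dir}}$, where repeated edges are allowed and the flux is still the ordered product. Second, your parenthetical ``this pins down $i(b(\al)) = i$'' is wrong in general: the computation gives $c_{i(b(\al))} = \phi_{\fidu}(\al)^{-1} c_i \phi_{\fidu}(\al)$, so $i(b(\al))$ depends on the fiducial flux and need not equal $i$ unless $\phi_{\fidu}(\al)$ centralizes $c_i$ (this is exactly what Lemma \ref{lem:inNC} encodes). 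Neither issue affects the validity of the lemma's proof.
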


\begin{proof}
    To show the first claim, it is sufficient to show that $\packib$ is empty if $b$ is not compatible with $C$. This follows from Lemma \ref{lem:bcinC}.
    The second claim then follows immediately from Lemma \ref{lem:WCib spanned by string nets} and Lemma \ref{lem:WCi spanned by string nets}.
\end{proof}

\subsection{Fiducial flux}

The fiducial flux, which is measured by the projectors $T_{\fidu}^g$, remains unconstrained by the projectors defining the spaces $\WCib$, we can therefore further decompose the spaces $\WCib$ according to the fiducial flux.

\begin{lemma} \label{lem:commuting constraints with fiducial flux}
    For any $C \in (G)_{cj}$, any $i = 1, \cdots, \abs{C}$, any $g \in G$, and any boundary condition $b \in \bc$, the set
    $$\{ B_f  \}_{f \in \dfregion} \cup \{ A_v \}_{v \in \dvregion} \cup \{ B_{\site}^{c_i}, \, P_b, \, T_{\fidu}^g \}$$
    is a set of commuting projectors.
\end{lemma}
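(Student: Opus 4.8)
The plan is to reduce this to the already-established Lemma~\ref{lem:commuting local constraints} together with a single additional pairwise-commutation check: that $T_{\fidu}^g$ commutes with every projector in the set $\{ B_f  \}_{f \in \dfregion} \cup \{ A_v \}_{v \in \dvregion} \cup \{ B_{\site}^{c_i}, \, P_b \}$. Since Lemma~\ref{lem:commuting local constraints} already shows the latter set is commuting (taking $R$ trivial and $A_{\site}^{RC;u}=\mathds 1$, or simply noting $\{ B_f\}_{f\in\dfregion}\cup\{A_v\}_{v\in\dvregion}\cup\{B_{\site}^{c_i},P_b\}$ is contained in a commuting family), it suffices to handle the new projector $T_{\fidu}^g$ against each type.

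First I would observe that $T_{\fidu}^g = F^{1,g}_{\fidu}$ is, by the defining formula, diagonal in the gauge-configuration basis $\{\ket\alpha\}$: indeed $T^g_\rho\ket\alpha = \delta_{g,\phi_\rho(\alpha)}\ket\alpha$ is a function of the fiducial flux $\phi_{\fidu}(\alpha) = \prod_{e\in\fidu^{dir}}\alpha_e$, hence a diagonal operator. The projectors $\{B_f\}_{f\in\dfregion}$, $B_{\site}^{c_i}$, and $P_b$ are all likewise diagonal in this basis (the $B$'s project onto flux values through elementary direct ribbons, and $P_b$ is a product of $T$-type projectors $T^{b_e}_{\tau_e}$), so all of these commute with $T_{\fidu}^g$ automatically. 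This disposes of every generator except the gauge transformations $\{A_v\}_{v\in\dvregion}$.

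The remaining point — and the only place any real content enters — is that $T_{\fidu}^g$ commutes with $A_v$ for $v \in \dvregion$, i.e.\ for every vertex $v$ in $\vregion$ other than $v_0$. The natural argument is that the fiducial ribbon $\fidu$ has its \emph{initial} site at $\site = (v_0,f_0)$ and runs outward to $s_n$ on the boundary, so the only vertex site endpoints of $\fidu$ are $v_0$ and $v_n = v(s_n)\in\partial\vregion$. The ribbon-operator commutation relations recorded in Appendix~\ref{app:ribbonprops} (the standard fact that $T^g_\rho = F^{1,g}_\rho$ commutes with $A_v^h$ unless $v$ is one of the two end-vertices of $\rho$) then give $[T_{\fidu}^g, A_v^h]=0$ for all $v \ne v_0, v_n$, hence in particular for all $v \in \dvregion$ since $v_n\notin\vregion$. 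Averaging over $h$ gives $[T_{\fidu}^g, A_v]=0$. I would cite the appropriate lemma from Appendix~\ref{app:ribbonprops} for this; the main (minor) obstacle is simply making sure the bookkeeping about which endpoint of $\fidu$ lies where is correct — that $\partial_1\fidu = s_n$ sits on $\partial\vregion$ and therefore outside $\dvregion$, so that both "bad" vertices of the ribbon are excluded from the family we care about. Given that, the lemma follows by combining these observations with Lemma~\ref{lem:commuting local constraints}.
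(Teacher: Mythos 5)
Your proof is correct and follows essentially the same route as the paper's: the paper likewise reduces to Lemma \ref{lem:commuting local constraints} and then checks $T^g_{\fidu}$ against each remaining projector, citing Lemma \ref{lem:flux change} for the $A_v$'s and the $B$'s and a disjoint-support argument for $P_b$ (your diagonality observation covers those last two cases in one stroke). The geometric point you single out as the crux --- that both end-vertices of $\fidu$, namely $v_0$ and $v(s_n)$, lie outside $\dvregion$ --- is exactly what the paper's invocation of Lemma \ref{lem:flux change} implicitly relies on as well.
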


\begin{proof}
    That 
    $$\{ B_f  \}_{f \in \dfregion} \cup \{ A_v \}_{v \in \dvregion} \cup \{ B_{\site}^{c_i}, \, P_b\}$$
    is a set of commuting projectors follows from Lemma \ref{lem:commuting local constraints}. To prove the lemma, we must show that $T_{\fidu}^g$ commutes with all the projectors in this set. From Lemma \ref{lem:flux change} we get that $T_{\fidu}^g$ commutes with all $A_v$ for $v \in \dvregion$, all $B_f$ for $f \in \dfregion$ and with $B_{\site}^{c_i}$. To see that $T_{\fidu}^g$ commutes with $P_b$, simply note that $P_b$ is supported on $\partial \eregion$ while $T_{\fidu}^{g}$ is supported on the ribbon $\fidu$, whose support does not contain any edges of $\partial \eregion$.
\end{proof}

Note that by Lemma \ref{lem:inNC} we have for any $\al \in \packib$ that $\bar q_i \phi_{\fidu}(\al) q_{i(b)} \in N_C$. This motivates the following Definition.
\begin{definition} \label{def:WCib(m)}
    For any conjugacy class $C \in (G)_{cj}$, any $i = 1, \cdots, \abs{C}$, any boundary condition $b \in \bc$, and any $m \in N_C$ we let $\WCib(m) \subset \caH_n$ be the space of vectors $\ket{\psi} \in \caH_n$ that satisfy
    $$\ket{\psi} = B_f \ket{\psi} = B_{\site}^{c_i} \ket{\psi} = P_b \ket{\psi} = T_{\fidu}^{q_i m \bar q_{i(b)}} \ket{\psi}$$
    for all $f \in \dfregion$.
\end{definition}

These spaces are again spanned by certain string-net states that have a definite fiducial flux.
\begin{definition} \label{def:string nets Cib(m)}
    For any $m \in N_C$ we define
    $$ \packib(m) := \{  \al \in \packib \, : \, \bar q_i \phi_{\fidu}(\alpha) q_{i(b)} = m  \}.$$
\end{definition}

\begin{lemma} \label{lem:WCib(m) spanned by string nets}
    We have
    $$ \WCib(m) = \Span{ \ket{\al} \, : \, \al \in \packib(m) }. $$
\end{lemma}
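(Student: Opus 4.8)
The plan is to follow the exact same template used in the proofs of Lemma \ref{lem:WCi spanned by string nets} and Lemma \ref{lem:WCib spanned by string nets}, exploiting that the defining projectors of $\WCib(m)$ form a commuting family and are all diagonal in the gauge-configuration basis. Concretely, first I would observe that $\al \in \packib(m)$ implies $\ket{\al} \in \WCib(m)$: by definition of $\packib(m)$ we have $\al \in \packib$, so $\ket{\al}$ is fixed by $B_f$ for all $f \in \dfregion$, by $B_{\site}^{c_i}$, and by $P_b$ (using Lemma \ref{lem:WCib spanned by string nets}); it remains to check that $T_{\fidu}^{q_i m \bar q_{i(b)}}\ket{\al} = \ket{\al}$. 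Since $T_{\fidu}^g = F_{\fidu}^{1,g}$ is the flux projector through the fiducial ribbon, we have $T_{\fidu}^g \ket{\al} = \delta_{g,\phi_{\fidu}(\al)}\ket{\al}$; and $\al \in \packib(m)$ means precisely $\bar q_i \phi_{\fidu}(\al) q_{i(b)} = m$, i.e. $\phi_{\fidu}(\al) = q_i m \bar q_{i(b)}$, so the Kronecker delta is $1$. Hence $\Span{\ket{\al} : \al \in \packib(m)} \subseteq \WCib(m)$.

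For the reverse inclusion, I would invoke Lemma \ref{lem:commuting constraints with fiducial flux} with $g := q_i m \bar q_{i(b)}$, which tells us that $\{B_f\}_{f \in \dfregion} \cup \{B_{\site}^{c_i}, P_b, T_{\fidu}^{g}\}$ is a commuting family of projectors (the gauge projectors $A_v$ are not needed here). Therefore $\WCib(m)$ is the range of the single projector $\Pi := \big(\prod_{f \in \dfregion} B_f\big) B_{\site}^{c_i} P_b T_{\fidu}^{q_i m \bar q_{i(b)}}$. Since all these projectors are diagonal in the orthonormal basis $\{\ket{\al}\}_{\al \in \gc}$ of $\caH_n$, $\Pi$ is diagonal in this basis, and $\Pi \ket{\al} = \ket{\al}$ if $\al$ satisfies all the constraints and $\Pi \ket{\al} = 0$ otherwise; by unwinding the definitions, $\Pi\ket{\al} = \ket{\al}$ holds exactly when $\al \in \packib(m)$. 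Thus $\WCib(m) = \Pi\,\caH_n = \Span{\ket{\al} : \al \in \packib(m)}$, giving both inclusions at once and completing the proof.

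There is essentially no obstacle here — this is a routine bookkeeping argument, and the only point requiring minor care is the identification of the exponent $g = q_i m \bar q_{i(b)}$ in the flux projector $T_{\fidu}^g$ with the condition $\bar q_i \phi_{\fidu}(\al) q_{i(b)} = m$ defining $\packib(m)$, which is just a conjugation/rearrangement. One should also make sure that for $\al \in \packib$ the quantity $\bar q_i \phi_{\fidu}(\al) q_{i(b)}$ indeed lies in $N_C$ (so that the indexing over $m \in N_C$ is exhaustive and the decomposition $\WCib = \bigoplus_{m \in N_C} \WCib(m)$ is consistent), but this is precisely the content of Lemma \ref{lem:inNC}, cited in the remark preceding Definition \ref{def:WCib(m)}, so it can be quoted directly.
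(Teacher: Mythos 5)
Your proposal is correct and follows essentially the same route as the paper: both inclusions come from Lemma \ref{lem:commuting constraints with fiducial flux} together with the identity $T_{\fidu}^{g}\ket{\al} = \delta_{\phi_{\fidu}(\al), g}\ket{\al}$ (Lemma \ref{lem:flux projector} in the paper), with the product projector diagonal in the gauge-configuration basis. The paper's proof is just a terser statement of exactly this argument.
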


\begin{proof}
    This follows from Lemma \ref{lem:commuting constraints with fiducial flux} and the fact that
    $$ T_{\fidu}^{g} \ket{\al} = \delta_{\phi_{\fidu}(\al), g} \ket{\al}, $$
    see Lemma \ref{lem:flux projector}.
\end{proof}

\subsection{Imposing gauge invariance on $\dvregion$}

\begin{definition} \label{def:VCib(m)}
    For any conjugacy class $C$, any $i = 1, \cdots, \abs{C}$, any boundary condition $b \in \bc[C]$, and any $m \in N_C$ we define $\VCib(m) \subset \caH_n$ to be the space of vectors $\ket{\psi} \in \caH_n$ that satisfy
    $$ \ket{\psi} = B_f \ket{\psi} = B_{\site}^{c_i} \ket{\psi} = P_b \ket{\psi} = T_{\fidu}^{q_i m \bar q_{i(b)}} \ket{\psi} = A_v \ket{\psi} $$
    for all $f \in \dfregion$ and all $v \in \dvregion$.
\end{definition}

We will show that the spaces $\VCib(m)$ are one-dimensional. To this end we introduce the following group of gauge transformations.
\begin{definition} \label{def:gauge group on dvregion}
    We let $\dgauge := \gauge[\dvregion]$ be the group of gauge transformations on $\dvregion$. All of its elements are unitaries of the form
    $$ U[\{ g_v\}] = \prod_{v \in \dvregion}  \, A_v^{g_v}$$
    for some $\{g_v\} \in G^{|\dvregion|}$.
\end{definition}

Just like for the average over the gauge group $\dgauge$ we have

$$ \frac{1}{\abs{\dgauge}} \, \sum_{U \in \dgauge} U = \frac{1}{\abs{G}^{\abs{\dvregion}}} \sum_{\{ g_v \} \in G^{|\dvregion|} } \, \prod_{v \in \dvregion} A_v^{g_v} = \prod_{v \in \dvregion} \,\left( \frac{1}{\abs{G}} \, \sum_{g_v \in G} A_v^{g_v} \right) = \prod_{v \in \dvregion} A_v. $$

\begin{lemma} \label{lem:VCib(m) is dgauge projection of WCib(m)}
    We have
    $$ \VCib(m) = \left( \frac{1}{\abs{\dgauge}} \sum_{U \in \dgauge} U \right) \, \WCib(m).$$
\end{lemma}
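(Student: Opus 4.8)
The statement claims that $\VCib(m)$ is obtained from $\WCib(m)$ by applying the averaging projector $\frac{1}{\abs{\dgauge}} \sum_{U \in \dgauge} U = \prod_{v \in \dvregion} A_v$. The plan is to verify the two inclusions separately, using the fact (established in Lemma \ref{lem:commuting local constraints} and Lemma \ref{lem:commuting constraints with fiducial flux}) that all the projectors defining $\WCib(m)$ — namely $\{B_f\}_{f \in \dfregion}$, $B_{\site}^{c_i}$, $P_b$, and $T_{\fidu}^{q_i m \bar q_{i(b)}}$ — together with the $\{A_v\}_{v \in \dvregion}$ form a commuting family of projectors. The key algebraic input will be that the gauge average $\prod_{v \in \dvregion} A_v$ commutes with each of these defining projectors: it commutes with the $B_f$'s by $\eqref{eq:ABcommproj}$, with $B_{\site}^{c_i}$ and $T_{\fidu}^{q_i m \bar q_{i(b)}}$ by Lemma \ref{lem:flux change} (the fiducial flux and the flux at $\site$ are unchanged by gauge transformations in $\dvregion$ since they can be measured along ribbons disjoint from $\dvregion$, or more precisely because gauge transformations preserve all fluxes), and with $P_b$ because $P_b$ is supported on $\partial \eregion$ while the $A_v$ for $v \in \dvregion$ are supported away from $\partial \eregion$.

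\textbf{First inclusion} ($\supseteq$). Take $\ket{\psi} \in \WCib(m)$ and set $\ket{\phi} := \left(\prod_{v \in \dvregion} A_v\right) \ket{\psi}$. I claim $\ket{\phi} \in \VCib(m)$. Since $\prod_{v \in \dvregion} A_v$ commutes with each projector $P \in \{B_f, B_{\site}^{c_i}, P_b, T_{\fidu}^{q_i m \bar q_{i(b)}}\}$, and $P \ket{\psi} = \ket{\psi}$, we get $P \ket{\phi} = P \left(\prod_v A_v\right) \ket{\psi} = \left(\prod_v A_v\right) P \ket{\psi} = \left(\prod_v A_v\right) \ket{\psi} = \ket{\phi}$. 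Moreover, for each fixed $v' \in \dvregion$, since the $A_v$ all commute with each other and $A_{v'}$ is a projector, $A_{v'} \ket{\phi} = A_{v'} \left(\prod_{v \in \dvregion} A_v\right)\ket{\psi} = \left(\prod_{v \in \dvregion} A_v\right) \ket{\psi} = \ket{\phi}$. Hence $\ket{\phi}$ satisfies all the constraints defining $\VCib(m)$, proving the inclusion.

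\textbf{Second inclusion} ($\subseteq$). Take $\ket{\psi} \in \VCib(m)$. Then $A_v \ket{\psi} = \ket{\psi}$ for all $v \in \dvregion$, so $\left(\prod_{v \in \dvregion} A_v\right)\ket{\psi} = \ket{\psi}$. Since $\ket{\psi}$ also satisfies all the constraints defining $\WCib(m)$ (these are a subset of the constraints defining $\VCib(m)$), we have $\ket{\psi} \in \WCib(m)$, and therefore $\ket{\psi} = \left(\prod_{v \in \dvregion} A_v\right)\ket{\psi} \in \left(\prod_{v \in \dvregion} A_v\right) \WCib(m)$. Combining the two inclusions gives the claimed equality.

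\textbf{Main obstacle.} The only nontrivial point is verifying that the gauge average commutes with $B_{\site}^{c_i}$ and with $T_{\fidu}^{q_i m \bar q_{i(b)}}$. This should follow cleanly from the cited lemmas — gauge transformations in $\dvregion$ leave the flux at $\site$ and the fiducial flux invariant (indeed $B_{\site}^{c_i}$ and $T_{\fidu}^g$ are built from ribbon operators whose flux part is gauge-invariant, by Lemma \ref{lem:flux change}) — but one must be slightly careful that $v_0 \notin \dvregion$ so that the gauge transformation at the apex vertex $v_0$, which could in principle alter these quantities, is not included in the average. This is exactly why the constraint set uses $\dvregion = \vregion \setminus \{v_0\}$ rather than all of $\vregion$, and the argument goes through because of it. Everything else is a routine commuting-projector manipulation.
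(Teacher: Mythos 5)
Your proof is correct and is exactly the argument the paper has in mind: the paper's proof consists of noting that the gauge average equals $\prod_{v \in \dvregion} A_v$ and then invoking Definition \ref{def:WCib(m)} together with the commuting-projector statement of Lemma \ref{lem:commuting constraints with fiducial flux}, which is precisely the two-inclusion manipulation you spell out. Your closing remark about the commutation with $B_{\site}^{c_i}$ and $T_{\fidu}^{q_i m \bar q_{i(b)}}$ is already fully contained in that cited lemma, so no extra care is needed beyond citing it.
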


\begin{proof}
    Since $\frac{1}{\abs{\dgauge}} \sum_{U \in \dgauge} U = \prod_{v \in \dvregion} A_v$ it is sufficient to show that
    $$ \VCib(m) = \bigg( \prod_{v \in \dvregion} A_v \bigg) \, \WCib(m). $$
    This follows immediately from the Definition \ref{def:WCib(m)} and Lemma \ref{lem:commuting constraints with fiducial flux}.
\end{proof}

We now define unit vectors which, as we shall see, span the spaces $\VCib(m)$.
\begin{definition} \label{def:eta^Cib(m)}
    For all $n > 1$, all conjugacy classes $C \in (G)_{cj}$, all labels $i = 1, \cdots, \abs{C}$, all boundary conditions $b \in \bc[C]$, and all $m \in N_C$ we define the unit vector
    $$ \ket{\eta_n^{C;ib}(m)} := \frac{1}{\abs{\packib(m)}^{1/2}} \, \sum_{\al \in \packib(m)} \, \ket{\al}. $$
\end{definition}

We can now use the fact that $\WCib(m)$ is spanned by vectors $\ket{\al}$ with $\al \in \packib(m)$ and that the gauge group $\dgauge$ acts freely and transitively on $\packib(m)$ to show
\begin{lemma} \label{lem:orthonormality of the etaCib(m)}
    The space $\VCib(m) \subset \caH_n$ is one-dimensional and spanned by the vector $\ket{ \eta_n^{C;ib}(m) }$. In particular, the vectors $\biggl\{\ket{ \eta_n^{C;ib}(m) }\bigg\}_{C, i, b, m}$ form an orthonormal family.
\end{lemma}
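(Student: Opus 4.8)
The plan is to show two things: first, that every vector $\ket{\al}$ with $\al \in \packib(m)$ lies in $\VCib(m)$ after symmetrizing over the gauge group, so that $\VCib(m)$ is nonempty and spanned by $\ket{\eta_n^{C;ib}(m)}$; and second, that $\dim \VCib(m) \le 1$. The key structural input will be that the gauge group $\dgauge$ acts \emph{freely and transitively} on the set $\packib(m)$. Freeness is essentially automatic: a gauge transformation $U[\{g_v\}]$ fixing a configuration $\al$ on $\eregion$ forces $g_v = 1$ for every $v \in \dvregion$, because each such $v$ is incident to an edge whose gauge value is pinned down, or more directly because $A_v^{g_v} \ket{\al} = \ket{\al}$ with the $A_v^{g_v}$ acting as distinct permutations of the basis for $g_v \neq 1$ (one can invoke that gauge transformations send basis states to basis states, shown in Section \ref{subsec:preliminary notions}). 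Transitivity is where the real work is: given $\al, \al' \in \packib(m)$, I want a gauge transformation taking $\al$ to $\al'$. Since both have trivial flux at every face of $\dfregion$ and flux $c_i$ at $s_0$, the same flux everywhere in $\eregion$, and the same boundary condition $b$ and the same fiducial flux, the configuration $\al$ and $\al'$ must be related by a gauge transformation supported on $\dvregion$; this is a standard fact about flat gauge configurations on a simply connected region with fixed boundary data, and I would cite the relevant ribbon/flux lemmas from Appendix \ref{app:ribbonprops} (e.g.\ that flat configurations with matching holonomy data are gauge equivalent).

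Granting the free transitive action, the argument proceeds as follows. By Lemma \ref{lem:VCib(m) is dgauge projection of WCib(m)}, $\VCib(m) = \big(\tfrac{1}{|\dgauge|}\sum_{U \in \dgauge} U\big)\WCib(m)$, and by Lemma \ref{lem:WCib(m) spanned by string nets}, $\WCib(m) = \Span\{\ket{\al} : \al \in \packib(m)\}$. Applying the averaging projector to a single basis vector $\ket{\al}$ gives $\tfrac{1}{|\dgauge|}\sum_{U} \ket{U(\al)} = \tfrac{1}{|\dgauge|}\sum_{\al' \in \packib(m)} \ket{\al'}$, using transitivity to rewrite the sum over $U$ as a sum over the orbit and freeness to see each $\al'$ appears exactly $|\dgauge|/|\packib(m)|$ times — but freeness plus transitivity gives $|\dgauge| = |\packib(m)|$, so every $\al'$ appears exactly once. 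Hence the image of every basis vector of $\WCib(m)$ under the projector is the \emph{same} vector, proportional to $\sum_{\al' \in \packib(m)} \ket{\al'} = |\packib(m)|^{1/2} \ket{\eta_n^{C;ib}(m)}$. Therefore $\VCib(m)$ is spanned by this one vector, so $\dim \VCib(m) = 1$ (it is nonzero since $\packib(m)$ is nonempty, which holds precisely because $b \in \bc[C]$ — one needs that for $b$ compatible with $C$ there is at least one flat configuration realizing it, again from the ribbon lemmas).

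For the final assertion of the lemma — that the family $\{\ket{\eta_n^{C;ib}(m)}\}_{C,i,b,m}$ is orthonormal — normalization is immediate from Definition \ref{def:eta^Cib(m)} since the $\ket{\al}$ are orthonormal and there are $|\packib(m)|$ of them. Mutual orthogonality follows because distinct tuples $(C, i, b, m)$ index disjoint sets $\packib(m) \subset \gc$: distinct $b$ give disjoint sets since the boundary condition is determined by $\al$; for fixed $b$, the index $i$ is forced to be $i(b)$ by compatibility (so there is really no freedom there, or rather $\packib(m)$ is empty unless $c_i$ equals the boundary flux type, cf.\ Lemma \ref{lem:decomposition of WCi according to boundary conditions}) and distinct $C$ are disjoint as they partition $G$; and distinct $m$ give disjoint sets since $m$ is determined by $\al$ via $m = \bar q_i \phi_{\fidu}(\al) q_{i(b)}$. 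Disjoint supports of the spanning basis vectors force orthogonality of the $\ket{\eta}$'s.

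The main obstacle I anticipate is establishing transitivity of the $\dgauge$-action on $\packib(m)$ cleanly — i.e.\ proving that two flat gauge configurations on $\eregion$ with the same flux data, same boundary condition, and same fiducial flux differ by a gauge transformation supported on $\dvregion$. This is geometrically intuitive (it is the statement that $H^1$ of a disk with fixed boundary holonomy is trivial, phrased in lattice gauge language) but requires care about the precise topology of $\eregion$, the role of the single excluded vertex $v_0$ and face $f_0$, and matching up the fiducial ribbon's flux; I would handle it by an inductive peeling argument on $\dvregion$ or by directly invoking an appropriate lemma from Appendix \ref{app:ribbonprops}, and I expect the cardinality bookkeeping $|\dgauge| = |\packib(m)|$ to fall out as a byproduct.
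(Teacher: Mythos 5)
Your proposal is correct and follows essentially the same route as the paper: reduce via Lemma \ref{lem:VCib(m) is dgauge projection of WCib(m)} and Lemma \ref{lem:WCib(m) spanned by string nets} to averaging over $\dgauge$, and invoke the free and transitive action of $\dgauge$ on $\packib(m)$ (the paper's Lemma \ref{lem:transitive bulk action}) to conclude that every basis vector averages to the same uniform superposition. The transitivity statement you correctly identify as the real content is indeed outsourced to a separate appendix lemma in the paper, exactly as you propose.
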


\begin{proof}
    By Lemma \ref{lem:WCib(m) spanned by string nets}, we have $\WCib(m) = \Span{ \ket{\al} \, : \, \al \in \packib(m) }$. By Lemma \ref{lem:VCib(m) is dgauge projection of WCib(m)} it is sufficient to show that
    $$  \sum_{U \in \dgauge} U \ket{\al} \propto \ket{\eta_n^{C;ib}(m)}$$
    for all $\al \in \packib(m)$. This follows immediately from Lemma \ref{lem:transitive bulk action} which states that $\dgauge$ acts freely and transitively on $\packib(m)$:
    $$ \sum_{U \in \dgauge} U \ket{\al} = \sum_{\al' \in \packib(m)} \ket{\al'} = \abs{\packib(m)}^{1/2} \ket{\eta_n^{C;ib(m)}}. $$
\end{proof}

\subsection{Action of $N_C$ on fiducial flux and irreducible subspaces}

The gauge transformations $A_{v_0}^{q_i m \bar q_i}$ realise a left group action of $N_C$ on the vectors $\ket{\eta_n^{C;ib}(m)}$.
\begin{lemma} \label{lem:N_C action on VCib}
    For any $m_1, m_2 \in N_C$ we have
    $$ A_{\site}^{q_i m_1 \bar q_i} \ket{ \eta_{n}^{C;ib}(m_2) } = \ket{\eta_n^{C;ib}(m_1 m_2)}. $$
\end{lemma}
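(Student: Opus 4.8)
The plan is to reduce the statement to the level of the string-net basis vectors $\ket{\al}$ with $\al \in \packib(m_2)$, exploiting the explicit form of $\ket{\eta_n^{C;ib}(m_2)}$ from Definition \ref{def:eta^Cib(m)} together with the fact (Lemma \ref{lem:orthonormality of the etaCib(m)}) that $\VCib(m_2)$ is one-dimensional. First I would observe that $A_{\site}^{q_i m_1 \bar q_i}$, being a gauge transformation at $v_0$, maps basis states to basis states via the action $\al \mapsto A_{\site}^{q_i m_1 \bar q_i}(\al)$ on gauge configurations (as recalled at the end of Section \ref{subsec:preliminary notions}), so $A_{\site}^{q_i m_1 \bar q_i} \ket{\eta_n^{C;ib}(m_2)} = \abs{\packib(m_2)}^{-1/2} \sum_{\al \in \packib(m_2)} \ket{A_{\site}^{q_i m_1 \bar q_i}(\al)}$. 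The key computational step is then to check that the map $\al \mapsto A_{\site}^{q_i m_1 \bar q_i}(\al)$ restricts to a bijection $\packib(m_2) \to \packib(m_1 m_2)$.

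For this bijection I would verify each defining property of $\packib(\cdot)$ is preserved or transformed correctly. The constraints $B_f \ket{\al} = \ket{\al}$ for $f \in \dfregion$ and $P_b \ket{\al} = \ket{\al}$ are unaffected: a gauge transformation at $v_0$ commutes with all these projectors (Lemma \ref{lem:commuting constraints with fiducial flux}, or more directly Lemma \ref{lem:a and B commute with A_v and B_f} and the support considerations), and $b(\al)$ is unchanged since $v_0 \notin \partial \eregion$. The flux constraint $B_{\site}^{c_i}\ket{\al}=\ket{\al}$ is preserved because conjugating the holonomy around $f_0$ by $A_{\site}^{q_i m_1 \bar q_i}$ sends $c_i = q_i r_C \bar q_i$ to $(q_i m_1 \bar q_i)(q_i r_C \bar q_i)(q_i \bar m_1 \bar q_i) = q_i m_1 r_C \bar m_1 \bar q_i = q_i r_C \bar q_i = c_i$, using $m_1 \in N_C$; this is exactly the content of how $A_{v_0}$ acts on the flux, and should already be packaged in one of the ribbon-operator lemmas (the commutation of $A_s^h$ with $B_s^g$ up to conjugation). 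Finally, for the fiducial flux: the fiducial ribbon $\fidu$ starts at $\site$, so applying $A_{\site}^{q_i m_1 \bar q_i}$ multiplies $\phi_{\fidu}(\al)$ on the left by $q_i m_1 \bar q_i$, giving $\bar q_i \phi_{\fidu}(A_{\site}^{q_i m_1 \bar q_i}(\al)) q_{i(b)} = \bar q_i (q_i m_1 \bar q_i) \phi_{\fidu}(\al) q_{i(b)} = m_1 (\bar q_i \phi_{\fidu}(\al) q_{i(b)}) = m_1 m_2$. Hence the image lies in $\packib(m_1 m_2)$, and since $A_{\site}^{q_i \bar m_1 \bar q_i}$ provides a two-sided inverse (using $A_s^{h_1}A_s^{h_2} = A_s^{h_1 h_2}$), it is a bijection.

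Once the bijection is established, the conclusion is immediate: $A_{\site}^{q_i m_1 \bar q_i}\ket{\eta_n^{C;ib}(m_2)} = \abs{\packib(m_2)}^{-1/2}\sum_{\al' \in \packib(m_1 m_2)}\ket{\al'} = \ket{\eta_n^{C;ib}(m_1 m_2)}$, where I use that the bijection preserves cardinality so $\abs{\packib(m_2)} = \abs{\packib(m_1 m_2)}$. I expect the main obstacle to be carefully locating or re-deriving the precise statement about how the left gauge transformation at $\site$ transforms the flux through $f_0$ and the flux through $\fidu$ — the fiducial part is a genuine left multiplication (since $\fidu$ emanates from $\site$) while the $B_{\site}^{c_i}$ part involves conjugation, and one must make sure the bookkeeping of which $q_i$, $m_1$ appear on the left versus the right matches the conventions in Definitions \ref{def:WCib(m)} and \ref{def:string nets Cib(m)}. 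Everything else is routine given the commuting-projector lemmas already available.
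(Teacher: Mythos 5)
Your proposal is correct and follows essentially the same route as the paper's proof: both show that $A_{\site}^{q_i m_1 \bar q_i}$ maps string-net basis states to string-net basis states, preserves the constraints defining $\packib(\cdot)$ while left-multiplying the fiducial flux so as to give a bijection $\packib(m_2) \to \packib(m_1 m_2)$, and then conclude by summing over the basis. The only cosmetic difference is that you spell out the conjugation argument for why $B_{\site}^{c_i}$ is preserved, where the paper instead cites its commuting-projector lemmas.
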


\begin{proof}
    If $\al \in \packib(m_2)$, then by definition $\phi_{\fidu}(\al) = q_i m_2 \bar q_{i(b)}$. The gauge transformation $A_{\site}^{q_i m_1 \bar q_i}$ acts on such a string-net to yield $\ket{\al'} = A_{\site}^{q_i m_1 \bar q_i} \ket{\al}$ for a new string net $\al'$. Since $A_{\site}^{q_i m_1 \bar q_i}$ commutes with the projectors $\{ B_{\site}^{c_i}, P_b \} \cup \{ B_f \}_{f \in \dfregion}$ (Lemma \ref{lem:commuting local constraints}) and we have $\phi_{\fidu}(\al') = q_i m_1 m_2 \bar q_{i(b)}$ by Lemma \ref{lem:flux change}, we find that $\al' \in \packib(m_1 m_2)$. Since $A_{\site}^{q_i m_2 \bar q_i}$ acts invertibly on string nets, we see that it yields a bijection from $\packib(m_1)$ to $\packib(m_1 m_2)$. In particular, these sets have the same cardinality and
    \begin{align*}
        A_{\site}^{q_i m_1 \bar q_i} \ket{\eta_n^{C;ib}(m_2)} &= \frac{1}{\abs{\packib(m_2)}^{1/2}}  \sum_{\al \in \packib(m_2)} \, {q_i m_1 \bar q_i} \ket{\al} \\
                &= \frac{1}{\abs{\packib(m_1 m_2)}^{1/2}}  \sum_{\al \in \packib(m_1 m_2)} \, \ket{\al} \\  
                &= \ket{\eta_n^{C;ib}(m_1 m_2)}.
    \end{align*}
\end{proof}

The space spanned by the vectors $ \Bigl\{ \ket{\eta_n^{C;ib}(m)}  \Bigl\}_{m \in N_C}$ therefore carries the regular representation of $N_C$, with a left group action provided by the gauge transformations $A_{\site}^{q_i m \bar q_i}$ for $m \in N_C$, which are supported near the site $\site$. It turns out that this space also carries a natural right action of $N_C$ provided by unitaries supported near the boundary of $\eregion$, see Lemma \ref{lem:right action on VCib}.

We can characterise this space as follows. Let us define
\begin{definition} \label{def:VCib}
    $\VCib \subset \caH_n$ is the subspace consisting of vectors $\ket{\psi} \in \caH_n$ such that
    $$\ket{\psi} = A_v \ket{\psi} = B_f \ket{\psi} = B_{\site}^{c_i} \ket{\psi} = P_b \ket{\psi}$$
    for all $v \in \dvregion$ and all $f \in \dfregion$.
\end{definition}
Then
\begin{lemma} \label{lem:VCib as direct sum}
    $$\VCib = \bigoplus_{m \in N_C} \, \VCib(m) = \Span{  \ket{\eta_n^{C;ib}(m)} \, : \, m \in N_C }.$$
\end{lemma}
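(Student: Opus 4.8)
The plan is to prove the two claimed equalities in turn, starting from the direct-sum decomposition and then identifying each summand with the span of the corresponding $\eta$-vector. First I would establish the inclusion $\bigoplus_{m \in N_C} \VCib(m) \subseteq \VCib$, which is immediate: any vector in $\VCib(m)$ satisfies all the defining constraints of $\VCib$ (Definition \ref{def:VCib}) since those are a subset of the constraints in Definition \ref{def:VCib(m)}, and the sum is orthogonal because the $T_{\fidu}^{q_i m \bar q_{i(b)}}$ are mutually orthogonal projectors for distinct $m$. For the reverse inclusion $\VCib \subseteq \bigoplus_{m} \VCib(m)$, I would use that the fiducial flux projectors $\{T_{\fidu}^g\}_{g \in G}$ form a resolution of the identity (they are orthogonal and sum to $\mathds{1}$, by Lemma \ref{lem:flux projector}) and that each $T_{\fidu}^g$ commutes with all the projectors defining $\VCib$ — this is exactly the content of Lemma \ref{lem:commuting constraints with fiducial flux} combined with the fact that $T_{\fidu}^g$ commutes with $A_v$ for $v \in \dvregion$ (Lemma \ref{lem:flux change}). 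Hence for $\ket{\psi} \in \VCib$ we get $\ket{\psi} = \sum_g T_{\fidu}^g \ket{\psi}$ with each summand still in $\VCib$; and by Lemma \ref{lem:inNC}, the only $g$ that can contribute are those of the form $q_i m \bar q_{i(b)}$ with $m \in N_C$, since any string net appearing in the expansion of $\ket{\psi}$ (which lies in $\WCib$, hence is a combination of $\ket{\al}$ with $\al \in \packib$) has $\bar q_i \phi_{\fidu}(\al) q_{i(b)} \in N_C$. So $T_{\fidu}^g \ket{\psi} \in \VCib(m)$ for the appropriate $m$, giving the decomposition.

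For the second equality, I would invoke Lemma \ref{lem:orthonormality of the etaCib(m)}, which already states that $\VCib(m)$ is one-dimensional and spanned by $\ket{\eta_n^{C;ib}(m)}$. Substituting this into the direct-sum decomposition just obtained yields
$$\VCib = \bigoplus_{m \in N_C} \C\, \ket{\eta_n^{C;ib}(m)} = \Span{\ket{\eta_n^{C;ib}(m)} \, : \, m \in N_C},$$
which is exactly the claim. The orthonormality of the family $\{\ket{\eta_n^{C;ib}(m)}\}_m$ (also from Lemma \ref{lem:orthonormality of the etaCib(m)}) confirms that the direct sum is genuinely orthogonal and that the dimension of $\VCib$ is $|N_C|$, though this is not strictly needed for the stated identity.

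The only mild subtlety — and the step I would be most careful about — is the claim that applying $T_{\fidu}^g$ to a vector $\ket{\psi} \in \VCib$ keeps it inside $\VCib$ rather than merely inside $\caH_n$: this requires all defining projectors of $\VCib$ to commute with $T_{\fidu}^g$, and then $T_{\fidu}^g \ket{\psi}$ automatically satisfies each constraint since e.g. $A_v T_{\fidu}^g \ket{\psi} = T_{\fidu}^g A_v \ket{\psi} = T_{\fidu}^g \ket{\psi}$. This commutation is precisely Lemma \ref{lem:commuting constraints with fiducial flux}. The restriction of $g$ to the set $\{q_i m \bar q_{i(b)} : m \in N_C\}$ uses Lemma \ref{lem:inNC} via the fact established earlier (Lemma \ref{lem:WCib spanned by string nets}) that $\VCib \subseteq \WCib$ is spanned by string nets $\ket{\al}$ with $\al \in \packib$. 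Everything else is bookkeeping, so I expect no real obstacle here — this lemma is essentially a corollary assembling Lemmas \ref{lem:orthonormality of the etaCib(m)}, \ref{lem:commuting constraints with fiducial flux}, and \ref{lem:inNC}.
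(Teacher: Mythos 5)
Your proposal is correct and follows essentially the same route as the paper's proof: decompose $\VCib$ using the fiducial flux projectors, which commute with all defining constraints (Lemma \ref{lem:commuting constraints with fiducial flux}) and restrict to $g = q_i m \bar q_{i(b)}$ by Lemma \ref{lem:inNC} via the string-net spanning set, then conclude with Lemma \ref{lem:orthonormality of the etaCib(m)}. No gaps.
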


\begin{proof}
    The spaces $\VCib(m)$ are defined by the same constraints as the space $\VCib$, plus the constraint $T_{\fidu}^{q_i m \bar q_{i(b)}}$ on the fiducial flux (Definitions \ref{def:VCib(m)} and \ref{def:VCib}).
    
    Since $\VCib \subset \WCib$ (cf. Definition \ref{def:WCib}) is spanned by vectors $\ket{\al}$ for $\al \in \packib$ (Lemma \ref{lem:WCib spanned by string nets}) which satisfy $q_i \phi_{\fidu}(\al) \bar q_{i(b)} \in N_C$ (Lemma \ref{lem:inNC}), we have (using Lemma \ref{lem:flux projector}) $\sum_{m \in N_C} T_{\fidu}^{q_i m \bar q_{i(b)}} \, \ket{\phi} = \ket{\phi}$ for all $\ket{\phi} \in \VCib$.

    Using Lemma \ref{lem:commuting constraints with fiducial flux}, it follows that
    $$ \VCib = \sum_{m \in N_C} \, T_{\fidu}^{q_i m \bar q_{{i(b)}}} \, \VCib = \bigoplus_{m \in N_C} \, \VCib(m)$$
    as required.

    The second equality in the claim now follows immediately from Lemma \ref{lem:orthonormality of the etaCib(m)}.
\end{proof}

Since $\VCib$ carries the regular representation of $N_C$, we can construct an orthonormal basis of $\VCib$ that respects the irreducible subspaces of $\VCib$ for both the left and the right action of $N_C$.
\begin{definition} \label{def:representation basis}
    For any conjugacy class $C \in (G)_{cj}$, any irreducible representation $R \in (N_C)_{irr}$, any label $u = (i, j) \in I_{RC}$, any boundary condition $b \in \bc[C]$, and any label $j' = 1, \cdots, \dimR$, and writing $v = (b, j')$, we define a vector
    $$ \qdpure := \left(\frac{\dimR}{\abs{N_C}} \right)^{1/2} \, \sum_{m \in N_C} \, R^{j j'}(m)^*  \ket{\eta_n^{C;ib}(m)}.$$
\end{definition}

We will write $I'_{RC} := \bc[C] \times \{ 1, \cdots, \dimR \}$ for the possible values of the label $v = (b, j')$.

\begin{lemma} \label{lem:etas form ONB}
    The vectors $\{ \qdpure \}$ form an orthonormal family, i.e.
    $$ \inner{\eta_n^{R_1C_1;u_1 v_1}}{ \eta_n^{R_2 C_2 ; u_2 v_2} } = \delta_{R_1 C_1, R_2 C_2} \delta_{u_1, u_2} \delta_{v_1, v_2}. $$
\end{lemma}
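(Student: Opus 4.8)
The plan is to reduce everything to the orthonormality of the regular-representation building blocks $\ket{\eta_n^{C;ib}(m)}$ established in Lemma~\ref{lem:orthonormality of the etaCib(m)}, combined with the Schur orthogonality relations for the irreducible matrix coefficients $R^{jj'}$. First I would expand both inner products using Definition~\ref{def:representation basis}:
\begin{equation*}
\inner{\eta_n^{R_1C_1;u_1v_1}}{\eta_n^{R_2C_2;u_2v_2}} = \left(\frac{\dimR[R_1]\,\dimR[R_2]}{\abs{N_{C_1}}\abs{N_{C_2}}}\right)^{1/2} \sum_{m_1 \in N_{C_1}} \sum_{m_2 \in N_{C_2}} R_1^{j_1 j_1'}(m_1) \, R_2^{j_2 j_2'}(m_2)^* \, \inner{\eta_n^{C_1;i_1 b_1}(m_1)}{\eta_n^{C_2;i_2 b_2}(m_2)}.
\end{equation*}
By Lemma~\ref{lem:orthonormality of the etaCib(m)} the vectors $\ket{\eta_n^{C;ib}(m)}$ form an orthonormal family indexed by $(C, i, b, m)$, so the inner product on the right vanishes unless $C_1 = C_2 =: C$, $i_1 = i_2$, $b_1 = b_2$, and $m_1 = m_2$. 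In particular, since $v_k = (b_k, j_k')$ and $u_k = (i_k, j_k)$, if $C_1 \neq C_2$ or $b_1 \neq b_2$ the whole expression is zero, which already handles $\delta_{RC}$ (partially) and part of $\delta_{v_1,v_2}$.

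Next, assuming $C_1 = C_2 = C$, $i_1 = i_2 = i$, $b_1 = b_2 = b$, the double sum collapses to a single sum over $m \in N_C$:
\begin{equation*}
\inner{\eta_n^{R_1C;u_1v_1}}{\eta_n^{R_2C;u_2v_2}} = \frac{(\dimR[R_1]\,\dimR[R_2])^{1/2}}{\abs{N_C}} \sum_{m \in N_C} R_1^{j_1 j_1'}(m) \, R_2^{j_2 j_2'}(m)^*.
\end{equation*}
Now I would invoke the Schur orthogonality relations for the unitary irreducible matrix representations of the finite group $N_C$: $\frac{1}{\abs{N_C}}\sum_{m} R_1^{j_1 j_1'}(m)\,R_2^{j_2 j_2'}(m)^* = \frac{1}{\dimR[R_1]}\delta_{R_1 R_2}\delta_{j_1 j_2}\delta_{j_1' j_2'}$ (using that the $R$'s run over a fixed set of representatives of $(N_C)_{irr}$, so $\delta_{R_1 R_2}$ is literal equality of the chosen representatives). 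Substituting, the prefactor $(\dimR[R_1]\dimR[R_2])^{1/2}$ becomes $\dimR[R_1]$ on the support of $\delta_{R_1 R_2}$, cancelling the $1/\dimR[R_1]$, and we are left with $\delta_{R_1 R_2}\delta_{j_1 j_2}\delta_{j_1' j_2'}$. Combining this with the constraints $C_1 = C_2$, $i_1 = i_2$, $b_1 = b_2$ forced by the previous step gives exactly $\delta_{R_1C_1, R_2C_2}\,\delta_{u_1,u_2}\,\delta_{v_1,v_2}$, since $u = (i,j)$ and $v = (b,j')$.

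This proof is essentially a bookkeeping exercise, so I do not anticipate a genuine obstacle; the only mild subtlety is keeping the index conventions straight — that $u$ packages $(i,j)$ and $v$ packages $(b,j')$, and that the $\delta_{R_1 C_1, R_2 C_2}$ in the statement really requires \emph{both} $C_1 = C_2$ (from the $\eta$ orthogonality) \emph{and} $R_1 = R_2$ (from Schur). One should also note in passing that the prefactor and normalization in Definition~\ref{def:representation basis} were chosen precisely so that these cancellations produce Kronecker deltas with coefficient exactly $1$, confirming that the $\qdpure$ are unit vectors (the case $R_1C_1 = R_2C_2$, $u_1 = u_2$, $v_1 = v_2$).
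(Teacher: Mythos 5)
Your proof is correct and follows essentially the same route as the paper: expand both vectors via Definition \ref{def:representation basis}, use the orthonormality of the $\ket{\eta_n^{C;ib}(m)}$ from Lemma \ref{lem:orthonormality of the etaCib(m)} to collapse the double sum and force $C_1=C_2$, $i_1=i_2$, $b_1=b_2$, then finish with the Schur orthogonality relation \eqref{eq:Schur}. The bookkeeping of the index conventions and the normalization cancellation are handled exactly as in the paper's proof.
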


\begin{proof}
    Let $u_1 = (i_1, j_1)$, $u_2 = (i_2, j_2)$, $v_1 = (b_1, j'_1)$ and $v_2 = (b_2, j'_2)$. Then
    $$ \inner{\eta_n^{R_1C_1;u_1 v_1}}{ \eta_n^{R_2 C_2 ; u_2 v_2} } = \frac{(\dim(R_1) \dim(R_2))^{1/2}}{\abs{N_{C_1}}} \, \delta_{C_1, C_2} \, \delta_{i_1, i_2} \delta_{b_1, b_2} \, \sum_{m \in N_{C_1}} \, R_1^{j_1 j'_1}(m) \, R_2^{j_2 j'_2}(m)^* $$
    where we used that the vectors $\ket{\eta_{n}^{C;ib}(m)}$ are orthonormal (Lemma \ref{lem:orthonormality of the etaCib(m)}). Now using the Schur orthogonality relation \eqref{eq:Schur} gives us the required result.
\end{proof}

The $\qdpure$ were in fact obtained by a unitary rotation of the states $\ket{\eta_n^{C;ib}(m)}$, and this rotation can be reversed.
\begin{lemma} \label{lem:inverse rotation}
    We have for all $C \in (G)_{cj}$, all $i = 1, \cdots, \abs{C}$, all $b \in \bc[C]$, and all $m \in N_C$ that
    $$ \ket{\eta_n^{C;ib}(m)} = \sum_{R \in (N_C)_{irr}} \, \left( \frac{\dimR}{\abs{N_C}} \right)^{1/2} \, \sum_{j, j'} \, R^{j j'}(m) \, \ket{ \eta_n^{RC;(i, j)(b, j')} }. $$
\end{lemma}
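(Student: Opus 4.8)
### Proof Proposal for Lemma \ref{lem:inverse rotation}

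The plan is to recognize this as the inversion of the finite Fourier transform on $N_C$ that was used to define the vectors $\ket{\eta_n^{RC;(i,j)(b,j')}}$ in Definition \ref{def:representation basis}. Recall that by definition
\begin{equation*}
    \ket{\eta_n^{RC;(i,j)(b,j')}} = \left(\frac{\dimR}{\abs{N_C}}\right)^{1/2} \sum_{m' \in N_C} R^{jj'}(m')^* \ket{\eta_n^{C;ib}(m')}.
\end{equation*}
So the right-hand side of the claimed identity is
\begin{equation*}
    \sum_{R} \frac{\dimR}{\abs{N_C}} \sum_{j,j'} \sum_{m' \in N_C} R^{jj'}(m) R^{jj'}(m')^* \ket{\eta_n^{C;ib}(m')},
\end{equation*}
and the task reduces to showing that the scalar coefficient of $\ket{\eta_n^{C;ib}(m')}$ equals $\delta_{m, m'}$.

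First I would interchange the (finite) sums so that the inner object is $\sum_{R \in (N_C)_{irr}} \frac{\dimR}{\abs{N_C}} \sum_{j,j'} R^{jj'}(m) \overline{R^{jj'}(m')}$. Then I would invoke the Schur orthogonality / completeness relations for the finite group $N_C$ — the same relation \eqref{eq:Schur} already cited in the proof of Lemma \ref{lem:etas form ONB}, but used in its "column" (completeness) form rather than its "row" form. Concretely, the second orthogonality relation for a finite group states that $\sum_{R} \dimR \sum_{j,j'} R^{jj'}(m) \overline{R^{jj'}(m')} = \abs{N_C}\, \delta_{m,m'}$ (equivalently, the characters of the regular representation). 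Dividing by $\abs{N_C}$ gives exactly $\delta_{m,m'}$, so the sum over $m'$ collapses and only the term $\ket{\eta_n^{C;ib}(m)}$ survives, which is the claim. Since the vectors $\ket{\eta_n^{C;ib}(m)}$ are well-defined for $m \in N_C$ (and form an orthonormal family by Lemma \ref{lem:orthonormality of the etaCib(m)}), all manipulations are legitimate and there are no convergence issues.

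The only point requiring a little care is ensuring that the completeness form of Schur orthogonality is available in the conventions fixed in the paper: the representations $R$ are taken to be concrete \emph{unitary} matrix representations (as stated before Definition \ref{def:Wigner projectors}), so $R^{jj'}(m)^{-1} = \overline{R^{j'j}(m)}$ and the two orthogonality relations hold in their standard form. If \eqref{eq:Schur} is stated only in the "first" (row-orthogonality) form, I would note that the completeness relation follows from it by a dimension count, or simply cite it as the standard dual orthogonality relation. This is really the whole argument; there is no genuine obstacle, only bookkeeping with indices and the choice of which orthogonality relation to cite. One could alternatively phrase the proof more structurally: Definition \ref{def:representation basis} exhibits a block-diagonal unitary change of basis from $\{\ket{\eta_n^{C;ib}(m)}\}_{m}$ to $\{\ket{\eta_n^{RC;(i,j)(b,j')}}\}_{R,j,j'}$ on the space $\VCib$ carrying the regular representation of $N_C$, and Lemma \ref{lem:etas form ONB} shows the target family is orthonormal; the claimed formula is then just the adjoint (inverse) of this unitary applied to the defining one, with matrix entries $R^{jj'}(m)$ in place of $\overline{R^{jj'}(m)}$.
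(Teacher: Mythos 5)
Your proposal is correct and follows essentially the same route as the paper: substitute Definition \ref{def:representation basis}, interchange the finite sums, and collapse the coefficient of $\ket{\eta_n^{C;ib}(m')}$ to $\delta_{m,m'}$ via the second (completeness/column) Schur orthogonality relation, which the paper invokes as \eqref{eq:Schur2} after rewriting $\sum_{j,j'}R^{jj'}(m)R^{jj'}(m')^*=\chi_R(m\bar m')$ using unitarity. Your matrix-element form of the completeness relation and the paper's character form are equivalent, so there is no substantive difference.
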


\begin{proof}
    We have
    \begin{align*}
        \sum_{R \in (N_C)_{irr}} \, & \left( \frac{\dimR}{\abs{N_C}} \right)^{1/2} \, \sum_{j, j'} \, R^{j j'}(m) \, \ket{ \eta_n^{RC;(i, j)(b, j')} } \\
        &= \sum_{R \in (N_C)_{irr}} \, \frac{\dimR}{\abs{N_C}}  \, \sum_{j, j'} \, \sum_{m' \in N_C} \, R^{j j'}(m) R^{j j'}(m')^* \, \ket{\eta_n^{C;ib}(m')} \\
        &=  \, \sum_{R \in (N_C)_{irr}} \, \frac{1}{\abs{N_C}} \sum_{m' \in N_C} \, \chi_R(m \bar m') \chi_R(1)^* \, \ket{\eta_n^{C;ib}(m')} \\
        &= \ket{\eta_n^{C;ib}(m)}
    \end{align*}
    where we used $\chi_R(1) = \dimR$ and the Schur orthogonality relation \eqref{eq:Schur2} for irreducible characters.
\end{proof}

\subsection{Characterisation of the spaces $\overline{\V}$, $\VRC$, and $\VRCu$}

We can now describe the spaces $\overline{\V}, \VRC$ and $\VRCu$ from Definition \ref{def:local constraints} in terms of the vectors $\qdpure$.

\begin{proposition} \label{prop:local spaces spanned by etas}
    We have
    \begin{align*}
        \overline{\V} &= \Span{  \qdpure  \, : \, C \in (G)_{cj}, R \in (N_C)_{irr}, u \in I_{RC}, v \in I'_{RC} }, \\
        \VRC &= \Span{  \qdpure \, : \, u \in I_{RC}, v \in I'_{RC} }, \\
        \VRCu &= \Span{  \qdpure \,: \, v \in I'_{RC} }.
    \end{align*}
\end{proposition}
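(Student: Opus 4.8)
The plan is to reduce each of the three claimed identities to the structural results already obtained about the spaces $\VCib$, using the orthogonal decompositions over boundary conditions and conjugacy classes that have been set up, and then to match the constraint $D_{\site}^{RC;u}\ket{\psi}=\ket{\psi}$ (resp. $D_{\site}^{RC}\ket{\psi}=\ket{\psi}$) against the defining constraints of the $\VCib(m)$.

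First I would treat $\overline{\V}$. By Definition~\ref{def:local constraints}, $\overline{\V}$ is cut out by $\{A_v\}_{v\in\dvregion}$ and $\{B_f\}_{f\in\dfregion}$; inserting the resolution of identity $\mathds{1}=\sum_{b\in\bc}P_b$ (the $P_b$ are orthogonal projectors summing to $\mathds{1}$, and by Lemma~\ref{lem:commuting local constraints} they commute with the $A_v$ and $B_f$) gives $\overline{\V}=\bigoplus_{b}(\prod_v A_v)(\prod_f B_f)P_b\,\caH_n$. Each summand is, once one also resolves the flux at $\site$ via $\mathds{1}=\sum_{i}B_{\site}^{c_i}$ and notes (Lemma~\ref{lem:decomposition of WCi according to boundary conditions}, via Lemma~\ref{lem:bcinC}) that only the index $i=i(b)$ contributes, precisely $\VCib$ with $i=i(b)$. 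Hence $\overline{\V}=\bigoplus_{C}\bigoplus_{b\in\bc[C]}\VCib$ with $i=i(b)$. By Lemma~\ref{lem:VCib as direct sum}, $\VCib=\Span\{\ket{\eta_n^{C;ib}(m)}:m\in N_C\}$, and by Lemma~\ref{lem:inverse rotation} this same space is spanned by $\{\qdpure:R\in(N_C)_{irr},\ j,j'\}$ with $u=(i,j)$, $v=(b,j')$. Summing over $C$ and $b$ yields the first identity; the orthonormality in Lemma~\ref{lem:etas form ONB} shows the spanning set is a basis, so no vectors are lost or double-counted.

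Next, for $\VRC$ and $\VRCu$ I would use that by Definition~\ref{def:local constraints} these are the subspaces of $\overline{\V}$ on which $D_{\site}^{RC}$ (resp.\ $D_{\site}^{RC;u}$) acts as the identity, and that $D_{\site}^{RC}=\sum_u D_{\site}^{RC;u}$ with the $D_{\site}^{RC;u}$ commuting orthogonal projectors (Lemma~\ref{lem:DRC decomposes into DRCu}), and $\mathds{1}=\sum_{RC}D_{\site}^{RC}$ (Lemma~\ref{lem:DRCprops}). So it suffices to compute how each $D_{\site}^{R'C';u'}$ acts on the basis vectors $\qdpure$ of $\overline{\V}$. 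Using $D_s^{RC;u}=A_s^{RC;u}B_s^{c_i}$ with $u=(i,j)$: the factor $B_{\site}^{c_{i'}}$ forces $C'=C$ and $i'=i$ (matching the conjugacy class and the iterator index carried by $\ket{\eta_n^{C;ib}(m)}$), and then $A_{\site}^{RC;(i,j)}=\frac{\dimR}{|N_C|}\sum_{m\in N_C}R^{jj}(m)^*A_{\site}^{q_i m\bar q_i}$ acts on $\qdpure$ through the left $N_C$-action of Lemma~\ref{lem:N_C action on VCib}; a short Schur-orthogonality computation (Eq.~\eqref{eq:Schur}) shows $A_{\site}^{R'C;(i,j')}\qdpure=\delta_{R',R}\,\delta_{j',j}\,\qdpure$ when the representation label matches and annihilates it otherwise. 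Hence $D_{\site}^{R'C';u'}\qdpure=\delta_{R'C',RC}\,\delta_{u',u}\,\qdpure$, which immediately gives $\VRC=\Span\{\qdpure:u\in I_{RC},v\in I'_{RC}\}$ and $\VRCu=\Span\{\qdpure:v\in I'_{RC}\}$.

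The main obstacle I anticipate is the bookkeeping in the middle step: one must be careful that the $\qdpure$ are genuinely eigenvectors of each $D_{\site}^{R'C';u'}$ and not merely that they span the right eigenspace---this requires knowing that $B_{\site}^{c_i}$ and $A_{\site}^{RC;u}$ preserve $\overline{\V}$ (so that the relevant operators commute with the projector onto $\overline{\V}$), which follows from Lemma~\ref{lem:commuting local constraints} and Lemma~\ref{lem:a and B commute with A_v and B_f}, and that the right $N_C$-action on the boundary (Lemma~\ref{lem:right action on VCib}) does not interfere, i.e.\ that the $v=(b,j')$ label is untouched by $D_{\site}^{\bullet}$ since the latter is supported near $\site$. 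Once these commutation and support facts are marshalled, the identities drop out of Schur orthogonality; I would not expect any analytic difficulty, only the combinatorial care of tracking the five indices $C,R,i,j,b,j'$ consistently.
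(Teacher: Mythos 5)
Your treatment of the second and third identities is sound and matches the paper's route: the paper also obtains $\VRC = D_{\site}^{RC}\overline{\V}$ and $\VRCu = D_{\site}^{RC;u}\VRC$ and then invokes appendix lemmas stating exactly the eigenvector relations $D_{\site}^{R'C';u'}\ket{\eta_n^{RC;uv}} = \delta_{R'C',RC}\delta_{u',u}\ket{\eta_n^{RC;uv}}$; your inline Schur-orthogonality computation via the left $N_C$-action of Lemma \ref{lem:N_C action on VCib} is a correct substitute for those lemmas.

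However, there is a genuine error in your first step. You claim that, for a fixed boundary condition $b$, resolving the flux at $\site$ via $\sum_{C,i}B_{\site}^{c_i}$ leaves only the single term $i=i(b)$, so that $\overline{\V}=\bigoplus_C\bigoplus_{b\in\bc[C]}\VCib$ with $i=i(b)$. What Lemma \ref{lem:decomposition of WCi according to boundary conditions} (via Lemma \ref{lem:bcinC}) actually gives is only that the \emph{conjugacy class} of the flux at $\site$ must agree with that of the boundary flux, i.e. $B_{\site}^{c_i}\bigl(\prod_f B_f\bigr)P_b=0$ unless $\phi_{\bdy}(b)\in C$; within that class every index $i=1,\dots,\abs{C}$ contributes, because the parallel transport along the fiducial ribbon (the element $q_i m \bar q_{i(b)}$ of Lemma \ref{lem:inNC}) can conjugate $c_i$ into $c_{i(b)}$ for any $i$. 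The correct decomposition is $\overline{\V}=\bigoplus_C\bigoplus_{i=1}^{\abs{C}}\bigoplus_{b\in\bc[C]}\VCib$, as in the paper. Your restricted version undercounts: for fixed $C$ and $b$ it yields only $\sum_R(\dim R)^2=\abs{N_C}$ basis vectors instead of the correct $\abs{C}\cdot\abs{N_C}$ (each $\VCib(m)$ is one-dimensional by Lemma \ref{lem:orthonormality of the etaCib(m)}), and the resulting spanning set $\{\ket{\eta_n^{RC;uv}}: u=(i(b),j)\}$ is strictly smaller than the set $\{\ket{\eta_n^{RC;uv}}: u\in I_{RC}\}$ appearing in the statement you are proving. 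Since the second and third identities are derived from the first by applying projectors, the error propagates; dropping the spurious constraint $i=i(b)$ repairs the argument.
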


\begin{proof}
    We first note that it follows from Lemma \ref{lem:decomposition of WCi according to boundary conditions}, Definition \ref{def:WCib}, and Lemma \ref{lem:commuting local constraints} that
    $$ B_{\site}^{c_i} \bigg( \prod_{f \in \dfregion}  B_f \bigg)  \, P_b  = 0$$
    whenever $b \in \bc$ is not compatible with $C$. Using this, the fact that $\sum_{C, i} B_{\site}^{c_i} = \sum_{b \in \bc} P_b = \I$, and Lemma \ref{lem:commuting local constraints} we find
    \begin{align*}
        \overline{\V} &= \bigg( \prod_{v \in \dvregion} A_v \prod_{f \in \dfregion} B_f  \bigg)  \, \caH_n = \sum_{C \in (G)_{cj}} \sum_{i = 1}^{\abs{C}} \sum_{b \in \bc}  \bigg( \prod_{v \in \dvregion} A_v \bigg) \, B_{\site}^{c_i} \bigg( \prod_{f \in \dfregion} B_f  \bigg)  \, P_b \, \caH_n \\
        &= \sum_{C \in (G)_{cj}} \sum_{i = 1}^{\abs{C}} \sum_{b \in \bc[C]} \, \bigg( \prod_{v \in \dvregion} A_v \bigg) \, B_{\site}^{c_i} \bigg( \prod_{f \in \dfregion} B_f  \bigg)  \, P_b \, \caH_n \\
        &= \bigoplus_{C \in (G)_{cj}} \bigoplus_{i = 1}^{\abs{C}} \bigoplus_{b \in \bc[C]} \, \VCib,
    \end{align*}
    where we used the Definition \ref{def:VCib} of the spaces $\VCib$. From Lemma \ref{lem:VCib as direct sum} it then follows that
    $$ \overline{\V} = \Span{ \ket{\eta_n^{C;ib}(m)} \, : \, C \in (G)_{cj}, i = 1, \cdots, \abs{C}, b \in \bc[C], m \in N_C  }. $$
    Together with Definition \ref{def:representation basis} and Lemma \ref{lem:inverse rotation} this yields the first claim.
    
    To show the second claim we note that $\VRC = D_{\site}^{RC} \overline{\V}$, and for any $C_1, C_2 \in (G)_{cj}$, $R_1 \in (N_{C_1})_{irr}$, $R_2 \in (N_{C_2})_{irr}$, $u = (i, j) \in I_{R_2 C_2}$, and $v = (b, j') \in I'_{R_2 C_2}$ we have (Lemma \ref{lem:action of Wigner projecitons on string-net condensates})
    $$ D_{\site}^{R_1 C_1} \ket{ \eta_n^{R_2 C_2 ; u v} } = \delta_{R_1 C_1, R_2 C_2}  \, \ket{\eta_n^{R_2 C_2 ; u v}}.$$
    The second claim of the Proposition then follows from the fact that $\overline{\V}$ is spanned by the vectors $\ket{\eta_n^{RC;uv}}$ for arbitrary $RC$ and $u \in I_{RC}$, $v \in I'_{RC}$.

    To show the final claim we note that we have $\VRCu = D_{\site}^{RC;u} \VRC$, and for any $u_1, u_2 \in I_{RC}$ and any $v \in I'_{RC}$ we have (Lemma \ref{lem:action of Wigner sub-projector on string-net condensates})
    $$ D_{\site}^{RC;u_1} \, \ket{ \eta_n^{RC;u_2 v} } = \delta_{u_1, u_2} \, \ket{\eta_n^{RC;u_2 v}}.$$
    The final claim then follows from the fact that $\VRC$ is spanned by the vectors $\ket{\eta_n^{RC;uv}}$ for $u \in I_{RC}$ and $v \in I'_{RC}$.
\end{proof}

\subsection{The bulk is independent of boundary conditions}

Let us define the following operators
\begin{definition} \label{def:label changers}
    For any site $s$, any $n$, any $u_1 = (i_1, j_1), u_2 = (i_2, j_2) \in I_{RC}$, and any $v_1 = (b_1, j_1'), v_2 = (b_2, j_2') \in I'_{RC}$ we define
    \begin{align*}
        A_{s}^{RC; u_2 u_1} &:= \frac{\dimR}{|N_C|}\sum_{m \in N_C} R^{j_2j_1}(m)^* A_{s}^{q_{i_2} m \dash{q}_{i_1}}, \\
        \tilde{A}_n^{RC; v_2 v_1} &:=\frac{\dimR}{|N_C|} \sum_{m \in N_C} R^{j'_2j'_1}(m)  U_{b_2 b_1} L_{\bdy}^{{q_{i(b_1)} \dash{m} \,\dash{q}_{i(b_1)}}}
    \end{align*}
    where $\bdy$ is the boundary ribbon and $U_{b_2 b_1}$ is a boundary unitary provided by Lemma \ref{lem:transitive boundary action} which we choose such that $U_{b_2 b_1} = U_{b_1 b_2}^*$. These boundary unitaries satisfy the following: for any $\alpha \in \packC[C;ib_1]$ we have $U_{b_2 b_1} \ket{\alpha} = \ket{\alpha'}$ where $\alpha' \in \packC[C;i b_2]$, and $\alpha'_e = \alpha_e$ for all $e \in \eregion[n-1]$ and $b(\alpha') = b_2$.
\end{definition}

Note that the $\tilde{A}_n^{RC; v_2 v_1}$ are supported on $\eregion \setminus \eregion[n-1]$ and $A_{\site}^{RC; u_2 u_1}$ is supported on $\eregion[1]$. From Lemma \ref{lem:aconverter} we have for any $u, u_1, u_2 \in I_{RC}$ and any $v, v_1, v_2 \in I'_{RC}$ that
$$A_{\site}^{RC; u_2 u_1} \qdpure[RC;u_1 v] = \qdpure[RC;u_2 v], \quad \quad \tilde{A}_n^{RC; v_2 v_1} \qdpure[RC;u v_1] = \qdpure[RC; u, v_2]$$
as well as
$$(A_{\site}^{RC; u_1 u_2})^* \qdpure[RC;u_1 v] = \qdpure[RC;u_2 v], \quad \quad (\tilde{A}_n^{RC; v_1 v_2})^* \qdpure[RC;u v_1] = \qdpure[RC; u, v_2]$$
\ie these operators change the labels $u$ and $v$ when acting on the states $\qdpure$. We can use these `label changers' to show that expectation values of operators supported on $\eregion[n-1]$ in the state $\qdpure$ are independent of the boundary label $v$.

\begin{lemma}
\label{lem:purerest}
    We have for all $O \in \cstar[{\eregion[n-1]}]$, all $u, u' \in I_{RC}$, and all $v, v', v'' \in I'_{RC}$ that
    $$\inner{\eta_n^{uv}}{O \eta_n^{u'v'}} = \delta_{vv'} \inner{\eta_n^{u'v''}}{O \, \eta_n^{u'v''}}.$$
    In particular, $\inner{\qdstpure}{O \, \qdstpure}$ is independent of $v$. 
\end{lemma}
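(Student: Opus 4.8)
The plan is to exploit the ``label changer'' operators $A_{\site}^{RC;u_2u_1}$ and $\tilde A_n^{RC;v_2v_1}$ introduced in Definition \ref{def:label changers}, together with the orthonormality of the $\qdpure$ from Lemma \ref{lem:etas form ONB} and the crucial observation about where these operators are supported: $A_{\site}^{RC;u_2u_1}$ is supported on $\eregion[1]$ and $\tilde A_n^{RC;v_2v_1}$ is supported on $\eregion \setminus \eregion[n-1]$, both of which are disjoint from $\eregion[n-1]$. Since $O \in \cstar[{\eregion[n-1]}]$, it commutes with every $\tilde A_n^{RC;v_2v_1}$.

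First I would establish the $\delta_{vv'}$ factor. Using $\qdpure[RC;u'v'] = \tilde A_n^{RC;v'v''}\,\qdpure[RC;u'v'']$ and $\qdpure[RC;uv] = \tilde A_n^{RC;vv''}\,\qdpure[RC;uv'']$ (from the label-changer identities), together with $(\tilde A_n^{RC;v''v})^* = (\tilde A_n^{RC;vv''})^{?}$ — more precisely using the adjoint identities $(\tilde A_n^{RC;v_1v_2})^*\qdpure[RC;uv_1] = \qdpure[RC;uv_2]$ — I would write
\begin{align*}
    \inner{\eta_n^{uv}}{O\,\eta_n^{u'v'}}
    &= \inner{\tilde A_n^{RC;vv''}\eta_n^{uv''}}{O\,\tilde A_n^{RC;v'v''}\eta_n^{u'v''}} \\
    &= \inner{\eta_n^{uv''}}{(\tilde A_n^{RC;vv''})^*\,\tilde A_n^{RC;v'v''}\,O\,\eta_n^{u'v''}},
\end{align*}
where in the second line I commuted $O$ past $\tilde A_n^{RC;v'v''}$ using disjointness of supports. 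Now $(\tilde A_n^{RC;vv''})^*\tilde A_n^{RC;v'v''}$ acts on the $\eta$'s by sending $\eta_n^{u'v''} \mapsto$ (via $\tilde A_n^{RC;v'v''}$) $\eta_n^{u'v'} \mapsto$ (via $(\tilde A_n^{RC;vv''})^* = $ the label changer $v \to v''$ applied to $\eta_n^{\cdot v}$) — but $\eta_n^{u'v'}$ only gets mapped by $(\tilde A_n^{RC;vv''})^*$ if $v' = v$; otherwise orthonormality (Lemma \ref{lem:etas form ONB}) combined with the fact that $O\,\eta_n^{u'v''} \in \VRC$... Actually the cleanest route is: commute $O$ to the left past $(\tilde A_n^{RC;vv''})^*$ as well, obtaining $\inner{\eta_n^{uv''}}{O\,(\tilde A_n^{RC;vv''})^*\tilde A_n^{RC;v'v''}\eta_n^{u'v''}}$, and then compute $(\tilde A_n^{RC;vv''})^*\tilde A_n^{RC;v'v''}\eta_n^{u'v''} = (\tilde A_n^{RC;vv''})^*\eta_n^{u'v'} = \delta_{vv'}\eta_n^{u'v''}$ using the adjoint label-changer identity (the Kronecker delta appears because $(\tilde A_n^{RC;vv''})^*$ only produces a nonzero result on $\eta$'s with boundary label $v$, by the structure of Definition \ref{def:label changers} and orthonormality). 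This gives $\inner{\eta_n^{uv}}{O\,\eta_n^{u'v'}} = \delta_{vv'}\inner{\eta_n^{uv''}}{O\,\eta_n^{u'v''}}$.

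It then remains to remove the dependence on $v''$, i.e. to show the right-hand side does not depend on which boundary label $v''$ we chose; but this is immediate since $v''$ was arbitrary — the identity just proved holds for every $v'' \in I'_{RC}$, and the left-hand side does not involve $v''$, so all choices of $v''$ give the same value. Replacing $v''$ by a generic label and noting the left side is $\delta_{vv'}\inner{\eta_n^{u'v''}}{O\,\eta_n^{u'v''}}$ when $v = v'$ (using $u = u'$ is forced... actually it is not forced, but when $v \neq v'$ both sides vanish, and when $v = v'$ we get the stated formula with $u'$ on both sides after one more application of $A_{\site}^{RC;u'u}$ if $u \neq u'$ — wait, we want $u,u'$ both present). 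Let me re-examine: the statement allows $u \neq u'$, so I should \emph{not} try to collapse $u$ to $u'$; the formula $\inner{\eta_n^{uv}}{O\,\eta_n^{u'v'}} = \delta_{vv'}\inner{\eta_n^{u'v''}}{O\,\eta_n^{u'v''}}$ is what we proved, and the final ``in particular'' follows by setting $u = u'$ and $v = v'$. The main obstacle, and the step requiring care, is bookkeeping the adjoint label-changer identities and verifying that $(\tilde A_n^{RC;vv''})^*\tilde A_n^{RC;v'v''} = \delta_{vv'}\,(\text{something acting as identity on the relevant }\eta\text{'s})$ — this needs the precise form of $\tilde A_n^{RC;v_2v_1}$ in Definition \ref{def:label changers}, Schur orthogonality, and Lemma \ref{lem:etas form ONB}; one must also double-check that $O\,\eta_n^{u'v''}$ stays inside $\overline{\V}$ (or at least has no components that would spoil the orthogonality argument), which follows because $O \in \cstar[{\eregion[n-1]}]$ commutes with all the defining projectors of $\overline{\V}$ except possibly those at the boundary — so one works within the appropriate $\VRC$ or handles the boundary projectors explicitly.
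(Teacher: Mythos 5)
Your proposal follows essentially the same route as the paper: express $\ket{\eta_n^{uv}}$ and $\ket{\eta_n^{u'v'}}$ in terms of a fixed reference label $v''$ via the boundary label changers $\tilde A_n^{RC;\cdot\,\cdot}$, commute $O$ past them using disjointness of supports, and reduce everything to the composition of two label changers acting on $\ket{\eta_n^{u'v''}}$. The one step where your justification does not hold up as stated is the claim that $(\tilde A_n^{RC;vv''})^*$ ``only produces a nonzero result on $\eta$'s with boundary label $v$'': by Definition \ref{def:label changers} these operators are built from the boundary unitaries $U_{b_2 b_1}$ and the ribbon operators $L_{\bdy}^{\cdot}$, so they certainly do not annihilate vectors with a different boundary condition, and the adjoint identity you invoke only specifies their action on $\ket{\eta_n^{uv}}$ itself, not on $\ket{\eta_n^{u'v'}}$ with $v'\neq v$. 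The paper avoids this by first disposing of the case of differing boundary conditions $b \neq b'$: since $O \in \cstar[{\eregion[n-1]}]$ commutes with the orthogonal projectors $P_b, P_{b'}$ (supported on $\partial\eregion$), one gets $\inner{\eta_n^{uv}}{O\,\eta_n^{u'v'}} = \inner{\eta_n^{uv}}{P_b P_{b'} O\,\eta_n^{u'v'}} = \delta_{bb'}\inner{\eta_n^{uv}}{O\,\eta_n^{u'v'}}$, and only then invokes a dedicated appendix lemma giving $(\tilde{A}_n^{RC;v' v''})^{*} \tilde{A}_n^{RC; v v''} \ket{\eta_n^{RC;u' v''}}= \delta_{vv'}\ket{\eta_n^{RC;u' v''}}$ for $v, v'$ sharing the same boundary $b$. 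With that preliminary $P_b$-insertion (which you gesture at in your closing sentence but do not actually carry out), your argument goes through.
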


\begin{proof}
Using Lemma \ref{lem:aconverter} and the fact that any $O \in \cstar[{\eregion[n-1]}]$ commutes with $\tilde{A}_n^{RC;v v'}$ we find that if $v = (b,j)$ and $v' = (b',j')$ then 
\begin{align*}
    \inner{\eta_n^{uv}}{O \eta_n^{u'v'}} &= \inner{\eta_n^{uv}}{P_{b}O P_{b'} \eta_n^{u'v'}} = \delta_{bb'} \inner{\eta_n^{uv}}{O \eta_n^{u'v'}}\\ 
    &= \inner{\qdstpure[RC;u v'']}{O \, (\tilde{A}_n^{RC;v v''})^{*} \tilde{A}_n^{RC; v' v''} \qdstpure[RC;u' v'']} = \delta_{vv'} \inner{\eta_n^{u'v''}}{O \, \eta_n^{u'v''}},
\end{align*}
where in the last equality we have used the fact that if $v,v'$ have the same boundary $b$ then from lemma \ref{lem:action of two different label changers on eta}, $(\tilde{A}_n^{RC;v' v''})^{*} \tilde{A}_n^{RC; v v''} \ket{\qdstpure[RC;u' v'']}= \delta_{vv'}\ket{\qdstpure[RC;u' v'']}$.
\end{proof}

This Lemma shows that the following is well-defined.
\begin{definition} \label{def:uniform string net superpositions}
    For any $n$ we define the states $\qdstrest$ on $\cstar[{\eregion[n]}]$ by
    $$ \qdstrest(O) := \inner{\eta_{n+1}^{RC;uv}}{ O \, \eta_{n+1}^{RC;uv} } $$
    for any $O \in \cstar[{\eregion}]$ and any boundary label $v$. The choice of boundary label does not matter due to Lemma \ref{lem:purerest}.
\end{definition}

\subsection{Construction of the states $\qdstate$ and proof of their purity}

The following basic Lemma will be useful throughout the paper.
\begin{lemma}
    \label{lem:pure components projector Lemma}
    Let $\omega = \sum_{\kappa} \lambda_{\kappa} \omega^{(\kappa)}$ a state on $\cstar[\eregion]$ expressed as a finite convex combination of pure states $\omega^{(\kappa)}$ with positive coefficients $\lambda_{\kappa} > 0$. If $P \in \cstar[\eregion]$ is a projector and $\omega(P) = 1$, then $\omega^{(\kappa)}(P) = 1$ for all $\kappa$.
    
    Moreover, if $\ket{\Omega^{(\kappa)}} \in \hilb_n$ is a unit vector such that $\omega^{(\kappa)}(O) = \inner{\Omega^{(\kappa)}}{ O \, \Omega^{(\kappa)}}$ for all $O \in \cstar[\eregion]$, then $P \ket{\Omega^{(\kappa)}} = \ket{\Omega^{(\kappa)}}$.
\end{lemma}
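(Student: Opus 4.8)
The plan is to exploit positivity together with the fact that $P$ is a projector, hence $0 \leq P \leq \mathds{1}$. First I would note that since $\lambda_\kappa > 0$ and $\sum_\kappa \lambda_\kappa = 1$, and since each $\omega^{(\kappa)}$ is a state (so $0 \leq \omega^{(\kappa)}(P) \leq \omega^{(\kappa)}(\mathds{1}) = 1$ because $0 \leq P \leq \mathds{1}$), the identity
$$1 = \omega(P) = \sum_{\kappa} \lambda_\kappa \, \omega^{(\kappa)}(P)$$
expresses $1$ as a convex combination of the numbers $\omega^{(\kappa)}(P) \in [0,1]$. A convex combination of numbers in $[0,1]$ with strictly positive weights equals $1$ only if every term equals $1$; hence $\omega^{(\kappa)}(P) = 1$ for all $\kappa$. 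This is the first claim, and it does not even use purity of the $\omega^{(\kappa)}$.

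For the second claim, suppose $\ket{\Omega^{(\kappa)}}$ is a unit vector implementing $\omega^{(\kappa)}$ on $\cstar[\eregion]$. Then
$$1 = \omega^{(\kappa)}(P) = \inner{\Omega^{(\kappa)}}{P \, \Omega^{(\kappa)}} = \norm{P \ket{\Omega^{(\kappa)}}}^2,$$
using $P = P^* = P^2$ so that $\inner{\Omega^{(\kappa)}}{P\,\Omega^{(\kappa)}} = \inner{P\,\Omega^{(\kappa)}}{P\,\Omega^{(\kappa)}}$. Since also $\norm{\ket{\Omega^{(\kappa)}}}^2 = 1$, we get $\norm{(\mathds{1}-P)\ket{\Omega^{(\kappa)}}}^2 = \norm{\ket{\Omega^{(\kappa)}}}^2 - \norm{P\ket{\Omega^{(\kappa)}}}^2 = 0$ (by the Pythagorean identity, since $P$ and $\mathds{1}-P$ are orthogonal projectors), whence $(\mathds{1}-P)\ket{\Omega^{(\kappa)}} = 0$, i.e. $P\ket{\Omega^{(\kappa)}} = \ket{\Omega^{(\kappa)}}$.

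There is essentially no obstacle here: the statement is a soft, general fact about states, convex combinations, and projectors, and the proof is three or four lines of elementary positivity arguments. The only thing to be slightly careful about is that $\cstar[\eregion]$ is a finite-dimensional matrix algebra, so there is no subtlety about the existence of implementing vectors or about unboundedness; everything is purely algebraic. I would present the two parts in the order above, since the second uses the same positivity idea in the concrete GNS-type setting.
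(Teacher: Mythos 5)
Your proof is correct and follows essentially the same route as the paper's: the convexity argument for the first claim and the projector/norm identity for the second. You simply spell out the last step ($\norm{P\ket{\Omega^{(\kappa)}}}=1$ forces $P\ket{\Omega^{(\kappa)}}=\ket{\Omega^{(\kappa)}}$) in slightly more detail than the paper does.
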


\begin{proof}
       Since $\omega^{(\kappa)}(P) \leq 1$ and the positive numbers $\lambda_{\kappa} > 0$ sum to one, the equality
       $$1 = \omega(P) = \sum_{\kappa} \lambda_{\kappa} \omega^{(\kappa)}(P)$$ 
       can only be satisfied if $\omega^{(\kappa)}(P) = 1$ for all $\kappa$. If $\omega^{(\kappa)}(\cdot) = \inner{\Omega^{(\kappa)}}{ \cdot \, \Omega^{(\kappa)}}$ for a unit vector $\ket{\Omega^{(\kappa)}} \in \hilb_n$ then in particular
       $$1 = \omega^{(\kappa)}(P) = \inner{\Omega^{(\kappa)}}{\, P \, \Omega^{(\kappa)}}.$$
       Since $P$ is an orthogonal projector, this implies $P | \Omega^{(\kappa)} \rangle = \ket{\Omega^{(\kappa)}}$.
\end{proof}

Let us define the following sets of states on $\cstar[\eregion]$.
\begin{definition} \label{def:local RCu constraints for states}
    The set $\mathcal{S}_n^{RC;u}$ consists of states $\omega$ on $\cstar[\eregion]$ such that
    $$ 1 = \omega( D_{\site}^{RC;u} ) = \omega(A_v) = \omega(B_f) $$
    for all $v \in \dvregion$ and all $f \in \dfregion$.
\end{definition}

\begin{lemma}
\label{lem:restriction yields eta^RCu}
    Let $1 \leq m < n$. If $\omega \in \mathcal{S}_n^{RC;u}$, and $\omega_m$ is its restriction to $\cstar[{\eregion[m]}]$, then $\omega_m = \qdstrest[m]$.
\end{lemma}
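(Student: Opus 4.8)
The plan is to show that the restriction $\omega_m$ lies in the set $\mathcal{S}_m^{RC;u}$ from Definition \ref{def:local RCu constraints for states}, decompose it into pure states, and use the characterization of $\VRCu$ from Proposition \ref{prop:local spaces spanned by etas} together with Lemma \ref{lem:purerest} to pin it down uniquely.

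First I would observe that the constraints defining $\mathcal{S}_n^{RC;u}$ are all given by projectors supported inside $\eregion$, and that the projectors $A_v$ for $v \in \dvregion[m]$, $B_f$ for $f \in \dfregion[m]$, and $D_{\site}^{RC;u}$ are all supported on $\eregion[m]$ when $m < n$ (indeed $D_{\site}^{RC;u} = A_{\site}^{RC;u} B_{\site}^{c_i}$ is supported on $\eregion[1] \subseteq \eregion[m]$, and the bulk $A_v, B_f$ constraints for the smaller region are a subset of those for the larger region). Hence if $\omega \in \mathcal{S}_n^{RC;u}$ then its restriction $\omega_m$ to $\cstar[{\eregion[m]}]$ satisfies $\omega_m(D_{\site}^{RC;u}) = \omega_m(A_v) = \omega_m(B_f) = 1$ for all $v \in \dvregion[m]$, $f \in \dfregion[m]$, i.e. $\omega_m \in \mathcal{S}_m^{RC;u}$. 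So it suffices to prove the statement for $\omega \in \mathcal{S}_m^{RC;u}$ itself, namely that every such state equals $\qdstrest[m]$.

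Next I would write $\omega_m$ as a finite convex combination $\omega_m = \sum_\kappa \lambda_\kappa \omega^{(\kappa)}$ of pure states on the finite-dimensional algebra $\cstar[{\eregion[m]}]$, with each $\omega^{(\kappa)}(\cdot) = \inner{\Omega^{(\kappa)}}{\cdot\,\Omega^{(\kappa)}}$ for a unit vector $\ket{\Omega^{(\kappa)}} \in \hilb_m$. By Lemma \ref{lem:pure components projector Lemma} applied to each of the projectors $A_v$ ($v \in \dvregion[m]$), $B_f$ ($f \in \dfregion[m]$), and $D_{\site}^{RC;u}$, each vector $\ket{\Omega^{(\kappa)}}$ satisfies $A_v \ket{\Omega^{(\kappa)}} = B_f \ket{\Omega^{(\kappa)}} = D_{\site}^{RC;u}\ket{\Omega^{(\kappa)}} = \ket{\Omega^{(\kappa)}}$, which is exactly the defining condition of $\VRCu[m]$ in Definition \ref{def:local constraints}. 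By Proposition \ref{prop:local spaces spanned by etas}, $\VRCu[m] = \Span{ \qdpure[RC;uv] : v \in I'_{RC}}$, so we may expand $\ket{\Omega^{(\kappa)}} = \sum_{v} c^{(\kappa)}_v \ket{\eta_m^{RC;uv}}$. Then for any $O \in \cstar[{\eregion[m-1]}]$, expanding the inner product and invoking Lemma \ref{lem:purerest} (which gives $\inner{\eta_m^{uv}}{O\eta_m^{uv'}} = \delta_{vv'}\inner{\eta_m^{uv''}}{O\eta_m^{uv''}}$ for a fixed reference label $v''$), all cross terms vanish and $\omega^{(\kappa)}(O) = \big(\sum_v |c^{(\kappa)}_v|^2\big)\inner{\eta_m^{uv''}}{O\eta_m^{uv''}} = \inner{\eta_m^{uv''}}{O\eta_m^{uv''}} = \qdstrest[m](O)$, using $\sum_v|c^{(\kappa)}_v|^2 = 1$. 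Averaging over $\kappa$ gives $\omega_m(O) = \qdstrest[m](O)$ for all $O \in \cstar[{\eregion[m-1]}]$.

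The one remaining gap is that this argument as stated only controls $\omega_m$ on the sub-algebra $\cstar[{\eregion[m-1]}]$, because Lemma \ref{lem:purerest} is phrased for operators supported on $\eregion[n-1]$. To get agreement on all of $\cstar[{\eregion[m]}]$ — which is what the statement $\omega_m = \qdstrest[m]$ literally asserts — I would instead run the pure-state decomposition one level up: since $m < n$, we also have $m+1 \le n$, so $\omega_{m+1} \in \mathcal{S}_{m+1}^{RC;u}$, and the same argument with $m$ replaced by $m+1$ shows each pure component of $\omega_{m+1}$ lies in $\VRCu[m+1] = \Span{\qdpure[RC;u v]}$ and hence, by Lemma \ref{lem:purerest} applied with $n = m+1$ (so that operators supported on $\eregion[m]$ are covered), agrees with $\inner{\eta_{m+1}^{RC;uv}}{\cdot\,\eta_{m+1}^{RC;uv}} = \qdstrest[m]$ on all of $\cstar[{\eregion[m]}]$; averaging over the components and restricting gives $\omega_m = \qdstrest[m]$. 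The main obstacle is bookkeeping the support/level shift correctly — making sure the relevant projectors and the operator $O$ land in the regions where the cited lemmas apply — rather than any substantive new idea; everything else is a direct assembly of Lemma \ref{lem:pure components projector Lemma}, Proposition \ref{prop:local spaces spanned by etas}, and Lemma \ref{lem:purerest}.
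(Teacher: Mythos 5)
Your proposal is correct and, after the self-correction in the last paragraph, is essentially identical to the paper's proof: restrict to $\cstar[{\eregion[m+1]}]$, decompose into pure components, use Lemma \ref{lem:pure components projector Lemma} to place each component vector in $\mathcal{V}_{m+1}^{RC;u}$, expand in the $\ket{\eta_{m+1}^{RC;uv}}$ via Proposition \ref{prop:local spaces spanned by etas}, and apply Lemma \ref{lem:purerest} to operators supported on $\eregion[m]$. The level shift from $m$ to $m+1$ that you identified as the "one remaining gap" is exactly the device the paper uses.
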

\begin{proof}
    Let $\omega_{m+1}$ be the restriction of $\omega$ to $\cstar[{\eregion[m+1]}]$, then $\omega_{m+1} \in \mathcal{S}_{m+1}^{RC;u}$. Let $\omega_{m+1} = \sum_{\kappa} \lambda_{\kappa} \omega_{m+1}^{(\kappa)}$ be the convex decomposition of $\omega_{m+1}$ into finitely many pure components $\omega_{m+1}^{(\kappa)}$. Let $\ket{\Omega_{m+1}^{(\kappa)}} \in \hilb_{m+1}$ be unit vectors corresponding to these pure states. We conclude from Lemma \ref{lem:pure components projector Lemma} that 
    $$A_v \ket{\Omega_{m+1}^{(\kappa)}} = B_f \ket{\Omega_{m+1}^{(\kappa)}} = \qdsu \ket{\Omega_{m+1}^{(\kappa)}} = \ket{\Omega_{m+1}^{(\kappa)}}$$
    for all $\kappa$, all $v \in \dvregion[m+1]$, and all $f \in \dfregion[m+1]$. By Definition \ref{def:local constraints} this means that $\ket{\Omega_{m+1}^{(\kappa)}} \in \mathcal{V}_{m+1}^{RC;u}$ for all $\kappa$. From Proposition \ref{prop:local spaces spanned by etas} it follows that the unit vectors $\ket{\Omega_{m+1}^{(\kappa)}}$ are linear combinations of the $| \eta_{m+1}^{RC;uv} \rangle$ for $v \in I'_{RC}$. Using Lemma \ref{lem:purerest} and Definition \ref{def:uniform string net superpositions} we then have that for any $O \in \eregion[m]$
    \begin{align*}
        \omega_{m+1}^{(\kappa)}(O) &= \inner{ \Omega_{m+1}^{(\kappa)}}{ O \, \Omega_{m+1}^{(\kappa)}} = \eta_{m}^{RC;u}(O)
    \end{align*}
    independently of $\kappa$. The claim follows.
\end{proof}

We define extensions of the states $\qdstrest[n]$ to the whole observable algebra.
\begin{definition}
\label{def:excitation extension}
    We let $\tilde \eta_n^{RC;u}$ be the following extension of $\qdstrest[n]$ to the whole observable algebra. For each $e \in \latticeedge$, let $\zeta_e$ be the pure state on $\cstar[e]$ corresponding to the vector $| 1_e \rangle \in \hilb_e$, and put
    $$\tilde \eta_n^{RC;u} := \qdstrest[n] \otimes \left( \bigotimes_{e \in \latticeedge \setminus \eregion} \zeta_e \right).$$
\end{definition}

Recall the space of states $\S_{s_0}^{RC;u}$ from Definition \ref{def:state spaces}.

\begin{lemma}
\label{lem:qdconvergence}
    The sequence of states $\tilde \eta_n^{RC;u}$ converges in the weak-$^*$ topology to a state $\qdstate \in \S_{s_0}^{RC;u}$.
\end{lemma}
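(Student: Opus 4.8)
The plan is to show that the sequence $\tilde\eta_n^{RC;u}$ is weak-$^*$ convergent by showing that the numbers $\tilde\eta_n^{RC;u}(O)$ stabilize for large $n$ for every local observable $O$, and then to identify the limit as a state in $\S_{s_0}^{RC;u}$. First I would note that $\cstar[loc]$ is dense in $\cstar$ and that the unit ball of the dual of $\cstar$ is weak-$^*$ compact, so it suffices to prove that for each fixed $O \in \cstar[loc]$ the sequence $\tilde\eta_n^{RC;u}(O)$ is eventually constant (equivalently Cauchy); the limit functional is then automatically a state. Fix such an $O$; it is supported on some finite $S \subset \latticeedge$, and for all $n$ large enough we have $S \subset \eregion[m]$ for some fixed $m$ with $S \subset \overline{\eregion[m]}$ and $m < n$. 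For such $n$, by Definition \ref{def:excitation extension} and the fact that $\zeta_e(\mathds 1) = 1$, we have $\tilde\eta_n^{RC;u}(O) = \qdstrest[n](O')$ where $O'$ is $O$ viewed in $\cstar[\eregion]$ (if $O$ also acts nontrivially on some edges outside $\eregion$ we absorb the corresponding $\zeta_e$ factors, using that those act on $|1_e\rangle$).

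The key step is then a consistency statement for the family $\{\qdstrest[n]\}_n$: for $m < n$, the restriction of $\qdstrest[n]$ to $\cstar[{\eregion[m]}]$ equals $\qdstrest[m]$. This should follow from Lemma \ref{lem:restriction yields eta^RCu} once I check that $\qdstrest[n] \in \mathcal{S}_n^{RC;u}$, i.e. that $\qdstrest[n](D_{\site}^{RC;u}) = \qdstrest[n](A_v) = \qdstrest[n](B_f) = 1$ for all $v \in \dvregion$, $f \in \dfregion$. By Definition \ref{def:uniform string net superpositions}, $\qdstrest[n](O) = \inner{\eta_{n+1}^{RC;uv}}{O\,\eta_{n+1}^{RC;uv}}$, and by Proposition \ref{prop:local spaces spanned by etas} the vector $\ket{\eta_{n+1}^{RC;uv}}$ lies in $\VRCu[n+1]$, hence is fixed by all of $A_v$ ($v \in \dvregion[n+1] \supset \dvregion$), $B_f$ ($f \in \dfregion[n+1] \supset \dfregion$), and $D_{\site}^{RC;u}$ (using $D_\site^{RC} = \sum_u D_\site^{RC;u}$ and Definition \ref{def:local constraints} applied at level $n+1$; note the constraint projectors $D_\site^{RC;u}, A_v, B_f$ are supported well inside $\eregion[n+1]$ for the relevant $v,f$). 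Therefore $\qdstrest[n] \in \mathcal{S}_n^{RC;u}$, so Lemma \ref{lem:restriction yields eta^RCu} applies and gives the compatibility $\qdstrest[n]\restriction_{\cstar[{\eregion[m]}]} = \qdstrest[m]$ for $m < n$. Consequently $\tilde\eta_n^{RC;u}(O) = \qdstrest[m](O')$ is independent of $n$ once $n > m$, establishing convergence; call the weak-$^*$ limit $\qdstate$, which is a state since states form a weak-$^*$ closed subset of the unit ball.

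It remains to verify $\qdstate \in \S_{s_0}^{RC;u}$, i.e. $\qdstate(A_v) = \qdstate(B_f) = 1$ for all $v \in \dlatticevert$, $f \in \dlatticeface$, and $\qdstate(D_{s_0}^{RC;u}) = 1$. Each of $A_v$ ($v \neq v_0$), $B_f$ ($f \neq f_0$), and $D_{s_0}^{RC;u}$ is a local operator supported inside $\eregion[n]$ for all sufficiently large $n$, and for such $n$, as noted above, $\qdstrest[n]$ evaluated on that operator equals $1$ because $\ket{\eta_{n+1}^{RC;uv}} \in \VRCu[n+1]$ is in the common eigenspace of these projectors. Hence $\qdstate$ evaluated on each of these equals $\lim_n \tilde\eta_n^{RC;u}(\cdot) = 1$, which is exactly conditions \eqref{eq:cons1} and \eqref{eq:cons3}. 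The main obstacle, and the only place requiring care, is the bookkeeping in the previous paragraph: making sure that for the relevant $v, f$ the constraint projectors really are supported strictly inside $\eregion[n+1]$ (so that Definition \ref{def:local constraints} genuinely forces $\ket{\eta_{n+1}^{RC;uv}}$ to be a fixed vector of them) and that Lemma \ref{lem:restriction yields eta^RCu}'s hypothesis $\qdstrest[n] \in \mathcal S_n^{RC;u}$ is met exactly as stated; everything else is the standard weak-$^*$ compactness argument for inductive limits of consistent local states.
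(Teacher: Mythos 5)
Your proposal is correct and follows essentially the same route as the paper: stabilization of $\tilde\eta_n^{RC;u}(O)$ on local observables via the consistency Lemma \ref{lem:restriction yields eta^RCu}, density of $\cstar[loc]$, and inheritance of the defining constraints by the weak-$^*$ limit. The only difference is that you spell out the verification that $\qdstrest[n] \in \mathcal{S}_n^{RC;u}$ (via $\ket{\eta_{n+1}^{RC;uv}} \in \VRCu[n+1]$ and Proposition \ref{prop:local spaces spanned by etas}), a step the paper asserts without detail.
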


\begin{proof}
    If $O \in \cstar[{\eregion[m]}]$ then $\tilde \eta_n^{RC;u}(O) = \eta_n^{RC;u}(O)$ for all $n > m$ by construction. Since $\eta_n^{RC;u} \in \mathcal{S}_n^{RC;u}$ we have from Lemma \ref{lem:restriction yields eta^RCu} that $\tilde \eta_n^{RC;u}|_m = \qdstrest[m]$. It follows that $\tilde \eta_n^{RC;u}(O)$ is constant for all $n > m$ and hence converges. Since $m$ was chosen arbitrarily, $\tilde \eta_n^{RC;u}$ converges for any local observable $O \in \cstar[loc]$. Since $\cstar[loc]$ is dense in $\cstar$, the states $\omegaRCu$ converge in the weak-$^*$ topology to some state $\qdstate$ that satisfies the constraints \eqref{eq:cons1} and $\eqref{eq:cons3}$, i.e. $\qdstate \in \S_{\site}^{RC;u}$.
\end{proof}

\begin{lemma}
\label{lem:qdunique}
    $\qdstate$ is the unique state in $\S_{\site}^{RC;u}$. It is therefore a pure state.
\end{lemma}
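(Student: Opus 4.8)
The plan is to show that any state $\omega \in \S_{\site}^{RC;u}$ agrees with $\qdstate$ on every local observable, and hence on all of $\cstar$ by density and continuity; purity will then follow either because a weak-$^*$ limit of states that is the \emph{unique} element of $\S_{\site}^{RC;u}$ must be extreme in that convex set, or — more cleanly — by observing that $\S_{\site}^{RC;u}$ is convex and weak-$^*$ compact, so a unique point in it must be extreme in the whole state space, i.e. pure. Actually, the slicker route to purity is: uniqueness means $\qdstate$ cannot be written nontrivially as a convex combination of two states \emph{in} $\S_{\site}^{RC;u}$, and one checks that if $\qdstate = \lambda \omega_1 + (1-\lambda)\omega_2$ is \emph{any} convex decomposition then each $\omega_i$ automatically satisfies the constraints \eqref{eq:cons1} and \eqref{eq:cons3} (because these are $\omega(P)=1$ conditions for projectors $P$, and Lemma \ref{lem:pure components projector Lemma} — or its trivial adaptation to two summands — forces $\omega_i(P)=1$), so $\omega_i \in \S_{\site}^{RC;u}$, whence $\omega_1 = \omega_2 = \qdstate$.

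So the real content is uniqueness. First I would take an arbitrary $\omega \in \S_{\site}^{RC;u}$ and let $\omega_m$ be its restriction to $\cstar[{\eregion[m]}]$. The key observation is that $\omega_m$ lies in $\mathcal{S}_m^{RC;u}$ (Definition \ref{def:local RCu constraints for states}): the constraints $\omega(A_v)=\omega(B_f)=1$ for $v \in \dvregion[m]$, $f \in \dfregion[m]$ are just a subset of \eqref{eq:cons1}, and $\omega(\qdsu)=1$ is exactly \eqref{eq:cons3}, with $\qdsu$ supported near $\site$ hence inside $\cstar[{\eregion[m]}]$ once $m \geq 1$. Then Lemma \ref{lem:restriction yields eta^RCu} applies (taking $n = m+1$, say, after first restricting one level out): it tells us $\omega_m = \qdstrest[m]$. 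Since this holds for every $m$, $\omega$ agrees with $\qdstate$ on $\cstar[{\eregion[m]}]$ for all $m$ — indeed $\qdstate$ was \emph{constructed} in Lemma \ref{lem:qdconvergence} as the weak-$^*$ limit whose restriction to $\cstar[{\eregion[m]}]$ is $\qdstrest[m]$. The union $\bigcup_m \cstar[{\eregion[m]}]$ contains $\cstar[loc]$ (every finite set of edges sits in some $\eregion[m]$), which is norm-dense in $\cstar$, and both $\omega$ and $\qdstate$ are norm-continuous, so $\omega = \qdstate$.

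I do not expect a serious obstacle here: the heavy lifting — the entire finite-volume analysis, the characterisation of $\VRCu$ as the span of the $\qdpure$, the boundary-independence of the bulk, and Lemma \ref{lem:restriction yields eta^RCu} — is already done in the preceding subsections. The only mild care point is bookkeeping about which region $\eregion[m]$ contains the support of $\qdsu$ and of a given local observable, and making sure the restriction-compatibility of states with the $\mathcal{S}_n^{RC;u}$ hierarchy is invoked with indices in the range where Lemma \ref{lem:restriction yields eta^RCu} is stated ($1 \leq m < n$); picking $n = m+1$ handles this. If one wants purity without the two-summand argument above, one can instead note that $\qdstate$, being the unique state satisfying a family of conditions of the form $\omega(P)=1$ for commuting projectors, has GNS representation in which the corresponding projector in the weak closure is rank one — but the convex-decomposition argument is shorter and self-contained given Lemma \ref{lem:pure components projector Lemma}.
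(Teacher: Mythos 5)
Your proposal is correct and follows essentially the same route as the paper: uniqueness via restricting an arbitrary $\omega \in \S_{\site}^{RC;u}$ to $\cstar[{\eregion[m]}]$ and invoking Lemma \ref{lem:restriction yields eta^RCu} plus density of $\cstar[loc]$, and purity via the observation that any convex decomposition of $\qdstate$ forces both summands to satisfy the projector constraints and hence lie in $\S_{\site}^{RC;u}$. The paper carries out the purity step directly with the elementary projector inequality rather than citing Lemma \ref{lem:pure components projector Lemma}, but this is only a cosmetic difference.
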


\begin{proof}
    Consider any other state $\omega' \in \S_{\site}^{RC;u}$. Then its restriction $\omega'_n$ to $\cstar[\eregion]$ is a state in $\mathcal{S}^{RC;u}_n$. By Lemma \ref{lem:restriction yields eta^RCu} we have $\omega'(O) = \omega'_n(O) = \qdstrest[m](O) = \omegaRCu(O)$ for all $m < n$ and all $O \in \cstar[{\eregion[m]}]$. It follows that $\omega'$ agrees with $\qdstate$ on all local observables and therefore must be the same state.

    To see that this implies that $\qdstate$ is pure, suppose $\qdstate = \lambda \omega' + (1 - \lambda) \omega''$ can be written as a convex combination of states $\omega'$ and $\omega''$. Then for any $v \in \dlatticevert$ we have $1 = \omega_0(A_v) = \lambda \omega'(A_v) + (1 - \lambda) \omega''(A_v)$. Since $A_v$ is a projector we have $\abs{\omega'(A_v)}, \abs{\omega''(A_v)} \leq 1$, so the previous equality can only be satisfied if $1 = \omega'(A_v) = \omega''(A_v)$. By the same reasoning, $1 = \omega'(B_f) = \omega''(B_f)$ for all $f \in \dlatticeface$, and similarly for the projector $D_{s_0}^{RC;u}$. We conclude that $\omega'$ and $\omega''$ both belong to $\S_{\site}^{RC;u}$ and are therefore equal to $\qdstate$. Thus $\qdstate$ is pure.
\end{proof}

Since the site $\site$ was arbitrary, we have in particular shown
\begin{proposition} \label{prop:characterisation of S^RCu}
    For any site $s_0$, any irreducible representation $RC$ of $\caD(G)$, and any label $u$, the space of states $\S_{s_0}^{RC;u}$ of Definition \ref{def:state spaces} consists of a single pure state $\omega_{s_0}^{RC;u}$.
\end{proposition}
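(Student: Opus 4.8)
The plan is to obtain this Proposition as a repackaging of the two main lemmas of this subsection applied to the fixed site $\site$, and then to remove the dependence on the particular choice of $\site$. Existence of a state in $\S_{\site}^{RC;u}$ is precisely the content of Lemma \ref{lem:qdconvergence}, which produces $\qdstate$ as the weak-$^*$ limit of the finite-volume states $\tilde \eta_n^{RC;u}$ and verifies that it satisfies the constraints \eqref{eq:cons1} and \eqref{eq:cons3}. Uniqueness together with purity is precisely Lemma \ref{lem:qdunique}, so the Proposition for $\site$ follows by combining these two.

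It is worth recalling where the substance of Lemma \ref{lem:qdunique} lies, since that is the real content. Given any $\omega' \in \S_{\site}^{RC;u}$, its restriction $\omega'_n$ to $\cstar[\eregion]$ lies in $\mathcal{S}_n^{RC;u}$, so Lemma \ref{lem:restriction yields eta^RCu} gives $\omega'_n|_{\cstar[{\eregion[m]}]} = \qdstrest[m]$ for every $m < n$. Letting $n \to \infty$ this shows $\omega'$ agrees with $\qdstate$ on $\cstar[{\eregion[m]}]$ for every $m$, hence on the dense subalgebra $\bigcup_m \cstar[{\eregion[m]}] = \cstar[loc]$, hence $\omega' = \qdstate$. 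Purity then follows from the standard projector argument: if $\qdstate = \lambda \omega' + (1-\lambda) \omega''$ with $0 < \lambda < 1$, then evaluating the projectors $\{A_v\}_{v \in \dlatticevert}$, $\{B_f\}_{f \in \dlatticeface}$ and $D_{\site}^{RC;u}$ and using $\abs{\omega'(P)}, \abs{\omega''(P)} \leq 1$ for any projector $P$ forces all of these expectation values to equal $1$ in both $\omega'$ and $\omega''$, so $\omega', \omega'' \in \S_{\site}^{RC;u}$ and hence both coincide with $\qdstate$.

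To upgrade the statement from the fixed site $\site$ to an arbitrary site $s_0$, I would observe that the definitions of the regions $\vregion, \fregion, \eregion, \partial\eregion$, of the fiducial and boundary ribbons, of the vectors $\ket{\eta_n^{RC;uv}}$ and of the states $\qdstrest[n]$, and every lemma from Proposition \ref{prop:local spaces spanned by etas} through Lemma \ref{lem:qdunique}, used only that $\site$ is \emph{some} site of $\Gamma$; the vertex $v_0$ was merely chosen as the origin of a coordinate system and plays no privileged role (every site is related to every other by a lattice symmetry). Hence the entire argument applies verbatim with $\site$ replaced by an arbitrary $s_0$, producing for each $s_0$ a unique pure state, which we name $\omega_{s_0}^{RC;u}$.

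As for the main obstacle: at this point there is essentially none — the Proposition is a corollary of Lemmas \ref{lem:qdconvergence} and \ref{lem:qdunique}. The genuinely difficult work, already completed, is Proposition \ref{prop:local spaces spanned by etas} (identifying the finite-volume constraint space $\VRCu$ with the span of the $\qdpure$) together with Lemma \ref{lem:purerest} (independence of bulk expectation values from the boundary label $v$); it is these that make the restriction statement Lemma \ref{lem:restriction yields eta^RCu}, and hence uniqueness, go through.
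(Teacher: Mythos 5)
Your proposal is correct and follows exactly the paper's route: the Proposition is stated in the paper as an immediate consequence of Lemma \ref{lem:qdconvergence} (existence) and Lemma \ref{lem:qdunique} (uniqueness and purity), together with the observation that the site $\site$ was arbitrary throughout the section. Your unpacking of where the substance lies (Lemma \ref{lem:restriction yields eta^RCu} via Proposition \ref{prop:local spaces spanned by etas} and Lemma \ref{lem:purerest}) matches the paper's logic.
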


\section{Construction of anyon representations} \label{sec:anyon representations}

In this section we show that the pure states $\omega_{s_1}^{R_1C_1;u_1}, \omega_{s_2}^{R_2C_2;u_2}$ constructed in the previous section are equivalent to each other whenever $R_1C_1 = R_2C_2$. The collection of pure states $\{\omega_{s}^{RC;u}\}_{s, u}$ for fixed $RC$ therefore belong to the same irreducible representation $\pi^{RC}$ of the observable algebra. We will show that the irreducible representations $\{\pi^{RC}\}_{RC}$ are pairwise disjoint. In other words, we show that different $RC$ label different superselection sectors. Finally, we will show that the representations $\pi^{RC}$ are anyon representations by relating them to the so-called amplimorphism representations of \cite{Naaijkens2015-xj}.

\subsection{Ribbon operators and their limiting maps} \label{sec:ribbon operators and limiting maps}

From this point onward, the \emph{ribbon operators} introduced in Section \ref{subsec:preliminary notions} will play an increasingly important role in the analysis. By taking certain linear combinations of these ribbon operators, we construct new ribbon operators that can produce, transport, and detect anyonic excitations above the frustration free ground state. \\

Recall from  section \ref{subsec:preliminary notions} that we can associate to any finite ribbon $\rho$ some ribbon operators $F_{\rho}^{h, g}$. The following linear combinations of these ribbon operators are designed so that when acting on the ground state, they produce excitations that sit in irreducible representations for the action of the quantum double algebra at the endpoints of $\rho$.

\begin{definition}[\cite{Bombin2007-uw}] \label{def:RC ribbons}
    For each irreducible representation $RC$ of the quantum double we define
    \begin{align*}
        \frcuv :=\frac{\dimR}{|N_C|} \sum_{m \in N_C} {R}^{jj'}(m)^* F^{\dash{c}_i, q_i m \dash{q}_{i'}}_\rho 
    \end{align*}
    where $u = (i,j) \in I_{RC}$ and $v = (i',j') \in I_{RC}$.    
\end{definition}

\begin{definition}[\cite{naaijkens2012anyons}, \cite{Naaijkens2015-xj}] \label{def:finite mu}
    For any finite ribbon $\rho$, any $RC$, and any $u_1, u_2 \in I_{RC}$ we define a linear map from $\cstar$ to itself by
    $$ \mu^{RC; u_1 u_2}_{\rho}(O) := \bigg(\frac{|N_C|}{\dimR} \bigg)^2 \sum_{v} \, \big( F_{\rho}^{RC; u_1 v} \big)^* \, O \, F_{\rho}^{RC; u_2 v} $$
    for any $O \in \cstar$.
\end{definition}

We define a half-infinite ribbon to be a sequence $\rho = \{ \tau_i \}_{i \in \N}$ of triangles such that $\partial_1 \tau_i = \partial_0 \tau_{i+1}$ for all $i \in \N$, and such that for each edge $e \in \latticeedge$, there is at most one triangle $\tau_i$ for which $\tau_i = (\partial_0 \tau_i, \partial_1 \tau_i, e)$. We denote by $\partial_0 \rho = \partial_0 \tau_1$ the initial site of the half-infinite ribbon, and by $\rho_n = \{\tau_i\}_{i = 1}^n$ the finite ribbon consisting of the first $n$ triangles of $\rho$. A half-infinite ribbon is positive if all of its triangles are positive, and negative if all of its triangles are negative. Any half-infinite ribbon is either positive or negative.\\

The following Proposition due to \cite{Naaijkens2015-xj} says that we can define $\enRC[RC;u_1 u_2]$ as limits of $\mu_{\rho_n}^{RC;u_1 u_2}$, and states some properties of these limiting maps.

\begin{proposition}[Lemma 5.2 in \cite{Naaijkens2015-xj}] \label{prop:ampli properties}
    Let $\rho$ be a half-infinite ribbon. The limit
    $$\enRC[RC;u_1 u_2](O) := \lim_{n \rightarrow \infty} \mu_{\rho_n}^{RC;u_1 u_2} (O)$$
    exists for all $O \in \cstar$ and all $u_1, u_2 \in I_{RC}$, and defines a linear map from $\cstar$ to itself. Moreover, the maps $\mu_{\rho}^{RC;uu}$ are positive, and
    \begin{enumerate}
        \item if $O \in \cstar[loc]$ then there is a finite $n_0$ such that $\mu_{\rho}^{RC;u_1 u_2}(O) = \mu_{\rho_n}^{RC;u_1 u_2}(O)$ for all $n \geq n_0$,
        \item $\mu^{RC; u_1 u_2}_{\rho}(\I) = \delta_{u_1, u_2} \I$.
		\item $\mu^{RC; u_1 u_2}_{\rho}(O) = \delta_{u_1, u_2} O$ if the support of $O$ is disjoint from the support of $\rho$.
		\item $\mu^{RC; u_1 u_2}_{\rho}(OO') = \sum_{u_3 \in I_{RC}} \mu^{RC; u_1 u_3}_{\rho}(O) \mu^{RC; u_3 u_2}_{\rho}(O')$.
		\item $\mu_{\rho}^{RC; u_1 u_2}(O)^* = \mu_{\rho}^{RC; u_2 u_1}(O^*)$.
    \end{enumerate}
\end{proposition}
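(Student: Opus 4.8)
The plan is to establish existence of the limit first, and then derive properties 1--5 from the corresponding finite-ribbon identities, most of which are essentially bookkeeping with the ribbon operator relations from Appendix~\ref{app:ribbonprops}. Here is how I would organize it.

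First I would prove the key locality fact (property 1), since everything else follows from it. Given $O \in \cstar[loc]$, its support is a finite set of edges. For a half-infinite ribbon $\rho$, the finite ribbons $\rho_n$ and $\rho_{n+1}$ differ only by the triangle $\tau_{n+1}$, whose edge lies outside $\supp(O)$ once $n$ is large enough (the ribbon marches off to infinity, so only finitely many of its triangles touch $\supp(O)$). I would then show that adding a triangle to a ribbon does not change $\mu^{RC;u_1u_2}_{\rho}(O)$ when that triangle's edge is disjoint from $\supp(O)$: using $\rho_{n+1} = \rho_n \tau_{n+1}$ and the inductive definition \eqref{eq:F inductive def} together with the $RC$-ribbon operators of Definition~\ref{def:RC ribbons}, the operator $F^{RC;uv}_{\rho_{n+1}}$ factors into a part supported on $\rho_n$ and a part supported on $\tau_{n+1}$; the latter commutes past $O$, and summing over the intermediate label $v$ collapses the $\tau_{n+1}$-factors via a Schur-type / unitarity identity back to $\delta$'s (this is exactly property 3 applied to the single added triangle, combined with property 4). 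This stabilization gives both the existence of the limit on $\cstar[loc]$ and property 1. For general $O \in \cstar$, since each $\mu^{RC;u_1u_2}_{\rho_n}$ is norm-bounded uniformly in $n$ (the $F^{RC;uv}_{\rho_n}$ are bounded with norms controlled independently of $n$ — this needs checking, but follows from $F^{h,g}_\rho$ being a partial isometry up to normalization, cf. the lemmas quoted around Definition~\ref{def:RC ribbons}), an $\varepsilon/3$ argument extends the limit to all of $\cstar$, and linearity is inherited.

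Next I would verify properties 2--5 for the finite maps $\mu^{RC;u_1u_2}_{\rho_n}$ and then pass to the limit. Property 2: $\mu^{RC;u_1u_2}_{\rho}(\I) = \big(\tfrac{|N_C|}{\dimR}\big)^2 \sum_v (F^{RC;u_1v}_\rho)^* F^{RC;u_2v}_\rho$, and expanding via Definition~\ref{def:RC ribbons} and using the composition/orthogonality relations for the $F^{h,g}_\rho$ (from the appendix) together with the Schur orthogonality relation \eqref{eq:Schur} for $R$ yields $\delta_{u_1,u_2}\I$; this is purely a finite-ribbon identity, hence passes to the limit trivially. Property 3 is the statement proved in the previous paragraph (an operator disjoint from $\supp(\rho_n)$ for all large $n$ commutes with every $F^{RC;uv}_{\rho_n}$, so the sum over $v$ gives $\delta_{u_1,u_2}O$ by property 2), and its limit version is immediate from property 1. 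Property 4 (the amplimorphism/matrix-multiplication property): expand $\mu^{RC;u_1u_2}_{\rho_n}(OO')$, insert $\I = \big(\tfrac{|N_C|}{\dimR}\big)\sum_{v'} F^{RC;u_3v'}_{\rho_n}(F^{RC;u_3v'}_{\rho_n})^*$ — wait, more carefully, use property 2 in the form that $\sum_v F^{RC;u_3 v}_{\rho_n}(F^{RC; u_3' v}_{\rho_n})^* = \delta_{u_3 u_3'}(\tfrac{\dimR}{|N_C|})^2\I$ as a resolution between the two operator slots — to split the single sum over $v$ into a double sum, one for $O$ and one for $O'$, tied together by an intermediate label $u_3$ summed over $I_{RC}$; again this is a finite identity that survives the limit by property 1 applied to $O$, $O'$, and $OO'$ simultaneously. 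Property 5 follows from $(F^{RC;uv}_\rho)^{**} = F^{RC;uv}_\rho$ by inspection of the definition. Finally, positivity of $\mu^{RC;uu}_\rho$: the finite maps $\mu^{RC;uu}_{\rho_n}(O) = \big(\tfrac{|N_C|}{\dimR}\big)^2\sum_v (F^{RC;uv}_{\rho_n})^* O F^{RC;uv}_{\rho_n}$ are manifestly completely positive (Kraus form), hence positive, and a weak-$*$/norm limit of positive maps is positive.

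\textbf{Main obstacle.} The technical heart is the stabilization argument: showing precisely that appending a triangle whose edge avoids $\supp(O)$ leaves $\mu^{RC;u_1u_2}_\rho(O)$ unchanged. This requires unwinding Definition~\ref{def:RC ribbons} and the inductive formula \eqref{eq:F inductive def} carefully, keeping track of how the $N_C$-sums and the $R^{jj'}(m)$ factors interact when the ribbon is extended, and invoking the right commutation and orthogonality relations for the $F^{h,g}_\rho$ from Appendix~\ref{app:ribbonprops}. The other nontrivial point is the uniform norm bound on $F^{RC;uv}_{\rho_n}$, needed to extend from $\cstar[loc]$ to $\cstar$; I expect this to follow from the fact (proved in the appendix) that up to the normalization constant these are partial isometries, so $\|F^{RC;uv}_{\rho_n}\| \leq \tfrac{|N_C|}{\dimR}\cdot(\text{something bounded in }n)$ — but confirming the $n$-independence is the place where care is required. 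Everything downstream (properties 2, 4, 5, positivity) is then a routine consequence of Schur orthogonality and the Kraus form.
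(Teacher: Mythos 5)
Your proposal is correct in substance but takes a genuinely different route from the paper. The paper's proof of this proposition is essentially a citation: everything except positivity is attributed to Lemma 5.2 of \cite{Naaijkens2015-xj}, and positivity is then deduced for the limiting maps directly from items 4 and 5 via $\mu_{\rho}^{RC;uu}(O^*O) = \sum_{v}\mu_{\rho}^{RC;vu}(O)^*\mu_{\rho}^{RC;vu}(O) \geq 0$. You instead rebuild the cited lemma from scratch, which is more work but more self-contained. Your stabilization argument for item 1 is the right one: write $\rho_{n+1} = \rho_n\tau_{n+1}$, factor $F^{RC;uw}_{\rho_{n+1}}$ via Lemma \ref{lem:decomposition of F}, commute the $\tau_{n+1}$-factors past $O$ once the new edge leaves $\supp(O)$ (which happens for all but finitely many $n$ since a ribbon uses each edge at most once), and collapse the sum over the intermediate label using the single-triangle case of Lemma \ref{lem:FdaggerF identity}; the normalizations match. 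The uniform bound you flag as the delicate point does hold: the first identity of Lemma \ref{lem:FdaggerF identity} gives $(F^{RC;uv}_{\rho_n})^*F^{RC;uv}_{\rho_n} \leq (\dim R/\abs{N_C})^2\,\mathds{1}$ term by term, so $\norm{F^{RC;uv}_{\rho_n}} \leq \dim R/\abs{N_C}$ independently of $n$, and your $\varepsilon/3$ extension to $\cstar$ is sound. Your positivity argument (Kraus form at finite $n$, then a norm limit of positive maps) differs from the paper's but is equally valid, and in fact yields complete positivity.

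The one concrete soft spot is item 4. The identity you reach for, $\sum_{v}F^{RC;u_3 v}_{\rho}(F^{RC;u_3'v}_{\rho})^* = \delta_{u_3,u_3'}(\dim R/\abs{N_C})^2\mathds{1}$, sums over the second (far-endpoint) label and cannot be inserted between $O$ and $O'$ so as to produce the sum over the intermediate label: after expanding $\sum_{u_3}\mu^{RC;u_1u_3}_\rho(O)\mu^{RC;u_3u_2}_\rho(O')$, the factor sitting between $O$ and $O'$ is $\sum_{u_3}F^{RC;u_3v}_\rho(F^{RC;u_3w}_\rho)^*$, summed over the \emph{first} label with the endpoint labels $v,w$ free. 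What you need is the companion completeness relation $\sum_{u}F^{RC;uv}_\rho(F^{RC;uw}_\rho)^* = \delta_{v,w}(\dim R/\abs{N_C})^2\mathds{1}$. This is true (it belongs to the orthogonality package for the representation basis of the ribbon algebra in Appendix B of \cite{Bombin2007-uw}, and is what \cite{Naaijkens2015-xj} uses), but it is not the identity recorded in Lemma \ref{lem:FdaggerF identity}, so a self-contained proof must state and prove it separately. With that relation in hand, the rest of your argument goes through.
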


\begin{proof}
    The only thing that is not coming directly from \cite{Naaijkens2015-xj}'s Lemma 5.2 is the claim that the maps $\mu_{\rho}^{RC;uu}$ are positive. To see this, simply note that for any $O \in \cstar$ and using items 4 and 5 we have
    $$\mu_{\rho}^{RC;uu}(O^* O) = \sum_{v} \, \mu_{\rho}^{vu}(O)^* \, \mu_{\rho}^{vu}(O) \geq 0.$$
\end{proof}

Let $\omega_0$ be the frustration free ground state and $(\pi_0, \caH_0, | \Omega_0 \rangle)$ its GNS triple. We write $\chi_{\rho}^{RC; u v} := \pi_0 \circ \mu_{\rho}^{RC; u v} : \cstar \rightarrow \caB(\caH_0)$.

\begin{lemma} \label{lem:qdstate from mu action}
    Let $\rho$ be a half-infinite ribbon with $\partial_0 \rho = \site$ for any site $\site$. For any $RC$ and any $u \in I_{RC}$ we have
    $$\omega_{\site}^{RC; u} = \omega_0 \circ \mu_{\rho}^{RC;uu}.$$
\end{lemma}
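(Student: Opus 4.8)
The goal is to show that $\omega_0 \circ \mu_\rho^{RC;uu}$ coincides with the unique state $\omega_\site^{RC;u}$ characterised in Proposition~\ref{prop:characterisation of S^RCu}. By that uniqueness result, it suffices to prove that $\nu := \omega_0 \circ \mu_\rho^{RC;uu}$ is a state lying in $\S_\site^{RC;u}$, i.e.\ that $\nu$ is positive, normalised, and satisfies the defining constraints \eqref{eq:cons1} and \eqref{eq:cons3}. Positivity is immediate since $\mu_\rho^{RC;uu}$ is a positive map (Proposition~\ref{prop:ampli properties}) and $\omega_0$ is a state; normalisation follows from $\mu_\rho^{RC;uu}(\I) = \I$ (item 2 of Proposition~\ref{prop:ampli properties}) and $\omega_0(\I) = 1$. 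So the content is in verifying the constraints.

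\textbf{Verifying the bulk constraints \eqref{eq:cons1}.} I would fix $v \in \dlatticevert$ and $f \in \dlatticeface$ and show $\nu(A_v) = \nu(B_f) = 1$. The key geometric input is that the half-infinite ribbon $\rho$ starts at $\site = (v_0, f_0)$ and can be taken so that it never ``uses up'' the star at $v_0$ or the plaquette at $f_0$ beyond what is needed at the endpoint; more precisely, for a vertex $v \ne v_0$ the gauge projector $A_v$ is supported away from the first triangle of $\rho$ in a way that lets one commute $A_v$ through the ribbon operators. Concretely, I expect to use a commutation lemma (the ribbon-operator analogues of Lemma~\ref{lem:flux change} / Lemma~\ref{lem:a and B commute with A_v and B_f}, or the general amplimorphism property that $\mu_\rho^{RC;u_1u_2}$ acts trivially on operators disjoint from $\rho$ and intertwines the quantum-double action at the endpoints) to reduce $\nu(A_v)$ to $\omega_0$ evaluated on a conjugate/transported version of $A_v$, which is again a star operator at some vertex, hence has expectation $1$ in the frustration-free ground state $\omega_0$. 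The same argument handles $B_f$. For the endpoint region near $\site$ itself one must be slightly more careful: the constraint there is not $A_{v_0}$ or $B_{f_0}$ but lives ``inside'' the ribbon, so those projectors are genuinely not required — consistent with $v_0 \notin \dlatticevert$, $f_0 \notin \dlatticeface$.

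\textbf{Verifying the anyon constraint \eqref{eq:cons3}.} This is the crux. I would show $\nu(D_\site^{RC;u}) = 1$, i.e.\ $\omega_0\big(\mu_\rho^{RC;uu}(D_\site^{RC;u})\big) = 1$. Here $D_\site^{RC;u} = A_\site^{RC;u} B_\site^{c_i}$ is built from the gauge transformations and flux projectors at $\site = \partial_0 \rho$. The natural mechanism — essentially the defining property of the $F_\rho^{RC;uv}$ ribbon operators (Definition~\ref{def:RC ribbons}) and the structure of $\mu_\rho^{RC;u_1u_2}$ as an amplimorphism — is that conjugating by the $F_\rho^{RC;\cdot\cdot}$ ``moves'' the quantum-double representation content from $\site$ to the far end of $\rho$, so that $\mu_\rho^{RC;uu}(D_\site^{RC;u})$ reduces (after using the explicit $RC$-isotypic combinations defining $F_\rho^{RC;uv}$ and the Schur orthogonality relations \eqref{eq:Schur}) to an operator supported far from $\site$ on which $\omega_0$ gives $1$; equivalently $\mu_\rho^{RC;uu}(D_\site^{RC;u}) = \I$ by a direct algebraic identity among ribbon operators. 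I would prove this identity by expanding $D_\site^{RC;u}$ and $F_\rho^{RC;u_kv}$ in terms of the elementary $A_\site^h$, $B_\site^g$ and $F_\rho^{h,g}$, using the commutation relations of star/plaquette operators at the endpoint of a ribbon with the ribbon operators themselves (these are the $\caD(G)$-module structure relations, Appendix~\ref{app:ribbonprops}), and then collapsing the sums over $N_C$ via orthogonality of matrix coefficients. \textbf{The main obstacle} I anticipate is precisely this bookkeeping: getting the conjugation of $D_\site^{RC;u}$ by the twisted ribbon operators to close up correctly, keeping track of the iterator elements $q_i$, the flux labels $c_i = q_i r_C \bar q_i$, and the direction/orientation conventions of $\rho$ at its initial site — a computation that is conceptually forced but notationally heavy. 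Once that identity is in hand, combining it with the bulk-constraint verification and invoking Proposition~\ref{prop:characterisation of S^RCu} finishes the proof.
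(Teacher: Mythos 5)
Your proposal is correct and follows essentially the same route as the paper: reduce to the uniqueness characterisation of Proposition \ref{prop:characterisation of S^RCu}, get positivity and normalisation from Proposition \ref{prop:ampli properties}, and verify the constraints by commuting $A_v$, $B_f$ through the ribbon operators (the paper's Lemma \ref{lem:chi preserves constraints}) and by the Schur-orthogonality computation at the endpoint of $\rho$ (the paper's Lemmas \ref{lem:ribbon label projector} and \ref{lem:ampli label projector}). The only imprecision is your aside that $\mu_\rho^{RC;uu}(D_{\site}^{RC;u})=\I$ might hold as an operator identity — it does not; the cancellation uses $A_{\site}^h\ket{\Omega_0}=\ket{\Omega_0}$ and $B_{\site}^g\ket{\Omega_0}=\delta_{g,1}\ket{\Omega_0}$ and therefore holds only on the ground-state vector, which is all that is needed here.
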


\begin{proof}
    $\omega_0 \circ \mu_{\rho}^{RC;uu}$ is a positive linear functional by Proposition \ref{prop:ampli properties}. Normalisation follows from item 2 of Proposition \ref{prop:ampli properties}.

    Since $\omega_{\site}^{RC;u}$ is completely characterised by
    $$ 1 = \omega_{\site}^{RC;u}(A_v) = \omega_{\site}^{RC;u}(B_f) = \omega_{\site}^{RC;u}(D_{\site}^{RC;u})$$
    for all $v \in \dlatticevert$ and all $f \in \dlatticeface$ (Proposition \ref{prop:characterisation of S^RCu}), it is sufficient to show that $\omega_0 \circ \mu_{\rho}^{RC;uu}$ also satisfies these constraints.
    
    Since for any observable $O \in \cstar$ we have
    $$ (\omega_0 \circ \mu_{\rho}^{RC;uu})(O) = \langle \Omega_0, \chi_{\rho}^{RC;uu}(O) \, \Omega_0 \rangle,$$
    this follows immediately from Lemmas \ref{lem:ampli label projector} and \ref{lem:chi preserves constraints}.
\end{proof}

\begin{lemma} \label{lem:transport of anyons}
    For any two sites $s, s'$, any $RC$, and any two labels $u, u' \in I_{RC}$ there is a local operator $T \in \cstar[loc]$ such that
    $$ \omega_{s'}^{RC;u'}(O) = \omega_s^{RC;u}(T O T^*)  $$
    for all $O \in \cstar.$
\end{lemma}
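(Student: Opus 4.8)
The plan is to build the desired local operator $T$ in two stages: first a ``ribbon operator'' that moves the anyon from the site $s$ to the site $s'$ while converting the internal label, and then absorb the discrepancy between the abstract labels into a unitary at the new site. Concretely, I would pick a finite ribbon $\rho$ with $\partial_0 \rho = s$ and $\partial_1 \rho = s'$ (such a ribbon exists since the lattice is connected), and consider the ribbon operators $F_\rho^{RC; u' u}$ from Definition \ref{def:RC ribbons}, or rather the associated maps $\mu_\rho^{RC; \cdot \cdot}$ from Definition \ref{def:finite mu}. The key conceptual input is Lemma \ref{lem:qdstate from mu action}, which expresses $\omega_s^{RC;u} = \omega_0 \circ \mu_{\hat\rho}^{RC;uu}$ for a half-infinite ribbon $\hat\rho$ starting at $s$. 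If I choose $\hat\rho$ to \emph{begin} with the finite ribbon $\rho$ from $s$ to $s'$ and then continue as an arbitrary half-infinite ribbon $\rho'$ from $s'$, the composition property (item 4 of Proposition \ref{prop:ampli properties}) gives $\mu_{\hat\rho}^{RC;uu} = \sum_{u_3} \mu_\rho^{RC;u u_3} \circ \mu_{\rho'}^{RC; u_3 u}$, and more usefully $\mu_{\rho'}^{RC; u' u'} = \sum_{u_1, u_2} \mu_{\rho}^{RC; u_2 u'}{}^{\mathrm{op}} \cdots$ — the precise bookkeeping here is where one must be careful, but the upshot should be that conjugating by $F_\rho^{RC; u u'}$ relates $\omega_0 \circ \mu_{\rho'}^{RC;u'u'}$ to $\omega_0 \circ \mu_{\hat\rho}^{RC;uu}$.

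More precisely, I would argue as follows. By Lemma \ref{lem:qdstate from mu action} applied with the half-infinite ribbon $\rho'$ starting at $s'$, we have $\omega_{s'}^{RC;u'} = \omega_0 \circ \mu_{\rho'}^{RC;u'u'}$. Now let $\hat\rho = \rho \rho'$ be the half-infinite ribbon from $s$ obtained by prepending $\rho$; it starts at $s$, so again by Lemma \ref{lem:qdstate from mu action}, $\omega_{s}^{RC;u} = \omega_0 \circ \mu_{\hat\rho}^{RC;uu}$. Using the multiplicativity of $\mu$ under concatenation of ribbons (item 4, applied to the splitting $\hat\rho = \rho \rho'$, which holds for the finite truncations and passes to the limit), together with item 3 (locality: $\mu_\rho^{RC;ab}(O) = \delta_{ab} O$ when $O$ is supported away from $\rho$), one computes for $O$ supported away from $\rho$:
\begin{align*}
\omega_s^{RC;u}\big( (F_\rho^{RC;u u'})^* \, O \, F_\rho^{RC;u u'} \big)
&= \big(\tfrac{\dim R}{|N_C|}\big)^{\!2}\!\sum_{v} \omega_0\big( \mu_{\rho'}^{RC; v u'}\big( (F_\rho^{RC;u u'})^* O F_\rho^{RC;u u'} \big) \big),
\end{align*}
and after moving the $F_\rho$'s through using the concatenation identity one should land on $\omega_0 \circ \mu_{\rho'}^{RC;u'u'}(O) = \omega_{s'}^{RC;u'}(O)$, up to a normalization constant that I expect to be exactly $1$ by item 2. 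Setting $T$ to be a suitable scalar multiple of $F_\rho^{RC; u u'}$ — or more robustly, of the positive-square-root combination $\sum_v (F_\rho^{RC;uv})^* \cdots$ — then gives the claim for $O$ supported away from $\rho$, and the general case follows by density and continuity since $T$ is a fixed local operator.

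\textbf{Main obstacle.} The genuinely delicate part is the algebra of conjugating the $\mu_\rho$'s past each other and checking that the labels $u, u'$ (and the summed-over label $v$) contract correctly so that only the state $\omega_{s'}^{RC;u'}$ survives, with the right normalization and no residual sum. This is exactly the content that makes ribbon operators implement anyon transport, and it relies on the detailed form of Definition \ref{def:RC ribbons} together with items 2--5 of Proposition \ref{prop:ampli properties}; one likely needs an auxiliary lemma (analogous to how $F_\rho^{RC;uv}$ acts on $\omega_0$) identifying $\mu_{\rho\rho'}^{RC;uu}(O)$ with a conjugate of $\mu_{\rho'}^{RC;u'u'}(O)$ by $F_\rho^{RC;uu'}$ when $O$ avoids $\rho$. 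A secondary subtlety is that $T = F_\rho^{RC;uu'}$ need not be unitary or even normal, so one cannot write $\omega_{s'}^{RC;u'}(O) = \omega_s^{RC;u}(TOT^*)$ with $T$ an isometry; one must accept a general local $T$ (which is all the statement asks) and simply verify the scalar in front is $1$, using item 2 of Proposition \ref{prop:ampli properties} applied to $O = \mathds{1}$ as a consistency check. Once the core conjugation identity is in hand, extending from local $O$ avoiding $\rho$ to all of $\cstar$ is routine: both sides are bounded linear functionals agreeing on a norm-dense subalgebra.
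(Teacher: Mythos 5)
Your overall strategy is the paper's: write $\omega_s^{RC;u}=\omega_0\circ\mu_{\rho}^{RC;uu}$ for a half-infinite ribbon $\rho=\rho_1\rho'$ with $\partial_0\rho=s$ and $\partial_0\rho'=s'$, split the ribbon operators via Lemma \ref{lem:decomposition of F}, and recognize the result as $\omega_0\circ\mu_{\rho'}^{RC;u'u'}$ of a conjugated observable. However, the proposal has genuine gaps at exactly the step you flag as delicate, and the patches you propose do not work.

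First, the transporter is not a (scalar multiple of a) single ribbon operator $F_{\rho_1}^{RC;uu'}$. When you decompose $F_{\rho_1\rho'_n}^{RC;uv}$ you get a sum $\sum_w F_{\rho_1}^{RC;uw}F_{\rho'_n}^{RC;wv}$ over intermediate labels $w$, and the second legs carry labels $w$, not $u'$. The paper converts each of these using Lemma \ref{lem:change ribbon operator label}, $F_{\rho'_n}^{RC;wv}\ket{\Omega_0}=A_{s'}^{RC;wu'}F_{\rho'_n}^{RC;u'v}\ket{\Omega_0}$, so the correct operator is $T=\tfrac{|N_C|}{\dim R}\sum_w (A_{s'}^{RC;wu'})^*(F_{\rho_1}^{RC;uw})^*$ --- a sum over $w$ with label-changing gauge transformations attached at the target site $s'$. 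Your ``positive-square-root combination $\sum_v(F_\rho^{RC;uv})^*\cdots$'' does not identify this operator (by Lemma \ref{lem:FdaggerF identity} the natural contractions of that form are multiples of the identity), and for a single $F_{\rho_1}^{RC;uu'}$ the product $TT^*$ is not a multiple of $\mathds{1}$, so already the normalization check at $O=\mathds{1}$ fails to be state-independent. The ``auxiliary lemma'' you defer to is the actual content of the proof, and it is not the conjugation by a single $F$ that you conjecture.

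Second, the restriction to $O$ supported away from $\rho_1$ cannot be removed by density: observables supported on $\latticeedge\setminus\supp(\rho_1)$ form a proper closed subalgebra of $\cstar$ (any nontrivial observable on $\supp(\rho_1)$ has positive distance from it), so two functionals agreeing there need not agree on $\cstar$. The restriction is also unnecessary: in the paper's computation nothing is ever commuted past $O$ --- the splitting of $F_{\rho_1\rho'_n}^{RC;uv}$ is a purely algebraic identity and the label changes act on $\ket{\Omega_0}$ directly --- so the identity holds for arbitrary $O\in\cstar$ from the start.
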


\begin{proof}
    Let $\rho$ be a half-infinite ribbon having $\partial_0 \rho = s$, and a half-infinite subribbon $\rho'$ with $\partial_0 \rho' = s'$. Then $\rho = \rho_1 \rho'$ for a finite ribbon $\rho_1$. Let $O \in \cstar$. Using Lemma \ref{lem:qdstate from mu action} we now compute
    \begin{align*}
        \omega_{s}^{RC;u}(O) &= (\omega_0 \circ \mu_{\rho}^{RC;uu})(O) \\
        &= \lim_{n \uparrow \infty} \, \bigg(\frac{|N_C|}{\dimR} \bigg)^2 \,  \langle \Omega_0, \, \sum_{v} \, (F_{\rho_1 \rho'_n}^{RC;uv})^* \,O \, F_{\rho_1 \rho'_n}^{RC;uv} \, \Omega_0 \rangle \\
        \intertext{using Lemma \ref{lem:decomposition of F}}
        &= \lim_{n \uparrow \infty} \, \bigg(\frac{|N_C|}{\dimR} \bigg)^4 \sum_{v, w_1, w_2}  \, \langle \Omega_0, \, (F_{\rho'_n}^{RC;w_1v})^* (F_{\rho_1}^{RC;uw_1})^* \, O \, F_{\rho_1}^{RC;uw_2} \, F_{\rho'_n}^{RC;w_2 v} \, \Omega_0 \rangle \\
        \intertext{then using Lemma \ref{lem:change ribbon operator label}}
        &= \lim_{n \uparrow \infty} \, \bigg(\frac{|N_C|}{\dimR} \bigg)^4 \sum_{v, w_1, w_2}  \, \langle \Omega_0, \, (F_{\rho'_n}^{RC;u'v})^*  (A_{s'}^{RC;w_1 u'})^*  (F_{\rho_1}^{RC;uw_1})^* \\
        & \quad\quad\quad\quad\quad\quad\quad\quad\quad\quad\quad\quad \times \, O \, F_{\rho_1}^{RC;uw_2} \, A_{s'}^{RC;w_2 u'} F_{\rho'_n}^{RC;u'v} \, \Omega_0 \rangle \\
        &= (\omega_0 \circ \mu_{\rho'}^{RC;u'u'}) \bigg(  \bigg(\frac{|N_C|}{\dimR} \bigg)^2 \sum_{w_1, w_2} \, (A_{s'}^{RC;w_1 u'})^*  (F_{\rho_1}^{RC;uw_1})^* \, O \, F_{\rho_1}^{RC;uw_2} \, A_{s'}^{RC;w_2 u'} \bigg)
    \end{align*}
    which proves the claim with
    $$T =  \bigg(\frac{|N_C|}{\dimR} \bigg) \, \sum_w \, (A_{s'}^{RC;w u'})^*  (F_{\rho_1}^{RC;uw})^*.$$
\end{proof}

\subsection{Anyon representations labeled by $RC$} \label{sec:construction of piRC}

We define the following GNS representations.
\begin{definition} \label{def:piRC}
    Fix a site $\bsite$. For each $RC$, let $(\pi^{RC}, \hilb^{RC}, \ket{\Omega^{RC; (1, 1)}_{\bsite}})$ be the GNS triple for the pure state $\omega_{\bsite}^{RC;(1, 1)}$.
\end{definition}
Note that $\omega_{\bsite}^{\trivRC;(1, 1)} = \omega_0$ is the \ffgs, so $\pi^{\trivRC} = \pi_0$ is the ground state representation.

In this Section we will show that the representations $\{\pi^{RC}\}_{RC}$ are pairwise disjoint anyon representations with respect to the ground state representation $\pi_0$. In Section \ref{sec:completeness} we will show that any anyon representation is unitarily equivalent to one of the $\pi^{RC}$.

\begin{definition}
\label{def:state belongs to a representation}
    We say a state $\psi$ on $\cstar$ \emph{belongs} to a representation $\pi : \cstar \rightarrow \Bhilb$ of the observable algebra if there is a density operator $\rho \in \Bhilb$ such that
    $$\psi(O) = \Tr \lbrace \rho \pi(O) \rbrace$$
    for all $O \in \cstar$ (This notion is called being \emph{$\pi$-normal} in the operator algebra literature). If $\psi$ is pure and belongs to an irreducible representation $\pi$, then the corresponding density operator is a rank one projector, i.e. $\psi$ has a vector representative in the representation $\pi$. In this case we say $\psi$ is a vector state of $\pi$. Conversely, if $\psi$ belongs to a representation $\pi$, then we say $\pi$ \emph{contains} the state $\psi$.
\end{definition}

We first note that the representation $\pi^{RC}$ contains all the pure states $\{\omega_s^{RC;u}\}_{s, u}$. Since $\pi_{RC}$ is irreducible, it follows that all these states are equivalent to each other.

\begin{lemma} \label{lem:statesbelongingtopiRC}
    The pure states $\omega^{RC;u}_{s}$ are vector states of $\pi^{RC}$ for all sites $s$ and all $u \in I_{RC}$.
\end{lemma}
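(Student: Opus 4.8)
The plan is to show that each $\omega_s^{RC;u}$ belongs to the representation $\pi^{RC}$ by exhibiting it as a state of the form $O \mapsto \langle \xi, \pi^{RC}(O)\xi\rangle$ for a suitable unit vector $\xi \in \hilb^{RC}$, constructed from the transport operators of Lemma \ref{lem:transport of anyons}. Since $\pi^{RC}$ is irreducible (being the GNS representation of the pure state $\omega_{\bsite}^{RC;(1,1)}$), any state belonging to $\pi^{RC}$ that is pure automatically has a vector representative, so it suffices to produce \emph{some} density operator on $\hilb^{RC}$ representing $\omega_s^{RC;u}$.

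First I would apply Lemma \ref{lem:transport of anyons} with $s' = \bsite$, $u' = (1,1)$, and with $s, u$ the given data: this yields a local operator $T \in \cstar[loc]$ such that
$$\omega_{\bsite}^{RC;(1,1)}(O) = \omega_s^{RC;u}(T O T^*)$$
for all $O \in \cstar$. Equivalently, running the roles the other way, Lemma \ref{lem:transport of anyons} gives a local $T' \in \cstar[loc]$ with
$$\omega_s^{RC;u}(O) = \omega_{\bsite}^{RC;(1,1)}(T' O (T')^*)$$
for all $O \in \cstar$. Then I would set $\ket{\xi} := \pi^{RC}(T')\,\ket{\Omega^{RC;(1,1)}_{\bsite}} \in \hilb^{RC}$, so that for all $O \in \cstar$,
$$\omega_s^{RC;u}(O) = \langle \Omega^{RC;(1,1)}_{\bsite}, \pi^{RC}(T') \pi^{RC}(O) \pi^{RC}(T')^* \Omega^{RC;(1,1)}_{\bsite}\rangle = \langle \xi, \pi^{RC}(O)\,\xi\rangle.$$
Taking $O = \mathds{1}$ shows $\norm{\xi}^2 = \omega_s^{RC;u}(\mathds{1}) = 1$, so $\xi$ is a unit vector. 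Hence $\omega_s^{RC;u}$ is a vector state of $\pi^{RC}$, and being pure (Proposition \ref{prop:characterisation of S^RCu}), it is represented by the rank-one projector onto $\xi$, which is exactly the definition of being a vector state of the irreducible representation $\pi^{RC}$.

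The only subtle point — and the main thing to get right rather than a genuine obstacle — is the direction of the transport identity in Lemma \ref{lem:transport of anyons}: the lemma is symmetric in $(s,u)$ and $(s',u')$, so one simply invokes it with the roles chosen so that the conjugating operator moves the base-point state $\omega_{\bsite}^{RC;(1,1)}$ into $\omega_s^{RC;u}$, which is what lets us push the vector $\ket{\Omega^{RC;(1,1)}_{\bsite}}$ forward. No positivity or norm estimates beyond the trivial ones above are needed, and the locality of $T'$ is not even essential here (only that $T' \in \cstar$), though it is automatic. This establishes the claim.
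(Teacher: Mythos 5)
Your proposal is correct and follows exactly the paper's route: the paper's entire proof of this lemma is the one-line observation that it follows from Lemma \ref{lem:transport of anyons}, which you have simply unpacked. The only slip is cosmetic: since $\omega_{\bsite}^{RC;(1,1)}(T'O(T')^*) = \langle \pi^{RC}(T')^*\Omega^{RC;(1,1)}_{\bsite},\, \pi^{RC}(O)\, \pi^{RC}(T')^*\Omega^{RC;(1,1)}_{\bsite}\rangle$, the representing vector should be $\ket{\xi} = \pi^{RC}\bigl((T')^*\bigr)\ket{\Omega^{RC;(1,1)}_{\bsite}}$ rather than $\pi^{RC}(T')\ket{\Omega^{RC;(1,1)}_{\bsite}}$.
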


\begin{proof}
    This follows immediately from Lemma \ref{lem:transport of anyons}.
\end{proof}

We choose representative vectors for the states $\omega_{s}^{RC;(1, 1)}$ as follows.

\begin{definition}
    For all sites $s \neq \bsite$ we choose unit vectors $\ket{\Omega_s^{RC;(1, 1)}} \in \hilb^{RC}$ such that
    $$ \omega_s^{RC;(1, 1)}(O) = \inner{\Omega_{s}^{RC;(1, 1)}}{\pi^{RC}(O) \,\Omega_{s}^{RC;(1, 1)}} $$
    for all $O \in \cstar$. Such vectors exist by Lemma \ref{lem:statesbelongingtopiRC}, (note that the corresponding vector $\ket{\Omega_{\bsite}^{RC;(1, 1)}}$ for the site $\bsite$ was already fixed in Definition \ref{def:piRC}.)
\end{definition}

\subsection{Disjointness of the representations $\pi^{RC}$}

We prove that $\pi^{RC}$ and $\pi^{R'C'}$ are disjoint whenever $RC \neq R'C'$.

Let us first show the following basic Lemma, which is due to  \cite{alicki2007statistical}.
\begin{lemma} \label{lem:absorption of satisfied projectors}
    Let $\omega$ be a state on $\cstar$ and $P \in \cstar$ an orthogonal projector satisfying $\omega(P) = 1$. Then, $\omega(P O) = \omega(O P) = \omega(O)$ for all $O \in \cstar$.
\end{lemma}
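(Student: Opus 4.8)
The plan is to exploit the Cauchy--Schwarz inequality for the positive semidefinite sesquilinear form $(A, B) \mapsto \omega(A^* B)$ on $\cstar$, applied to the complementary projector $Q := \mathds{1} - P$.

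First I would record the elementary observation that $Q = \mathds{1} - P$ is again an orthogonal projector, so $Q^* = Q$ and $Q^* Q = Q$, and that $\omega(Q) = \omega(\mathds{1}) - \omega(P) = 1 - 1 = 0$. Next, for an arbitrary $O \in \cstar$, I would apply the Cauchy--Schwarz inequality $\abs{\omega(A^* B)}^2 \leq \omega(A^* A)\, \omega(B^* B)$ with the choices $A = Q$ and $B = O$, giving
$$\abs{\omega(Q O)}^2 \leq \omega(Q^* Q)\, \omega(O^* O) = \omega(Q)\, \omega(O^* O) = 0,$$
hence $\omega(Q O) = 0$, which rearranges to $\omega(P O) = \omega(O)$. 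Applying the same inequality with $A = O^*$ and $B = Q$ yields $\abs{\omega(O Q)}^2 \leq \omega(O O^*)\, \omega(Q) = 0$, so $\omega(O Q) = 0$ and therefore $\omega(O P) = \omega(O)$. Combining the two gives $\omega(P O) = \omega(O P) = \omega(O)$ for all $O \in \cstar$, as claimed.

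There is no real obstacle here; the only thing to be careful about is to state (or recall) that states on a $C^*$-algebra are positive linear functionals, so that $(A,B) \mapsto \omega(A^*B)$ is indeed positive semidefinite and the Cauchy--Schwarz inequality applies. If one prefers to avoid invoking Cauchy--Schwarz as a black box, an alternative is to pass to the GNS representation $(\pi_\omega, \hilb_\omega, \ket{\Omega})$ of $\omega$: the condition $\omega(Q) = 0$ means $\norm{\pi_\omega(Q)\ket{\Omega}}^2 = \inner{\Omega}{\pi_\omega(Q)\Omega} = 0$ since $\pi_\omega(Q)$ is a projector, so $\pi_\omega(Q)\ket{\Omega} = 0$ and likewise $\bra{\Omega}\pi_\omega(Q) = 0$; then $\omega(QO) = \inner{\Omega}{\pi_\omega(Q)\pi_\omega(O)\Omega} = 0$ and $\omega(OQ) = \inner{\Omega}{\pi_\omega(O)\pi_\omega(Q)\Omega} = 0$, yielding the same conclusion. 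I would present the GNS version, or the Cauchy--Schwarz version, whichever the paper's conventions make cleanest.
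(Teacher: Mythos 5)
Your proof is correct and uses essentially the same argument as the paper: apply the Cauchy--Schwarz inequality for the form $(A,B)\mapsto\omega(A^*B)$ with the complementary projector $Q=\mathds{1}-P$, noting $\omega(Q)=0$. The GNS reformulation you mention is an equivalent packaging of the same idea; no gap.
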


\begin{proof}
    Using the Cauchy-Schwarz inequality,
    $$ \abs{\omega(O - PO)}^2 = \abs{\omega(O (\I - P) (\I - P))}^2 \leq \omega( O (\I - P) O^* ) \omega(\I - P) = 0  $$
    which show that $\omega(O) = \omega(PO)$. The equality $\omega(O) = \omega(OP)$ is shown in the same way.
\end{proof}

\begin{definition} (\cite[Eq. (B75)]{Bombin2007-uw}) \label{def:charge detectors}
    For any closed ribbon $\sigma$ we put
    $$\knRC[\sigma] := \dimchbasis \sum_{m \in N_C} \chi_R(m)^* \sum_{q \in Q_C} F_\sigma^{q m \dash{q}, q r_C \dash{q}}.$$
\end{definition}

\begin{lemma} \label{lem:projector Lemma}
    If $\omega \in \overline{\S}_{\site}$ then $\omega( O \qds ) = \omega(O K_{\bdy}^{RC}) = \omega(\qds O) = \omega(K_{\bdy}^{RC} O )$ for all $RC$, all $n>1$, and all $O \in \cstar[{\eregion[n]}]$. In fact, $\omega(O D_{\site}^{RC}) = \omega(D_{\site}^{RC} O)$ holds for all $O \in \cstar$.
\end{lemma}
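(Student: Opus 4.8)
The plan is to prove the claim in two stages: first the ``integrated'' statements about $\qds = D_{\site}^{RC}$ valid on all of $\cstar$, and then the finite-volume statements about $K_{\bdy}^{RC}$. For the first stage I would argue that $\qds$ is a sum of commuting projectors $\qds[RC;u]$ (Lemma \ref{lem:DRC decomposes into DRCu}), each built out of the gauge transformations $A_{\site}^h$ and the flux projectors $B_{\site}^g$, which by construction act on finitely many edges near $\site$. Hence $\qds \in \cstar[loc]$ and is itself an orthogonal projector. The key input is that a state $\omega \in \overline{\S}_{\site}$ need not a priori satisfy $\omega(\qds) = 1$; however, we have $\mathds{1} = \sum_{RC'} D_{\site}^{RC'}$ with all summands orthogonal projectors, so $\sum_{RC'} \omega(D_{\site}^{RC'}) = 1$. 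To get $\omega(O \qds) = \omega(\qds O)$ for \emph{all} $O \in \cstar$ (not just those $O$ for which $\omega(\qds)=1$), I would instead show directly that $\qds$ is \emph{central} modulo the constraints, i.e. that $[\qds, O] = 0$ whenever $O$ is localized away from $\site$, and reduce the general case to that. Concretely: write $O = O_1 + O_2$ where $O_1$ is supported in a small neighborhood $\eregion[1]$ of $\site$ and... this does not quite work since $\qds$ and a generic $O_1$ near $\site$ need not commute. The cleaner route is the one the paper sets up with $\overline{\S}_{\site}$: for $\omega \in \overline{\S}_{\site}$ we have $\omega(A_v) = \omega(B_f) = 1$ for all $v \neq v_0$, $f \neq f_0$. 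I would then use Lemma \ref{lem:absorption of satisfied projectors} to show $\omega(O D_{\site}^{RC}) = \omega(D_{\site}^{RC} O)$ by sandwiching with the projectors $\prod_{v \in \dvregion} A_v \prod_{f \in \dfregion} B_f$ for a region large enough to contain (the support of) $O$ and $\site$, using that on the corresponding local space $\overline{\V}$ the operator $D_{\site}^{RC}$ restricts to an honest orthogonal projection commuting with $O$ modulo the other constraints — this is exactly the content of Proposition \ref{prop:local spaces spanned by etas} and Lemma \ref{lem:commuting local constraints}.

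For the finite-volume part, the heart of the matter is to identify $K_{\bdy}^{RC}$ (the charge detector on the boundary ribbon $\bdy$, Definition \ref{def:charge detectors}) with $D_{\site}^{RC}$ \emph{as operators on $\caH_n$ modulo the constraints defining $\overline{\V}$}. Both are sums over $m \in N_C$ of $\chi_R(m)^*$ times sums over $q \in Q_C$ of products of an $A$-type and a $B$-type ribbon operator — for $D_{\site}^{RC}$ these are the elementary ribbon operators $A_{\site}^{q m \bar q} B_{\site}^{q r_C \bar q}$, and for $K_{\bdy}^{RC}$ they are $F_{\bdy}^{q m \bar q, q r_C \bar q}$ on the big boundary ribbon. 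The standard fact (and this is where I expect the main obstacle to be — tracking the conjugations and the fiducial flux carefully) is that on any state satisfying $A_v = B_f = 1$ for all $v, f$ strictly inside the region bounded by $\bdy$ except at $\site$, the closed ribbon operator $F_{\bdy}^{\cdot,\cdot}$ depends only on the homotopy class of $\bdy$ relative to $\site$ and hence acts the same way as $F_{\rho_\star(s_0)}$ composed with $F_{\rho_\triangle(s_0)}$, up to conjugation by the fiducial flux $\phi_{\fidu}$. Concretely I would deform $\bdy$ across the trivial-flux faces and gauge-invariant vertices one triangle at a time, using the moves from Appendix \ref{app:ribbonprops} (the analogues of Lemmas \ref{lem:flux change}, \ref{lem:change ribbon operator label}), to bring it into a small closed ribbon around $\site$; the sum over $q \in Q_C$ together with $\sum_m \chi_R(m)^*$ makes the result conjugation-invariant, so the residual dependence on $\phi_{\fidu}$ drops out and one lands exactly on $D_{\site}^{RC}$.

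Having established $K_{\bdy}^{RC}\big|_{\overline{\V}} = D_{\site}^{RC}\big|_{\overline{\V}}$, the chain of equalities follows: for $\omega \in \overline{\S}_{\site}$ and $O \in \cstar[{\eregion[n]}]$, restrict $\omega$ to a density matrix $\rho$ on $\caH_N$ for $N$ large (so $\supp O \subseteq \eregion[n] \subseteq \eregion[N-1]$, and $\bdy = \partial \rho_N$ also sits inside $\eregion[N]$); then $\rho$ is supported on $\overline{\V}$ (with respect to the region $N$) because $\omega(A_v) = \omega(B_f) = 1$ forces this via Lemma \ref{lem:pure components projector Lemma}, so
\[
\omega(O K_{\bdy}^{RC}) = \Tr(\rho\, O\, K_{\bdy}^{RC}) = \Tr(\rho\, O\, D_{\site}^{RC}) = \omega(O D_{\site}^{RC}),
\]
and likewise on the other side and with $K_{\bdy}^{RC}$ replaced by $\qds$; combined with the centrality of $D_{\site}^{RC}$ proven in stage one this gives all four asserted equalities. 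The single step I'd flag as genuinely delicate is the ribbon-deformation identification $K_{\bdy}^{RC} = D_{\site}^{RC}$ on $\overline{\V}$: keeping the iterator set $Q_C$, the representatives $r_C$, and the conjugation bookkeeping consistent through the deformation is where the real work lies, and it is presumably discharged by a lemma in Appendix \ref{app:ribbonprops} that I would cite rather than reprove.
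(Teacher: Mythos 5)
Your treatment of the finite-volume statements follows the paper's proof in all essentials: restrict $\omega$ to $\cstar[{\eregion}]$, use Lemma \ref{lem:pure components projector Lemma} to see that the pure components of the restriction are supported in $\overline{\V}$, and then invoke the identification of $K_{\bdy}^{RC}$ with $D_{\site}^{RC}$ on $\overline{\V}$. In the paper that identification is Lemma \ref{lem:topological charge detector} applied to the basis $\qdpure$ of Proposition \ref{prop:local spaces spanned by etas}; your proposed ribbon-deformation derivation of the same fact is the standard alternative and is reasonably deferred to an appendix lemma.

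The genuine problem is your ``stage one''. You assert that $\omega(O D_{\site}^{RC}) = \omega(D_{\site}^{RC} O)$ for all $O \in \cstar$ can be obtained by sandwiching with $\prod_{v \in \dvregion} A_v \prod_{f \in \dfregion} B_f$ and ``using that on $\overline{\V}$ the operator $D_{\site}^{RC}$ restricts to an orthogonal projection commuting with $O$ modulo the other constraints --- this is exactly the content of Proposition \ref{prop:local spaces spanned by etas} and Lemma \ref{lem:commuting local constraints}.'' It is not: Lemma \ref{lem:commuting local constraints} only says the constraint projectors commute among themselves, and Proposition \ref{prop:local spaces spanned by etas} only exhibits a basis. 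Writing $P$ for the product of constraint projectors, the identity you need is $P O D_{\site}^{RC} P = P D_{\site}^{RC} O P$, and for a generic $O$ supported near $\site$ this has no direct proof --- you yourself observe a few lines earlier that $D_{\site}^{RC}$ and such an $O$ need not commute. The only available route is the one the paper takes, in the opposite logical order to yours: first establish $K_{\bdy}^{RC} P = D_{\site}^{RC} P$ (hence also $P K_{\bdy}^{RC} = P D_{\site}^{RC}$ by self-adjointness), then for $O$ supported in $\eregion[n-1]$ use that $O$ and $K_{\beta_n}^{RC}$ have disjoint supports to commute them, and only then transfer back to $D_{\site}^{RC}$ to get $\omega(O D_{\site}^{RC}) = \omega(O K_{\bdy}^{RC}) = \omega(K_{\bdy}^{RC} O) = \omega(D_{\site}^{RC} O)$, extending to all of $\cstar$ by continuity. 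Since your final paragraph appeals to ``the centrality of $D_{\site}^{RC}$ proven in stage one,'' the argument as written is circular at exactly this point; the fix is simply to derive the centrality as a corollary of the $K$--$D$ identification rather than as a prerequisite for it.
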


\begin{proof}
    The restriction $\omega_n$ of $\omega$ to $\cstar[\eregion]$ satisfies
    $$ 1 = \omega_n(A_v) = \omega_n(B_f) $$
    for all $v \in \dvregion$ and all $f \in \dfregion$. Let $\omega_n = \sum_{\kappa} \lambda_{\kappa} \, \omega_n^{(\kappa)}$ be the convex decomposition of $\omega_n$ into its pure components $\omega_n^{(\kappa)}$, and let $\ket{\Omega_n^{(\kappa)}} \in \caH_n$ be unit vectors such that
    $$\omega_n^{(\kappa)}(O') = \inner{\Omega_n^{(\kappa)}}{O' \, \Omega_n^{(\kappa)}}$$
    for all $O' \in \cstar[\eregion]$. From Lemma \ref{lem:pure components projector Lemma} we find that
    $$ 1 = \omega_n^{(\kappa)}(A_v) = \omega_n^{(\kappa)}(B_f)$$
    and
    $$ \ket{\Omega_n^{(\kappa)}} = A_v \ket{\Omega_n^{(\kappa)}} = B_f \ket{\Omega_n^{(\kappa)}}$$
    for all $v \in \dvregion$ and all $f \in \dfregion$.

    Consider for each $RC$ and each $\kappa$ the vector $\ket{\Omega_n^{(RC, \kappa)}} := D_{\site}^{RC} \ket{\Omega_n^{(\kappa)}}$. Since the $A_v, B_f$ commute with $\qds$ for $v \in \dvregion$ and $f \in \dfregion$ (Lemma \ref{lem:qdcommute}), we have
    $$ \ket{\Omega_n^{(RC, \kappa)}} = A_v \ket{\Omega_n^{(RC,\kappa)}} = B_f \ket{\Omega_n^{(RC,\kappa)}} = D_{\site}^{RC} \ket{\Omega_n^{(RC, \kappa)}} $$
    for all $RC$, $\kappa$, $v \in \dvregion$ and $f \in \dfregion$. \ie we have $\ket{\Omega_n^{(RC,\kappa)}} \in \VRC$ (cf. Definition \ref{def:local constraints}). It then follows from Proposition \ref{prop:local spaces spanned by etas} that
    $$ \ket{\Omega_n^{(RC,\kappa)}} = \sum_{uv} c^{RC, \kappa}_{uv} \qdpure $$
    for some coefficients $c^{RC, \kappa}_{uv} \in \C$. Since $\sum_{RC} \qds = \mathds{1}$ (Lemma \ref{lem:DRCprops}) it follows that
    $$ \ket{\Omega_n^{(\kappa)}} =  \sum_{RC}  \ket{\Omega_n^{(RC,\kappa)}} = \sum_{RC} \sum_{uv} c^{RC, \kappa}_{uv} \qdpure. $$

    We now use Lemma \ref{lem:topological charge detector} to obtain
    $$K_{\bdy}^{R' C'} \ket{\Omega_n^{(\kappa)}} = \sum_{RC} \sum_{uv} c_{uv}^{RC, \kappa} \, K_{\bdy}^{R'C'} \qdpure = \sum_{uv} c_{uv}^{R'C', \kappa} | \eta_n^{R'C';uv} \rangle = D_{\site}^{R'C'} \ket{\Omega_n^{(\kappa)}}$$
    from which it follows that
    $$\omega(O D_{\site}^{RC}) = \sum_{\kappa} \, \lambda_{\kappa} \inner{ \Omega_n^{(\kappa)} }{ O D_{\site}^{RC} \, \Omega_n^{(\kappa)} } = \sum_{\kappa} \, \lambda_{\kappa} \inner{ \Omega_n^{(\kappa)} }{ O K_{\bdy}^{RC} \, \Omega_n^{(\kappa)}} = \omega(O K_{\bdy}^{RC})$$
    for all $O \in \cstar[\eregion]$. Using that the $K_{\sigma}^{RC}$ are hermitian (Lemma \ref{lem:basic properties of K}) and the elementary fact that $\omega(A^* B) = \overline{\omega(B^* A)}$ for all $A, B \in \cstar$ we also get
    $$\omega(D_{\site}^{RC} O) = \omega( K_{\bdy}^{RC} O)$$
    for all $O \in \cstar[\eregion]$.
    
    To show the second claim, note that for any $O \in \cstar[loc]$ we can take $n$ large enough so that $O \in \cstar[{\eregion[n-1]}]$. Then $[K_{\bdy}^{RC}, O] = 0$, so $\omega(O K_{\bdy}^{RC}) = \omega(K_{\bdy}^{RC} O)$. Using the results $\omega(K_{\bdy}^{RC} O) = \omega(D_{\site}^{RC} O)$ and $\omega(O K_{\bdy}^{RC}) = \omega(O D_{\site}^{RC})$ obtained above, we get
    $$ \omega(D_{\site}^{RC} O) =  \omega(O D_{\site}^{RC})$$
    for any $O \in \cstar[loc]$. This result extends to all $O \in \cstar$ by continuity.
\end{proof}

\begin{lemma} 
\label{lem:detection Lemma}
    If $\omega \in \S_{\site}^{RC}$ then
    $$ \omega( K^{R'C'}_{\bdy}) = \delta_{RC, R'C'}$$
    for all $n > 1$.
\end{lemma}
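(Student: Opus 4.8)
The plan is to obtain this as an immediate corollary of Lemma \ref{lem:projector Lemma}. First I would observe that any state $\omega \in \S_{\site}^{RC}$ satisfies the constraints \eqref{eq:cons1} by definition, hence $\omega \in \overline{\S}_{\site}$ and Lemma \ref{lem:projector Lemma} applies to it. Specializing the identity $\omega(O\, K_{\bdy}^{R'C'}) = \omega(O\, D_{\site}^{R'C'})$ from that Lemma to the case $O = \mathds{1}$ gives, for every irreducible representation $R'C'$ of $\caD(G)$ and every $n > 1$,
$$\omega(K_{\bdy}^{R'C'}) = \omega(D_{\site}^{R'C'}).$$

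It then remains only to evaluate the right-hand side. By Lemma \ref{lem:DRCprops} the operators $\{D_{\site}^{R'C'}\}_{R'C'}$ are orthogonal projectors with $\sum_{R'C'} D_{\site}^{R'C'} = \mathds{1}$, so $\sum_{R'C'} \omega(D_{\site}^{R'C'}) = \omega(\mathds{1}) = 1$ with every summand nonnegative (each $D_{\site}^{R'C'}$ being a projector). The defining constraint \eqref{eq:cons2} of $\S_{\site}^{RC}$ forces the summand with $R'C' = RC$ to equal $1$, hence all other summands vanish, i.e. $\omega(D_{\site}^{R'C'}) = \delta_{RC, R'C'}$. Combining this with the displayed equality yields $\omega(K_{\bdy}^{R'C'}) = \delta_{RC, R'C'}$ for all $n > 1$, which is the claim.

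I do not expect any genuine obstacle here: all the work has already been done in Lemma \ref{lem:projector Lemma}, whose role is precisely to trade the local charge-measuring projector $D_{\site}^{R'C'}$ supported near the apex $\site$ for the topological charge detector $K_{\bdy}^{R'C'}$ supported on the boundary ribbon $\bdy$. Once that equivalence and the resolution of the identity $\sum_{R'C'} D_{\site}^{R'C'} = \mathds{1}$ are available, the present lemma is a two-line computation.
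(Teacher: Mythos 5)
Your proof is correct and follows essentially the same route as the paper: both hinge on Lemma \ref{lem:projector Lemma} to trade $K_{\bdy}^{R'C'}$ for $D_{\site}^{R'C'}$ and then on the orthogonality/resolution-of-identity properties of the Wigner projectors from Lemma \ref{lem:DRCprops}. The only (immaterial) difference is that the paper first inserts $D_{\site}^{RC}$ via Lemma \ref{lem:absorption of satisfied projectors} and applies the projector lemma with $O = D_{\site}^{RC}$, whereas you take $O = \mathds{1}$ and finish with a positivity argument.
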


\begin{proof}
    By definition $\omega(D_{\site}^{RC}) = 1$ and so by Lemma \ref{lem:absorption of satisfied projectors} we have $\omega(O) = \omega(O D_{\site}^{RC}) = \omega(D_{\site}^{RC} O)$ for all $O \in \cstar$. We take $O = K^{R'C'}_{\bdy}$ and use Lemmas \ref{lem:projector Lemma} and \ref{lem:DRCprops} to obtain
    $$ \omega(K^{R'C'}_{\bdy}) = \omega( D_{\site}^{RC} K^{R'C'}_{\bdy}) = \omega( D_{\site}^{RC} D_{\site}^{R'C'} ) = \delta_{RC, R'C'} \omega(D_{\site}^{RC}) = \delta_{RC, R'C'}$$
    as required.
\end{proof}

\begin{lemma}
\label{lem:piRCaredisjoint}
    The representations $\pi^{RC}$ and $\pi^{R'C'}$ are disjoint whenever $RC \neq R'C'$
\end{lemma}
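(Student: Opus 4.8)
The plan is to exploit the charge detectors $K^{RC}_{\bdy}$ together with Lemmas~\ref{lem:projector Lemma} and~\ref{lem:detection Lemma}. Since $\pi^{RC}$ and $\pi^{R'C'}$ are both irreducible it suffices to show they are not unitarily equivalent, because two irreducible representations are either unitarily equivalent or disjoint. So I would argue by contradiction: assume a unitary $W : \hilb^{RC} \to \hilb^{R'C'}$ intertwines them, and that $RC \neq R'C'$. Then $\ket{\eta} := W\ket{\Omega_{\bsite}^{RC;(1,1)}}$ is a unit vector of $\pi^{R'C'}$ representing the state $\omega_{\bsite}^{RC;(1,1)}$. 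Since $\omega_{\bsite}^{RC;(1,1)} \in \S_{\bsite}^{RC}$ (the constraint $\omega(D_{\bsite}^{RC;(1,1)})=1$ forces $\omega(D_{\bsite}^{RC})=1$, as $D_{\bsite}^{RC}$ is a sum of orthogonal projectors one of which is $D_{\bsite}^{RC;(1,1)}$, cf.\ Lemma~\ref{lem:DRC decomposes into DRCu}), Lemma~\ref{lem:detection Lemma} gives $\inner{\eta}{\pi^{R'C'}(K_{\bdy}^{RC})\,\eta} = 1$ for \emph{every} $n > 1$, where $\bdy$ is the boundary ribbon of $\eregion$ (which depends on $n$).

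Next I would contrast this with the behaviour of $K^{RC}_{\bdy}$ on the cyclic vector $\ket{\Omega_{\bsite}^{R'C';(1,1)}}$ of $\pi^{R'C'}$: the detector for the ``wrong'' charge $RC$ should be killed once the boundary ribbon is pushed far enough out. Concretely, fix a local observable $A$ and take $n$ large enough that $A \in \cstar[{\eregion[n-1]}]$, which is possible since the regions $\eregion$ exhaust $\latticeedge$ as $n \to \infty$. Then $K_{\bdy}^{RC}$, being supported away from $\eregion[n-1]$, commutes with $A$, so with $\omega := \omega_{\bsite}^{R'C';(1,1)}$ one gets $\inner{\pi^{R'C'}(A)\Omega}{\pi^{R'C'}(K_{\bdy}^{RC})\,\pi^{R'C'}(A)\Omega} = \omega(A^*A\,K_{\bdy}^{RC})$. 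By Lemma~\ref{lem:projector Lemma} (valid since $\omega \in \overline{\S}_{\bsite}$ and $A^*A \in \cstar[\eregion]$) this equals $\omega(A^*A\,D_{\bsite}^{RC})$; then, using $\omega(D_{\bsite}^{R'C'})=1$ with Lemma~\ref{lem:absorption of satisfied projectors} to insert $D_{\bsite}^{R'C'}$, and the ``$\omega(OD_{\bsite}^{RC}) = \omega(D_{\bsite}^{RC}O)$'' clause of Lemma~\ref{lem:projector Lemma} to bring the two flux projectors adjacent, it reduces to a value involving $\omega(A^*A\,D_{\bsite}^{R'C'}D_{\bsite}^{RC})$, which vanishes because $D_{\bsite}^{RC}D_{\bsite}^{R'C'}=0$ for $RC \neq R'C'$ (Lemma~\ref{lem:DRCprops}).

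To finish, I would choose the local $A$ — using irreducibility of $\pi^{R'C'}$, hence density of $\{\pi^{R'C'}(A)\ket{\Omega_{\bsite}^{R'C';(1,1)}} : A \in \cstar[loc]\}$ in $\hilb^{R'C'}$ — so that $\ket{\xi} := \pi^{R'C'}(A)\ket{\Omega_{\bsite}^{R'C';(1,1)}}$ is a unit vector with $\norm{\xi - \eta} < 1/4$, and then take $n$ large as above. Since $\norm{K^{RC}_{\bdy}}$ is bounded independently of $n$ (these are orthogonal projectors, cf.\ Lemma~\ref{lem:basic properties of K}), one obtains $1 = \abs{\inner{\eta}{\pi^{R'C'}(K_{\bdy}^{RC})\eta} - \inner{\xi}{\pi^{R'C'}(K_{\bdy}^{RC})\xi}} \le 2\norm{K_{\bdy}^{RC}}\,\norm{\eta-\xi} < 1/2$, a contradiction. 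Hence $\pi^{RC} \not\cong \pi^{R'C'}$, and, being irreducible, they are disjoint.

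I expect the only real subtlety to be the bookkeeping in the second paragraph: one must use \emph{both} halves of Lemma~\ref{lem:projector Lemma} — the identity $\omega(OK_{\bdy}^{RC}) = \omega(OD_{\bsite}^{RC})$ in the finite volume \emph{and} the global statement that $D_{\bsite}^{RC}$ commutes with everything under $\omega$ — together with the absorption Lemma~\ref{lem:absorption of satisfied projectors}, in the right order, so that the orthogonality $D_{\bsite}^{RC}D_{\bsite}^{R'C'}=0$ can be invoked; and one must remember that $n$ (equivalently, how far out the detector ribbon sits) may be taken as large as desired for each fixed test observable. Everything else — the reduction to inequivalence of irreducibles, the density/approximation step, and the concluding continuity estimate — is routine.
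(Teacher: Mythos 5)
Your proof is correct and follows essentially the same route as the paper: both hinge on the charge detectors $K^{RC}_{\bdy}$ and Lemma \ref{lem:detection Lemma}, which force the two states $\omega_{\bsite}^{RC;(1,1)}$ and $\omega_{\bsite}^{R'C';(1,1)}$ to disagree by $1$ on projectors supported arbitrarily far from any fixed finite region. The only difference is that the paper concludes by citing Corollary 2.6.11 of Bratteli--Robinson, whereas you unpack that corollary into an explicit cyclicity/approximation argument (using Lemmas \ref{lem:projector Lemma}, \ref{lem:absorption of satisfied projectors}, and \ref{lem:DRCprops} to compute $\omega^{R'C';(1,1)}(A^*A\,K^{RC}_{\bdy})=0$), which is a perfectly valid, self-contained substitute.
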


\begin{proof}
    We have $\omega_{\site}^{RC;(1, 1)} \in \S_{\site}^{RC}$ so Lemma \ref{lem:detection Lemma} says
    $$ \omega_{\bsite}^{RC, (1, 1)}( K_{\bdy}^{R'C'} ) = \delta_{RC,R'C'}$$
    for all $n > 1$. Noting that for any finite region $S \subset \latticeedge$ we can take $n$ large enough such that the projectors $K_{\beta_n}^{RC} \in \cstar[loc]$ are supported outside $S$, the claim follows from Corollary 2.6.11 of \cite{Bratteli2012-gd}.
\end{proof}

\subsection{Construction of amplimorphism representations}

In order to show that the representations $\pi^{RC}$ are anyon representations we first show that they are unitarily equivalent to so-called amplimorphism representations. These are representations which can be obtained from the ground state representation $\pi_0$ by composing $\pi_0 \otimes \id_{\abs{I_{RC}}}$ with an \emph{amplimorphism} $\cstar \rightarrow \mathcal{M}_{\abs{I_{RC}}}(\cstar)$, whose components are given by the maps $\mu_{\rho}^{RC;u_1 u_2}$ for a fixed half-infinite ribbon $\rho$. By the properties listed in Proposition \ref{prop:ampli properties}, this amplimorphism is a homomorphism of $C^*$-algebras, and the composition with $\pi_0$ yields a representation of $\cstar$. The fact that the amplimorphism acts non-trivially only near the ribbon $\rho$ will allow us to establish in Section \ref{sec:pi^RC are anyons representations} that the representations $\pi^{RC}$ are anyons representations.

Amplimorphisms, specifically in the context of non-abelian quantum double models, were first introduced in \cite{Naaijkens2015-xj}. Our presentation here is essentially a completion of the arguments sketched in that work. Amplimorphisms were originally introduced as a tool to investigate generalized symmetries in lattice spin models and quantum field theory, see \cite{szlachanyi1993quantum, vecsernyes1994quantum, fuchs1994quantum, nill1997quantum}.\\

Recall that $(\pi_0, \caH_0, \ket{\Omega_0})$ is the GNS triple of the frustration free ground state $\omega_0$ and we put $\chi_{\rho}^{RC; u v} := \pi_0 \circ \mu_{\rho}^{RC; u v} : \cstar \rightarrow \caB(\caH_0)$. For the remainder of this section we will often write $O$ instead of $\pi_0(O)$ when we are working in the faithful representation $\pi_0$.

We now define the \emph{amplimorphism representation}.
\begin{definition} \label{def:amplimorphism representation}
	We set
	\begin{equation*} 
		\chi_{\rho}^{RC} : \cstar \rightarrow \caB(\caH_0) \otimes M_N(\C) : O \mapsto \begin{bmatrix}
		\chi_{\rho}^{RC; u_1 u_1}(O) & \cdots & \chi_{\rho}^{RC; u_1 u_N}(O) \\
		\vdots & \ddots & \vdots \\
		\chi_{\rho}^{RC; u_N u_1}(O) & \cdots & \chi_{\rho}^{RC; u_N u_N}(O) 
	\end{bmatrix}
	\end{equation*}
	where $N = \abs{I_{RC}} = \abs{C} \dimR$ is the number of distinct values that the label $u \in I_{RC}$ can take.
\end{definition}
Using the properties listed in Proposition \ref{prop:ampli properties}, one can easily check that this is a unital *-representation of the quasi-local algebra.

The amplimorphism representation is carried by the Hilbert space $\caH := \caH_0 \otimes \C^N$. We choose a basis $\{ | u \rangle \}_{u \in I_{RC}}$ of $\C^N$ such that
\begin{equation*}
	\langle \Phi \otimes u, \chi^{RC}_{\rho}(O) \, \Psi \otimes v \rangle = \langle \Phi, \chi_{\rho}^{RC;u v}(O) \, \Psi \rangle
\end{equation*}
for all $| \Phi \rangle, | \Psi \rangle \in \caH_0$.

\subsection{Unitary equivalence of $\chi_{\rho}^{RC}$ and $\pi^{RC}$}

Let us fix a half-infinite ribbon $\rho$ with $\partial_0 \rho = \site$.

\begin{lemma} \label{lem:omega^u is vector state}
	For any $u \in I_{RC}$, the vector $| \Omega_0 \otimes u \rangle$ represents the pure state $\omega_{\site}^{RC; u}$ in the representation $\chi_{\rho}^{RC}$.
\end{lemma}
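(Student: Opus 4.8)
The plan is to verify that the vector $\ket{\Omega_0 \otimes u}$ gives the functional $\omega_{\site}^{RC;u}$ by a direct computation and then invoke the uniqueness characterisation of $\omega_{\site}^{RC;u}$ from Proposition \ref{prop:characterisation of S^RCu}. First I would compute, for any $O \in \cstar$,
\[
\inner{\Omega_0 \otimes u}{\chi_{\rho}^{RC}(O) \, \Omega_0 \otimes u} = \inner{\Omega_0}{\chi_{\rho}^{RC;uu}(O) \, \Omega_0} = (\omega_0 \circ \mu_{\rho}^{RC;uu})(O),
\]
where the first equality is the defining property of the basis $\{\ket{u}\}$ of $\C^N$ chosen just after Definition \ref{def:amplimorphism representation}, and the second is the definition $\chi_{\rho}^{RC;uu} = \pi_0 \circ \mu_{\rho}^{RC;uu}$ together with $\omega_0 = \inner{\Omega_0}{\pi_0(\cdot)\,\Omega_0}$. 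By Lemma \ref{lem:qdstate from mu action} (applied with the half-infinite ribbon $\rho$, which has $\partial_0 \rho = \site$), we have $\omega_0 \circ \mu_{\rho}^{RC;uu} = \omega_{\site}^{RC;u}$. Hence $\ket{\Omega_0 \otimes u}$ represents $\omega_{\site}^{RC;u}$ in $\chi_{\rho}^{RC}$.

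It remains to note that $\ket{\Omega_0 \otimes u}$ is a unit vector — which is immediate since $\ket{\Omega_0}$ and $\ket{u}$ are both unit vectors — and that $\omega_{\site}^{RC;u}$ is pure, which is exactly the content of Proposition \ref{prop:characterisation of S^RCu} (or Lemma \ref{lem:qdunique}). A vector state on a unital $*$-representation that equals a pure state is what the statement asserts, so nothing further is needed.

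Honestly, there is no real obstacle here: the lemma is a bookkeeping consequence of the identification $\chi_{\rho}^{RC;uu} = \pi_0 \circ \mu_{\rho}^{RC;uu}$, the basis choice for $\C^N$, and Lemma \ref{lem:qdstate from mu action}. The only point requiring a sliver of care is making sure the half-infinite ribbon used to build $\chi_{\rho}^{RC}$ is the one based at $\site$ (which was fixed at the start of this subsection, "Let us fix a half-infinite ribbon $\rho$ with $\partial_0 \rho = \site$"), so that the hypothesis of Lemma \ref{lem:qdstate from mu action} is met; the statement of $\omega_{\site}^{RC;u} = \omega_0 \circ \mu_\rho^{RC;uu}$ is itself independent of which half-infinite ribbon based at $\site$ is chosen, so there is no ambiguity.
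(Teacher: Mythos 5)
Your proof is correct and is essentially the paper's own argument: the same two-line computation reducing $\inner{\Omega_0 \otimes u}{\chi_{\rho}^{RC}(O)\,\Omega_0 \otimes u}$ to $(\omega_0 \circ \mu_{\rho}^{RC;uu})(O)$, followed by the same appeal to Lemma \ref{lem:qdstate from mu action}. The extra remarks about normalisation and purity are fine but not needed beyond what the paper records.
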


\begin{proof}
	For any $O \in \cstar$ we have
	\begin{equation*}
		\langle \Omega_0 \otimes u, \chi_{\rho}^{RC}(O) \, \Omega_0 \otimes u \rangle = \langle \Omega_0, \chi_{\rho}^{RC; u u}(O) \, \Omega_0 \rangle = \omega_0 \circ \mu_{\rho}^{RC; u u}(O) = \omega_{s_0}^{RC; u}(O)
	\end{equation*}
	where the last step is by Lemma \ref{lem:qdstate from mu action}.
\end{proof}
\ie the amplimorphism representation $\chi_{\rho}^{RC}$ contains the state $\omega^{RC; u}$, and therefore has a subrepresentation that is unitarily equivalent to $\pi^{RC}$. In fact, we will show that $\chi_{\rho}^{RC}$ is unitarily equivalent to $\pi^{RC}$. To do this, we must show that $| \Omega_0 \otimes u \rangle$ is a cyclic vector for $\chi_{\rho}^{RC}$.

\begin{proposition} \label{prop:Omega^u is cyclic}
	For any $u \in I_{RC}$, the vector $| \Omega_0 \otimes u \rangle$ is cyclic for $\chi_{\rho}^{RC}$.
\end{proposition}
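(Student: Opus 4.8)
The plan is to show that the closure of $\chi_{\rho}^{RC}(\cstar) \, \ket{\Omega_0 \otimes u}$ is all of $\caH = \caH_0 \otimes \C^N$. First I would use the matrix structure of the amplimorphism together with the label-changing operators $A_s^{RC;u_2 u_1}$ introduced just before Lemma \ref{lem:purerest} (and their ribbon-operator incarnations $A_{\site}^{RC;wu}$, $F_{\rho_1}^{RC;uw}$ used in Lemma \ref{lem:transport of anyons}) to move between the different sectors $\ket{\Omega_0 \otimes u'}$ inside the cyclic subspace. Concretely, for a local operator $O$ supported away from the ribbon $\rho$ near the apex $\site$, item 3 of Proposition \ref{prop:ampli properties} gives $\mu_{\rho}^{RC;u_1 u_2}(O) = \delta_{u_1 u_2} O$, which is not enough to change the label; instead one needs operators living on the first few triangles of $\rho$. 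The key observation is that the operators $A_\site^{RC;u'u}$ are themselves of the form $\frac{\dimR}{|N_C|}\sum_m R^{j'j}(m)^* A_\site^{q_{i'} m \bar q_i}$, i.e. local operators, and their matrix elements under $\chi_\rho^{RC}$ can be computed using items 1, 2, 4 of Proposition \ref{prop:ampli properties} together with the composition law for ribbon operators.

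The main steps, in order: (1) Show that $\ket{\Omega_0 \otimes u}$ is cyclic for the subalgebra $\pi_0(\cstar[\Lambda^c]) \otimes \I$ where $\Lambda$ is a cone containing $\rho$ — this follows because on the cone complement the amplimorphism acts as $\mathrm{id} \otimes \I$ (item 3), and $\ket{\Omega_0}$ is cyclic for $\pi_0(\cstar)$ with $\cstar[\Lambda^c]$ weakly dense enough (using that $\pi_0$ is irreducible, or more carefully that $\omega_0$ restricted to cone complements generates the full GNS space via the Reeh–Schlieder-type argument; alternatively one can simply note $\pi_0(\cstar)\ket{\Omega_0}$ is dense and approximate). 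Actually the cleaner route: the cyclic subspace contains $\chi_\rho^{RC}(O)\ket{\Omega_0 \otimes u}$ for all $O$, and for $O$ far from $\rho$ this is $\pi_0(O)\ket{\Omega_0}\otimes u$, so the cyclic subspace contains a dense subset of $\caH_0 \otimes \mathrm{span}\{u\}$. (2) Show the cyclic subspace also contains $\caH_0 \otimes \mathrm{span}\{u'\}$ for every $u' \in I_{RC}$: apply $\chi_\rho^{RC}(T)$ for $T$ the local operator of Lemma \ref{lem:transport of anyons} (or directly $T = \frac{|N_C|}{\dimR}\sum_w (A_\site^{RC;wu'})^*(F_{\rho_1}^{RC;uw})^*$ with $\rho_1$ empty and $s'=s_0$, giving essentially $\sum_w (A_\site^{RC;wu'})^* \cdot \delta$); one checks that $\chi_\rho^{RC}(A_\site^{RC;u'u})\ket{\Omega_0 \otimes u} = \ket{\Omega_0 \otimes u'}$ up to normalization by expanding in matrix components and using Proposition \ref{prop:ampli properties} items 1 and 4 and the analogue of Lemma \ref{lem:aconverter}. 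Then combining with step (1) applied around $u'$, the cyclic subspace contains $\caH_0 \otimes \C^N$. (3) Conclude density, hence cyclicity.

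The hard part will be step (2): carefully verifying that the local label-changing operators act as claimed inside the amplimorphism representation, i.e. that $\chi_\rho^{RC}(A_\site^{RC;u'u})\, \big(\pi_0(O)\ket{\Omega_0}\otimes u\big) = \pi_0(O')\ket{\Omega_0}\otimes u'$ for suitable $O'$, and in particular that no information is lost (the map must be, in an appropriate sense, "invertible on the cyclic subspace"). This requires chasing the ribbon-operator identities — Lemma \ref{lem:decomposition of F}, Lemma \ref{lem:change ribbon operator label}, and the properties of $A_\site^{RC;w u'}$ — exactly as in the proof of Lemma \ref{lem:transport of anyons}, but now keeping track of the full matrix action rather than just the diagonal expectation value. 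A subtlety is that $A_\site^{RC;u'u}$ is supported at the apex $\site = \partial_0\rho$, which overlaps the ribbon $\rho$, so item 3 of Proposition \ref{prop:ampli properties} does not apply and one genuinely needs items 1 and 4 together with the explicit form of $\mu_{\rho_n}^{RC;u_1 u_2}$ on operators supported near $\site$. I would handle this by taking $n$ large enough that $A_\site^{RC;u'u} \in \cstar[{\eregion[1]}] \subset \cstar[{\eregion[n-1]}]$, computing $\mu_{\rho_n}^{RC;u_1u_2}(A_\site^{RC;u'u})$ via the decomposition $F_{\rho_n}^{RC;\cdot} = \sum F_{\rho_1'}^{RC;\cdot}F_{\rho_n'}^{RC;\cdot}$ with $\rho_1'$ the initial triangle(s) at $\site$, and reducing to the commutation relations between $A_\site^{q m \bar q}$ and the ribbon operators.
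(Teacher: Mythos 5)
Your step (2) is sound: the identity $\chi_\rho^{RC}(A_{\site}^{RC;u'u})\ket{\Omega_0 \otimes u} = \ket{\Omega_0 \otimes u'}$ does hold and is exactly what Lemma \ref{lem:change ampli label} (via Lemma \ref{lem:change ribbon operator label}) provides, so moving between the labels inside the cyclic subspace is fine. The genuine gap is in step (1). You only produce vectors $\pi_0(O)\ket{\Omega_0}\otimes u'$ for $O$ supported away from the half-infinite ribbon $\rho$, and then assert that these are dense in $\caH_0 \otimes \mathrm{span}\{u'\}$. That assertion is a Reeh--Schlieder/cyclicity statement for the complement of an infinite region (equivalently, that $\ket{\Omega_0}$ is separating for $\pi_0(\cstar[{\supp(\rho)}])''$), which is not established anywhere in the paper and is of comparable difficulty to cone Haag duality --- a property the authors explicitly say is not yet known for non-abelian $G$. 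The fallback you mention (``$\pi_0(\cstar)\ket{\Omega_0}$ is dense and approximate'') does not rescue this, because for $O$ supported on or near $\rho$ the vector $\chi_\rho^{RC}(O)\ket{\Omega_0\otimes u}$ has nontrivial off-diagonal components $\chi_\rho^{RC;vu}(O)\ket{\Omega_0}$ and is not of the form $\pi_0(O)\ket{\Omega_0}\otimes u$.

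The paper closes exactly this hole with the maps $t_{\rho_n}^{RC;uv}$ of Definition \ref{def:magic map}: for an \emph{arbitrary} local $O$ (choosing $n$ so that $\supp(O)$ misses $\rho\setminus\rho_n$), one sandwiches $O$ between ribbon operators on the initial segment $\rho_n$ and appends a label changer and a Wigner projector,
$$ t_{\rho_n}^{RC;uv}(O) = \Big(\tfrac{\dimR}{|N_C|}\Big)^2 \sum_{w,z} F_{\rho_n}^{RC;uw}\, O \,\big(F_{\rho_n}^{RC;zw}\big)^* A_{\site}^{RC;zv} D_{\site}^{RC;v}, $$
and Lemma \ref{lem:magic map} shows $\chi_\rho^{RC;u_1v_1}\big(t_{\rho_n}^{RC;u_2v_2}(O)\big)\ket{\Omega_0} = \delta_{u_1u_2}\delta_{v_1v_2}\,O\ket{\Omega_0}$. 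This hits every vector $O\ket{\Omega_0}\otimes v$ with $O\in\cstar[loc]$ exactly, a manifestly dense set, and cyclicity then follows by the orthogonality/approximation argument you sketch in step (3). Your label changer $A_{\site}^{RC;u'u}$ is essentially the special case $O=\I$ of this construction; the missing idea is that for general local $O$ you must also conjugate by the ribbon operators $F_{\rho_n}^{RC;uw}$ to undo the amplimorphism on the finite initial segment of $\rho$, rather than restricting to observables that avoid the ribbon altogether.
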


\begin{proof}
    Let
    \begin{equation*}
    \caH_u := \overline{ \chi_{\rho}^{RC}(\cstar) | \Omega_0 \otimes u \rangle } \subseteq \caH.
    \end{equation*}
    We show that actually $\caH_u = \caH$.

    Consider the subspace
    \begin{equation*}
        \caV := \left\lbrace \sum_{v \in I_{RC}} \,  | \Psi_v \otimes v \rangle \, : \,  |\Psi_v\rangle \in \cstar[loc] | \Omega_0 \rangle \right\rbrace \subset \caH.
    \end{equation*}
    This space is dense in $\caH$.
    
    Take any vector $| \Psi \rangle = \sum_v | \Psi_v \otimes v \rangle \in \caV$ such that $| \Psi_v \rangle = O_v | \Omega_0 \rangle$ with $O_v \in \cstar[loc]$ for each $v \in I_{RC}$.
    
    We want to show that if $|\Psi \rangle$ is approximately orthogonal to $\caH_u$, then $\norm{|\Psi\rangle}$ is small. So let $P_u$ be the orthogonal projector onto $\caH_u$ and suppose that $\norm{ P_u | \Psi \rangle }^2 < \ep$.
    
    We now use the maps $t_{\rho_n}^{RC;vu}$ from Definition \ref{def:magic map}. For any $n \in \N$, any $RC$, and any $u, v \in I_{RC}$, these maps are given by
    $$ t_{\rho_n}^{RC;u v}( O ) :=  \bigg(\frac{\dimR}{|N_C|} \bigg)^2 \, \sum_{w, u'} \, F_{\rho_n}^{RC;u w}  \, O \, \big( F_{\rho_n}^{RC;u' w} \big)^* A_{\site}^{RC;u' v} D_{\site}^{RC;v}$$
    for any $O \in \cstar$. Here $a_{\site}^{RC;u' u}$ is the label changer of Definition \ref{def:label changers}, and $D_{\site}^{RC;u}$ is the projector of Definition \ref{def:Wigner projectors}. For any $O \in \cstar[loc]$, Lemma \ref{lem:magic map} says that for $n$ large enough
    $$\chi_{\rho}^{RC;u_1 v_1} \big( t_{\rho_n}^{RC;u_2 v_2}(O) \big) | \Omega_0 \rangle = \delta_{u_1 u_2} \delta_{v_1 v_2} \, O | \Omega_0 \rangle.$$
    We can therefore take $n$ large enough such that
    \begin{align*}
		\chi^{RC}_\rho \big( t^{RC;v u}_{\rho_n}( O_v ) \big) | \Omega_0 \otimes u \rangle &= \sum_{w} \, | \chi^{RC;w u}_\rho( t_{\rho_n}^{RC; v u}(O_v) ) \Omega_0 \otimes w \rangle \\
      &= \sum_w \, \delta_{w, v} \, | O_v \Omega_0 \otimes w \rangle
      = | \Psi_v \otimes v \rangle \in \caH_u
    \end{align*}
    
    It now follows from our assumption $\norm{ P_u | \Psi \rangle }^2 < \ep$ that
    \begin{equation*}
	 \norm{ | \Psi_v \rangle}^2 = \abs{ \langle \Psi, \chi^{RC}_\rho \big( t^{RC;v u}_{\rho_n}( O_v ) \big) \, \Omega_0 \otimes u \rangle } < \ep
    \end{equation*}
    for all $v \in I_{RC}$ and therefore
    \begin{equation*}
		\norm{ |\Psi \rangle }^2 = \sum_v \, \norm{ | \Psi_v \rangle}^2 < N \ep.
    \end{equation*}

    Now take $| \Psi \rangle \in \caH$ and suppose that $| \Psi \rangle$ is orthogonal to $\caH_u$. Since $\caV$ is dense in $\caH$ there is a sequence of vectors $| \Psi_i \rangle \in \caV$ that converges to $| \Psi \rangle$ in norm. For any $\ep > 0$ we can find $i_0$ such that
    \begin{equation*}
		\norm{ P_u | \Psi_i \rangle  }^2 = \norm{P_u( | \Psi_i \rangle - | \Psi \rangle )}^2 < \ep
    \end{equation*}
    for all $i \geq i_0$. From the above, we conclude that
    \begin{equation*}
		\norm{ | \Psi_i \rangle }^2 < N \ep
    \end{equation*}
    for all $i \geq i_0$. We see that the sequence converges to zero, so $| \Psi \rangle = 0$.

    Since any vector in $\caH$ that is orthogonal to $\caH_u$ must vanish, and since $\caH_u \subseteq \caH$, we find that $\caH_u = \caH$. This shows that $| \Omega_0 \otimes u \rangle$ is a cyclic vector for the representation $\chi_{\rho}^{RC}$.
\end{proof}

\begin{proposition} \label{prop:equivalence to pi^RC}
    For any half-infinite ribbon $\rho$ with initial site $s = \partial_0 \rho$, any $RC$, and any $u \in I_{RC}$, the amplimorphism representation $\chi_{\rho}^{RC}$ is unitarily equivalent to the GNS representation of the pure state $\omega_{s}^{RC;u}$. In particular, the amplimorphism representation $\chi_{\rho}^{RC}$ is irreducible and unitarily equivalent to $\pi^{RC}$.
\end{proposition}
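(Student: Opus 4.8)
The plan is to recognise the triple $(\chi_\rho^{RC},\,\caH,\,\ket{\Omega_0\otimes u})$ as a GNS triple for the pure state $\omega_s^{RC;u}$, and then to invoke uniqueness of the GNS construction together with the fact that a pure state has an irreducible GNS representation.

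First I would collect the three properties characterising a GNS triple. The map $\chi_\rho^{RC}$ is a unital $*$-representation of $\cstar$ on the separable Hilbert space $\caH=\caH_0\otimes\C^N$; this was already noted just after Definition \ref{def:amplimorphism representation}, and it rests entirely on the algebraic identities of Proposition \ref{prop:ampli properties}. Next, Lemma \ref{lem:omega^u is vector state} gives $\inner{\Omega_0\otimes u}{\chi_\rho^{RC}(O)\,\Omega_0\otimes u}=\omega_s^{RC;u}(O)$ for every $O\in\cstar$ (crucially, for \emph{all} $O\in\cstar$, not merely the local ones). Finally, Proposition \ref{prop:Omega^u is cyclic} says precisely that $\ket{\Omega_0\otimes u}$ is cyclic for $\chi_\rho^{RC}$. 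By uniqueness of the GNS triple up to unitary equivalence, $\chi_\rho^{RC}$ is therefore unitarily equivalent to the GNS representation of $\omega_s^{RC;u}$.

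It then remains to read off the two `in particular' claims. Since $\omega_s^{RC;u}$ is pure by Proposition \ref{prop:characterisation of S^RCu}, its GNS representation is irreducible, hence so is $\chi_\rho^{RC}$. To identify this representation with $\pi^{RC}$, recall from Lemma \ref{lem:statesbelongingtopiRC} that $\omega_s^{RC;u}$ is a vector state of the irreducible representation $\pi^{RC}$; the GNS representation of a pure state that belongs to an irreducible representation is unitarily equivalent to that representation, so $\chi_\rho^{RC}\cong\pi^{RC}$.

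The genuinely hard part has, in fact, already been cleared — it is the cyclicity statement of Proposition \ref{prop:Omega^u is cyclic}, whose proof requires the `magic maps' $t_{\rho_n}^{RC;uv}$ together with a density argument on $\caH$. Granting that input, the present Proposition is essentially bookkeeping with the GNS uniqueness theorem; the only small point I would take care to state explicitly is that the expectation-value identity of Lemma \ref{lem:omega^u is vector state} is valid on all of $\cstar$, so that no separate continuity or extension argument is needed.
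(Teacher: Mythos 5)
Your proposal is correct and follows the same route as the paper: identify $(\chi_\rho^{RC}, \caH, \ket{\Omega_0\otimes u})$ as a GNS triple for $\omega_s^{RC;u}$ via Lemma \ref{lem:omega^u is vector state} and Proposition \ref{prop:Omega^u is cyclic}, then use Lemma \ref{lem:statesbelongingtopiRC} and purity to conclude equivalence with $\pi^{RC}$ and irreducibility. The extra care you take in spelling out the GNS uniqueness step and the validity of the expectation-value identity on all of $\cstar$ is sound but does not change the argument.
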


\begin{proof}
    Unitary equivalence to the GNS representation of $\omega_{s}^{RC;u}$ follows immediately from Lemma \ref{lem:omega^u is vector state} and Proposition \ref{prop:Omega^u is cyclic}. Since $\omega_{s}^{RC;u}$ is a vector state in the irreducible representation $\pi^{RC}$ (Lemma \ref{lem:statesbelongingtopiRC}) we find that $\chi_{\rho}^{RC}$ is unitarily equivalent to $\pi^{RC}$ and in particular irreducible. 
\end{proof}

\subsection{The representations $\pi^{RC}$ are anyon representations} \label{sec:pi^RC are anyons representations}

For any cone $\Lambda$, let $\caR(\Lambda) := \pi_0(\cstar[\Lambda])'' \subset \caB(\caH_0)$ be the von Neumann algebra generated by $\cstar[\Lambda]$ in the ground state representation.\\

The following Proposition is a slight adaptation of part of Theorem 5.4 of \cite{Naaijkens2015-xj}. 
\begin{proposition}[\cite{Naaijkens2015-xj}] \label{prop:reduction to H_0}
    If $\rho$ is a half-infinite ribbon whose support is contained in a cone $\Lambda$, then there is a representation $\nu_{\Lambda}^{RC} : \cstar \rightarrow \caB(\caH_0)$ which is unitarily equivalent to the amplimorphism representation $\chi_{\rho}^{RC}$ and satisfies
    \begin{equation*}
		\nu_{\Lambda}^{RC}(O) = \pi_0(O)
    \end{equation*}
    for all $O \in \cstar[\Lambda^c]$.
\end{proposition}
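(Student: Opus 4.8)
The plan is to first observe that $\chi_\rho^{RC}$ already agrees with $\pi_0$ (amplified by the identity on the multiplicity space $\C^N$) on all observables localised in $\Lambda^c$, and then to absorb the auxiliary factor $\C^N$ back into $\caH_0$ by conjugating with a unitary built from a family of isometries lying in $\pi_0(\cstar[\Lambda^c])'$.

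First I would record the ``triviality outside $\Lambda$'' of the amplimorphism. Since the support of $\rho$ is contained in $\overline\Lambda$ while any $O\in\cstar[\Lambda^c]$ is supported on $\overline{\Lambda^c}$, and these edge sets are disjoint, item 3 of Proposition \ref{prop:ampli properties} gives $\mu_\rho^{RC;u_1u_2}(O)=\delta_{u_1u_2}O$. Recalling $\chi_\rho^{RC;u_1u_2}=\pi_0\circ\mu_\rho^{RC;u_1u_2}$ and the block form of $\chi_\rho^{RC}$ from Definition \ref{def:amplimorphism representation}, this says $\chi_\rho^{RC}(O)=\pi_0(O)\otimes\I_{\C^N}$ for every $O\in\cstar[\Lambda^c]$.

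Next I would reduce the statement to the existence of a suitable unitary. It suffices to produce a unitary $W\colon\caH_0\otimes\C^N\to\caH_0$ with $W\bigl(\pi_0(O)\otimes\I_{\C^N}\bigr)W^*=\pi_0(O)$ for all $O\in\cstar[\Lambda^c]$: then $\nu_\Lambda^{RC}:=W\,\chi_\rho^{RC}(\cdot)\,W^*$ is a representation of $\cstar$ on $\caH_0$, unitarily equivalent to $\chi_\rho^{RC}$ via $W$, and by the previous paragraph $\nu_\Lambda^{RC}(O)=\pi_0(O)$ for $O\in\cstar[\Lambda^c]$, as desired. To build $W$ I would use that $\cstar[\Lambda]$ and $\cstar[\Lambda^c]$ are supported on disjoint edge sets and hence commute, so $\caR(\Lambda)=\pi_0(\cstar[\Lambda])''\subseteq\pi_0(\cstar[\Lambda^c])'$; choosing isometries $\{V_u\}_{u\in I_{RC}}\subset\caR(\Lambda)$ with mutually orthogonal ranges and $\sum_{u\in I_{RC}}V_uV_u^*=\I_{\caH_0}$, together with an orthonormal basis $\{\ket{u}\}_{u\in I_{RC}}$ of $\C^N$, the map $W(\Psi\otimes\ket{u}):=V_u\Psi$ extends to a unitary (from $V_u^*V_{u'}=\delta_{uu'}\I$ and $\sum_u V_uV_u^*=\I$), and it has the required intertwining property because each $V_u$ commutes with $\pi_0(\cstar[\Lambda^c])$.

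The main obstacle — and the only point where anything beyond bookkeeping enters — is producing those $\abs{I_{RC}}$ isometries, i.e. the structural fact that the cone von Neumann algebra $\caR(\Lambda)$ is properly infinite (equivalently, that $\I_{\caH_0}$ is an infinite projection in $\caR(\Lambda)$, so that $\I_{\caH_0}\sim\bigoplus_{u\in I_{RC}}\I_{\caH_0}$ inside $\caR(\Lambda)$). For Kitaev's quantum double models this follows from the known analysis of the cone algebras in the ground state representation (they are infinite factors); cf.\ the discussion surrounding Theorem 5.4 of \cite{Naaijkens2015-xj}, of which the above is the ``slight adaptation'' referred to in the statement. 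Granting this, the construction goes through verbatim.
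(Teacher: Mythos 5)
Your proposal is correct and is essentially the argument the paper relies on: the paper proves this proposition only by deferring to Theorem 5.4 of \cite{Naaijkens2015-xj}, whose proof is exactly your construction — note that $\chi_{\rho}^{RC}(O)=\pi_0(O)\otimes\I_{\C^N}$ on $\cstar[\Lambda^c]$ by item 3 of Proposition \ref{prop:ampli properties}, then absorb the multiplicity space via a unitary $W$ built from a Cuntz family of isometries in $\caR(\Lambda)\subseteq\pi_0(\cstar[\Lambda^c])'$. You also correctly isolate the one nontrivial input (proper infiniteness of the cone algebra, which does not require Haag duality) and attribute it to the cited reference, which is no less detail than the paper itself supplies.
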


The proof is exactly the same as that of Theorem 5.4 of \cite{Naaijkens2015-xj} and we simply point out that the Haag duality assumed in that Theorem is not needed for this part of the statement.

We can now show
\begin{proposition} \label{prop:the pi^RC are anyon representations}
    The representations $\pi^{RC}$ are anyon representations.
\end{proposition}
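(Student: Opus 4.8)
The plan is to verify the two conditions of Definition \ref{def:anyonsector} for each $\pi^{RC}$, using Propositions \ref{prop:equivalence to pi^RC} and \ref{prop:reduction to H_0} as the only substantial inputs. Irreducibility is already in hand: by Proposition \ref{prop:equivalence to pi^RC}, for \emph{any} half-infinite ribbon $\rho$ the amplimorphism representation $\chi_{\rho}^{RC}$ is irreducible and unitarily equivalent to $\pi^{RC}$, so $\pi^{RC}$ is irreducible. What remains is to produce, for every cone $\Lambda$, a unitary $U_\Lambda : \hilb_0 \to \hilb^{RC}$ with $\pi^{RC}(O) = U_\Lambda\,\pi_0(O)\,U_\Lambda^*$ for all $O \in \cstar[\Lambda^c]$.

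First I would fix a cone $\Lambda$ and observe that it contains the support of some half-infinite ribbon $\rho$. Concretely: pick a site $s$ lying well inside $\Lambda$ and far from its apex, and take a half-infinite positive ribbon starting at $s$ whose direct path stays within a bounded-width tube about the ray through $s$ in the direction of the axis of $\Lambda$ (for instance by alternating between the two lattice directions bracketing that axis, in the spirit of the fiducial ribbons of Section \ref{subsec:local gauge configurations and boundary conditions}). Since the cross-section of $\Lambda$ at distance $d$ from the apex grows linearly in $d$ while the ribbon stays in a tube of fixed width, one gets $\supp(\rho) \subset \Lambda$. Then I would apply Proposition \ref{prop:reduction to H_0} to this $\rho$ and this $\Lambda$, obtaining a representation $\nu_{\Lambda}^{RC} : \cstar \to \caB(\hilb_0)$ unitarily equivalent to $\chi_{\rho}^{RC}$ and satisfying $\nu_{\Lambda}^{RC}(O) = \pi_0(O)$ for all $O \in \cstar[\Lambda^c]$.

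It then remains to chain the equivalences. Proposition \ref{prop:equivalence to pi^RC} gives $\chi_{\rho}^{RC} \cong \pi^{RC}$, and composing with $\nu_{\Lambda}^{RC} \cong \chi_{\rho}^{RC}$ yields a unitary $V : \hilb^{RC} \to \hilb_0$ with $\nu_{\Lambda}^{RC}(O) = V\,\pi^{RC}(O)\,V^*$ for all $O \in \cstar$. For $O \in \cstar[\Lambda^c]$ the left side equals $\pi_0(O)$, hence $\pi^{RC}(O) = V^*\,\pi_0(O)\,V$, so $U_\Lambda := V^* : \hilb_0 \to \hilb^{RC}$ is the intertwiner demanded by Definition \ref{def:anyonsector}. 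As $\Lambda$ was an arbitrary cone, this shows $\pi^{RC}$ is an anyon representation.

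I expect the only slightly delicate point to be the geometric claim that every cone accommodates a half-infinite ribbon; it is elementary but has to be stated with some care, since one must compare ribbon supports with the defining inequality $(x-a)\cdot\hat v > \norm{x-a}\cos(\theta/2)$ of the cone. All the representation-theoretic substance is already packaged in Propositions \ref{prop:equivalence to pi^RC} and \ref{prop:reduction to H_0}, so beyond that geometric bookkeeping there is no real obstacle.
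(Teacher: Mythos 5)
Your strategy is the same as the paper's: chain the unitary equivalences $\pi^{RC} \cong \chi_{\rho}^{RC} \cong \nu^{RC}$ from Propositions \ref{prop:equivalence to pi^RC} and \ref{prop:reduction to H_0} and use that $\nu^{RC}$ agrees with $\pi_0$ outside the cone containing the ribbon. There is, however, one transposition you should fix: Definition \ref{def:anyonsector} asks for a unitary implementing $\pi^{RC}(A) = U_\Lambda \pi_0(A) U_\Lambda^*$ for all $A \in \cstar[\Lambda]$, i.e.\ \emph{on the cone}, not on its complement. You have restated the criterion with $\cstar[\Lambda^c]$ in place of $\cstar[\Lambda]$, and accordingly you place the ribbon \emph{inside} the given cone $\Lambda$; what you then prove is equivalence of $\pi^{RC}$ and $\pi_0$ on $\cstar[\Lambda^c]$. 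The paper instead places the ribbon in a cone $\widetilde\Lambda$ \emph{disjoint} from $\Lambda$, so that $\cstar[\Lambda] \subset \cstar[\widetilde\Lambda^c]$ and Proposition \ref{prop:reduction to H_0} directly gives agreement with $\pi_0$ on $\cstar[\Lambda]$, which is what the definition requires. Your version does imply the stated one — given $\Lambda$, apply your argument to a cone $\widetilde\Lambda$ disjoint from $\Lambda$ and restrict from $\cstar[\widetilde\Lambda^c]$ to $\cstar[\Lambda]$ — but that bridging step is exactly the paper's choice of cone and must be said explicitly; as written your proof verifies a (logically equivalent) variant of the criterion rather than Definition \ref{def:anyonsector} itself.

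On the geometric point you flag: it is correct and not delicate. Any cone, however thin, has cross-section growing linearly with distance from the apex, so far enough out it contains a tube of fixed width in which a half-infinite ribbon can be laid; the paper does not even bother to fit the ribbon into the given cone, since it is free to choose the auxiliary disjoint cone $\widetilde\Lambda$ as wide as it likes. Aside from the cone/complement transposition, your argument matches the paper's.
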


\begin{proof}
    Fix any cone $\Lambda$. We want to show that
    \begin{equation*}
		\pi^{RC}|_{\cstar[\Lambda]} \simeq \pi_0|_{\cstar[\Lambda]}.
    \end{equation*}
    To this end, let $\rho$ be a half-infinite ribbon supported in a cone $\widetilde \Lambda$ that is disjoint from $\Lambda$. Since $\pi^{RC}$ is unitarily equivalent to the amplimorphism representation $\chi_{\rho}^{RC}$ (Proposition \ref{prop:equivalence to pi^RC}), we get from Proposition \ref{prop:reduction to H_0} a representation $\nu_{\widetilde \Lambda}^{RC} : \cstar \rightarrow \caB(\caH_0)$ that is unitarily equivalent to $\pi^{RC}$ and satisfies $\nu_{\widetilde \Lambda}^{RC}(O) = \pi_0(O)$ for all $O \in \cstar[\Lambda] \subset \cstar[\widetilde \Lambda^c]$. By the unitary equivalence, there is a unitary $U_{\Lambda} \in \caB(\caH_0)$ such that
    \begin{equation*}
		\pi^{RC}(O) = U_{\Lambda} \, \nu_{\widetilde \Lambda}^{RC}(O) \, U_{\Lambda}^*
    \end{equation*}
    for all $O \in \cstar$. For $O \in \cstar[\Lambda]$ we therefore find
    \begin{equation*}
		\pi^{RC}(O) = U_{\Lambda} \, \pi_0(O) \, U_{\Lambda}^*.
    \end{equation*}
    Since the cone $\Lambda$ was arbitrary, this proves the proposition.
\end{proof}

\section{Completeness}
\label{sec:completeness}

In order to show that all anyon representations are unitarily equivalent to one of the representations $\pi^{RC}$, we prove that any anyon representation $\pi$ contains a pure state $\qdstate$. This we do as follows. In subsection \ref{sec:equivtofinite} we  show that any anyon representation $\pi$ contains a pure state that is gauge invariant and has trivial flux everywhere outside of a finite region. In subsection \ref{sec:sweep}, we show that such a state is unitarily equivalent to a pure state satisfying \eqref{eq:cons1}. Lastly in subsection \ref{sec:classify} we will show that any pure state satisfying \eqref{eq:cons1} is equivalent to some $\qdstate$ and therefore belongs to a definite anyon representation $\pi^{RC}$. Combining these results with the results of the previous Section, we find that the anyon sectors are in one-to-one correspondence with equivalence classes of irreducible representations of the quantum double of $G$.

\subsection{Any anyon representation contains a state that is gauge invariant and has trivial flux outside of a finite region}
\label{sec:equivtofinite}

Let $\pi : \cstar \rightarrow \mathcal{\hilb}$ be an anyon representation.

For any $S \subset \R^2$ let $S^V := \{ v \in \latticevert \, : \, A_v \in \cstar[S] \}$ and $S^F := \{ f \in \latticeface \, : \, B_f \in \cstar[S] \}$. We also write $S^{VF} = S^V \cup S^F$. If $S \subset \R^2$ is bounded then we define
$$P_{S} := \prod_{v \in S^V} A_v \prod_{f \in S^F} B_f$$
which is a projector in $\cstar[S]$. It projects onto states that are gauge invariant and flat on $S$.

We want to define analogous projectors for infinite regions, but clearly such projectors cannot exists in the quasi-local algebra $\cstar$. Instead, we will construct them in the von Neumann algebra $\pi(\cstar)''$.

\begin{definition}
    A non-decreasing sequence $\{S_n\}$ of sets $S_n \subset \R^2$ is said to converge to $S \subset \R^2$ if $\bigcup_{n} S_n = S$.   
\end{definition}

Let $\{S_n\}$ be a non-decreasing sequence of bounded subsets of $\R^2$ converging to a possibly infinite $S \subset \R^2$. Then the sequence of orthogonal projectors $\{\pi(P_{S_n})\}$ is non-increasing and therefore converges in the strong operator topology to the orthogonal projector $p_S$ onto the intersection of the ranges of the $\pi(P_{S_n})$ \cite[Thm 4.32(a)]{Weidmann}. In particular, the limit $p_S$ does not depend on the particular sequence $\{S_n\}$. If $S$ is finite, then $p_S = \pi(P_S)$.

For any $r \geq 0$, let $B_r = \{ x \in \R^2 \, : \norm{x}_2 \leq r\}$ be the closed ball of radius $r$. 

\begin{proposition} \label{prop:equivtofinite}
    For any anyon representation $\pi : \cstar \rightarrow \mathcal{B}(\hilb)$ there is an $n > 0$ and a pure state $\omega$ belonging to $\pi$ such that
    $$\omega(A_v) = \omega(B_f) = 1$$
    for all $v$ and $f$ such that $A_v, B_f \in \cstar[B_n^c]$.
\end{proposition}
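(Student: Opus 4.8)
The plan is to exploit the superselection criterion directly: since $\pi$ is an anyon representation, for any cone $\Lambda$ there is a unitary $U_\Lambda : \hilb_0 \to \hilb$ with $\pi(A) = U_\Lambda \pi_0(A) U_\Lambda^*$ for all $A \in \cstar[\Lambda]$. The ground state $\omega_0$ is frustration free, so $\omega_0(A_v) = \omega_0(B_f) = 1$ for \emph{all} $v$ and $f$. Pulling this back through $U_\Lambda$, the state $\omega(\cdot) := \langle U_\Lambda \Omega_0, \pi(\cdot)\, U_\Lambda \Omega_0\rangle$ on $\cstar$, which by construction belongs to $\pi$ and is pure (it is a vector state obtained from the pure state $\omega_0$ by the $*$-isomorphism $\pi(\cdot) = U_\Lambda \pi_0(\cdot) U_\Lambda^*$ on $\cstar[\Lambda]$), satisfies $\omega(A_v) = \omega_0(A_v) = 1$ whenever $A_v \in \cstar[\Lambda]$, and similarly $\omega(B_f) = 1$ whenever $B_f \in \cstar[\Lambda]$.

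Now I would choose the cone $\Lambda$ so that its complement is contained in a ball. A single cone cannot do this, but the unitary equivalence on $\cstar[\Lambda]$ for a \emph{fixed} cone already gives gauge invariance and flatness on all of $\Lambda^V \cup \Lambda^F$. The issue is the vertices and faces whose stabilizer operators are supported partly outside $\Lambda$, i.e. those near the apex and along the two bounding rays of $\Lambda$. To handle these, the cleanest route is to take two cones $\Lambda_1, \Lambda_2$ whose union is the complement of a large ball $B_n$ — for instance two opposite cones with a common apex translated far from the origin, or more carefully a finite collection of cones covering $B_n^c$ — and use the fact that an anyon sector restricted to each cone is equivalent to $\pi_0$. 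However, the two unitaries $U_{\Lambda_1}, U_{\Lambda_2}$ need not agree, so one cannot naively pull back $\Omega_0$ through both at once.

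The device that resolves this is standard in the DHR/cone-sector framework: fix one cone $\Lambda$ with apex far out along some axis, so that $\Lambda^c$ is \emph{almost} a half-plane union a bounded region; then the operators $A_v, B_f$ that fail to lie in $\cstar[\Lambda]$ are exactly those intersecting a bounded neighborhood of the apex-region together with the two rays. One then observes that $A_v$ and $B_f$ are \emph{local} operators, so each individual $A_v, B_f \in \cstar[B_r^c]$ for $r$ large lies inside $\cstar[\Lambda']$ for \emph{some} cone $\Lambda'$ (just choose $\Lambda'$ fat enough and pointing the right way to engulf the finitely-supported operator while keeping its complement outside $B_n$ for the prescribed $n$). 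The real content is therefore: produce a \emph{single} pure state $\omega$ belonging to $\pi$ that is simultaneously correct for all such $v, f$. For this I expect the proof in the paper to use the projections $p_S$ constructed just before the statement: take $S = B_n^c$, show $p_{B_n^c} \neq 0$ in $\pi(\cstar)''$ for $n$ large (this is where the anyon property is genuinely used — a localized excitation can be detected inside \emph{any} ball, so outside a large enough ball the state looks vacuum-like, forcing $p_{B_n^c}\neq 0$), and then let $\omega$ be a pure state in the range of $p_{B_n^c}$, i.e. $\omega(P_{B_n^c})$-type expectations equal $1$ in the limit, which unpacks to $\omega(A_v) = \omega(B_f) = 1$ for all $A_v, B_f \in \cstar[B_n^c]$.

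The main obstacle is precisely establishing $p_{B_n^c} \neq 0$ for some $n$ — equivalently, that the decreasing net $\{\pi(P_{S_k})\}$ of projections, with $S_k \uparrow B_n^c$, does not decrease to zero. This should follow by comparing $\pi$ with $\pi_0$ on a cone: choosing the exhausting sequence $S_k$ to lie inside a fixed cone $\Lambda$ disjoint from $B_n$ up to the apex region, one has $\pi(P_{S_k}) = U_\Lambda \pi_0(P_{S_k}) U_\Lambda^*$, and $\pi_0(P_{S_k}) \to$ the projection onto $\Omega_0$ (since $\omega_0$ is the unique frustration-free ground state, $p_{\R^2}$ in the GNS space of $\omega_0$ is rank one onto $\Omega_0$, hence nonzero), so the limit is the nonzero projection $U_\Lambda |\Omega_0\rangle\langle\Omega_0| U_\Lambda^*$ — at least for the part of $B_n^c$ inside $\Lambda$. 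One then has to argue that adding back the finitely many stabilizers near the apex and along the ray boundary (a bounded set of local projectors, all of which commute with each other and with the cone algebra up to the commuting-projector structure of the model) still leaves a nonzero projection; this is a finite-dimensional compatibility check using that $\{A_v\}\cup\{B_f\}$ is a commuting family and that the frustration-free constraint is satisfiable locally. Once $p_{B_n^c} \neq 0$, extracting a pure normal state $\omega$ with $\omega(p_{B_n^c}) = 1$ and noting $\omega(A_v), \omega(B_f) \le 1$ are forced to equal $1$ for $A_v, B_f \in \cstar[B_n^c]$ (via Lemma \ref{lem:pure components projector Lemma} applied in $\pi(\cstar)''$, or directly since $p_{B_n^c} \le \pi(A_v)$ for such $v$) completes the argument.
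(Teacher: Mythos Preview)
Your overall strategy---reduce to showing $p_{B_n^c} \neq 0$ in $\pi(\cstar)''$ for some $n$, then take a normalized vector in its range---is exactly what the paper does. The gap is in how you propose to establish $p_{B_n^c} \neq 0$. You suggest handling the part of $B_n^c$ inside a single cone $\Lambda$ via the unitary $U_\Lambda$, and then ``adding back the finitely many stabilizers near the apex and along the ray boundary (a bounded set of local projectors).'' But the stabilizers along the two boundary rays of $\Lambda$ are \emph{not} finitely many: those rays extend to infinity, and there are infinitely many $A_v$, $B_f$ whose support straddles $\partial \Lambda$ and hence lies neither in $\cstar[\Lambda]$ nor in any bounded region. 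So this is not a finite-dimensional compatibility check, and the single-cone argument as stated does not close.

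The paper's resolution is the two-cone idea you mentioned and then set aside. One chooses cones $\Lambda_1, \Lambda_2$ so that every $A_v$ and $B_f$ is supported in at least one of them, and forms the two pure vector states $\omega_i$ in $\pi$ from $|\Omega_i\rangle = U_i|\Omega_0\rangle$. Each $\omega_i$ satisfies all constraints in $\Lambda_i$, so $p_{\Lambda_i^{>n}}|\Omega_i\rangle = |\Omega_i\rangle$. The key step you are missing is that $\omega_1$ and $\omega_2$, being pure vector states in the same irreducible representation, are unitarily equivalent and therefore (Corollary 2.6.11 of \cite{Bratteli2012-gd}) asymptotically close at infinity: $|\omega_1(O)-\omega_2(O)| < \epsilon\|O\|$ for all $O$ supported outside a sufficiently large ball. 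Applying this to the projectors $P_{\Lambda_2^{n,n+m}}$ and passing to the strong limit gives $\langle \Omega_1, p_{\Lambda_2^{>n}}\Omega_1\rangle > 1-\epsilon$. Since $p_{\Lambda_1^{>n}} p_{\Lambda_2^{>n}} \leq p_{B_{n+1}^c}$, one concludes $\langle \Omega_1, p_{B_{n+1}^c}\Omega_1\rangle > 1-\epsilon$, hence $p_{B_{n+1}^c}\neq 0$. This asymptotic-equivalence step is precisely what bridges the two cones and replaces your (non-existent) finite remainder.
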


\begin{proof}
    Take two cones $\Lambda_1, \Lambda_2$ such that any $A_v$ and any $B_f$ is supported in (at least) one of them. In particular, $\Lambda_1 \cup \Lambda_2 = \R^2$. Since $\pi$ is an anyon representation, we have unitaries $U_i : \hilb_0 \rightarrow \hilb$ for $i = 1,2$ that satisfy
    $$\pi(O) = U_i \pi_0(O) U_i^* \qquad \forall O \in \cstar[\Lambda_i].$$
    It follows that
    $$\omega_0(O) = \inner{\Omega_0}{\pi_0(O) \Omega_0} = \inner{U_i\Omega_0}{\pi(O) U_i \Omega_0} \quad \forall \, O \in \cstar[\Lambda_i].$$
    Define pure states $\omega_i$ on $\cstar$ given by $\omega_i(O) := \inner{\Omega_i}{\pi(O) \Omega_i}$ where $\ket{\Omega_i} = U_i\ket{\Omega_0} \in \hilb$. The states $\omega_i$ belong to $\pi$ and satisfy $\omega_i(O) = \omega_0(O)$ for all $O \in \cstar[\Lambda_i]$.

    Let $\Lambda_i^{> n} := \Lambda_i \setminus B_n$ and $\Lambda_i^{n, n+m} := \Lambda_i^{>n} \setminus \Lambda_i^{>n+m}$ for all $m, n \in \mathbb{N}$. Then the sequence $m \mapsto \Lambda_i^{n, n+m}$ is a non-decreasing sequence of bounded sets converging to $\Lambda_i^{> n}$. We have
    $$1 = \omega_0 \big( P_{\Lambda_i^{n, n+m}} \big) = \omega_i \big( P_{\Lambda_i^{n, n+m}} \big) = \inner{\Omega_i}{ \pi \big( P_{\Lambda_i^{n, n+m}} \big)  \Omega_i},$$
    where we used that all these projectors are supported in $\Lambda_i$. We now find that
    $$\inner{\Omega_i}{p_{\Lambda_i^{> n}} \Omega_i} = 1$$
    where $p_{\Lambda_i^{> n}}$ is the strong limit of the sequence of projectors $\{ \pi \big( P_{\Lambda_i^{n, n+m}} \big) \}_{m > n}$.

    The pure states $\omega_1$ and $\omega_2$ are unitary equivalent since they are both vector states in the irreducible representation $\pi$. It follows from Corollary 2.6.11 of \cite{Bratteli2012-gd} that for any $\epsilon > 0$ there is an $n(\epsilon) \in \mathbb{N}$ such that
    $$|\omega_1(O) - \omega_2 (O)| \leq \epsilon \norm{O}$$
    for all $O \in \cstar[loc] \cap \cstar[ B_{n(\ep)}^c ]$.

    This gives us
    $$|\inner{\Omega_1}{\pi(P_{\Lambda_i^{n, n+m}}) \Omega_1} - \inner{\Omega_2}{\pi(P_{\Lambda_i^{n, n+m}}) \Omega_2}| < \epsilon$$
    for all $m$ and $n \geq n(\epsilon)$ for some $n(\epsilon) \in \mathbb{N}$. After taking the strong limit we get
    $$\inner{\Omega_1}{p_{\Lambda_2^{>n}} \Omega_1} >1 - \epsilon$$
    where we have used $\inner{\Omega_2}{\pi(P_{\Lambda_2^{n, n+m}}) {\Omega_2}} = 1$.

    Using the fact that $p_{\Lambda_1^{>n}}$ is a projector and $\inner{\Omega_1}{p_{\Lambda_1^{>n}} \Omega_1} = 1$ for all $n \geq n(\epsilon)$, we also have that $p_{\Lambda_1^{>n}} \ket{\Omega_1} = \ket{\Omega_1}$ for all such $n$.

    Now we use the fact that $p_{\Lambda_2^{>n}} p_{\Lambda_1^{>n}}$ projects onto a subspace of the range of $p_{B_{n+1}^c}$ to obtain
    \begin{align*}
        \inner{\Omega_1}{p_{B_{n+1}^c} \Omega_1} \geq  \inner{\Omega_1}{p_{\Lambda_2^{>n}} p_{\Lambda_1^{>n}} \Omega_1} = \inner{\Omega_1}{p_{\Lambda_2^{>n}}  \Omega_1} > 1-\epsilon
    \end{align*}
    for all $n > n(\epsilon)$.

    It follows that for $\epsilon < 1$ we have $p_{B_{n+1}^c} \ket{\Omega_1} \neq 0$ for all $n \geq n(\epsilon)$. Let us therefore fix some $\epsilon < 1$ and $n \geq n(\epsilon)$, and define a normalized vector
    $$\ket{\Omega} := \frac{p_{B_{n+1}^c} \ket{\Omega_1}}{||p_{B_{n+1}^c} \ket{\Omega_1}||} \in \hilb.$$ 

    The vector $\ket{\Omega}$ defines a pure state by $\omega(O):=\inner{\Omega}{\pi(O)\Omega}$ for all $O \in \cstar$. This state belongs to the anyon representation $\pi$.
    
    To finish the proof, we verify that $\omega(A_v) = \omega(B_f) = 1$ whenever $A_v ,B_f \in \cstar[B_{n+1}^c]$. We have
    $$
        \omega(A) = \frac{\inner{\Omega_1}{p_{B_{n+1}^c} \pi(A_v) p_{B_{n+1}^c}\Omega_1}}{||p_{B_{n+1}^c} \ket{\Omega_1}||^2}
        =\frac{\inner{\Omega_1}{p_{B_{n+1}^c} p_{\{v\}} p_{B_{n+1}^c}\Omega_1}}{||p_{B_{n+1}^c} \ket{\Omega_1}||^2} = \frac{\inner{\Omega_1}{p_{B_{n+1}^c} p_{B_{n+1}^c}\Omega_1}}{||p_{B_{n+1}^c} \ket{\Omega_1}||^2} = 1
    $$
    where we used $p_{B_{n+1}^c} \pi(A_v) = p_{B_{n+1}^c}$. The proof that $\omega_\pi(B_f) = 1$ whenever $B_f \in \cstar[B_{n+1}^c]$, is identical. This concludes the proof of the Proposition.
\end{proof}

\subsection{Finite violations of ground state constraints can be swept onto a single site} \label{sec:sweep}

Let $\omega$ be a pure state on $\cstar$ and let $(\pi, \hilb, |\Omega \rangle)$ be its GNS triple. Since $\cstar$ is a simple algebra, the representation $\pi$ is faithful and we can identify $\cstar$ with its image $\pi(\cstar)$. For any $O \in \cstar$ we will write $O$ instead of $\pi(O)$ in the remainder of this Section.

For the remainder of this Section we fix an arbitrary site $s_* = (v_*, f_*)$. We will show that we can move any finite number of violations of the ground state constraints onto the site $s_*$.

The following Lemma will prove useful in achieving this:

\begin{lemma}
\label{lem:sweep identities}
    (\cite{Bombin2007-uw}, Eq. B46, B47) Let $\rho$ be a ribbon such that $v = v(\start)$ or $v = v(\en)$ and $v(\start) \neq v(\en)$, and $ f = f(\start) $ or $f = f(\en)$ and $f(\start) \neq f(\en)$. Then we have the following identities:
    \begin{equation}
    \label{eq:sweep}
        {|G|}\sum_{g \in G} T^g_\rho A_v T^g_\rho = 1, \qquad \qquad \sum_{h \in G}L^{\bar{h}}_\rho B_f L^h_\rho = 1
    \end{equation}
\end{lemma}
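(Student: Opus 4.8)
The plan is to pass the vertex (resp.\ face) operator through the ribbon operators using the flux‑transport rules of Appendix~\ref{app:ribbonprops}, after which each identity collapses to a one‑line computation. Both equalities are identities of operators supported on the finite set $\supp(\rho)$ together with the edges meeting $v$ (resp.\ $f$), so it is enough to check them on the orthonormal basis $\{\ket{\alpha}\}$ of gauge configurations on any finite region containing that set.

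For the first identity I would write $A_v=\tfrac{1}{|G|}\sum_{h\in G}A_v^h$, reducing the claim to $\sum_{g,h\in G}T^g_\rho\,A_v^h\,T^g_\rho=\mathds{1}$. Recall that $T^g_\rho\ket{\alpha}=\delta_{g,\phi_\rho(\alpha)}\ket{\alpha}$ (Lemma~\ref{lem:flux projector}) and that each $A_v^h$ permutes the basis vectors $\ket{\alpha}$. The hypotheses $v\in\{v(\partial_0\rho),v(\partial_1\rho)\}$ and $v(\partial_0\rho)\neq v(\partial_1\rho)$ guarantee that $\rho$ has a non‑empty direct path and that $v$ is exactly one of its two endpoint vertices; Lemma~\ref{lem:flux change} then gives $\phi_\rho(A_v^h(\alpha))=h\,\phi_\rho(\alpha)$ if $v=v(\partial_0\rho)$ and $\phi_\rho(A_v^h(\alpha))=\phi_\rho(\alpha)\,\bar h$ if $v=v(\partial_1\rho)$. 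Substituting and summing over $g$,
\begin{equation*}
  \sum_{g,h}T^g_\rho A_v^h T^g_\rho\,\ket{\alpha}
  =\sum_{h}\delta_{\phi_\rho(\alpha),\,\phi_\rho(A_v^h(\alpha))}\,A_v^h\ket{\alpha}
  =A_v^1\ket{\alpha}=\ket{\alpha},
\end{equation*}
since in either case the constraint $\phi_\rho(A_v^h(\alpha))=\phi_\rho(\alpha)$ forces $h=1$. This gives the first identity.

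The second identity I would establish in exactly the same way, with $T^g_\rho$ replaced by the flux projectors $B^g_s$ at a site $s\in\{\partial_0\rho,\partial_1\rho\}$ with $f(s)=f$ (so that $B^1_s=B_f$), using that $\{B^g_s\}_{g\in G}$ is a complete family of orthogonal projectors, that $L^h_\rho$ is invertible with $(L^h_\rho)^{-1}=L^{\bar h}_\rho$ (a consequence of $L^h_\rho L^{h'}_\rho=L^{hh'}_\rho$, Appendix~\ref{app:ribbonprops}), and the flux‑transport rule for $L^h_\rho$: under the hypothesis $f(\partial_0\rho)\neq f(\partial_1\rho)$ one has $L^{\bar h}_\rho\,B^g_s\,L^h_\rho=B^{\bar h g}_s$ (up to an orientation‑dependent choice of left versus right multiplication, and of $h$ versus $\bar h$). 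Then $\sum_h L^{\bar h}_\rho\,B_f\,L^h_\rho=\sum_h L^{\bar h}_\rho\,B^1_s\,L^h_\rho=\sum_h B^{\bar h}_s=\sum_{g}B^{g}_s=\mathds{1}$.

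The work is entirely in the bookkeeping of orientation conventions in the two transport rules: which end of $\rho$ carries $v$ (resp.\ $f$), whether the label of the flux projector shifts by left or by right multiplication, and whether one needs $h$ or $\bar h$. The one genuinely conceptual point is that, in the second identity, conjugation by $L^h_\rho$ must shift the flux at the endpoint face by plain group multiplication and not by conjugation---otherwise the $h$‑sum would produce $|G|\,B_f$ rather than $\mathds{1}$; this is precisely what breaks down when $f(\partial_0\rho)=f(\partial_1\rho)$, which is why that case is excluded (and similarly the exclusion $v(\partial_0\rho)=v(\partial_1\rho)$ in the first identity rules out purely dual ribbons, for which $T^g_\rho=\delta_{g,1}\mathds{1}$ and the identity would fail). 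All the transport rules used here are among the ribbon‑operator identities of Appendix~\ref{app:ribbonprops}, which reproduce Eqs.~B46--B47 of \cite{Bombin2007-uw}.
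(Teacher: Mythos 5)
Your argument is correct in substance, and it is worth noting that the paper itself does not prove this lemma at all: it simply cites Eqs.~B46--B47 of \cite{Bombin2007-uw}. What you have written is therefore a self-contained derivation from identities the paper does record (the diagonal action $T^g_\rho\ket{\alpha}=\delta_{g,\phi_\rho(\alpha)}\ket{\alpha}$, the flux-transport rule of Lemma~\ref{lem:flux change}, and the commutation relations \eqref{eq:[A,F] and [B,F]}), and the first identity in particular is airtight: the constraint $\phi_\rho(A_v^h(\alpha))=\phi_\rho(\alpha)$ forces $h=1$ precisely because $v$ is an endpoint vertex of a ribbon with nonempty direct path, which is what the hypothesis $v(\partial_0\rho)\neq v(\partial_1\rho)$ guarantees. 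Your remarks on why the degenerate cases (purely dual, purely direct ribbons) must be excluded are also exactly right.

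One small imprecision in the second identity: the conjugation rule $L^{\bar h}_\rho B^g_s L^h_\rho = B^{\bar h g}_s$ is literally correct only when $s=\partial_0\rho$ (where \eqref{eq:[A,F] and [B,F]} gives $B^k_{s_0}L^h_\rho = L^h_\rho B^{hk}_{s_0}$, so conjugation shifts the label by left multiplication). At the far endpoint $s_1$ the relation $B^k_{s_1}F^{h,g}_\rho = F^{h,g}_\rho B^{k\bar g\bar h g}_{s_1}$ has a $g$-dependent label, so after summing over $g$ one gets
\begin{equation*}
L^{\bar h}_\rho\, B^k_{s_1}\, L^h_\rho \;=\; \sum_{g\in G} T^g_\rho\, B^{\,k\bar g\bar h g}_{s_1},
\end{equation*}
which is not a single flux projector. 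This does not break your proof --- for each fixed $g$ the map $h\mapsto \bar g\bar h g$ is a bijection of $G$, so $\sum_h L^{\bar h}_\rho B^1_{s_1} L^h_\rho = \sum_g T^g_\rho\bigl(\sum_{g'}B^{g'}_{s_1}\bigr)=\mathds{1}$ --- but your hedge about ``left versus right multiplication and $h$ versus $\bar h$'' does not quite cover this $g$-dependence, and the step deserves to be written out for the $s_1$ case. With that adjustment the proof is complete.
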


For any vector $\ket{\Psi} \in \caH$, let us define
$$ V_{\Psi} := \{ v \in \latticevert \, : \, A_v \ket{\Psi} \neq \ket{\Psi}  \} \setminus \{ v_* \} $$
and
$$ F_{\Psi} := \{ f \in \latticeface \, : \, B_f \ket{\Psi} \neq \ket{\Psi} \} \setminus \{ f_* \}. $$
The set $V_{\Psi}$ consists of all vertices except $v_*$ where the gauge constraint $A_v$ is violated in the state $\ket{\Psi}$, and $F_{\Psi}$ is the set of all faces except $f_*$ where the flux constraint $B_f$ is violated in the state $\ket{\Psi}$.

The following Lemma says that we can remove a single violation of a gauge constraint.

\begin{lemma} \label{lem:remove star violation}
    Let $\ket{\Psi} \in \caH$ be a unit vector and let $v \in \latticevert \setminus \{v_*\}$ be a vertex. Then there is a unit vector $\ket{\Psi'} \in \caH$ such that $V_{\Psi'} \subseteq V_{\Psi} \setminus \{v\}$ and $F_{\Psi'} \subseteq F_{\Psi}$.
\end{lemma}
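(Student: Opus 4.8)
The plan is to connect the vertex $v$ to the special vertex $v_*$ by a dual ribbon and then use the first sweeping identity from Lemma \ref{lem:sweep identities} to push the gauge-constraint violation at $v$ along this ribbon until it reaches $v_*$, where violations are allowed and hence not counted in $V_{\Psi'}$. Concretely, first I would choose a dual ribbon $\rho$ with $v(\partial_0\rho) = v$ and $v(\partial_1\rho) = v_*$; such a ribbon exists because the dual ribbons connect sites sharing the same face as one moves along, and one can always route a dual path from the face-neighbourhood of $v$ to that of $v_*$. (If $v = v_*$ there is nothing to prove, so assume $v \ne v_*$.) Since $\rho$ is a dual ribbon, $v(\partial_0\rho) \ne v(\partial_1\rho)$ and the hypotheses of Lemma \ref{lem:sweep identities} are met with this $v$; note also that a dual ribbon has $f(\partial_0\rho) \ne f(\partial_1\rho)$ in general, but we only need the first identity $|G|\sum_{g\in G} T^g_\rho A_v T^g_\rho = 1$. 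Wait — $T^g_\rho$ is nontrivial only for direct triangles, so for a genuinely dual ribbon this particular identity degenerates; instead I would use a ribbon $\rho$ realizing a dual path from $v$ to $v_*$ and apply the \emph{second} identity $\sum_{h\in G} L^{\bar h}_\rho B_f L^h_\rho = 1$ only if we were removing a flux violation. For a gauge (vertex) violation the correct move is: take $\rho$ a \emph{direct} ribbon from $v$ to $v_*$ (so $v(\partial_0\rho)=v$, $v(\partial_1\rho)=v_*$, and these are distinct), and apply $|G|\sum_g T^g_\rho A_v T^g_\rho = 1$.

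The key computation is then as follows. Define, for each $g \in G$, the vector $\ket{\Psi_g} := \sqrt{|G|}\, T^g_\rho \ket{\Psi}$. By the sweeping identity, $\sum_g \langle \Psi_g | \Psi_g\rangle = |G| \langle \Psi | \sum_g T^g_\rho T^g_\rho |\Psi\rangle$; since $T^g_\rho$ are orthogonal projectors summing to $\mathds 1$ this equals $\langle\Psi|\Psi\rangle = 1$, so at least one $g$ has $\ket{\Psi_g} \ne 0$, and after normalization we set $\ket{\Psi'} := \ket{\Psi_g}/\|\ket{\Psi_g}\|$ for such a $g$. It remains to check that $\ket{\Psi'}$ has fewer violations. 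Since $T^g_\rho$ commutes with every $A_w$ for $w \notin \{v, v_*\}$ (the operator $T^g_\rho$ is supported on the edges of $\rho$ and commutes with gauge transformations at interior vertices by the ribbon-operator commutation relations — this is the content of the relevant lemmas in Appendix \ref{app:ribbonprops}, e.g. Lemma \ref{lem:flux change}) and commutes with every $B_f$ (both are diagonal in the gauge-configuration basis), it follows that for $w \notin\{v,v_*\}$, $A_w\ket{\Psi'} = \ket{\Psi'}$ whenever $A_w\ket{\Psi}=\ket{\Psi}$, and similarly $B_f\ket{\Psi'} = \ket{\Psi'}$ whenever $B_f\ket{\Psi}=\ket{\Psi}$; hence $V_{\Psi'} \subseteq V_\Psi \cup \{v\}$ minus possibly $v$, and $F_{\Psi'} \subseteq F_\Psi$. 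Finally, at the vertex $v$ itself: for the chosen $g$, $A_v \ket{\Psi'} \propto A_v T^g_\rho \ket{\Psi}$, and using the identity $|G|\sum_{g}T^g_\rho A_v T^g_\rho = 1$ together with $A_v = \frac1{|G|}\sum_h A_v^h$ and the commutation of $A_v^h$ with $T^g_\rho$ up to a relabeling $g \mapsto$ (a $G$-translate determined by $h$ and the ribbon endpoints), one shows $A_v \ket{\Psi'} = \ket{\Psi'}$, i.e. $v \notin V_{\Psi'}$. This gives $V_{\Psi'} \subseteq V_\Psi \setminus\{v\}$ and $F_{\Psi'}\subseteq F_\Psi$ as required.

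The main obstacle I anticipate is the last point — verifying that the sweeping move genuinely \emph{removes} the violation at $v$ rather than merely relocating it to another interior vertex. This requires the precise form of how $A_v^h$ conjugates $T^g_\rho$ (essentially $T^g_\rho A_v^h = A_v^h T^{g'}_\rho$ with $g'$ an appropriate conjugate/translate of $g$), so that $A_v$ acts on the span $\{T^g_\rho\ket{\Psi}\}_g$ by permuting the index $g$, making $\frac1{|G|}\sum_h A_v^h$ restricted to each ray $T^g_\rho\ket{\Psi}$ into a rank-one projection onto (a multiple of) $\ket{\Psi_g}$ — in other words each nonzero $\ket{\Psi_g}$ is automatically an eigenvector of $A_v$ with eigenvalue $1$. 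I would isolate this as the technical heart of the argument, citing the ribbon-operator commutation lemmas from the appendix, and also double-check the boundary behaviour at $v_*$: the operator $T^g_\rho$ may create or modify a violation at $v_*$, but since $v_*$ is explicitly excluded from $V_{\Psi'}$ this is harmless. A parallel statement for flux violations (using the second identity in Lemma \ref{lem:sweep identities} and a dual ribbon) will be needed as a companion lemma and proved the same way.
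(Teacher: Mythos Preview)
Your overall strategy is right, and you correctly identified the technical heart of the argument. But there is a genuine gap at exactly the point you flagged as the ``main obstacle'': with your definition $\ket{\Psi_g} \propto T^g_\rho \ket{\Psi}$, the vector $\ket{\Psi_g}$ is \emph{not} in general an eigenvector of $A_v$. Your permutation heuristic breaks down because the commutation relation is $A_v^h T^g_\rho = T^{hg}_\rho A_v^h$ (Lemma~\ref{lem:flux change}), so
\[
A_v T^g_\rho \ket{\Psi} = \frac{1}{|G|}\sum_h T^{hg}_\rho \, A_v^h \ket{\Psi},
\]
and the residual $A_v^h$ acting on $\ket{\Psi}$ does not disappear; since $v$ may well lie in $V_\Psi$, there is no reason for $A_v^h\ket{\Psi}$ to equal $\ket{\Psi}$. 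So $A_v$ does not simply permute the rays $T^g_\rho\ket{\Psi}$, and the conclusion that each nonzero $T^g_\rho\ket{\Psi}$ is automatically $A_v$-invariant is false.

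The fix is simple and is what the paper does: build the projection into the definition, setting
\[
\ket{\Psi_g} := A_v\, T^g_\rho \ket{\Psi}.
\]
Then $A_v\ket{\Psi_g} = \ket{\Psi_g}$ is immediate from $A_v^2 = A_v$, and since $A_v$ commutes with all $A_{v'}$ ($v' \neq v$) and all $B_f$ (Eq.~\eqref{eq:ABcommproj}), your commutation arguments for $V_{\Psi_g} \subseteq V_\Psi \setminus \{v\}$ and $F_{\Psi_g} \subseteq F_\Psi$ go through unchanged. The sweep identity now plays a different role than you envisioned: it is used not to prove $A_v$-invariance (which is now free) but solely to guarantee that some $\ket{\Psi_g}$ is nonzero, via
\[
\ket{\Psi} = |G|\sum_g T^g_\rho A_v T^g_\rho \ket{\Psi} = |G|\sum_g T^g_\rho \ket{\Psi_g}.
\]
With this one change your argument becomes the paper's proof.
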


\begin{proof}
    Let $\rho$ be a ribbon such that $v(\partial_0 \rho) = v_*$ and $v(\partial_1 \rho) = v$. For any $g \in G$ we define the vector
    $$ \ket{\Psi_g} := A_v T_{\rho}^g \ket{\Psi}. $$
    It follows immediately from the definition that $A_v \ket{\Psi_g} = \ket{\Psi_g}$, so $v \not\in V_{\Psi_g}$. We now show that all vertices that were not in $V_{\Psi} \setminus \{v\}$ are also not in $V_{\Psi_g}$, \ie $V_{\Psi_g} \subseteq V_{\Psi} \setminus \{ v \}$. We have just shown this for $v$ itself, and it is true by definition for $v_*$. It remains to show it for any $v' \not\in V_{\Psi}$ such that $v' \neq v, v_*$. Then $A_{v'} \ket{\Psi} = \ket{\Psi}$ and since $A_{v'}$ commutes with $A_v$ (Eq. \eqref{eq:ABcommproj}) and with $T_{\rho}^g$ (Lemma \ref{lem:A,B comm T,L except endpt}), we find $A_{v'} \ket{\Psi_g} = \ket{\Psi_g}$. \ie $v' \not\in V_{\Psi_g}$.

    Similarly, if $f \not\in F_{\Psi}$, \ie $B_f \ket{\Psi} = \ket{\Psi}$, then also $B_f \ket{\Psi_g} = \ket{\Psi_g}$ because $B_f$ commutes with $A_v$ (Eq. \eqref{eq:ABcommproj}) and with $T_{\rho}^g$ (Lemma \ref{lem:A,B comm L,T always}). This shows that $F_{\Psi_{g}} \subseteq F_{\Psi}$.

    It remains to show that at least one of the $\ket{\Psi_{g}}$ is non-zero.
    
    Using the first identity of Lemma \ref{lem:sweep identities} we find
    $$ \ket{\Psi} = \abs{G} \, \sum_{g \in G} T_{\rho}^g A_v T_{\rho}^g \ket{\Psi} = \abs{G} \sum_{g \in G} T_{\rho}^g \ket{\Psi_g}. $$
    Since $\ket{\Psi}$ is a unit vector, there must be at least one $g \in G$ such that $\ket{\Psi_g} \neq 0$. Let $g \in G$ be such that $\ket{\Psi_g} \neq 0$. We can normalise this vector, obtaining a unit vector
    $$ \ket{\Psi'} = \frac{\ket{\Psi_g}}{\norm{ \ket{\Psi_g} }}. $$
    The property that $V_{\Psi'} \subseteq V_{\Psi} \setminus \{v\}$ and $F_{\Psi'} \subseteq F_{\Psi}$ follows immediately from the fact that $\ket{\Psi_g}$ satisfies this property, as shown above. This proves the Lemma.
\end{proof}

Similarly, we can remove a single violation of a flux constraint. The proof is essentially identical to the proof of the previous Lemma and is therefore omitted.

\begin{lemma} \label{lem:remove face violation}
    Let $\ket{\Psi} \in \caH$ be a unit vector and let $f \in \latticeface \setminus \{f_*\}$ be a face. Then there is a unit vector $\ket{\Psi'} \in \caH$ such that $V_{\Psi'} \subseteq V_{\Psi}$ and $F_{\Psi'} \subseteq F_{\Psi} \setminus \{ f \}$.
\end{lemma}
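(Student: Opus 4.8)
The plan is to mirror the proof of Lemma \ref{lem:remove star violation}, interchanging the roles of gauge constraints and flux constraints, of direct and dual triangles, and of the two sweep identities in Lemma \ref{lem:sweep identities}. Concretely, given the unit vector $\ket{\Psi}$ and the face $f \neq f_*$, first choose a ribbon $\rho$ with $f(\partial_0 \rho) = f_*$ and $f(\partial_1 \rho) = f$ (and with distinct endpoint faces, so that the hypotheses of Lemma \ref{lem:sweep identities} are met), and for each $h \in G$ set $\ket{\Psi_h} := B_f L_\rho^h \ket{\Psi}$.

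The first step is to check that $B_f \ket{\Psi_h} = \ket{\Psi_h}$, so $f \notin F_{\Psi_h}$; this is immediate because $B_f$ is a projector. Next I would show $F_{\Psi_h} \subseteq F_\Psi \setminus \{f\}$ and $V_{\Psi_h} \subseteq V_\Psi$ by the same commutation argument as before: for $f' \notin F_\Psi$ with $f' \neq f, f_*$, the projector $B_{f'}$ commutes with $B_f$ (Eq. \eqref{eq:ABcommproj}) and with $L_\rho^h$ (Lemma \ref{lem:A,B comm L,T always}), hence fixes $\ket{\Psi_h}$; and for $v \notin V_\Psi$, $A_v$ commutes with $B_f$ (Eq. \eqref{eq:ABcommproj}) and with $L_\rho^h$ away from the endpoints — here one must be slightly careful, using Lemma \ref{lem:A,B comm T,L except endpt} when $v$ is an endpoint vertex of $\rho$, but since $L_\rho^h$ is a product over a dual ribbon its endpoint constraints involve faces, not vertices, so in fact $A_v$ commutes with $L_\rho^h$ for \emph{all} $v$ by Lemma \ref{lem:A,B comm L,T always}. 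Either way $A_v \ket{\Psi_h} = \ket{\Psi_h}$, so $v \notin V_{\Psi_h}$.

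Finally I would invoke the second sweep identity of Lemma \ref{lem:sweep identities}, $\sum_{h \in G} L_\rho^{\bar h} B_f L_\rho^h = 1$, to write $\ket{\Psi} = \sum_{h \in G} L_\rho^{\bar h} \ket{\Psi_h}$; since $\ket{\Psi}$ is a unit vector, at least one $\ket{\Psi_h}$ is nonzero, and normalizing it gives the desired $\ket{\Psi'}$. The only real subtlety — and the step I expect to require the most care — is getting the orientation conventions right so that the ribbon $\rho$ connecting $f_*$ to $f$ indeed satisfies the hypotheses of Lemma \ref{lem:sweep identities} and so that the relevant operators are $L_\rho^h$ rather than $T_\rho^g$; once the dictionary direct$\leftrightarrow$dual, $A\leftrightarrow B$, $T\leftrightarrow L$ is fixed, the argument is a verbatim transcription of the previous proof, which is why the authors omit it.
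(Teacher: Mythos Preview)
Your overall plan is the right one and is exactly what the paper has in mind: swap $T \leftrightarrow L$, $A \leftrightarrow B$, vertices $\leftrightarrow$ faces, and use the second sweep identity. However, one step in your commutation argument is not correct as stated.

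You claim that $A_v$ commutes with $L_\rho^h$ for \emph{all} vertices $v$, on the grounds that ``$L_\rho^h$ is a product over a dual ribbon'' and so only faces matter at the endpoints. This is false. For a general ribbon $\rho$ (which you need, since a purely dual ribbon cannot leave a single vertex), the commutation relation at the initial site reads $A_{s_0}^k L_\rho^h = L_\rho^{k h \bar k} A_{s_0}^k$, obtained by summing Eq.~\eqref{eq:[A,F] and [B,F]} over $g$. Hence $A_{v(\partial_0\rho)}$ does \emph{not} commute with $L_\rho^h$ unless $h$ is central. Lemma~\ref{lem:A,B comm L,T always} only gives $[A_v, L_\rho^h] = 0$ for $v \neq v(\partial_0 \rho)$, not for all $v$; the asymmetry between $T$ and $L$ here is genuine, in contrast to $[B_s^k, T_\rho^g] = 0$ which does hold for all sites (Lemma~\ref{lem:flux change}).

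The fix is simple: choose the ribbon with $\partial_0 \rho = s_* = (v_*, f_*)$, not merely $f(\partial_0 \rho) = f_*$. Then the only vertex where $[A_v, L_\rho^h]$ can fail is $v = v_*$, and $v_*$ is excluded from $V_\Psi$ by definition, so the inclusion $V_{\Psi_h} \subseteq V_\Psi$ goes through via Lemma~\ref{lem:A,B comm L,T always}. (You also swapped two citations: $[B_{f'}, L_\rho^h] = 0$ for $f' \neq f(\partial_i \rho)$ is Lemma~\ref{lem:A,B comm T,L except endpt}, not Lemma~\ref{lem:A,B comm L,T always}.) With these corrections the argument is indeed a verbatim transcription of the proof of Lemma~\ref{lem:remove star violation}.
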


\begin{proposition}
\label{prop:sweeping}
    If $\omega$ is a pure state on $\cstar$ and there is an $n$ such that $\omega(A_v) = \omega(B_f) = 1$ for all $A_v, B_f \in \cstar[B_{n}^c]$, then for any site $s_*$ there is a pure state $\psi \in \overline{\S}_{s_*}$ that is unitarily equivalent to $\omega$.
\end{proposition}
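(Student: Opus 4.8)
The plan is to use the two single-violation removal results, Lemma \ref{lem:remove star violation} and Lemma \ref{lem:remove face violation}, to successively clear \emph{all} violations of the ground-state constraints away from the chosen site $s_*$, and then to take the vector state determined by the resulting vector inside the GNS space of $\omega$. First I would pass to the GNS triple $(\pi, \hilb, \ket{\Omega})$ of $\omega$; since $\omega$ is pure, $\pi$ is irreducible, so every nonzero vector in $\hilb$ is cyclic for $\pi$. The hypothesis $\omega(A_v) = \omega(B_f) = 1$ for all $A_v, B_f \in \cstar[B_n^c]$, together with the fact that $A_v$ and $B_f$ are projectors, forces $A_v \ket{\Omega} = \ket{\Omega}$ and $B_f \ket{\Omega} = \ket{\Omega}$ for all such $v$ and $f$. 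Consequently the sets $V_{\Omega} = \{ v \in \latticevert : A_v \ket{\Omega} \neq \ket{\Omega} \} \setminus \{v_*\}$ and $F_{\Omega} = \{ f \in \latticeface : B_f \ket{\Omega} \neq \ket{\Omega} \} \setminus \{f_*\}$ are both finite.

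Next I would run a finite induction. Enumerate $V_{\Omega} = \{v_1, \dots, v_k\}$ and set $\ket{\Psi_0} := \ket{\Omega}$. Applying Lemma \ref{lem:remove star violation} to $\ket{\Psi_{i-1}}$ with the vertex $v_i$ produces a unit vector $\ket{\Psi_i}$ with $V_{\Psi_i} \subseteq V_{\Psi_{i-1}} \setminus \{v_i\}$ and $F_{\Psi_i} \subseteq F_{\Psi_{i-1}}$; the invariants $V_{\Psi_i} \subseteq V_{\Omega} \setminus \{v_1, \dots, v_i\}$ and $F_{\Psi_i} \subseteq F_{\Omega}$ are therefore maintained, and after $k$ steps $V_{\Psi_k} = \emptyset$ while $F_{\Psi_k} \subseteq F_{\Omega}$. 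Then, starting from $\ket{\Psi_k}$, enumerate $F_{\Omega} = \{f_1, \dots, f_l\}$ and apply Lemma \ref{lem:remove face violation} once for each $f_j$ (note $f_j \neq f_*$, so the lemma applies); since this step never creates a new vertex violation and removes $f_j$ from the face-violation set, after $l$ steps I obtain a unit vector $\ket{\Psi} \in \hilb$ with $A_v \ket{\Psi} = \ket{\Psi}$ for all $v \in \latticevert \setminus \{v_*\}$ and $B_f \ket{\Psi} = \ket{\Psi}$ for all $f \in \latticeface \setminus \{f_*\}$.

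Finally I would define $\psi(O) := \inner{\Psi}{\pi(O)\Psi}$ for $O \in \cstar$. Then $\psi(A_v) = \inner{\Psi}{A_v \Psi} = 1$ for all $v \neq v_*$ and $\psi(B_f) = 1$ for all $f \neq f_*$, i.e.\ $\psi \in \overline{\S}_{s_*}$ in the sense of Definition \ref{def:state spaces} applied with $s_0 = s_*$. Because $\ket{\Psi}$ is a nonzero vector in the irreducible representation $\pi$ it is cyclic, so $(\pi, \hilb, \ket{\Psi})$ is a GNS triple for $\psi$; hence $\psi$ is pure and its GNS representation is unitarily equivalent to $\pi$, which is the GNS representation of $\omega$. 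Thus $\psi$ and $\omega$ are unitarily equivalent, as required.

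I do not expect a serious obstacle in this argument: the genuine content is packaged into Lemmas \ref{lem:remove star violation} and \ref{lem:remove face violation}, whose whole point is that clearing a violation at one vertex or face can only ever create a new violation at the distinguished site $s_*$ — never a cascade elsewhere — so the peeling-off process is monotone and terminates after finitely many steps precisely because $V_{\Omega}$ and $F_{\Omega}$ are finite. The only minor care points are keeping track of the two nested inductions (vertices first, then faces) and recalling that a nonzero vector in an irreducible representation is automatically cyclic, which is what lets us identify $\psi$'s GNS representation with $\pi$ and conclude purity and unitary equivalence.
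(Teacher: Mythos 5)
Your proposal is correct and follows essentially the same route as the paper's proof: pass to the GNS triple, note that purity of $\omega$ makes $\pi$ irreducible and that the hypothesis forces $V_\Omega$ and $F_\Omega$ to be finite, then apply Lemmas \ref{lem:remove star violation} and \ref{lem:remove face violation} finitely many times and take the resulting vector state. Your added detail on the two nested inductions and on cyclicity of a nonzero vector in an irreducible representation is just a more explicit spelling-out of what the paper states more briefly.
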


\begin{proof}
    Let $s_* = (v_*, f_*)$ and let $(\pi, \caH, \ket{\Omega})$ be the GNS triple for $\omega$. From Lemma \ref{lem:omega_0(O) = 1} we find that
    $$ A_v \ket{\Omega} = B_f \ket{\Omega} = \ket{\Omega} $$
    for all $A_v, B_f \in \cstar[B_{n}^c]$. In particular, the sets
    $$ V_{\Omega} = \{ v \in \latticevert \, : \, A_v \ket{\Omega} \neq \ket{\Omega} \} \setminus \{ s_* \} $$
    and
    $$ F_{\Omega} = \{ f \in \latticeface \,: \, B_g \ket{\Omega} \neq \ket{\Omega} \} \setminus \{ f_* \} $$
    are finite.

    We can therefore apply Lemmas \ref{lem:remove star violation} and $\ref{lem:remove face violation}$ a finite number of times to obtain a unit vector $\ket{\Psi} \in \caH$ for which $V_{\Psi} = \emptyset$ and $F_{\Psi} = \emptyset$. In other words, the vector $\ket{\Psi}$ satisfies
    $$  A_v \ket{\Psi} = B_f \ket{\Psi} = \ket{\Psi} $$
    for all $v \neq v_*$ and all $f \neq f_*$.

    The vector $\ket{\Psi}$ corresponds to a pure state $\psi$ given by
    $$\psi(O) = \langle \Psi, O \, \Psi \rangle $$
    for all $O \in \cstar$.

    Since $\omega$ and $\psi$ are vector states in the same representation $\pi$ of $\cstar$, these states are unitarily equivalent. Moreover,
    $$ \psi(A_v) = \langle \Psi, A_v \, \Psi \rangle = 1, \quad \text{and} \quad \psi(B_f) = \langle \Psi, B_f \, \Psi \rangle = 1 $$
    for all $v \neq v_*$ and all $f \neq f_*$, so $\psi \in \overline{\S}_{s_*}$ (Definition \ref{def:state spaces}) as required.
\end{proof}

\subsection{Decomposition of states in $\overline{\mathcal{S}}_{\site}$}
\label{sec:classify}

Recall the Definition \ref{def:state spaces} of the set of states $\overline{\S}_{\site}$. We now prove that any state in $\overline{\S}_{\site}$ decomposes into states belonging to the anyon representations $\pi^{RC}$. Let us fix a state $\omega \in \overline{\S}_{\site}$.

\begin{definition} \label{def:omegaRC components defined}
    For each $RC$, define a positive linear functional $\tomegaRC[]$ by
    $$\tomegaRC[](O) := \omega(\qds O \qds)$$
    for all $O \in \cstar$, and a non-negative number
    $$\lamRC := \omega( D^{RC}_{s_0} ) = \tomegaRC[](\mathds{1}) \geq 0.$$
\end{definition}

\begin{lemma} \label{lem:tomega vanishes iff it vanishes on 1}
    We have $\tomegaRC[] \equiv 0$ if and only if $\lamRC = 0$
\end{lemma}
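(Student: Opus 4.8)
The statement is an elementary fact about positive linear functionals, and the proof will exploit that $D^{RC}_{s_0}$ is a projector (self-adjoint, idempotent). The plan is to run a Cauchy--Schwarz argument exactly as in Lemma \ref{lem:absorption of satisfied projectors}, but now without assuming $\omega(D^{RC}_{s_0})=1$. One direction is immediate: if $\tomegaRC[] \equiv 0$ then in particular $\lamRC = \tomegaRC[](\mathds{1}) = 0$ by definition.

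For the converse, suppose $\lamRC = \omega(D^{RC}_{s_0}) = 0$. Fix any $O \in \cstar$. Writing $D := D^{RC}_{s_0}$ and using that $D = D^* = D^2$, we have $\tomegaRC[](O) = \omega(D O D)$. Applying the Cauchy--Schwarz inequality for the positive linear functional $\omega$ to the pair $(D^*, DO) = (D, DO)$ gives
\begin{equation*}
    \abs{\omega(D \cdot DO)}^2 = \abs{\tomegaRC[](O)}^2 \leq \omega(D D^*) \, \omega\big( (DO)^* DO \big) = \omega(D) \, \omega(O^* D O).
\end{equation*}
Since $\omega(D) = \lamRC = 0$, the right-hand side vanishes, hence $\tomegaRC[](O) = 0$. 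As $O \in \cstar$ was arbitrary, $\tomegaRC[] \equiv 0$.

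There is no real obstacle here; the only point to be careful about is that the Cauchy--Schwarz inequality $\abs{\omega(A^*B)}^2 \leq \omega(A^*A)\,\omega(B^*B)$ is being applied with $A = D$ and $B = DO$, using $D^* = D$, so that $\omega(A^*B) = \omega(DDO) = \omega(DO) $... one should instead take $A = D$ and $B = OD$ if one wants $\omega(D O D)$ directly, or simply note $\omega(DOD) = \omega((DO)D)$ and set $A = (DO)^* = O^*D$, $B = D$, giving $\abs{\omega(DOD)}^2 \le \omega(DO O^* D)\,\omega(D)$. Either bracketing works since only the factor $\omega(D) = 0$ matters. I would present the cleanest of these. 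Note this lemma also reproves the positivity assertion implicit in Definition \ref{def:omegaRC components defined}, since $\tomegaRC[](O^*O) = \omega((OD)^* (OD)) \geq 0$.
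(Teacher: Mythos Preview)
Your proposal is correct and follows essentially the same Cauchy--Schwarz argument as the paper. The paper simply takes $A = D$ and $B = OD$ from the start, yielding $\abs{\omega(DOD)}^2 \leq \omega(D)\,\omega(DO^*OD) = 0$ directly, which is the clean bracketing you eventually arrive at after your self-correction.
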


\begin{proof}
    If $\tomegaRC[](\mathds{1}) = 0$ then $\omega(\qds) = 0$ so by Cauchy-Schwarz
    $$\abs{\tomegaRC[](O)}^2 = \abs{\omega( \qds O \qds )}^2 \leq \omega(\qds) \omega(\qds O^* O \qds ) = 0$$
    for any $O \in \cstar$. i.e. $\tomegaRC[] \equiv 0$ if and only if $\tomegaRC[](\mathds{1}) = 0$. But $\tomegaRC[](\mathds{1}) = \omega( D_{s_0}^{RC} ) = \lamRC$, which yields the claim.
\end{proof}

\begin{lemma} \label{lem:omegaRC belongs to S^RC}
    If $\lamRC >0$, we define a linear functional $\omegaRC[]$ by
    $$\omegaRC[](O) := \lamRC^{-1} \,\, \tomegaRC[](O) \qquad \qquad \forall \,\, O \in \cstar.$$
    Then $\omegaRC[](O)$ is a state in $\S_{s_0}^{RC}$.
\end{lemma}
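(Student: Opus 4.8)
The plan is to verify that $\omegaRC[]$ is a well-defined state satisfying the three families of constraints in Definition \ref{def:state spaces}. First, $\omegaRC[]$ is positive: for any $O \in \cstar$ we have $\omegaRC[](O^* O) = \lamRC^{-1} \omega(\qds O^* O \qds) = \lamRC^{-1} \omega((O\qds)^*(O\qds)) \geq 0$ since $\omega$ is positive and $\qds$ is self-adjoint. Normalization is immediate from the definition: $\omegaRC[](\mathds{1}) = \lamRC^{-1} \tomegaRC[](\mathds{1}) = \lamRC^{-1} \lamRC = 1$. Hence $\omegaRC[]$ is a state.

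Next I would check constraint \eqref{eq:cons1}, i.e. $\omegaRC[](A_v) = \omegaRC[](B_f) = 1$ for all $v \in \dlatticevert$ and $f \in \dlatticeface$. Since $\omega \in \overline{\S}_{\site}$ we have $\omega(A_v) = 1$, so by Lemma \ref{lem:absorption of satisfied projectors}, $\omega(A_v X) = \omega(X A_v) = \omega(X)$ for all $X \in \cstar$. Moreover, by Lemma \ref{lem:qdcommute} the projector $\qds$ commutes with $A_v$ and $B_f$ for $v \in \dvregion$, $f \in \dfregion$; to extend this commutation to all of $\dlatticevert, \dlatticeface$ I would use that $\qds$ is supported near $\site$ (it is a local operator) so it commutes with $A_v, B_f$ for all but finitely many sites, and combine with Lemma \ref{lem:qdcommute} by taking $n$ large. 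Then $\omegaRC[](A_v) = \lamRC^{-1} \omega(\qds A_v \qds) = \lamRC^{-1}\omega(A_v \qds \qds) = \lamRC^{-1}\omega(A_v \qds) = \lamRC^{-1}\omega(\qds) = \lamRC^{-1}\lamRC = 1$, using $\qds^2 = \qds$ and the absorption Lemma. The computation for $B_f$ is identical.

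Finally I would check constraint \eqref{eq:cons2}, i.e. $\omegaRC[](D^{RC}_{\site}) = 1$. This is immediate: $\omegaRC[](D^{RC}_{\site}) = \lamRC^{-1}\omega(\qds \qds \qds) = \lamRC^{-1}\omega(\qds) = \lamRC^{-1}\lamRC = 1$, again using that $\qds$ is a projector. Collecting these, $\omegaRC[] \in \S_{\site}^{RC}$, which is the claim.

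I do not expect a serious obstacle here; the statement is essentially the standard fact that conditioning a state on the event of a satisfied projector yields a state supported on that projector's range. The only mild subtlety is justifying that $\qds$ commutes with $A_v$ and $B_f$ for \emph{all} $v \in \dlatticevert$ and $f \in \dlatticeface$ (not just those in the finite regions $\dvregion, \dfregion$), which follows from locality of $\qds$ together with Lemma \ref{lem:qdcommute}; and ensuring that the absorption argument via Lemma \ref{lem:absorption of satisfied projectors} is applied with the projectors on the correct side, which the commutation handles.
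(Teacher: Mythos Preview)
Your proof is correct and follows essentially the same approach as the paper: positivity and normalization, then the absorption Lemma together with commutation of $\qds$ with $A_v, B_f$ to verify \eqref{eq:cons1}, and finally \eqref{eq:cons2} from $\qds$ being a projector. One remark: the ``mild subtlety'' you raise is not actually present, since Lemma~\ref{lem:qdcommute} already asserts that $D_{\site}^{RC}$ commutes with $A_v$ and $B_f$ for \emph{all} $v \neq v_0$ and $f \neq f_0$, i.e.\ for all of $\dlatticevert$ and $\dlatticeface$, so no separate locality argument is needed.
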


\begin{proof}
    We first show that $\omegaRC[]$ is a state. The linear functional $\tomegaRC[]$ is positive by construction and
    $$\omegaRC[](\mathds{1}) = \lamRC^{-1} \, \tomegaRC[](\mathds{1}) = 1,$$
    so $\omegaRC[]$ is normalized. We conclude that $\omegaRC[]$ is indeed a state.

    Let us now check that $\omega^{RC} \in \S_{s_0}^{RC}$. First we note that
    $$\omegaRC[](D_{s_0}^{RC}) = \lamRC^{-1} \, \tomegaRC[](D_{s_0}^{RC}) = 1$$
    because $D_{s_0}^{RC}$ is an orthogonal projector. Furthermore, since the projectors $A_v$ commute with $D_{s_0}^{RC}$ for all $v \in \dvregion$ (Lemma \ref{lem:qdcommute}), we have
    $$\omegaRC[](A_v) = \lamRC^{-1} \, \tomegaRC[](A_v) = \lamRC^{-1} \, \omega\big( D_{s_0}^{RC} A_v D_{s_0}^{RC} \big) = \lamRC^{-1} \, \omega( D_{s_0}^{RC} A_v ) = \lamRC^{-1} \omega(D_{s_0}^{RC}) = 1$$
    where we used $\omega(A_v) = 1$ and Lemma \ref{lem:absorption of satisfied projectors}. In the same way we can show that $\omegaRC[](B_f) = 1$ for all $f \in \dfregion$. We conclude that $\omegaRC[] \in \S_{s_0}^{RC}$, as required.
\end{proof}

\begin{lemma} \label{lem:decomposition of omega}
    For any state $\omega \in \overline{\mathcal{S}}_{\site}$, we have 
    $$\omega = \quad \sum_{\mathclap{RC \, : \, \lamRC > 0}} \lamRC \, \omegaRC[]$$
    where the states $\omegaRC[]$ are those defined in Lemma \ref{lem:omegaRC belongs to S^RC}.
\end{lemma}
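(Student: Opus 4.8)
The plan is to prove directly that $\omega(O) = \sum_{RC} \tilde{\omega}^{RC}(O)$ for every $O \in \cstar$, where the sum runs over all (finitely many) irreducible representations $RC$ of $\caD(G)$, and then to replace each term by $\lambda_{RC}\,\omega^{RC}$, discarding the ones with $\lambda_{RC} = 0$. The two ingredients that drive this are: (i) the $D_{\site}^{RC}$ are projectors satisfying $\sum_{RC} D_{\site}^{RC} = \mathds{1}$ (Lemma \ref{lem:DRCprops}), and (ii) since $\omega \in \overline{\S}_{\site}$, the projectors $D_{\site}^{RC}$ are ``central with respect to $\omega$'' in the sense that $\omega(O D_{\site}^{RC}) = \omega(D_{\site}^{RC} O)$ for all $O \in \cstar$ — this is exactly the last assertion of Lemma \ref{lem:projector Lemma}.

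Given these, I would argue as follows. Fix $O \in \cstar$ and $RC$. Applying the centrality identity (ii) to the observable $D_{\site}^{RC} O \in \cstar$ moves the trailing projector to the front, and then idempotency of $D_{\site}^{RC}$ collapses the product:
$$\tilde{\omega}^{RC}(O) = \omega\big( D_{\site}^{RC}\, O\, D_{\site}^{RC} \big) = \omega\big( D_{\site}^{RC}\,(D_{\site}^{RC} O) \big) = \omega\big( D_{\site}^{RC} O \big).$$
Since there are only finitely many irreducible representations $RC$ of $\caD(G)$, I may sum this over all $RC$ and use linearity of $\omega$ together with $\sum_{RC} D_{\site}^{RC} = \mathds{1}$ to obtain $\sum_{RC} \tilde{\omega}^{RC}(O) = \omega\big( (\sum_{RC} D_{\site}^{RC})\, O \big) = \omega(O)$, so $\omega = \sum_{RC} \tilde{\omega}^{RC}$ as positive linear functionals.

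Finally I would do the bookkeeping: by Lemma \ref{lem:tomega vanishes iff it vanishes on 1} the functional $\tilde{\omega}^{RC}$ vanishes identically whenever $\lambda_{RC} = 0$, while for $\lambda_{RC} > 0$ we have $\tilde{\omega}^{RC} = \lambda_{RC}\,\omega^{RC}$ by the very definition of $\omega^{RC}$ in Lemma \ref{lem:omegaRC belongs to S^RC}. Substituting into $\omega = \sum_{RC} \tilde{\omega}^{RC}$ gives $\omega = \sum_{RC\,:\,\lambda_{RC} > 0} \lambda_{RC}\,\omega^{RC}$, which is the claim. There is no genuine obstacle here; the only points that require a little care are (a) using the centrality identity in the global form valid for \emph{all} $O \in \cstar$ — the final sentence of Lemma \ref{lem:projector Lemma}, rather than its version restricted to $\cstar[{\eregion[n]}]$ — and (b) noting that the index set of irreducible representations of $\caD(G)$ is finite, so that linearity of $\omega$ applies term by term to the sum.
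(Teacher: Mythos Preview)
Your proof is correct and follows essentially the same approach as the paper: both use the resolution of identity $\sum_{RC} D_{\site}^{RC} = \mathds{1}$ from Lemma \ref{lem:DRCprops}, the centrality identity $\omega(O D_{\site}^{RC}) = \omega(D_{\site}^{RC} O)$ from Lemma \ref{lem:projector Lemma}, and the bookkeeping via Lemma \ref{lem:tomega vanishes iff it vanishes on 1}. The only cosmetic difference is the order of operations --- the paper first expands $\omega(O) = \sum_{RC} \omega(O D_{\site}^{RC})$ and then inserts the second projector, whereas you first collapse $\omega(D_{\site}^{RC} O D_{\site}^{RC})$ to $\omega(D_{\site}^{RC} O)$ and then sum.
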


\begin{proof}
    Using the decomposition $\mathds{1} = \sum_{RC} D_{s_0}^{RC}$ (Lemma \ref{lem:DRCprops}) and Lemma \ref{lem:projector Lemma}, we find
    $$
        \omega(O) = \sum_{RC} \omega \big( \, O \, D_{\site}^{RC}  \big) = \sum_{RC} \omega \big( D_{s_0}^{RC} \, O \, D_{s_0}^{RC} \big) = \sum_{RC \, : \, \lamRC > 0} \lamRC \, \omegaRC[](O)
    $$
    where in the last step we used the Definition of $\omegaRC[]$ and Lemma \ref{lem:tomega vanishes iff it vanishes on 1}.
\end{proof}

It remains to show that the states $\omegaRC[]$ belong to $\pi^{RC}$.

\begin{lemma} \label{lem:omega in SRC belongs to piRC}
    Any state $\omega \in \S_{\site}^{RC}$ belongs to $\pi^{RC}$
\end{lemma}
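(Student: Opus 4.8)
The plan is to show that any state $\omega \in \S_{\site}^{RC}$ has the \emph{same} restriction to each finite volume algebra $\cstar[{\eregion[m]}]$ as the pure state $\omega_{\site}^{RC;(1,1)}$ does, up to being supported in the span of the $\ket{\eta_n^{RC;uv}}$, and then conclude via Proposition \ref{prop:local spaces spanned by etas} together with the transport Lemma \ref{lem:transport of anyons}. More precisely, the key observation is that $\omega \in \S_{\site}^{RC}$ means $\omega$ satisfies constraints \eqref{eq:cons1} and $\omega(D_{\site}^{RC}) = 1$, so its restriction $\omega_n$ to $\cstar[\eregion]$ satisfies the local constraints defining $\VRC$. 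Decomposing $\omega_n$ into finitely many pure components $\omega_n^{(\kappa)}$ with vector representatives $\ket{\Omega_n^{(\kappa)}}$, Lemma \ref{lem:pure components projector Lemma} gives $A_v \ket{\Omega_n^{(\kappa)}} = B_f \ket{\Omega_n^{(\kappa)}} = D_{\site}^{RC}\ket{\Omega_n^{(\kappa)}} = \ket{\Omega_n^{(\kappa)}}$ for all $v \in \dvregion$, $f \in \dfregion$, i.e.\ $\ket{\Omega_n^{(\kappa)}} \in \VRC$. By Proposition \ref{prop:local spaces spanned by etas}, $\VRC = \Span{\ket{\eta_n^{RC;uv}} : u \in I_{RC}, v \in I'_{RC}}$.

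The second step is to show that expectation values in states of $\VRC$, when evaluated on observables in $\cstar[{\eregion[n-1]}]$, are governed entirely by the $u$-label structure — this is essentially Lemma \ref{lem:purerest}, which says $\inner{\eta_n^{uv}}{O\,\eta_n^{u'v'}} = \delta_{vv'}\inner{\eta_n^{u'v''}}{O\,\eta_n^{u'v''}}$ for $O \in \cstar[{\eregion[n-1]}]$. Expanding $\ket{\Omega_n^{(\kappa)}} = \sum_{uv} c^{(\kappa)}_{uv}\ket{\eta_n^{RC;uv}}$ and using this, one finds that the restriction of $\omega_n^{(\kappa)}$ to $\cstar[{\eregion[n-1]}]$ is a convex combination $\sum_u p^{(\kappa)}_u\, \eta_{n-1}^{RC;u}$ of the states $\qdstrest[n-1]$ with weights $p^{(\kappa)}_u = \sum_v |c^{(\kappa)}_{uv}|^2$. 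Averaging over $\kappa$, the restriction of $\omega$ itself to $\cstar[{\eregion[n-1]}]$ is $\sum_u p_u\, \eta_{n-1}^{RC;u}$ for some probability vector $(p_u)_{u \in I_{RC}}$ that, crucially, does not depend on $n$ (since the restrictions are consistent as $n$ grows). Hence $\omega = \sum_{u \in I_{RC}} p_u\, \omega_{\site}^{RC;u}$ on all local observables, and by density on all of $\cstar$.

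Finally, since each $\omega_{\site}^{RC;u}$ is a vector state of $\pi^{RC}$ (Lemma \ref{lem:statesbelongingtopiRC}), the convex combination $\omega = \sum_u p_u\, \omega_{\site}^{RC;u}$ is a $\pi^{RC}$-normal state, represented by the density operator $\rho = \sum_u p_u \ket{\Omega_{\site}^{RC;u}}\bra{\Omega_{\site}^{RC;u}}$; that is, $\omega$ belongs to $\pi^{RC}$ in the sense of Definition \ref{def:state belongs to a representation}. I expect the main subtlety to be the bookkeeping in the second step: one must be careful that the decomposition coefficients $p_u$ obtained at level $n$ agree with those at level $n+1$ (so that the limiting functional is well-defined), and that the cross-terms between different $v$-labels genuinely drop out on $\cstar[{\eregion[n-1]}]$ rather than merely on some smaller algebra — both of these follow from Lemma \ref{lem:purerest}, but invoking it at the right volume ($n$ versus $n-1$) is where care is needed. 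An alternative, possibly cleaner route avoiding the explicit weights: show directly that $\omega \circ \mu$-type arguments or that $\omega$ composed with the label-changing operators of Definition \ref{def:label changers} produces states in $\S_{\site}^{RC;u}$, hence equal to $\omega_{\site}^{RC;u}$ by Proposition \ref{prop:characterisation of S^RCu}, and then reassemble; but the decomposition approach above is the most direct given the machinery already in place.
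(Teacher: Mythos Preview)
Your strategy follows the paper's closely, but there is a genuine gap in the second step. You claim that the restriction of $\omega_n^{(\kappa)}$ to $\cstar[{\eregion[n-1]}]$ is the diagonal mixture $\sum_u p^{(\kappa)}_u\,\eta_{n-1}^{RC;u}$ with $p^{(\kappa)}_u=\sum_v|c^{(\kappa)}_{uv}|^2$. This is not what Lemma~\ref{lem:purerest} gives: that lemma only yields a $\delta_{vv'}$ in the boundary label, so for $O\in\cstar[{\eregion[n-1]}]$ one has
\[
\omega_n^{(\kappa)}(O)=\sum_{u_1,u_2}\Bigl(\sum_v \overline{c^{(\kappa)}_{u_1 v}}\,c^{(\kappa)}_{u_2 v}\Bigr)\,\inner{\eta_n^{RC;u_1 v_0}}{O\,\eta_n^{RC;u_2 v_0}},
\]
and the off-diagonal terms $u_1\neq u_2$ do \emph{not} vanish in general (take for instance $O=A_{\site}^{RC;u_1 u_2}D_{\site}^{RC;u_2}$, which maps $\ket{\eta_n^{RC;u_2 v_0}}$ to $\ket{\eta_n^{RC;u_1 v_0}}$). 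Consequently the conclusion $\omega=\sum_u p_u\,\omega_{\site}^{RC;u}$ is false for generic $\omega\in\S_{\site}^{RC}$.

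The paper repairs exactly this point: it keeps the full reduced matrix $\tilde\rho_{u_1 u_2}=\sum_v\rho_{u_1 v;u_2 v}$, diagonalises it by a unitary $U$, and shows that $\omega$ is a mixture of \emph{rotated} pure states $\tilde\omega^{RC;u'}$ corresponding to vectors $\sum_u (U^*)_{u'u}\,\ket{\Omega_{\site}^{RC;u}}\in\hilb^{RC}$. These are still vector states of $\pi^{RC}$, so the conclusion survives. The check that the eigenvalues $\mu_{u'}$ (and implicitly the diagonalising unitary) are independent of $n$ is done by evaluating $\omega$ on the label-changer combinations supported near $\site$, which is precisely the bookkeeping you flagged as delicate. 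Your outline becomes correct once you carry the off-diagonal $\tilde\rho_{u_1 u_2}$ through and diagonalise rather than discard.
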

\begin{proof}
    The restriction $\omega_{n}$ of $\omega$ to $\cstar[{\eregion}]$ satisfies
    $$ 1 = \omega_{n}(A_v) = \omega_{n}(B_f) = \omega_{n}(D_{\site}^{RC}) $$ 
    for all $v \in \dregion$ and all $f \in \fregion$. Let $\omega_{n} = \sum_{\kappa} \lambda_{\kappa} \omega_{n}^{(\kappa)}$ be the convex decomposition of $\omega_{n}$ into its pure components $\omega_{n}^{(\kappa)}$, and let $\ket{\Omega_{n}^{(\kappa)}} \in \caH_{n}$ be unit vectors such that
    $$ \omega_{n}^{(\kappa)}(O) = \langle \Omega_{n}^{(\kappa)}, O \, \Omega_{n}^{(\kappa)} \rangle $$
    for all $O \in \cstar[{\eregion}]$. Then Lemma \ref{lem:pure components projector Lemma} yields
    $$ \ket{\Omega_{n}^{(\kappa)}} = A_v \ket{\Omega_{n}^{(\kappa)}} = B_f \ket{\Omega_{n}^{(\kappa)}} = D_{\site}^{RC} \ket{\Omega_{n}^{(\kappa)}} $$
    for all $v \in \dvregion$ and all $f \in \dfregion$. By Definition \ref{def:local constraints} of the space $\VRC$ we then have $\ket{\Omega_{n}^{(\kappa)}} \in \VRC$ for all $\kappa$. By Proposition \ref{prop:local spaces spanned by etas} it follows that
    $$ \ket{\Omega_{n}^{(\kappa)}} = \sum_{u, v} \, c^{(\kappa)}_{uv} \, \ket{ \eta_{n}^{RC;uv} } $$
    for some coefficients $c_{uv}^{(\kappa)} \in \C$. It follows that
    $$ \omega_{n}(O) = \Tr \lbrace \rho O \rbrace $$
    for any $O \in \cstar[\eregion]$, where $\rho$ is the density matrix on $\caH_n$ given by
    $$\rho = \sum_{u_1 v_1 ; u_2 v_2} \rho_{u_1 v_1 ; u_2 v_2} {\ket{\qdstpure[RC;u_1 v_1]}} \bra{\qdstpure[RC;u_2 v_2]}$$
    with $\rho_{u_1 v_1 ; u_2 v_2} = \sum_{\kappa} \, \lambda_{\kappa} \, c_{u_1 v_1}^{(\kappa)} ( c_{u_2 v_2}^{(\kappa)} )^*$.

    Let $O \in \cstar[{\eregion[n-1]}]$. From Lemma \ref{lem:purerest}, we have that
    $$\inner{\qdstpure[RC;u_1 v_1]}{O \qdstpure[RC;u_2 v_2]} = \delta_{v_1, v_2} \inner{\qdstpure[RC;u_1 v_1]}{O \qdstpure[RC;u_2 v_1]}$$
    is independent of $v_1$, so for any choice $v_0$ we have
    $$ \Tr \lbrace \rho O \rbrace = \sum_{u_1 u_2} \, \left( \sum_v  \rho_{u_1 v;u_2 v}\right) \langle \eta_{n}^{RC;u_1 v_0}, O \, \eta_{n}^{RC;u_2 v_0} \rangle.$$
    Now the numbers
    $$\tilde \rho_{u_1 u_2} := \sum_v \rho_{u_1 v;u_2 v} $$
    are the components of a density matrix $\tilde \rho$ which is a partial trace of $\rho_{u_1 v_1;u_2 v_2}$ over the boundary labels $v_1, v_2$. It follows that there is a basis in which the density matrix $\tilde \rho$ is diagonal, i.e. there is a unitary matrix $U$ with components $U_{uu'} \in \mathbb{C}$ such that
    $$\sum_{u_1 u_2}  (U^*)_{u'_1 u_1} \, \tilde \rho_{u_1 u_2}  \, U_{u_2 u_2'}  = \mu_{u_1'}^{(n)} \delta_{u_1' u_2'}$$
    for non-negative numbers $\mu^{(n)}_{u'}$ that sum to one. 

    We find
    $$\omega_n(O) = \Tr \lbrace \rho O \rbrace = \sum_{u'}  \mu^{(n)}_{u'} \, \tilde \eta_{n}^{RC;u' v_0}(O) $$
    where the $\tilde \eta_{n}^{RC;u' v_0}$ are pure states given by
    $$ \tilde \eta_{n}^{RC;u' v_0}(O) = \langle \tilde \eta_{n}^{RC;u' v_0}, O \, \tilde \eta_{n}^{RC;u'v_0} \rangle $$
    with
    $$ | \tilde \eta_{n}^{RC;u' v_0} \rangle :=  \sum_{u} (U^*)_{u' u} \, | \eta_{n}^{RC;u v_0} \rangle.$$

    Using Lemma \ref{lem:restriction yields eta^RCu}, Definition \ref{def:uniform string net superpositions}, and Lemma \ref{lem:aconverter} we find for any $u'' \in I_{RC}$ that
    $$
    \tilde \eta_{n}^{RC;u' v_0} \big( \sum_{w_1, w_2} D_{s_0}^{RC;w_1} \, (U^*)_{u'' \, w_1} \,  U_{w_2 \, u''} \, D_{s_0}^{RC;w_2} \big) = \delta_{u', u''}.
    $$
    Since the $D_{s_0}^{RC;w}$ are supported in $\cstar[{\eregion[n-1]}]$ for all $n \geq 3$ we therefore find that $\mu_{u''}^{(n)} = \mu_{u''}^{(m)}$ for all $u'' \in I_{RC}$ and all $n, m \geq 3$, \ie the numbers $\mu_u^{(n)}$ do not depend on $n$ if $n \geq 3$. Let us write $\mu_u = \mu_u^{(n)}$ for all $n \geq 3$.

    For any $u \in I_{RC}$, let $A^u := A_{\site}^{RC;u (1, 1)}$ be the label changer operator from Definition \ref{def:label changers}, and define vectors
    $$\ket{\Omega_{\site}^{RC;u}} := \pi^{RC} \big( A^u \big) \ket{\Omega_{\site}^{RC;(1, 1)}}.$$
    These are unit vectors representing the states $\omega_{\site}^{RC;u}$ in the representation $\pi^{RC}$. Define pure states $\tilde \omega^{RC;u'}$ on $\cstar$ by
    $$\tilde \omega^{RC;u'}(O) := \langle \widetilde \Omega^{RC;u'}, \pi^{RC}(O) \, \widetilde \Omega^{RC;u'} \rangle$$
    corresponding to the GNS vectors
    $$|\widetilde \Omega^{RC;u'} \rangle := \sum_{u} (U^*)_{u' u} \, | \Omega^{RC;u} \rangle \in \hilb^{RC}.$$
    
    Then we find
    $$\tilde \omega^{RC;u'}(O) = \omega_{\site}^{RC;(1, 1)} \big( \sum_{u_1, u_2}  U_{u_1 u'} (U^*)_{u' u_2} \, (A^{u_1})^* O A^{u_2} \big)$$
    for any $O \in \cstar$. Using Lemma \ref{lem:restriction yields eta^RCu}, Definition \ref{def:uniform string net superpositions} Lemma \ref{lem:aconverter}, and the fact that the $A^u$'s are supported near $\site$, for any $O$ supported on $\eregion[n-1]$ we get:
    $$ \tilde \omega^{RC;u'}(O) = \eta_{n}^{RC;u' v_0} \big( \sum_{u_1, u_2}  U_{u_1 u'} (U^*)_{u' u_2} \, (A^{u_1})^* O A^{u_2} \big) = \tilde \eta_{n}^{RC;u'v_0}(O). $$
    
    We conclude that
    $$\omega(O) = \Tr \lbrace \rho O \rbrace = \sum_{u'} \, \mu_{u'} \, \tilde \omega^{RC;u'}(O)$$
    for all $n \geq 3$ and all $O \in \cstar[{\eregion[n-1]}]$. It follows that we have an equality of states
    $$\omega = \sum_{u'} \mu_{u'} \tilde \omega^{RC;u'},$$
    which expresses $\omega$ as a finite mixture of pure states belonging to $\pi^{RC}$. It follows that $\omega$ also belongs to $\pi^{RC}$, as was to be shown.
\end{proof}

The results obtained above combine to prove the following proposition.
\begin{proposition}
\label{prop:omegadecomp} 
    Let $\omega \in \overline{\S}_{\site}$ for some site $\site$. Then $\omega$ has a convex decomposition
    $$\omega = \sum_{RC} \, \lambda_{RC} \,\, \omegaRC[]$$
    into states $\omegaRC[]$ that belong to the representation $\pi^{RC}$. In particular, if $\omega$ is pure then $\omega$ belongs to $\pi^{RC}$ for some definite $RC$.
\end{proposition}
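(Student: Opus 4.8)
The plan is to assemble Proposition~\ref{prop:omegadecomp} directly from the three Lemmas that precede it, namely Lemma~\ref{lem:decomposition of omega}, Lemma~\ref{lem:omegaRC belongs to S^RC}, and Lemma~\ref{lem:omega in SRC belongs to piRC}. So the proof is essentially a bookkeeping argument rather than a new construction. First I would recall, for the fixed state $\omega \in \overline{\S}_{\site}$, the positive linear functionals $\tomegaRC[](O) = \omega(\qds O \qds)$ and the non-negative numbers $\lamRC = \omega(D^{RC}_{s_0})$ from Definition~\ref{def:omegaRC components defined}. By Lemma~\ref{lem:decomposition of omega} we have the identity $\omega = \sum_{RC \,:\, \lamRC > 0} \lamRC\, \omegaRC[]$, where $\omegaRC[] = \lamRC^{-1}\tomegaRC[]$. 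If I want a sum over \emph{all} $RC$ rather than only those with $\lamRC > 0$, I simply define $\omegaRC[]$ arbitrarily (say as $\omega_{\site}^{RC;(1,1)}$) whenever $\lamRC = 0$; since those terms carry coefficient $\lambda_{RC} = \lamRC = 0$ they do not affect the sum, and Lemma~\ref{lem:tomega vanishes iff it vanishes on 1} guarantees that dropping them loses nothing. The coefficients $\lambda_{RC} = \lamRC$ are non-negative and sum to one because $\sum_{RC} D_{s_0}^{RC} = \mathds{1}$ (Lemma~\ref{lem:DRCprops}) and $\omega(\mathds{1}) = 1$, so this is a genuine convex decomposition.

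Next I would verify that each $\omegaRC[]$ with $\lamRC > 0$ belongs to $\pi^{RC}$. By Lemma~\ref{lem:omegaRC belongs to S^RC}, $\omegaRC[] \in \S_{s_0}^{RC}$; and by Lemma~\ref{lem:omega in SRC belongs to piRC}, any state in $\S_{\site}^{RC}$ belongs to the representation $\pi^{RC}$. Chaining these two gives that $\omegaRC[]$ belongs to $\pi^{RC}$, which is exactly the content of the first sentence of the Proposition.

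Finally, for the ``in particular'' clause, suppose $\omega$ is pure. A pure state that is written as a finite convex combination $\omega = \sum_{RC} \lambda_{RC}\, \omegaRC[]$ of states must, by extremality of pure states in the state space, coincide with each $\omegaRC[]$ for which $\lambda_{RC} > 0$; in particular there is at least one such $RC$ (since the coefficients sum to $1$), and for that $RC$ we have $\omega = \omegaRC[]$. Since $\omegaRC[]$ belongs to $\pi^{RC}$, so does $\omega$. I expect no real obstacle here: all the substantive work has been done in the preceding lemmas, and this proof is just the recombination step. The only point requiring a word of care is the harmless convention for terms with $\lamRC = 0$, so that the stated sum ranges over all $RC$; I would handle that explicitly as above rather than silently.

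\begin{proof}
    By Lemma~\ref{lem:decomposition of omega} we have $\omega = \sum_{RC \,:\, \lamRC > 0} \lamRC\, \omegaRC[]$ with the states $\omegaRC[]$ of Lemma~\ref{lem:omegaRC belongs to S^RC}. Set $\lambda_{RC} := \lamRC$ for all $RC$, and for those $RC$ with $\lamRC = 0$ define $\omegaRC[] := \omega_{\site}^{RC;(1,1)}$; since such terms carry zero coefficient, $\omega = \sum_{RC} \lambda_{RC}\, \omegaRC[]$. The coefficients $\lambda_{RC} = \omega(D_{s_0}^{RC})$ are non-negative and, by $\sum_{RC} D_{s_0}^{RC} = \mathds{1}$ (Lemma~\ref{lem:DRCprops}) together with $\omega(\mathds{1}) = 1$, sum to one, so this is a convex decomposition. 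For each $RC$ with $\lambda_{RC} > 0$, Lemma~\ref{lem:omegaRC belongs to S^RC} gives $\omegaRC[] \in \S_{s_0}^{RC}$, and Lemma~\ref{lem:omega in SRC belongs to piRC} then gives that $\omegaRC[]$ belongs to $\pi^{RC}$. Finally, if $\omega$ is pure, then being a finite convex combination of states it must equal $\omegaRC[]$ for every $RC$ with $\lambda_{RC} > 0$; at least one such $RC$ exists, and for it $\omega = \omegaRC[]$ belongs to $\pi^{RC}$.
\end{proof}
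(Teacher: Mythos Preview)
Your proof is correct and follows essentially the same route as the paper: both invoke Lemma~\ref{lem:decomposition of omega} for the decomposition, Lemma~\ref{lem:omegaRC belongs to S^RC} to place each $\omegaRC[]$ in $\S_{\site}^{RC}$, and Lemma~\ref{lem:omega in SRC belongs to piRC} to conclude it belongs to $\pi^{RC}$. You are slightly more thorough than the paper in explicitly handling the $\lambda_{RC}=0$ terms and in spelling out the extremality argument for the ``in particular'' clause, which the paper leaves implicit.
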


\begin{proof}
    Lemma \ref{lem:decomposition of omega} provides the decomposition
    $$\omega = \sum_{RC : \lambda_{RC} > 0} \lambda_{RC} \, \omega^{RC}$$
    with positive $\lambda_{RC}$ and states $\omega^{RC}$ defined in Definition \ref{def:omegaRC components defined} and Lemma \ref{lem:omegaRC belongs to S^RC}. Moreover, Lemma \ref{lem:omegaRC belongs to S^RC} states that $\omega^{RC} \in \S_{\site}^{RC}$ for each $RC$ with $\lambda_{RC} > 0$. It then follows from Lemma \ref{lem:omega in SRC belongs to piRC} that the states $\omega^{RC}$ belong to $\pi^{RC}$. This concludes the proof.
\end{proof}

\subsection{Classification of anyon sectors}
We finally put together the results obtained above to prove our main result, Theorem \ref{thm:main theorem}, which we restate here for convenience.

\begin{theorem} \label{thm:classification theorem}
    For each irreducible representation $RC$ of $\caD(G)$ there is an anyon representation $\pi^{RC}$. The representations $\{ \pi^{RC} \}_{RC}$ are pairwise disjoint, and any anyon representation is unitarily equivalent to one of them. 
\end{theorem}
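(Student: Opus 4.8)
The plan is to assemble the statement from the three families of results already established in the excerpt: existence of the $\pi^{RC}$ as anyon representations, their pairwise disjointness, and completeness.

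\medskip

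\noindent\textbf{Existence.} For each irreducible representation $RC$ of $\caD(G)$, Definition~\ref{def:piRC} produces a GNS representation $\pi^{RC}$ of the pure state $\omega_{\bsite}^{RC;(1,1)}$, whose existence and purity is guaranteed by Proposition~\ref{prop:characterisation of S^RCu}. Proposition~\ref{prop:the pi^RC are anyon representations} states precisely that each $\pi^{RC}$ satisfies the superselection criterion of Definition~\ref{def:anyonsector}, hence is an anyon representation. This is the first sentence of the theorem.

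\medskip

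\noindent\textbf{Disjointness.} Lemma~\ref{lem:piRCaredisjoint} states that $\pi^{RC}$ and $\pi^{R'C'}$ are disjoint whenever $RC \neq R'C'$. This gives the second sentence verbatim.

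\medskip

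\noindent\textbf{Completeness.} Let $\pi$ be an arbitrary anyon representation. By Proposition~\ref{prop:equivtofinite} there is an $n$ and a pure state $\omega$ belonging to $\pi$ with $\omega(A_v)=\omega(B_f)=1$ for all $A_v,B_f \in \cstar[B_n^c]$. By Proposition~\ref{prop:sweeping}, for any chosen site $s_*$ there is a pure state $\psi \in \overline{\S}_{s_*}$ unitarily equivalent to $\omega$; since $\omega$ is a vector state of the irreducible representation $\pi$ and $\psi$ is unitarily equivalent to $\omega$, the state $\psi$ also belongs to $\pi$. By Proposition~\ref{prop:omegadecomp}, since $\psi$ is a \emph{pure} state in $\overline{\S}_{s_*}$, it belongs to $\pi^{RC}$ for some definite $RC$. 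Thus the irreducible representations $\pi$ and $\pi^{RC}$ share a pure state, so by irreducibility of both they are unitarily equivalent. The main obstacle is packaged entirely inside the cited propositions — in particular Proposition~\ref{prop:sweeping} (moving finitely many constraint violations onto a single site using the sweeping identities of Lemma~\ref{lem:sweep identities}) and Proposition~\ref{prop:omegadecomp} (decomposing a state in $\overline{\S}_{s_*}$ using the topological charge detectors $K_{\bdy}^{RC}$ and Lemma~\ref{lem:projector Lemma}); at the level of this final theorem the argument is just a bookkeeping composition of the three ingredients above.

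\begin{proof}
    The existence of an anyon representation $\pi^{RC}$ for each irreducible representation $RC$ of $\caD(G)$ is the content of Definition~\ref{def:piRC} together with Proposition~\ref{prop:the pi^RC are anyon representations}, which shows that each $\pi^{RC}$ indeed satisfies the superselection criterion of Definition~\ref{def:anyonsector}. Pairwise disjointness of the family $\{\pi^{RC}\}_{RC}$ is Lemma~\ref{lem:piRCaredisjoint}.

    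It remains to show that any anyon representation $\pi : \cstar \rightarrow \caB(\hilb)$ is unitarily equivalent to one of the $\pi^{RC}$. By Proposition~\ref{prop:equivtofinite} there is an $n > 0$ and a pure state $\omega$ belonging to $\pi$ such that $\omega(A_v) = \omega(B_f) = 1$ for all $v, f$ with $A_v, B_f \in \cstar[B_n^c]$. Fix an arbitrary site $s_*$. By Proposition~\ref{prop:sweeping} there is a pure state $\psi \in \overline{\S}_{s_*}$ that is unitarily equivalent to $\omega$. Since $\omega$ is a vector state of the irreducible representation $\pi$ and $\psi$ is unitarily equivalent to $\omega$, the state $\psi$ also belongs to $\pi$, and its GNS representation is unitarily equivalent to $\pi$.

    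Now $\psi$ is a pure state in $\overline{\S}_{s_*}$, so Proposition~\ref{prop:omegadecomp} applies and shows that $\psi$ belongs to $\pi^{RC}$ for some definite irreducible representation $RC$ of $\caD(G)$. Thus $\psi$ is a vector state of the irreducible representation $\pi^{RC}$ as well as of the irreducible representation $\pi$ (being the GNS state of its own GNS representation, which is unitarily equivalent to $\pi$). A pure state that is a vector state of two irreducible representations induces GNS representations unitarily equivalent to each of them, so $\pi$ is unitarily equivalent to $\pi^{RC}$. This proves that every anyon representation is unitarily equivalent to one of the $\pi^{RC}$, completing the proof.
\end{proof}
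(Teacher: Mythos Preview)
Your proof is correct and follows exactly the same route as the paper's: existence and disjointness from Proposition~\ref{prop:the pi^RC are anyon representations} and Lemma~\ref{lem:piRCaredisjoint}, and completeness by chaining Propositions~\ref{prop:equivtofinite}, \ref{prop:sweeping}, and \ref{prop:omegadecomp} to produce a pure state shared by $\pi$ and some $\pi^{RC}$. The only difference is that you spell out a bit more explicitly why the state $\psi$ produced by Proposition~\ref{prop:sweeping} still belongs to $\pi$, which the paper leaves implicit.
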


\begin{proof}
    The existence of the pairwise disjoint anyon representations follows immediately from Lemma \ref{lem:piRCaredisjoint} and Proposition \ref{prop:the pi^RC are anyon representations}.

    Let $\pi$ be an anyon representation. By Proposition \ref{prop:equivtofinite}, there is $n \in \N$ such that the anyon representation $\pi$ contains a pure state $\omega$ that satisfies $\omega(A_v) = \omega(B_f) = 1$ for all $A_v, B_f \in \cstar[B_{n}^c]$. For any site $\site$, Proposition \ref{prop:sweeping} then gives us a pure state $\psi \in \overline{\S}_{\site}$ that belongs to $\pi$. Proposition \ref{prop:omegadecomp} shows that this state belongs to an anyon representation $\pi^{RC}$ for some definite $RC$. Since the irreducible representations $\pi$ and $\pi^{RC}$ contain the same pure state, they are unitarily equivalent. This proves the Theorem.
\end{proof}

\section{Discussion and outlook}
\label{sec:conclusion}

In this paper we have fully classified the anyon sectors of Kitaev's quantum double model for an arbitrary finite gauge group $G$ in the infinite volume setting. The proof of the classification contained several ingredients. First, we constructed for each irreducible representation $RC$ of the quantum double algebra a set of pure states $\{ \omega_{s}^{RC;u} \}_{s, u}$ that are all unitarily equivalent to each other. We then showed that the corresponding irreducible GNS representations $\{\pi^{RC}\}_{RC}$ are a collection of disjoint anyon representations. The proof that these representations are anyon representations crucially relied on their identification with `amplimorphism representations'. To show completeness, we proved that any anyon representation of the quantum double models contains one of the states $\omega_s^{RC;u}$, so that any anyon representation is unitarily equivalent to one of the $\pi^{RC}$.

This result is a first step towards integrating the non-abelian quantum double models in the mathematical framework for topological order developed in \cite{Naaijkens2010-aq, Naaijkens2012-fh, Cha2018-ke, Cha2020-rz, Ogata2022-wp}. To complete the story, we would like to work out the fusion rules and braiding statistics of the anyon types we identified. The amplimorphisms of Section \ref{sec:anyon representations} promise to be an ideal tool to carry out this task, see \cite{Naaijkens2015-xj, szlachanyi1993quantum}. We leave this analysis to an upcoming work. A crucial assumption in the general theory of topological order for infinite quantum spin systems is (approximate) Haag duality for cones. This property has been established for abelian quantum double models \cite{Fiedler2015-na, Naaijkens2012-fh}, but not yet for the non-abelian case.

The proofs of purity of the \ffgs{} and of the states $\omega_{s}^{RC;u}$ use the string-net condensation picture \cite{levin2005string}. The same methods can be used to show purity of ground states of other commuting projector models (for example \cite{Bols2023-qg} shows the purity of the double semion \ffgs{} and \cite{Vadnerkar2023-mm} shows purity of the \ffgs{} of the 3d Toric Code). In principle, the same techniques can be used to show that Levin-Wen models \cite{levin2005string} and quantum double models based on (weak) Hopf algebras have a unique frustration free ground state in infinite volume (this has already been done for Levin-Wen models using different techniques in \cite{jones2023local}).

\nocite{*}

\appendix

\section{Ribbon operators, Wigner projections, and amplimorphisms}
\label{app:ribbonprops}

\subsection{Basic properties of ribbon operators, gauge transformations, and flux projectors}

Recall from Section \ref{subsec:preliminary notions} the definitions of ribbons and ribbon operators, as well as the definitions of the gauge transformations $A_s^h$ and flux projectors $B_s^g$, all originally due to Kitaev \cite{Kitaev2003-qr}.

We have the following basic properties of ribbon operators, which can be easily verified. See also Appendix B of \cite{Bombin2007-uw}.

\begin{align}
\label{eq:F elementary props}
    F_\rho^{h,g} F_\rho^{h',g'} = \begin{cases} \delta_{g,g'} F^{h'h,g}_\rho \quad &\text{if  } \, \rho \, \text{ is positive} \\ \delta_{g,g'} F^{hh',g}_\rho \quad &\text{if } \, \rho \, \text{ is negative}  \end{cases} && (F^{h,g}_\rho )^* = F^{\dash{h},g}_\rho.
\end{align}

Let $\rho$ be such that its end sites $s_i = \partial_i \rho = (v_i, f_i)$ satisfy $v_0 \neq v_1$ and $f_0 \neq f_1$. Then if $\rho$ is positive we have
\begin{equation}
    \begin{aligned}
\label{eq:[A,F] and [B,F]}    
    A_{s_0}^k F^{h,g}_\rho = F^{kh\dash{k}, kg}_\rho A_{s_0}^k && \qquad A_{s_1}^k F^{h,g}_\rho = F^{h, g\dash{k}}_\rho A_{s_1}^k \\
    B_{s_0}^k F^{h,g}_\rho = F^{h,g}_\rho B_{s_0}^{hk} && \qquad  B_{s_1}^k F^{h,g}_\rho = F^{h,g}_\rho B_{s_1}^{k\dash{g}\dash{h}g}
    \end{aligned}
\end{equation}
and if $\rho$ is negative
\begin{equation}
    \begin{aligned}
\label{eq:[A,F] and [B,F] negative case}    
    A_{s_0}^k F^{h,g}_\rho = F^{kh\dash{k}, kg}_\rho A_{s_0}^k && \qquad A_{s_1}^k F^{h,g}_\rho = F^{h, g\dash{k}}_\rho A_{s_1}^k \\
    B_{s_0}^k F^{h,g}_\rho = F^{h,g}_\rho B_{s_0}^{kh} && \qquad  B_{s_1}^k F^{h,g}_\rho = F^{h,g}_\rho B_{s_1}^{\dash{g}\dash{h}gk}.
    \end{aligned}
\end{equation}

Importantly, the ribbon is invisible to the gauge transformations and flux projectors away from its endpoints. That is, if $\rho$ is a finite ribbon with $\partial_0 \rho = s_0$ and $\partial_1 \rho = s_1$, and $s = (v, f)$ is such that $v \neq v(s_0), v(s_1)$ while $s' = (v', f')$ is such that $f' \neq f(s_0), f(s_1)$, then
\begin{align}
\label{eq:A,B commute with F except at endpts}
    [F^{h,g}_\rho, A_{s}^k] = 0 = [F^{h,g}_\rho, B_{s'}^l]
\end{align}
for all $g, h, k, l \in G$.

The following properties of the gauge transformations and flux projectors follow immediately from the properties of the ribbon operators listed above.

\begin{align}
\label{eq:Aelementary}
    &A^1_s = \mathds{1}& (A_s^h)^*= A_s^{\dash{h}} && A_s^h A_s^{h'} = A_s^{hh'}\\
    \label{eq:Belementary}
    &\sum_{g \in G} B_s^g = 1 &(B_s^g)^* = B_s^g  && B_s^g B_s^{g'} = \delta_{g,g'} B_s^g\\
    \label{eq:ABonsamesite}
    && A_s^h B_s^g = B^{hg\dash{h}}_s A_s^h &&
\end{align}

If $s \neq s'$ then for any $h,h',g,g' \in G$,
\begin{align}
\label{eq:ABcommuteondiffsites}
    [A_s^h, B_{s'}^g] = [A_s^h, A_s^{h'}] = [B_s^g, B_{s'}^{g'}] = 0
\end{align}

Recall from Section \ref{subsec:preliminary notions} that the projectors $A_v := \frac{1}{|G|} \sum_{h \in G} A_s^h$ and $B_f := B_s^1$ where $v$ is the vertex of $s$ and $f$ is the face of $f$ are well defined. For any vertices $v, v'$ and any faces $f, f'$, we have (\cite{Kitaev2003-qr})
\begin{equation} \label{eq:ABcommproj}
    [A_v, B_f] = [A_v, A_{v'}] = [B_f, B_{f'}] = 0.
\end{equation}

\subsection{Decomposition of $F^{h,g}_\rho$ into $L^h_\rho, T^g_\rho$}

\subsubsection{Basic properties}

Recall from Section \ref{subsec:preliminary notions} the definitions $T^g_\rho := F^{1,g}_\rho$ and $L^h_\rho := \sum_{g \in G} F^{h,g}_\rho$.

\begin{lemma}
    \label{eq:F breaks into LT}
    $F^{h,g}_\rho = L^h_\rho T^g_\rho =  T^g_\rho L^h_\rho$
\end{lemma}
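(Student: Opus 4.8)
The plan is to obtain both identities as an immediate consequence of the elementary multiplication rule \eqref{eq:F elementary props}, with the empty ribbon handled as a trivial special case.

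First I would treat a non-empty ribbon $\rho$, which is then either positive or negative. Writing out the definitions,
$$L^h_\rho T^g_\rho = \Big( \sum_{g' \in G} F^{h,g'}_\rho \Big) F^{1,g}_\rho = \sum_{g' \in G} F^{h,g'}_\rho \, F^{1,g}_\rho .$$
By \eqref{eq:F elementary props} each summand equals $\delta_{g',g}\, F^{h,g}_\rho$: if $\rho$ is positive the product of $F^{h,g'}_\rho$ and $F^{1,g}_\rho$ is $\delta_{g',g}\, F^{1 \cdot h, g}_\rho$, and if $\rho$ is negative it is $\delta_{g',g}\, F^{h \cdot 1, g}_\rho$; either way the first exponent is just $h$. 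Summing over $g'$ collapses the delta and yields $F^{h,g}_\rho$. The computation for $T^g_\rho L^h_\rho = \sum_{g' \in G} F^{1,g}_\rho F^{h,g'}_\rho$ is the mirror image: each summand is again $\delta_{g,g'}\, F^{h,g}_\rho$, so the sum is $F^{h,g}_\rho$.

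Next I would dispose of the empty ribbon $\rho = \epsilon$ directly from the defining relation $F^{h,g}_\epsilon = \delta_{1,g}\mathds{1}$: one has $T^g_\epsilon = F^{1,g}_\epsilon = \delta_{1,g}\mathds{1}$ and $L^h_\epsilon = \sum_{g'} F^{h,g'}_\epsilon = \mathds{1}$, whence $L^h_\epsilon T^g_\epsilon = T^g_\epsilon L^h_\epsilon = \delta_{1,g}\mathds{1} = F^{h,g}_\epsilon$.

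I do not expect a genuine obstacle: the statement is really just bookkeeping on top of \eqref{eq:F elementary props}. The only point that needs a moment's care is that \eqref{eq:F elementary props} is phrased for non-empty ribbons (the ones carrying a positive/negative label), so the empty ribbon must be checked separately, as above.
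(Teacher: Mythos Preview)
Your proof is correct and follows exactly the paper's approach: both simply unwind the definitions of $L^h_\rho$ and $T^g_\rho$ and apply the multiplication rule \eqref{eq:F elementary props}. Your treatment is slightly more careful in that you handle the empty ribbon separately, which the paper silently omits.
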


\begin{proof}
    Using Eq. \eqref{eq:F elementary props} we have
    \begin{align*}
        L^h_\rho T^g_\rho = \sum_{g'} F^{h,g'}_\rho F^{1,g}_\rho = F^{h,g}_\rho =  F^{1,g}_\rho \sum_{g'} F^{h,g'}_\rho = T^g_\rho L^h_\rho
    \end{align*}
\end{proof}

\begin{lemma} \label{lem:flux change}
    Let $\rho$ be a finite ribbon such that $v(\start) \neq v(\en)$. Then if $s_0,s_1$ are sites such that $v(s_0) = \start$ and $ v(s_1) = \en$, we have
    $$ A^{h}_{s_0} \,  T_{\rho}^g = T_{\rho}^{hg} \, A_{s_0}^h \quad \text{and} \quad A_{s_1}^{h} \, T_{\rho}^g = T_{\rho}^{g \bar h} \, A_{s_1}^h $$
    while for sites $s$ such that $v(s) \neq v(s_0), v(s_1)$, we have
    $$ [A_s^h, T_{\rho}^g] = 0$$
    for all $g, h \in G$. Moreover,
    $$ [B_s^{k}, T_{\rho}^g] = 0 $$
    for all $k, g \in G$ and any site $s$.
\end{lemma}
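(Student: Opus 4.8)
The statement to prove is Lemma~\ref{lem:flux change}, concerning commutation relations between the operators $T_\rho^g = F_\rho^{1,g}$, the gauge transformations $A_s^h$, and the flux projectors $B_s^k$.

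\textbf{Overall approach.} The plan is to deduce everything from the general commutation relations \eqref{eq:[A,F] and [B,F]} and \eqref{eq:[A,F] and [B,F] negative case}, specialized to $h = 1$ and then, if needed, summed over $g$. Since $T_\rho^g = F_\rho^{1,g}$, I would start by setting the first superscript to the identity in those relations. The key point that makes the specialization work is that conjugation $k(\cdot)\bar k$ fixes the identity, so the ``twisting'' of the first index disappears, leaving only the shift in the second index, which is exactly what the claimed identities record. I would also invoke \eqref{eq:A,B commute with F except at endpts} to get the vanishing commutators away from the endpoints, and a separate short argument for the $B_s^k$ commutation.

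\textbf{Step 1: the $A_{s_0}^h$ relation.} Apply \eqref{eq:[A,F] and [B,F]} (positive case) or \eqref{eq:[A,F] and [B,F] negative case} (negative case) with $g$ arbitrary and first index equal to $1$: in both cases the top-left relation reads $A_{s_0}^k F_\rho^{1,g} = F_\rho^{k 1 \bar k, kg} A_{s_0}^k = F_\rho^{1, kg} A_{s_0}^k$, i.e. $A_{s_0}^h T_\rho^g = T_\rho^{hg} A_{s_0}^h$. Note this holds regardless of whether $\rho$ is positive or negative since the formula for the $A$-relations is identical in \eqref{eq:[A,F] and [B,F]} and \eqref{eq:[A,F] and [B,F] negative case}. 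The hypothesis $v(\partial_0\rho) \neq v(\partial_1\rho)$ is what licenses the use of \eqref{eq:[A,F] and [B,F]}/\eqref{eq:[A,F] and [B,F] negative case}, whose derivation requires distinct endpoint vertices and faces; here only the vertices matter because the operator $T_\rho^g$ is diagonal and the face condition is automatically harmless, but to be safe I would just note the endpoint hypothesis is in force.

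\textbf{Step 2: the $A_{s_1}^h$ relation and the bulk commutators.} Similarly, the top-right relation in \eqref{eq:[A,F] and [B,F]} / \eqref{eq:[A,F] and [B,F] negative case} with first index $1$ gives $A_{s_1}^k F_\rho^{1,g} = F_\rho^{1, g\bar k} A_{s_1}^k$, i.e. $A_{s_1}^h T_\rho^g = T_\rho^{g\bar h} A_{s_1}^h$. For sites $s$ with $v(s) \neq v(s_0), v(s_1)$, the relation $[F_\rho^{h,g}, A_s^k] = 0$ from \eqref{eq:A,B commute with F except at endpts} specializes directly to $[T_\rho^g, A_s^k] = 0$. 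For the flux projectors: I would note that $T_\rho^g = F_\rho^{1,g}$ is diagonal in the group basis $\{\ket{\alpha}\}$ (it projects onto configurations with $\phi_\rho(\alpha) = g$, cf. the discussion around Definition~\ref{def:flux thourgh ribbon}), and every $B_s^k$ is also diagonal in that basis; two diagonal operators commute, giving $[B_s^k, T_\rho^g] = 0$ for all $k, g$ and all $s$. Alternatively one can specialize the $B$-relations in \eqref{eq:[A,F] and [B,F]}/\eqref{eq:[A,F] and [B,F] negative case} and \eqref{eq:A,B commute with F except at endpts} at $h = 1$: the conjugations $\bar g \bar 1 g = 1$ and $\bar h = 1$ trivialize, and the endpoint relations become $B_{s_0}^k F_\rho^{1,g} = F_\rho^{1,g} B_{s_0}^k$ and $B_{s_1}^k F_\rho^{1,g} = F_\rho^{1,g} B_{s_1}^k$, which together with \eqref{eq:A,B commute with F except at endpts} covers all sites.

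\textbf{Expected obstacle.} There is essentially no hard step here; the lemma is a bookkeeping specialization of already-proven general formulas. The only thing to be careful about is orientation: the general relations \eqref{eq:[A,F] and [B,F]} vs. \eqref{eq:[A,F] and [B,F] negative case} differ for positive and negative ribbons, so I would explicitly check both cases for the $A$-relations (they agree) and note the $B$-relations also become trivial in both cases after setting $h=1$. A minor subtlety is matching the labels $s_0, s_1$ to $\partial_0\rho, \partial_1\rho$ correctly — the statement only assumes $v(s_0) = v(\partial_0\rho)$ and $v(s_1) = v(\partial_1\rho)$, and since $A_s^h$ depends only on $v(s)$ (as recalled in Section~\ref{subsec:preliminary notions}), this is enough.

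\begin{proof}
    Recall that $T_\rho^g = F_\rho^{1,g}$ and that $A_s^h$ depends only on the vertex $v(s)$. Since $k 1 \bar k = 1$ for all $k \in G$, setting the first index equal to $1$ in the relations \eqref{eq:[A,F] and [B,F]} (if $\rho$ is positive) or \eqref{eq:[A,F] and [B,F] negative case} (if $\rho$ is negative) gives in both cases
    $$ A_{s_0}^h T_\rho^g = A_{s_0}^h F_\rho^{1,g} = F_\rho^{h 1 \bar h, hg} A_{s_0}^h = F_\rho^{1, hg} A_{s_0}^h = T_\rho^{hg} A_{s_0}^h $$
    and
    $$ A_{s_1}^h T_\rho^g = A_{s_1}^h F_\rho^{1,g} = F_\rho^{1, g\bar h} A_{s_1}^h = T_\rho^{g\bar h} A_{s_1}^h, $$
    using the hypothesis $v(\start) \neq v(\en)$ so that \eqref{eq:[A,F] and [B,F]} and \eqref{eq:[A,F] and [B,F] negative case} apply. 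For a site $s$ with $v(s) \neq v(s_0), v(s_1)$, Eq.~\eqref{eq:A,B commute with F except at endpts} gives $[F_\rho^{1,g}, A_s^h] = 0$, that is $[A_s^h, T_\rho^g] = 0$.

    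Finally, $T_\rho^g = F_\rho^{1,g}$ is diagonal in the basis $\{\ket{\alpha}\}_{\alpha \in \gc[\supp(\rho)]}$ of group configurations, since $T_\rho^g \ket{\alpha} = \delta_{\phi_\rho(\alpha), g} \ket{\alpha}$. Each flux projector $B_s^k$ is likewise diagonal in this basis. Hence $[B_s^k, T_\rho^g] = 0$ for all $k, g \in G$ and every site $s$.
\end{proof}
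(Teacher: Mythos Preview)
Your proof is correct and follows the same approach as the paper's, which simply cites $T_\rho^g = F_\rho^{1,g}$ together with Eqs.~\eqref{eq:[A,F] and [B,F]}, \eqref{eq:[A,F] and [B,F] negative case} and \eqref{eq:A,B commute with F except at endpts}. Your additional diagonal-operator argument for $[B_s^k, T_\rho^g]=0$ is a nice touch, as it covers all sites uniformly without needing the face hypothesis of \eqref{eq:[A,F] and [B,F]}.
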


\begin{proof}
    This follows immediately from $T_{\rho}^g = F_{\rho}^{1, g}$ and Eqs. \eqref{eq:[A,F] and [B,F]}, \eqref{eq:[A,F] and [B,F] negative case} and \eqref{eq:A,B commute with F except at endpts}.
\end{proof}

\begin{lemma}
\label{lem:A,B comm T,L except endpt}
    Let $\rho$ be a finite ribbon. For all $v \neq v(\partial_i \rho)$ and $f \neq f(\partial_i \rho)$, we have
    $$[A_v,T_\rho^g] = 0 = [B_f, L^h_\rho].$$
\end{lemma}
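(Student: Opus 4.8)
The plan is to reduce the statement to the commutation relations already collected for the ribbon operators $F^{h,g}_\rho$ in Section~\ref{subsec:preliminary notions}. Recall that $T^g_\rho = F^{1,g}_\rho$ and $L^h_\rho = \sum_{g\in G} F^{h,g}_\rho$, so it suffices to understand how $A_v$ and $B_f$ interact with the elementary ribbon operators. The key input is Eq.~\eqref{eq:A,B commute with F except at endpts}, which says that $[F^{h,g}_\rho, A^k_s] = 0 = [F^{h,g}_\rho, B^l_{s'}]$ whenever the vertex of $s$ differs from both $v(\partial_0\rho)$ and $v(\partial_1\rho)$, and the face of $s'$ differs from both $f(\partial_0\rho)$ and $f(\partial_1\rho)$.

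First I would prove $[A_v, T^g_\rho] = 0$ for $v \neq v(\partial_i\rho)$. Writing $A_v = \frac{1}{|G|}\sum_{k\in G} A^k_s$ for any site $s$ with vertex $v$, and noting $v \neq v(\partial_0\rho), v(\partial_1\rho)$ by hypothesis, each $A^k_s$ commutes with $T^g_\rho = F^{1,g}_\rho$ by Eq.~\eqref{eq:A,B commute with F except at endpts}. Hence their average $A_v$ commutes with $T^g_\rho$.

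Next I would prove $[B_f, L^h_\rho] = 0$ for $f \neq f(\partial_i\rho)$. Write $B_f = B^1_{s'}$ for any site $s'$ with face $f$; since $f \neq f(\partial_0\rho), f(\partial_1\rho)$, Eq.~\eqref{eq:A,B commute with F except at endpts} gives $[B^1_{s'}, F^{h,g}_\rho] = 0$ for every $g \in G$. Summing over $g$ yields $[B_f, L^h_\rho] = [B_f, \sum_g F^{h,g}_\rho] = 0$.

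I do not anticipate a serious obstacle here; this is essentially an unpacking of definitions together with the already-established invisibility of a ribbon to gauge transformations and flux projectors away from its endpoints (Eq.~\eqref{eq:A,B commute with F except at endpts}). The only point requiring a little care is the legitimacy of choosing any site $s$ (resp.\ $s'$) over the given vertex $v$ (resp.\ face $f$) when defining $A_v$ (resp.\ $B_f$): this is exactly the well-definedness noted after Eq.~\eqref{eq:ABcommproj}, namely that $A^h_s$ depends only on $v(s)$ and $B^1_{s'}$ only on $f(s')$.
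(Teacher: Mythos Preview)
Your proposal is correct and takes essentially the same approach as the paper: the paper's proof simply states that the lemma is a trivial consequence of Eq.~\eqref{eq:A,B commute with F except at endpts}, and you have unpacked exactly that, using $T^g_\rho = F^{1,g}_\rho$ and $L^h_\rho = \sum_g F^{h,g}_\rho$ together with the definitions of $A_v$ and $B_f$.
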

\begin{proof}
    This is a trivial consequence of Eq. \eqref{eq:A,B commute with F except at endpts}.
\end{proof}

\begin{lemma}
    \label{lem:A,B comm L,T always}
    Let $\rho$ be a finite ribbon. For all $v$ such that $v\neq v(\start)$ and all $f$ such that $f \neq f(\start)$ we have
    $$[A_v,L_\rho^h] = 0 = [B_f, T^g_\rho].$$
\end{lemma}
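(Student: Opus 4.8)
The plan is to prove the two commutators separately; they are mirror images of each other under the duality that exchanges vertices with faces, $A$ with $B$, $L$ with $T$, and direct with dual triangles, so I will spell out $[A_v,L^h_\rho]=0$ and treat $[B_f,T^g_\rho]=0$ by the parallel argument. Throughout I use $L^h_\rho=\sum_{g\in G}F^{h,g}_\rho$ and the auxiliary identity $\sum_{g\in G}T^g_\rho=L^1_\rho=\mathds{1}$, which follows from the inductive definition \eqref{eq:F inductive def} together with the base cases ($\sum_g T^g_\tau=\mathds{1}$ for a direct triangle, $\sum_g F^{1,g}_\tau=L^1_\tau=\mathds{1}$ for a dual triangle).

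Fix $v\neq v(\partial_0\rho)$ and split into two cases. If $v$ is moreover distinct from $v(\partial_1\rho)$, then Eq.~\eqref{eq:A,B commute with F except at endpts} gives $[A^k_s,F^{h,g}_\rho]=0$ for every $g,k\in G$, where $s$ is any site with $v(s)=v$; summing over $g$ yields $[A^k_s,L^h_\rho]=0$ and averaging over $k$ gives $[A_v,L^h_\rho]=0$. The remaining case is $v=v(\partial_1\rho)$, which (given $v\neq v(\partial_0\rho)$) forces $v(\partial_0\rho)\neq v(\partial_1\rho)$. Writing $s_1$ for the final site of $\rho$, the point is that the relevant relation in Eqs.~\eqref{eq:[A,F] and [B,F]} and \eqref{eq:[A,F] and [B,F] negative case} is the \emph{same} in the positive and negative cases, namely $A^k_{s_1}F^{h,g}_\rho=F^{h,g\overline{k}}_\rho A^k_{s_1}$: the gauge transformation at the final site shifts only the second (``electric'') label. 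Hence
\[
A^k_{s_1}L^h_\rho=\sum_{g\in G}A^k_{s_1}F^{h,g}_\rho=\sum_{g\in G}F^{h,g\overline{k}}_\rho A^k_{s_1}=\Big(\sum_{g'\in G}F^{h,g'}_\rho\Big)A^k_{s_1}=L^h_\rho A^k_{s_1},
\]
using that $g\mapsto g\overline{k}$ is a bijection of $G$; averaging over $k$ gives $[A_{v(\partial_1\rho)},L^h_\rho]=0$.

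I expect the only delicate point to be purely bookkeeping: Eqs.~\eqref{eq:[A,F] and [B,F]}/\eqref{eq:[A,F] and [B,F] negative case} are stated under the hypothesis that the two end sites of $\rho$ have distinct vertices \emph{and} distinct faces, whereas in the case $v=v(\partial_1\rho)$ we control only the vertices. When the two end faces of $\rho$ coincide I would first enlarge $\rho$ to $\tilde\rho$ by appending a short dual ribbon at $\partial_1\rho$ — this leaves the final vertex unchanged but can be arranged so that the final face of $\tilde\rho$ differs from its initial face and so that the appended edges are disjoint from $\supp(\rho)$ — apply the computation above to $\tilde\rho$, and then descend back to $\rho$ using \eqref{eq:F inductive def}; alternatively one notes that the $A$-part of \eqref{eq:[A,F] and [B,F]} in fact only uses $v_0\neq v_1$. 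Either way the algebra is the one line displayed above.

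The statement $[B_f,T^g_\rho]=0$ for $f\neq f(\partial_0\rho)$ is handled identically. For $f$ distinct from both endpoint faces use \eqref{eq:A,B commute with F except at endpts} and sum over $g$ in $F^{h,g}_\rho$ (here the auxiliary identity is not even needed since $T^g_\rho=F^{1,g}_\rho$). For $f=f(\partial_1\rho)$, which forces $f(\partial_0\rho)\neq f(\partial_1\rho)$, use the relation $B^k_{s_1}F^{h,g}_\rho=F^{h,g}_\rho B^{k\overline{g}\,\overline{h}g}_{s_1}$ (positive case; the negative case is $B^k_{s_1}F^{h,g}_\rho=F^{h,g}_\rho B^{\overline{g}\,\overline{h}gk}_{s_1}$): setting $h=1$ the conjugating element collapses to $k$, so $B^k_{s_1}T^g_\rho=T^g_\rho B^k_{s_1}$, and in particular $B_{f(\partial_1\rho)}=B^1_{s_1}$ commutes with $T^g_\rho$. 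The same face-versus-vertex caveat as above (now requiring the end vertices to differ) is resolved by appending a short direct ribbon when necessary.
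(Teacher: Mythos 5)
Your proof is correct and follows essentially the same route as the paper's: handle vertices/faces away from both endpoints via Eq.~\eqref{eq:A,B commute with F except at endpts}, and at the final site use that $A^k_{s_1}$ only shifts the $g$-label (a bijection of $G$, so $L^h_\rho$ is invariant) while $B^k_{s_1}$ conjugates by $\bar g\bar h g$, which is trivial when $h=1$. Your extra remark about the stated hypothesis of Eqs.~\eqref{eq:[A,F] and [B,F]} (that only the vertex condition is needed for the $A$-relations and only the face condition for the $B$-relations) is a legitimate bookkeeping point that the paper's proof silently glosses over, and your resolution of it is fine.
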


\begin{proof}
    If we have $v \neq v(\partial_i \rho)$ or $f \neq f(\partial_i \rho)$ for $i = 0, 1$ then the claim follows immediately from Eq. \eqref{eq:A,B commute with F except at endpts}. Now let $v = v(\partial_1 \rho)$, $f = f(\en)$, and let $s,s'$ be sites such that $v(s) = v$ and $f(s') = f$. We then have,
    \begin{align*}
        A_v L_\rho^h &= \sum_k A_s^k L^{h}_\rho = \sum_{g,k} A_s^k F^{h,g}_\rho = \sum_{g,k} F^{h, g\dash{k}}_\rho A_{s}^k = \sum_k L^h_\rho A_s^k = L^h_\rho A_v\\
        B_f T^g_\rho &= B_{s'}^1 F^{1,g}_\rho = F^{1,g}_\rho B_{s'}^1 = T^g_\rho B_f
    \end{align*}
    Which proves the claim.
\end{proof}

\subsubsection{Alternating decomposition of $L_\rho^h$}

In this section we express the operators $L_\rho^h$ in terms of the decomposition of $\rho$ into its alternating direct and dual sub-ribbons. This result will be useful in Section \ref{sec:action of T and L on string nets}.

\begin{lemma} \label{lem:L with initial direct triangle}
    Let $\rho = \tau \rho'$ be a finite ribbon whose initial triangle $\tau$ is a direct triangle. Then $$L_{\rho}^h = \sum_{k \in G} T_{\tau}^{k} \, L_{\rho'}^{\bar k h k}.$$
\end{lemma}

\begin{proof}
    By definition, $L_{\rho}^h = \sum_g \, F_{\rho}^{h, g}$. Using Eq. \eqref{eq:F inductive def}, this becomes
    $$L_{\rho}^{h} = \sum_g \sum_k F_{\tau}^{h, k} F_{\rho'}^{\bar k h k, g} = \sum_k T_{\tau}^k \, L_{\rho'}^{\bar k h k},$$
    where we used $F_{\tau}^{h, k} = T_{\tau}^k$ because $\tau$ is a direct triangle.
\end{proof}

\begin{lemma} \label{lem:L with initial dual triangle}
    Let $\rho = \tau \rho'$ be a finite ribbon such that its initial triangle $\tau$ is a dual triangle. Then $$L_{\rho}^h = L_{\tau}^{h} \, L_{\rho'}.$$
\end{lemma}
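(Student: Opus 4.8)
The plan is to unfold the inductive definition \eqref{eq:F inductive def} of the ribbon operators along the splitting $\rho = \tau \rho'$. This splitting is admissible: since $\rho$ is a finite ribbon, the single-triangle tuple $\tau$ and the remaining tuple $\rho'$ are automatically finite ribbons, and by the discussion following \eqref{eq:F inductive def} the value of $F^{h,g}_\rho$ does not depend on how $\rho$ is cut. So I am free to use $\rho_1 = \tau$, $\rho_2 = \rho'$.

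First I would recall that for a single dual triangle one has $F^{h,k}_\tau = \delta_{k,1}\, L^h_\tau$ by the definition of the ribbon operators on dual triangles. Substituting this into \eqref{eq:F inductive def} collapses the sum over $k \in G$ to the single term $k = 1$:
\begin{equation*}
    F^{h,g}_\rho \;=\; \sum_{k \in G} F^{h,k}_\tau \, F^{\bar k h k,\, \bar k g}_{\rho'} \;=\; L^h_\tau \, F^{h,g}_{\rho'} .
\end{equation*}
Summing over $g \in G$ and using $L^h_\rho = \sum_g F^{h,g}_\rho$ together with $L^h_{\rho'} = \sum_g F^{h,g}_{\rho'}$ then yields
\begin{equation*}
    L^h_\rho \;=\; \sum_{g \in G} L^h_\tau \, F^{h,g}_{\rho'} \;=\; L^h_\tau \sum_{g \in G} F^{h,g}_{\rho'} \;=\; L^h_\tau \, L^h_{\rho'},
\end{equation*}
which is the asserted identity (the degenerate case $\rho' = \epsilon$ is consistent, since then $L^h_\epsilon = \mathds{1}$ and $L^h_\rho = L^h_\tau$).

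There is essentially no obstacle here; the proof is a one-line manipulation of the defining recursion. The only points worth flagging are that the collapse of the $k$-sum relies crucially on $\tau$ being \emph{dual} — for a direct initial triangle $F^{h,k}_\tau = T^k_\tau$ does not carry a Kronecker delta in $k$, which is exactly why the analogous statement in Lemma \ref{lem:L with initial direct triangle} takes the more complicated form with conjugations $\bar k h k$. I would also note that, combined with Lemma \ref{lem:L with initial direct triangle}, this lemma lets one peel off triangles from the front of $\rho$ and thereby express $L^h_\rho$ in terms of the alternating direct and dual sub-ribbons of $\rho$, which is the use it will be put to in Section \ref{sec:action of T and L on string nets}.
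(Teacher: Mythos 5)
Your proof is correct and follows exactly the same route as the paper's: substitute $F^{h,k}_\tau = \delta_{k,1} L^h_\tau$ into the recursion \eqref{eq:F inductive def} to collapse the $k$-sum, then sum over $g$. (Your derivation also makes explicit that the right-hand side should read $L^h_\tau L^h_{\rho'}$, with the superscript $h$ on the second factor, matching what the paper's own computation produces.)
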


\begin{proof}
    By definition, $L_{\rho}^h = \sum_g \, F_{\rho}^{h, g}$. Using Eq. \eqref{eq:F inductive def}, this becomes
    $$L_{\rho}^{h} = \sum_g \sum_k F_{\tau}^{h, k} F_{\rho'}^{\bar k h k, g} = L_{\tau}^h \, L_{\rho'}^{h},$$
    where we used $F_{\tau}^{h, k} = \delta_{k, 1} L_{\tau}^h$ because $\tau$ is a dual triangle.
\end{proof}

\begin{lemma} \label{lem:unpacking of T for direct ribbon}
    If $\rho = \{\tau_i\}_{i=1}^N$ is a direct ribbon, then
    $$T^g_{\rho} = \sum_{k_1 \cdots k_N = g} \,\,\, \prod_{i = 1}^N \,\,\, T_{\tau_i}^{k_i}.$$
\end{lemma}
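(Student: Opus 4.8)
\textbf{Proof plan for Lemma \ref{lem:unpacking of T for direct ribbon}.}

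The plan is to proceed by induction on the number $N$ of triangles in the direct ribbon $\rho = \{\tau_i\}_{i=1}^N$, using the inductive definition \eqref{eq:F inductive def} of the ribbon operators together with the observation that $T^g_\rho = F^{1,g}_\rho$. The base case $N=1$ is immediate: for a single direct triangle $\tau_1$ we have $T^g_{\tau_1} = F^{1,g}_{\tau_1}$ by definition, and the sum over $k_1$ with $k_1 = g$ has a single term. For the base case $N=0$ (the empty ribbon) one checks $T^g_\epsilon = F^{1,g}_\epsilon = \delta_{1,g}\mathds{1}$, which matches the empty product convention.

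For the inductive step, write $\rho = \tau_1 \rho'$ where $\rho' = \{\tau_i\}_{i=2}^N$ is again a direct ribbon. Since all triangles are direct, $\rho$ is a positive or negative ribbon, but in either case we may apply the splitting formula: using $T^g_\rho = F^{1,g}_\rho$ and the inductive definition \eqref{eq:F inductive def} with the split $\rho = \tau_1 \rho'$,
\begin{align*}
    T^g_\rho = F^{1,g}_\rho = \sum_{k \in G} F^{1,k}_{\tau_1} F^{\bar k \cdot 1 \cdot k, \bar k g}_{\rho'} = \sum_{k \in G} F^{1,k}_{\tau_1} F^{1, \bar k g}_{\rho'} = \sum_{k \in G} T^k_{\tau_1} \, T^{\bar k g}_{\rho'}.
\end{align*}
Now apply the induction hypothesis to $T^{\bar k g}_{\rho'}$, expanding it as $\sum_{k_2 \cdots k_N = \bar k g} \prod_{i=2}^N T^{k_i}_{\tau_i}$. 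Substituting and relabelling $k_1 := k$, the condition $k_2 \cdots k_N = \bar k_1 g$ is equivalent to $k_1 k_2 \cdots k_N = g$, which yields the claimed formula
\begin{align*}
    T^g_\rho = \sum_{k_1 \in G} T^{k_1}_{\tau_1} \sum_{k_2 \cdots k_N = \bar k_1 g} \prod_{i=2}^N T^{k_i}_{\tau_i} = \sum_{k_1 \cdots k_N = g} \prod_{i=1}^N T^{k_i}_{\tau_i}.
\end{align*}

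I do not expect a serious obstacle here; this is essentially an unwinding of the recursive definition. The one point requiring a little care is to confirm that the simplification $\bar k \cdot 1 \cdot k = 1$ in the first group index is genuinely what \eqref{eq:F inductive def} produces (so that the second factor is again of the form $F^{1,\cdot}_{\rho'} = T^{\cdot}_{\rho'}$ and the induction closes), and that the independence of \eqref{eq:F inductive def} on the choice of splitting — noted just after that equation — legitimises splitting off the first triangle specifically. Both are immediate from the definitions already recalled in the excerpt.
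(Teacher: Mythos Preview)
Your proof is correct and follows essentially the same approach as the paper: both split off the first triangle using \eqref{eq:F inductive def}, observe that $\bar k \cdot 1 \cdot k = 1$ so the first argument remains trivial, and then iterate (the paper phrases the iteration as peeling off triangles down to the empty ribbon, while you phrase it as a formal induction hypothesis, but this is the same argument).
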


\begin{proof}
    By definition, $T_{\rho}^g = F_{\rho}^{1, g}$. Using Eq. \eqref{eq:F inductive def} we find
    $$T_{\rho}^g = F_{\rho}^{1, g} = \sum_{k_1} \, F_{\tau_1}^{1, k_1} F_{\rho \setminus \{\tau_1\}}^{1, \bar k_1 g} = \sum_{k_1} T_{\tau_1}^{k_1}  T_{\rho \setminus \{\tau_1\}}^{\bar k_1 g}.$$
    We can apply this result inductively to find
    $$T_{\rho}^g = \sum_{k_1, \cdots, k_N}  \prod_{i= 1}^N  T_{\tau_i}^{k_i} \, F_{\epsilon}^{1, \bar k_N \cdots \bar k_1 g} = \delta_{k_1 \cdots k_N, g} \, \sum_{k_1, \cdots, k_N} \, \prod_{i=1}^N \, T_{\tau_i}^{k_i}.$$
    This proves the claim.
\end{proof}

\begin{lemma} \label{lem:unpacking of L for dual ribbon}
    If $\rho = \{\tau_i\}_{i=1}^N$ is a dual ribbon, then
    $$L^h_{\rho} = \prod_{i = 1}^N \,\,\, L_{\tau_i}^{h}.$$
\end{lemma}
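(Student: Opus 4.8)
The plan is to proceed by induction on the number $N$ of dual triangles in the ribbon $\rho$, using the inductive definition of ribbon operators in Eq. \eqref{eq:F inductive def} together with Lemma \ref{lem:L with initial dual triangle}. Since $\rho = \{\tau_i\}_{i=1}^N$ is a dual ribbon, every triangle $\tau_i$ is a dual triangle, so we can always peel off the initial triangle and apply Lemma \ref{lem:L with initial dual triangle}.

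\medskip

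The base case $N = 1$ is trivial: by definition $L^h_{\{\tau_1\}} = L^h_{\tau_1}$, since $F^{h,g}_{\tau_1} = \delta_{g,1} L^h_{\tau_1}$ for a single dual triangle, so summing over $g$ gives $L^h_{\tau_1}$. For the inductive step, write $\rho = \tau_1 \rho'$ where $\rho' = \{\tau_i\}_{i=2}^N$ is again a dual ribbon (of length $N-1$), with initial triangle $\tau_1$ a dual triangle. Lemma \ref{lem:L with initial dual triangle} gives
$$L^h_{\rho} = L^h_{\tau_1} \, L^h_{\rho'}.$$
By the induction hypothesis applied to $\rho'$, we have $L^h_{\rho'} = \prod_{i=2}^N L^h_{\tau_i}$. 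Substituting this in yields $L^h_{\rho} = L^h_{\tau_1} \prod_{i=2}^N L^h_{\tau_i} = \prod_{i=1}^N L^h_{\tau_i}$, which is exactly the claimed formula. Note that the ordering of the product matters here, but it is preserved correctly because Lemma \ref{lem:L with initial dual triangle} places $L^h_{\tau_1}$ on the left, matching the order $i = 1, \ldots, N$ in the product.

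\medskip

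I do not expect any serious obstacle here; the statement is essentially an immediate iteration of Lemma \ref{lem:L with initial dual triangle}, and this proof is the analogue for dual ribbons of the proof of Lemma \ref{lem:unpacking of T for direct ribbon} for direct ribbons. The only point that needs a moment of care is checking that the decomposition $\rho = \tau_1 \rho'$ is valid (i.e. that $\rho'$ is itself a finite ribbon and in fact a dual ribbon), but this follows immediately from the remarks on concatenation of ribbons in Section \ref{subsec:preliminary notions}: any sub-tuple of a ribbon obtained by deleting an initial segment is again a ribbon, and it is dual since all its triangles are dual. One could alternatively give a one-line non-inductive proof by noting that, since all triangles of $\rho$ are dual, $T^{k_i}_{\tau_i} = \delta_{k_i, 1}\mathds{1}$ never appears and the sum over internal group labels in the iterated version of Eq. \eqref{eq:F inductive def} collapses, leaving only the product of the $L^h_{\tau_i}$'s in order; but the inductive argument via Lemma \ref{lem:L with initial dual triangle} is cleaner.
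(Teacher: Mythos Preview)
Your proof is correct and takes essentially the same approach as the paper, which simply says the result follows by repeated application of Lemma \ref{lem:L with initial dual triangle}. Your induction is just the spelled-out version of that one-line argument.
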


\begin{proof}
    This follows immediately from a repeated application of Lemma \ref{lem:L with initial dual triangle}.
\end{proof}

Any ribbon decomposes into subribbons that are alternatingly direct and dual.
\begin{definition} \label{def:alternating decomposition}
    Any finite ribbon $\rho$ has a unique decomposition into ribbons $\{I_a, J_a\}_{a = 1, \cdots, n}$ such that the $I_a$ are direct, the $J_a$ are dual and
    $$\rho = I_1 J_1 \cdots I_n J_n.$$
    (possibly, $I_1$ and/or $J_n$) are empty. We call this the alternating decomposition of $\rho$. 
\end{definition}

\begin{lemma} \label{lem:L decomposition}
    Let $\rho$ be a finite ribbon with alternating decomposition $\rho = I_1 J_1 \cdots I_n J_n$. We have
    $$L_{\rho}^h = \sum_{k_1, \cdots, k_n \in G} \, \prod_{i = 1}^n \, T_{I_i}^{k_i}  L_{J_i}^{ \bar K_i h K_i }$$
    where $K_i = k_1 k_2 \cdots k_i$.
\end{lemma}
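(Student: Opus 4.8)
The plan is to prove Lemma \ref{lem:L decomposition} by induction on $n$, the number of direct-dual pairs in the alternating decomposition $\rho = I_1 J_1 \cdots I_n J_n$, using the earlier building blocks: Lemma \ref{lem:L with initial direct triangle}, Lemma \ref{lem:L with initial dual triangle}, Lemma \ref{lem:unpacking of T for direct ribbon}, and Lemma \ref{lem:unpacking of L for dual ribbon}.

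First I would handle the base case $n = 1$, i.e.\ $\rho = I_1 J_1$ with $I_1$ direct and $J_1$ dual. Writing $I_1 = \{\tau_1, \dots, \tau_N\}$ as a direct ribbon, a repeated application of Lemma \ref{lem:L with initial direct triangle} peels off the direct triangles one at a time, conjugating the superscript of the remaining $L$ each time, yielding $L_{\rho}^h = \sum_{l_1, \dots, l_N} \prod_{i} T_{\tau_i}^{l_i} \cdot L_{J_1}^{\bar{l}_N \cdots \bar{l}_1 h l_1 \cdots l_N}$. Collecting the product of $T_{\tau_i}^{l_i}$ under the constraint $l_1 \cdots l_N = k_1$ and invoking Lemma \ref{lem:unpacking of T for direct ribbon} identifies $\sum_{l_1 \cdots l_N = k_1} \prod_i T_{\tau_i}^{l_i} = T_{I_1}^{k_1}$, while the exponent on $L_{J_1}$ becomes $\bar{k}_1 h k_1 = \bar{K}_1 h K_1$; this is exactly the claimed formula for $n=1$. (If $I_1$ is empty the peeling is vacuous and $k_1 = 1$.)

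For the inductive step, assume the formula for ribbons with $n-1$ pairs, and write $\rho = I_1 \rho'$ with $\rho' = J_1 I_2 J_2 \cdots I_n J_n$. The ribbon $\rho'$ starts with the dual ribbon $J_1$; applying Lemma \ref{lem:L with initial dual triangle} triangle-by-triangle through $J_1$ (or equivalently Lemma \ref{lem:unpacking of L for dual ribbon} together with the fact that $L_{J_1}^h L_{\rho''}^h$ for $\rho'' = I_2 J_2 \cdots I_n J_n$), I get $L_{\rho'}^h = L_{J_1}^h L_{\rho''}^h$ where $\rho'' = I_2 \cdots J_n$ has $n-1$ pairs. Then I peel off $I_1$ from the front of $\rho = I_1 \rho'$ exactly as in the base case, producing $L_{\rho}^h = \sum_{l_i} \prod_i T_{\tau_i}^{l_i} \cdot L_{\rho'}^{\bar{k}_1 h k_1}$ with $k_1 = l_1 \cdots l_N$, and using the direct-ribbon identity to write the $T$-product as $T_{I_1}^{k_1}$. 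Substituting $L_{\rho'}^{\bar{k}_1 h k_1} = L_{J_1}^{\bar k_1 h k_1} L_{\rho''}^{\bar k_1 h k_1}$ and then applying the induction hypothesis to $L_{\rho''}^{\bar k_1 h k_1}$ — with the shifted argument $\bar k_1 h k_1$ playing the role of $h$ — reindexes the remaining sum with group elements $k_2, \dots, k_n$, and the cumulative products become $\bar k_1 \cdots \bar k_i h k_i \cdots k_1 = \bar K_i h K_i$ with $K_i = k_1 \cdots k_i$ (since the leading $\bar k_1 \cdots$ from the shift combines with $k_2 \cdots k_i$ produced by the hypothesis). Collecting the $T_{I_1}^{k_1}$, $L_{J_1}^{\bar K_1 h K_1}$, and the $n-1$ further factors gives precisely the stated product over $i = 1, \dots, n$.

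The main obstacle, and the step demanding the most care, is bookkeeping the group-element conjugations correctly: each time a direct triangle or direct sub-ribbon is peeled off the front, the exponent of everything to its right gets conjugated by the accumulated label, and one must check that these conjugations compose so that the $i$-th dual block $J_i$ ends up with exponent $\bar K_i h K_i$ for $K_i = k_1 \cdots k_i$ rather than, say, the reversed product. I would verify this by tracking the exponent through two peeling operations explicitly in the $n=2$ case before trusting the general induction, and by being careful that Lemma \ref{lem:L with initial direct triangle} conjugates as $h \mapsto \bar k h k$ (not $k h \bar k$) so the cumulative effect after peeling $I_1$ with net label $k_1$ is $h \mapsto \bar k_1 h k_1$. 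Everything else is a routine unpacking via the cited lemmas.
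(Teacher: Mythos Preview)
Your proposal is correct and follows essentially the same approach as the paper: peel off $I_1$ triangle-by-triangle via Lemma~\ref{lem:L with initial direct triangle} and repackage with Lemma~\ref{lem:unpacking of T for direct ribbon}, then peel off $J_1$ via Lemma~\ref{lem:L with initial dual triangle}, and iterate. The only difference is organizational---you phrase the iteration as a formal induction on $n$ whereas the paper simply says ``repeating the same argument $n-2$ more times yields the claim''---and your one notational slip (writing $\bar k_1 \cdots \bar k_i$ where $\bar K_i = \bar k_i \cdots \bar k_1$ is meant) does not affect the argument.
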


\begin{proof}
    The first sub ribbon $I_1 = \{ \tau_1, \cdots, \tau_m\}$ consists entirely of direct triangles. A repeated application of Lemma \ref{lem:L with initial direct triangle} yields
    $$L_{\rho}^h = \sum_{l_1, \cdots, l_m} \prod_{i= 1}^m \,T_{\tau_i}^{l_i} \,\, L_{\rho \setminus I_1}^{\bar k_1 h k_1}$$
    where $k_1 = l_1 \cdots l_m$. Using Lemma \ref{lem:unpacking of T for direct ribbon} we can rewrite this as
    $$L_{\rho}^h = \sum_{k_1} \, T_{I_1}^{k_1} \, L_{\rho \setminus I_1}^{\bar k_1 h k_1}.$$
    Let us now write $\rho' = \rho \setminus I_1$ and let $J_1 = \{\sigma_1, \cdots, \sigma_{m'}\}$ be the first sub-ribbon of $\rho'$ that consists entirely of dual triangles. A repeated application of Lemma \ref{lem:L with initial dual triangle} yields
    $$L_{\rho'}^{\bar k_1 h k_1} = \prod_{i = 1}^{m'} \, L_{\sigma_i}^{\bar k_1 h k_1} \, L_{\rho' \setminus J_1}^{\bar k_1 h k_1} = L_{J_1}^{\bar k_1 h k_1} L_{\rho' \setminus J_1}^{\bar k_1 h k_1}$$
    where we used Lemma \ref{lem:unpacking of L for dual ribbon} in the last step.

    Putting the above results together, we obtain
    $$L_{\rho}^h = \sum_{k_1} \, T_{I_1}^{k_1} \, L_{J_1}^{\bar k_1 h k_1} \, L_{\rho \setminus {I_1 J_1}}^{\bar k_1 h k_2}.$$

    Repeating the same argument for the ribbon $\rho \setminus \{ I_1 J_1 \} = I_2 J_2 \cdots I_n J_n$ we get
    $$L_{\rho}^h = \sum_{k_1, k_2} T_{I_1}^{k_1} L_{J_1}^{\bar K_1 h K_1} \, T_{I_2}^{k_2} L_{J_2}^{\bar K_2 h K_2} \,\, L_{I_3 J_3 \cdots I_n J_n}^{\bar K_2 h K_2}.$$

    Repeating the argument $n-2$ more times yields the claim.
\end{proof}

\subsection{Wigner projectors and their decompositions} \label{sec:Wigner projectors}

\subsubsection{Basic tools}

We provide some facts that will be used in calculations involving irreducible representations of $\caD(G)$ throughout the paper.

\begin{lemma}
\label{lem:unique g=qn}
    Let $C \in (G)_{cj}$, then each element $g \in G$ can be written as $g = q n$ with $q \in Q_C$ and $n \in N_C$ in a unique way.
\end{lemma}

\begin{proof}
    We have $g r_C \bar g = q r_C \bar q$ for some $q \in Q_C$. So $\bar q g = n \in N_C$, \ie we have $g = q n$.
    
    As for uniqueness, suppose $q_1 n_1 = q_2 n_2$ with $q_1, q_2 \in Q_C$ and $n_1, n_2 \in N_C$. Then $\bar q_2 q_1 = n_2 \bar n_1 \in N_C$, so $r_C = \bar q_2 q_1 r_C \bar q_1 q_2$ from which it follows that $q_2 r_C \bar q_2 = q_1 r_C \bar q_1$. By construction of the iterator set $Q_C$, this is only possible if $q_1 = q_2$, and therefore also $n_1 = n_2$.
\end{proof}

We will often have to use the Schur orthogonality relations, which we state here for reference. Let $H$ be a finite group and $R_1, R_2 \in (H)_{irr}$ irreducible representations of $H$ with matrix realisations $M_{R_1}$ and $M_{R_2}$ respectively. Then
\begin{equation} 
    \label{eq:Schur}
    \sum_{h \in H} M_{R_1}^{jk}(h) M_{R_2}^{lm}(h)^* = \delta_{R_1,R_2} \delta_{j,l} \delta_{k,m} \frac{|H|}{\dim R_1}.
\end{equation}
If $\chi_R$ is the character of the irreducible representation $R$, then we have,
\begin{equation} 
    \label{eq:Schur2}
    \sum_{R \in (H)_{irr}} \, \chi_R(h_1) \chi_R(h_2)^* = \begin{cases} \abs{ \mathcal{Z}_{h_1} } \,\,\,\, & \text{if} \,\, h_1, h_2 \,\, \text{belong to the same} \,\,  C \in (H)_{cj} \\
    0 \, & \text{otherwise} \end{cases}
\end{equation}
where $\mathcal{Z}_{h_1}$ is the commutant of $h_1$ in $H$.

\subsubsection{Wigner projectors}

Recall the Wigner projectors $D_s^{RC}$ and $D_{s}^{RC;u}$ (Definition \ref{def:Wigner projectors}), and the label changers $A_{s}^{RC;u_2 u_1}$ (Definition \ref{def:label changers}). Note also that $\qd[RC;u] = A_s^{RC;u} B_{s}^{c_i}$ with
\begin{equation*}
    A_s^{RC;u} := \frac{\dimR}{|N_C|} \sum_{m \in N_C} \, R^{jj}(m)^* A_s^{q_i m \bar q_i}.
\end{equation*}

\begin{lemma}
\label{lem:a and B commute with A_v and B_f}
    Let $s_0 = (v_0, f_0)$ be a site. Then $A_\site^{RC;(i,j)}$ and $B_{\site}^{c_i}$ are commuting projectors that also commute with $A_v$ and $B_f$ for all $v \neq v_0$ and all $f \neq f_0$.
\end{lemma}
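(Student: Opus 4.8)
The plan is to reduce everything to the elementary commutation relations for ribbon operators, gauge transformations, and flux projectors collected at the start of the appendix. Recall that $A_{s_0}^{RC;(i,j)} = \frac{\dim R}{|N_C|}\sum_{m \in N_C} R^{jj}(m)^* A_{s_0}^{q_i m \bar q_i}$ is a linear combination of the gauge transformations $A_{s_0}^{g}$, and these depend only on the vertex $v_0$. The projector $B_{s_0}^{c_i}$ is a single flux projector at the face $f_0$. So all three assertions --- that $A_{s_0}^{RC;(i,j)}$ is a projector, that it commutes with $B_{s_0}^{c_i}$, and that both commute with $A_v, B_f$ for $v \neq v_0$, $f \neq f_0$ --- should follow by expanding in the elementary generators and applying Eqs.~\eqref{eq:Aelementary}--\eqref{eq:ABcommuteondiffsites} together with \eqref{eq:ABcommproj}.

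First I would verify that $A_{s_0}^{RC;(i,j)}$ is a self-adjoint idempotent. Self-adjointness: using $(A_{s_0}^h)^* = A_{s_0}^{\bar h}$ from \eqref{eq:Aelementary}, substituting $m \mapsto \bar m$ in the sum, and using unitarity of the matrix representation $R$ (so $R^{jj}(\bar m)^* = R^{jj}(m)$), one sees $(A_{s_0}^{RC;(i,j)})^* = A_{s_0}^{RC;(i,j)}$. Idempotency: compute the product of two copies, use $A_{s_0}^{q_i m \bar q_i} A_{s_0}^{q_i m' \bar q_i} = A_{s_0}^{q_i m m' \bar q_i}$ (a consequence of $A_s^h A_s^{h'} = A_s^{hh'}$ and the fact that $m, m' \in N_C$ commute with $r_C$, hence $\bar q_i$ sandwiched cancels correctly against $q_i$), reindex, and invoke the Schur orthogonality relation \eqref{eq:Schur} for the matrix elements $R^{jj'}(m)$ to collapse the double sum. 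This is the most computational step but entirely routine.

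Next, $A_{s_0}^{RC;(i,j)}$ and $B_{s_0}^{c_i}$ commute: since $A_{s_0}^{RC;(i,j)}$ is a combination of $A_{s_0}^{q_i m \bar q_i}$ with $m \in N_C$, and $A_{s_0}^h B_{s_0}^g = B_{s_0}^{h g \bar h} A_{s_0}^h$ by \eqref{eq:ABonsamesite}, I just need $q_i m \bar q_i \, c_i \, q_i \bar m \bar q_i = c_i$ for $m \in N_C$; this holds because $c_i = q_i r_C \bar q_i$ and $m$ commutes with $r_C$. Hence each term commutes with $B_{s_0}^{c_i}$, so the whole operator does. Finally, commutation with $A_v$ ($v \neq v_0$) follows from $[A_s^h, A_{s'}^{h'}]=0$ when $s, s'$ have different vertices (Eq.~\eqref{eq:ABcommuteondiffsites}) applied termwise, and $A_v$ being itself an average of such $A_s^{h}$; commutation with $B_f$ ($f \neq f_0$) follows from $[A_s^h, B_{s'}^g]=0$ for $s \neq s'$ in \eqref{eq:ABcommuteondiffsites} and, for $B_{s_0}^{c_i}$, from $[B_s^g, B_{s'}^{g'}]=0$. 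I expect the main obstacle, if any, to be bookkeeping: making sure the conjugations by $q_i$ and the restriction $m \in N_C$ are used correctly so that the Schur relation applies cleanly, and keeping track of which flux projectors survive. There is no conceptual difficulty.
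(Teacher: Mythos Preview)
Your proposal is correct and follows essentially the same route as the paper's own proof: idempotency via Schur orthogonality after collapsing $A_{s_0}^{q_i m \bar q_i} A_{s_0}^{q_i m' \bar q_i} = A_{s_0}^{q_i m m' \bar q_i}$, self-adjointness from \eqref{eq:Aelementary}, the key commutation $[A_{s_0}^{RC;(i,j)}, B_{s_0}^{c_i}]=0$ from \eqref{eq:ABonsamesite} together with $q_i m \bar q_i$ commuting with $c_i$, and the remaining commutations from \eqref{eq:ABcommuteondiffsites}. The only minor omission is that you do not explicitly record $[B_{s_0}^{c_i}, A_v]=0$ for $v\neq v_0$, but this is immediate from the same equation and is covered by your opening remark.
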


\begin{proof}
    First we check that the $A_\site^{RC;(i,j)}$ are projectors.
    \begin{align*}
        A_\site^{RC;(i,j)} A_\site^{RC;(i,j)} &= \bigg(\frac{\dimR}{|N_C|} \bigg)^2 \sum_{m, m' \in N_C} R^{jj}(m)^* R^{jj}(m')^*  A_{s_0} ^{q_i m \dash{q}_i} A_{s_0} ^{q_i m' \dash{q}_i}\\
        &= \bigg(\frac{\dimR}{|N_C|} \bigg)^2 \sum_{m, m' \in N_C} R^{jj}(m)^* R^{jj}(m')^* A_{s_0}^{q_i m m' \dash{q}_i}\\
        \intertext{Relabeling $M = m m'$ and using the Schur orthogonality relation Eq. \eqref{eq:Schur} we get}
        &= \bigg(\frac{\dimR}{|N_C|} \bigg) \sum_{M \in N_C} R^{jj}(M)^* \, A_{s_0}^{q_i M \dash{q}_i} = A_\site^{RC;(i,j)}
    \end{align*}
    Showing that $(A_\site^{RC;(i,j)})^* = A_\site^{RC;(i,j)}$ is a straightforward application of Eq. \eqref{eq:Aelementary}.

    $A_\site^{RC;u}$ trivially commutes with $A_v$ for all $v \in \dvregion$ and $B_\site^{c_i}$ trivially commutes with $B_f$ for $f \in \dfregion$ using Eq. \eqref{eq:ABcommuteondiffsites}. Using the same equation, we also have $[A_\site^{RC;u}, B_f] = 0 = [B_\site^{c_i}, A_v]$ for all $f \in \dfregion, v \in \dvregion$.

    It remains to show that $[a_\site^{RC;u}, B_\site^{c_i}] =0$. This follows from Eq. \eqref{eq:ABonsamesite} and the fact that $q_i m \dash{q}_i$ commutes with $c_i$ for all $m \in N_C$. This implies $[A^{q_i m \dash{q}_i}_\site, B^{c_i}_\site] = 0$.
\end{proof}

\begin{lemma}
\label{lem:DRC decomposes into DRCu}
    The $\{  D^{RC;u}_s \}_{u \in I_{RC}}$ are a set of commuting projectors such that $\qd = \sum_u D^{RC;u}_s$. In particular, $\qd$ is a projector.
\end{lemma}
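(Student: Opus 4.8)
The plan is to verify directly from Definition \ref{def:Wigner projectors} that the operators $D_s^{RC;u}$ are mutually orthogonal projectors and that they sum to $D_s^{RC}$. First I would record that $D_s^{RC;u} = A_s^{RC;u} B_s^{c_i}$ for $u = (i,j)$, where $A_s^{RC;u}$ is a projector commuting with $B_s^{c_i}$ by Lemma \ref{lem:a and B commute with A_v and B_f}; since a product of two commuting projectors is again a projector, each $D_s^{RC;u}$ is a projector. For orthogonality one compares $D_s^{RC;(i,j)}$ and $D_s^{RC;(i',j')}$: their product contains the factor $B_s^{c_i} B_s^{c_{i'}}$, which by Eq. \eqref{eq:Belementary} vanishes unless $i = i'$, so it suffices to treat $i = i'$. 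In that case the product reduces, after using $A_s^{q_i m \bar q_i} A_s^{q_i m' \bar q_i} = A_s^{q_i m m' \bar q_i}$ (from Eq. \eqref{eq:Aelementary} and $m,m' \in N_C$), to a double sum $\sum_{m,m'} R^{jj}(m)^* R^{j'j'}(m')^* A_s^{q_i m m' \bar q_i} B_s^{c_i}$; substituting $M = mm'$ and applying the Schur orthogonality relation \eqref{eq:Schur} to the sum $\sum_{m} R^{jj}(m)^* R^{j'j'}(\bar m M)^*$ collapses this to $\delta_{j,j'} D_s^{RC;(i,j)}$, which is exactly what orthogonality (together with idempotency when $j=j'$) requires.

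Next I would show $\sum_{u \in I_{RC}} D_s^{RC;u} = D_s^{RC}$. Summing the defining expression over $u = (i,j)$ gives $\frac{\dim R}{|N_C|} \sum_{m \in N_C} \big( \sum_j R^{jj}(m)^* \big) \sum_{i=1}^{|C|} A_s^{q_i m \bar q_i} B_s^{c_i}$, and $\sum_j R^{jj}(m)^* = \chi_R(m)^*$ by definition of the character. Comparing with Definition \ref{def:Wigner projectors} of $D_s^{RC}$, which reads $\frac{\dim R}{|N_C|}\sum_{m\in N_C}\chi_R(m)^* \sum_{q\in Q_C} A_s^{q m \bar q} B_s^{q r_C \bar q}$, I would note that the sum over $q \in Q_C$ is precisely a sum over the labels $i$: each $q_i \in Q_C$ corresponds to $c_i = q_i r_C \bar q_i$, so $A_s^{q_i m \bar q_i} B_s^{q_i r_C \bar q_i} = A_s^{q_i m \bar q_i} B_s^{c_i}$. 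Hence the two expressions coincide term by term and the sum identity follows. Finally, since $D_s^{RC}$ is now written as a sum of mutually orthogonal projectors, it is itself a projector, giving the last sentence of the statement.

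I do not expect any genuine obstacle here: the only point requiring a little care is the bookkeeping in the Schur-orthogonality step (getting the index on $R^{j'j'}(m')^*$ into the form $R^{j'j'}(\bar m M)^*$ and then summing over $m$), and making sure the reindexing $\sum_{q\in Q_C} \leftrightarrow \sum_{i=1}^{|C|}$ is the identification fixed in Section \ref{subsection:irreps and Wigner projections}. Everything else is a direct substitution using Eqs. \eqref{eq:Aelementary}, \eqref{eq:Belementary}, \eqref{eq:ABcommuteondiffsites}, \eqref{eq:ABonsamesite} and Lemma \ref{lem:a and B commute with A_v and B_f}.
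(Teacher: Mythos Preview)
Your plan is correct and mirrors the paper's proof essentially step for step: projector via Lemma \ref{lem:a and B commute with A_v and B_f}, orthogonality via the $B_s^{c_i}$ factors plus Schur orthogonality after the substitution $M = mm'$, and the sum identity via $\sum_j R^{jj}(m)^* = \chi_R(m)^*$. One small wording fix: the product does not literally contain $B_s^{c_i} B_s^{c_{i'}}$ side by side---you must first use \eqref{eq:ABonsamesite} to pass $B_s^{c_i}$ through $A_s^{q_{i'} m' \bar q_{i'}}$, after which the resulting $B$-product forces $i=i'$ (this is exactly what the paper does).
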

\begin{proof}
    That $D_s^{RC;u}$ is a projector follows from $D_s^{RC;u} = A_s^{RC;u} B_s^{c_i}$ and the fact that $A_s^{RC;u}$ and $B_s^{c_1}$ are commuting projectors (Lemma \ref{lem:a and B commute with A_v and B_f}).
    
    Now to prove commutativity, let $u_1 = (i_1, j_1), u_2 = (i_2, j_2)$. Then,
    \begin{align*}
        D^{RC;u_1}_s D^{RC;u_2}_s &= \bigg(\frac{\dimR}{|N_C|}\bigg)^2 \sum_{m_1, m_2 \in N_C} R^{j_1 j_1}(m_1)^* A_{s} ^{q_{i_1} m_1 \dash{q}_{i_1}} B_{s}^{c_{i_1}} R^{j_2 j_2}(m_2)^* A_{s}^{q_{i_2} m_2 \dash{q}_{i_2}} B_{s}^{c_{i_2}} \\
        \intertext{Now we use Eqs. \eqref{eq:Belementary}, \eqref{eq:ABonsamesite} to get:}
        &= \delta_{i_1, i_2} \bigg(\frac{\dimR}{|N_C|}\bigg)^2 \sum_{m_1, m_2 \in N_C} \, R^{j_1 j_1}(m_1)^* A_{s}^{q_{i_1} m_1 m_2 \dash{q}_{i_1}} R^{j_2 j_2}(m_2)^* B_{s}^{c_{i_1}} \\
        \intertext{relabelling $m = m_1 m_2$ and using the Schur orthogonality relation Eq. \eqref{eq:Schur} this becomes}
        &= \delta_{u_1, u_2} D^{RC;u_1}_s.
    \end{align*}
    Finally, to show they sum up to $\qd$,
    \begin{align*}
        \sum_u D^{RC;u}_s &= \sum_{i,j} \frac{\dimR}{|N_C|} \sum_{m \in N_C} R^{jj}(m)^* A_{s} ^{q_i m \dash{q}_i} B_{s}^{c_i} = \sum_{i} \frac{\dimR}{|N_C|} \sum_{m \in N_C} \chi_R(m)^* A_{s}^{q_i m \dash{q}_i} B_{s}^{c_i}\\
        &= \frac{\dimR}{|N_C|} \sum_{m \in N_C} \chi_R(d)^* \sum_{q_i \in Q_C}  A_{s}^{q_i m \dash{q}_i} B_{s}^{c_i} = \qd.
    \end{align*}
\end{proof}

The projectors $D_s^{RC}$ satisfy the following properties.
\begin{lemma}
\label{lem:DRCprops}
    The $\qd$ are orthogonal projectors and
    $$\qd[RC] \qd[R'C'] = \delta_{RC, R'C'} \qd[RC], \quad \sum_{RC} \, D_s^{RC} = \I.$$
\end{lemma}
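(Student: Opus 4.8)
### Proof Proposal for Lemma \ref{lem:DRCprops}

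The plan is to prove the two claimed identities by direct computation from Definition \ref{def:Wigner projectors}, leveraging the orthogonality of the flux projectors $B_s^g$, the group structure of the gauge transformations $A_s^h$, the commutation relation \eqref{eq:ABonsamesite}, and the Schur orthogonality relations \eqref{eq:Schur2} for irreducible characters. First I would record that each $D_s^{RC}$ is an orthogonal (i.e.\ self-adjoint, idempotent) projector: self-adjointness follows from $\chi_R(m)^* \mapsto \chi_R(\bar m)$ under conjugation together with $(A_s^h)^* = A_s^{\bar h}$ (Eq.\ \eqref{eq:Aelementary}) and $(B_s^g)^* = B_s^g$ (Eq.\ \eqref{eq:Belementary}), after reindexing $m \mapsto \bar m$ and $q \mapsto q$ (note $q m \bar q \mapsto q \bar m \bar q = \overline{q m \bar q}$ and $q r_C \bar q$ is fixed); idempotency is already subsumed in Lemma \ref{lem:DRC decomposes into DRCu}, since $D_s^{RC} = \sum_u D_s^{RC;u}$ is a sum of mutually orthogonal projectors.

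The heart of the argument is the relation $D_s^{RC} D_s^{R'C'} = \delta_{RC,R'C'} D_s^{RC}$. I would expand the product using the defining formula, obtaining a double sum over $m \in N_C$, $m' \in N_{C'}$, $q \in Q_C$, $q' \in Q_{C'}$ of terms
$$
\chi_R(m)^* \chi_{R'}(m')^* \, A_s^{q m \bar q} B_s^{q r_C \bar q} A_s^{q' m' \bar q'} B_s^{q' r_{C'} \bar q'}.
$$
Using Eq.\ \eqref{eq:ABonsamesite} to move the middle $A$ past the first $B$, then Eq.\ \eqref{eq:Belementary} ($B_s^g B_s^{g'} = \delta_{g,g'} B_s^{g'}$), the product of the two flux projectors forces $q r_C \bar q = q' r_{C'} \bar q'$, which is only possible if $C = C'$ and — since the $q r_C \bar q = c_{i(q)}$ are distinct across $q \in Q_C$ — if $q = q'$. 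With $C = C'$ and $q = q'$ fixed, the gauge transformations compose via $A_s^h A_s^{h'} = A_s^{hh'}$ (Eq.\ \eqref{eq:Aelementary}) to give $A_s^{q m m' \bar q}$; substituting the dummy variable $M = m m' \in N_C$ and summing over $m$ (or $m'$) the character factor becomes $\sum_{m} \chi_R(m)^* \chi_{R'}(M \bar m)^*$, and one invokes the Schur orthogonality relation \eqref{eq:Schur} / the resulting convolution identity $\frac{1}{|N_C|}\sum_m \chi_R(\bar m)\chi_{R'}(Mm) = \delta_{R,R'} \frac{1}{\dim R}\chi_R(M)$ to collapse everything back to $\delta_{R,R'} D_s^{RC}$ after accounting for the prefactors $\dim R/|N_C|$.

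For the completeness relation $\sum_{RC} D_s^{RC} = \mathds{1}$, I would write $\sum_{RC} D_s^{RC} = \sum_{C}\sum_{q \in Q_C} \frac{1}{|N_C|}\sum_{m \in N_C} \big(\sum_{R \in (N_C)_{irr}} \dim(R)\,\chi_R(m)^*\big) A_s^{q m \bar q} B_s^{q r_C \bar q}$. The inner sum over irreducible representations of $N_C$ equals $|N_C|\,\delta_{m,1}$ by the second orthogonality relation for characters (the $\dim R = \chi_R(1)$ and Eq.\ \eqref{eq:Schur2} with $h_1 = 1$), so only the $m = 1$ term survives, giving $\sum_C \sum_{q \in Q_C} A_s^1 B_s^{q r_C \bar q} = \sum_C \sum_{q \in Q_C} B_s^{c_{i(q)}}$ since $A_s^1 = \mathds{1}$. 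As $q$ ranges over $Q_C$ the elements $q r_C \bar q$ range over all of $C$, and as $C$ ranges over all conjugacy classes these exhaust $G$, so this is $\sum_{g \in G} B_s^g = \mathds{1}$ by Eq.\ \eqref{eq:Belementary}. The main obstacle is purely bookkeeping: carefully tracking which index forces which Kronecker delta in the product computation — in particular the step where $B_s^{q r_C \bar q} B_s^{q' r_{C'}\bar q'} = \delta_{q r_C \bar q,\, q' r_{C'}\bar q'} B_s^{\cdots}$ must be correctly interpreted via Lemma \ref{lem:unique g=qn} (uniqueness of the $q n$ decomposition) to conclude $C = C'$ \emph{and} $q = q'$, rather than merely that the group elements agree.
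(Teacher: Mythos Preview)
Your argument is correct. The paper itself gives no proof beyond citing Proposition~21 and Eq.~(B77) of Bombin--Martin-Delgado, so your direct computation is strictly more informative than what appears in the text.

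One small point of bookkeeping: when you ``move the middle $A$ past the first $B$'' via \eqref{eq:ABonsamesite}, the flux projector gets conjugated, so the two $B$'s literally impose $(q' \bar m' \bar q')(q r_C \bar q)(q' m' \bar q') = q' r_{C'} \bar q'$, not $q r_C \bar q = q' r_{C'} \bar q'$ on the nose. This still forces $C = C'$ (the elements are conjugate), and then since $m' \in N_C$ commutes with $r_C$ the condition reduces to $\bar q' q \in N_C$, whence $q = q'$ by Lemma~\ref{lem:unique g=qn}, exactly as you say. Alternatively, one may first observe that $A_s^{q m \bar q}$ already commutes with $B_s^{q r_C \bar q}$ (because $m \in N_C$), swap them, and then commute the combined $A_s^{q m \bar q q' m' \bar q'}$ past $B_s^{q' r_{C'} \bar q'}$; after using $m' r_{C'} \bar m' = r_{C'}$ and $m r_C \bar m = r_C$ the constraint becomes precisely $q r_C \bar q = q' r_{C'} \bar q'$ as you wrote. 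Either route lands in the same place, and the rest of your character convolution and the completeness computation via \eqref{eq:Schur2} are correct.
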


\begin{proof}
    This follows immediately from Proposition 21 and Eq. (B77) of \cite{Bombin2007-uw}.
\end{proof}

\begin{lemma} \label{lem:qdcommute}
    Let $s_0 = (v_0, f_0)$ be a site, then $D^{RC}_\site$ commutes with $A_{v}, B_{f}$ for all $v \neq v_0$ and all $f \neq f_0$.
\end{lemma}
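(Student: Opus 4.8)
The plan is to recall that $D_s^{RC}$ is defined in terms of the gauge transformations $A_s^{qm\bar q}$ and flux projectors $B_s^{qr_C\bar q}$ at the site $s_0=(v_0,f_0)$, namely
$$D_{s_0}^{RC} = \frac{\dimR}{|N_C|} \sum_{m\in N_C} \chi_R(m)^* \sum_{q\in Q_C} A_{s_0}^{q m \bar q} B_{s_0}^{q r_C \bar q}.$$
So it suffices to show that each operator $A_{s_0}^{qm\bar q}$ and each $B_{s_0}^{qr_C\bar q}$ commutes with $A_v$ for every $v\neq v_0$ and with $B_f$ for every $f\neq f_0$; commutativity with $D_{s_0}^{RC}$ then follows by linearity. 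Since $A_v$ depends only on the vertex $v$ and $A_{s_0}^h$ depends only on the vertex $v_0$, and $v\neq v_0$, the operators $A_{s_0}^h$ and $A_v$ commute by Eq.~\eqref{eq:ABcommuteondiffsites} (they are supported on disjoint sites). Likewise $B_{s_0}^g$ and $B_f$ commute for $f\neq f_0$ by the same equation.

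Next I would handle the mixed terms: $A_{s_0}^h$ versus $B_f$ for $f\neq f_0$, and $B_{s_0}^g$ versus $A_v$ for $v\neq v_0$. For these one again appeals to Eq.~\eqref{eq:ABcommuteondiffsites}, which states that $[A_s^h, B_{s'}^g]=0$ whenever $s\neq s'$. Since $f\neq f_0$ means any site $s'$ with $f(s')=f$ differs from $s_0$, we get $[A_{s_0}^h, B_f]=0$, and similarly $[B_{s_0}^g, A_v]=0$ for $v\neq v_0$. Combining, every summand $A_{s_0}^{qm\bar q} B_{s_0}^{qr_C\bar q}$ appearing in $D_{s_0}^{RC}$ commutes with each $A_v$ ($v\neq v_0$) and each $B_f$ ($f\neq f_0$), hence so does the linear combination $D_{s_0}^{RC}$.

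This is essentially the same argument already used in Lemma~\ref{lem:a and B commute with A_v and B_f} for the sub-projectors $A_{s_0}^{RC;u}$ and $B_{s_0}^{c_i}$, so in practice the cleanest route is to simply invoke that lemma together with Lemma~\ref{lem:DRC decomposes into DRCu}, which writes $D_{s_0}^{RC} = \sum_u D_{s_0}^{RC;u}$ with each $D_{s_0}^{RC;u}=A_{s_0}^{RC;u}B_{s_0}^{c_i}$, a product of two operators each commuting with all $A_v$, $B_f$ for $v\neq v_0$, $f\neq f_0$. There is no real obstacle here — the statement is a routine consequence of the locality of the quantum double algebra realisation at a site and the commutation relations in Eq.~\eqref{eq:ABcommuteondiffsites}; the only thing to be careful about is to make sure the decomposition over $u$ ranges over all of $I_{RC}$ so that the sum reconstitutes $D_{s_0}^{RC}$ exactly.
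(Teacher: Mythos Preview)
Your proposal is correct and, in its second paragraph, takes exactly the same approach as the paper: decompose $D_{s_0}^{RC} = \sum_u D_{s_0}^{RC;u}$ via Lemma~\ref{lem:DRC decomposes into DRCu}, write each $D_{s_0}^{RC;u} = A_{s_0}^{RC;u} B_{s_0}^{c_i}$, and invoke Lemma~\ref{lem:a and B commute with A_v and B_f}. Your first paragraph's direct argument from the definition and Eq.~\eqref{eq:ABcommuteondiffsites} is also fine and amounts to unpacking that same lemma.
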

\begin{proof}
    Noting that $D_{\site}^{RC} = \sum_u \, D_{\site}^{RC;u}$ (Lemma \ref{lem:DRC decomposes into DRCu}) and $D_{\site}^{RC;(i, j)} = A_{\site}^{RC;(i, j)} B_{\site}^{c_i}$, the claim follows immediately from Lemma \ref{lem:a and B commute with A_v and B_f}.
\end{proof}

\subsection{Representation basis for ribbon operators} \label{sec:RC basis for ribbon operators}

Recall Definition \ref{def:RC ribbons} of the ribbon operators $\frcuv$. These ribbon operators satisfy the following basic properties.
\begin{lemma} (\cite[Lemma~4.11]{Naaijkens2015-xj})
    \label{lem:decomposition of F}
    If $\rho = \rho_1 \rho_2$ then
    $$F^{RC;uw}_{\rho} = \bigg( \frac{\abs{N_C}}{\dimR} \bigg) \sum_{v} F^{RC;uv}_{\rho_1} F^{RC;vw}_{\rho_2}.$$
\end{lemma}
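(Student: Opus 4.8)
The statement to prove is the composition/factorization property of the representation-basis ribbon operators: if $\rho = \rho_1 \rho_2$ then
$$F^{RC;uw}_{\rho} = \Big( \frac{\abs{N_C}}{\dimR} \Big) \sum_{v} F^{RC;uv}_{\rho_1} F^{RC;vw}_{\rho_2}.$$
The natural approach is to unfold both sides into the ordinary Kitaev ribbon operators $F^{h,g}_\rho$ using Definition \ref{def:RC ribbons}, apply the inductive composition law \eqref{eq:F inductive def} for $F^{h,g}_\rho$ on the left, and then match the two sides by a change of summation variables together with the Schur orthogonality relation \eqref{eq:Schur}. So the first step is to write $F^{RC;uw}_\rho$ with $u=(i,j)$, $w=(i'',j'')$ as $\frac{\dimR}{\abs{N_C}}\sum_{m\in N_C} R^{jj''}(m)^* F^{\bar c_i,\, q_i m \bar q_{i''}}_\rho$, and then use \eqref{eq:F inductive def} to split $F^{\bar c_i,\, q_i m \bar q_{i''}}_{\rho_1\rho_2} = \sum_{k\in G} F^{\bar c_i, k}_{\rho_1} F^{\bar k \bar c_i k,\, \bar k q_i m \bar q_{i''}}_{\rho_2}$.

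Next I would massage the second factor so that it matches the form appearing in the definition of $F^{RC;\cdot\cdot}_{\rho_2}$. The key is the substitution $k = q_i n \bar q_{i'}$ for the summation index, where the sum over $k\in G$ is reorganized via Lemma \ref{lem:unique g=qn} (each $g\in G$ writes uniquely as $q n$ with $q\in Q_C$, $n\in N_C$); writing $\bar q_i k = n' q'$ appropriately, one finds that the first factor $F^{\bar c_i, k}_{\rho_1}$ is nonzero only when $k$ is of the form $q_i n \bar q_{i'}$ for some $i'$ and $n\in N_C$, because $F^{\bar c_i,k}_{\rho_1}$ has a definite "flux-type" requirement. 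Concretely, $F^{\bar c_i, k}_{\rho_1}$ is designed so that $F^{RC;u v'}_{\rho_1}$ picks out the components with $k = q_i m_1 \bar q_{i'}$ and weight $R^{jj'}(m_1)^*$; this identification of $k$ with an element of $q_i N_C \bar q_{i'}$ and the accompanying group-element bookkeeping (noting $\bar c_i q_i = q_i \bar r_C$ so that $\bar k \bar c_i k = q_{i'} \bar r_C q_{i'}\cdots$ reduces to something in $q_{i'}N_C\bar q_{i'}$-conjugacy form) is the computational heart of the argument.

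Once both ribbon operators on the right are expressed in the $F^{RC}$ form, the claim reduces to an identity of the schematic shape $R^{jj''}(m) = \frac{\dimR}{\abs{N_C}}\sum_{j'}\sum_{m_1 m_2 = m} R^{jj'}(m_1) R^{j'j''}(m_2)$ after collecting coefficients — but this is immediate from $\sum_{j'} R^{jj'}(m_1)R^{j'j''}(m_2) = R^{jj''}(m_1 m_2)$ (matrix multiplication / representation homomorphism property), so that the sum over the shared index $v=(i',j')$ and over the intermediate group element collapses correctly, and the Schur orthogonality \eqref{eq:Schur} handles the normalization constant, producing exactly the factor $\abs{N_C}/\dimR$. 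I would organize the bookkeeping so that the conjugacy-class label $i'$ appearing in $v$ is forced to be determined by $k$ (there is no independent sum over $i'$ beyond what the decomposition of $k$ dictates), which is what makes the single factor $\abs{N_C}/\dimR$ come out rather than $(\abs{N_C}/\dimR)^2$ or a sum over $\abs{C}$.

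\textbf{Main obstacle.} The genuine difficulty is purely notational/combinatorial: carefully tracking how the group element $k\in G$ in \eqref{eq:F inductive def} gets partitioned according to Lemma \ref{lem:unique g=qn}, and verifying that the conjugated arguments $\bar k \bar c_i k$ and $\bar k q_i m \bar q_{i''}$ land in precisely the right cosets so that the second ribbon operator $F^{\cdot,\cdot}_{\rho_2}$ can be repackaged as a linear combination of $F^{RC;vw}_{\rho_2}$ with the correct matrix-element coefficients $R^{j'j''}(m_2)^*$. Since this is exactly Lemma 4.11 of \cite{Naaijkens2015-xj}, I would in fact cite that reference for the detailed bookkeeping and only indicate the strategy above, as the excerpt already attributes the lemma to that source; a self-contained proof would amount to reproducing their computation essentially verbatim with the present paper's conventions for $L^h_\tau$, $T^g_\tau$, and the orientation of ribbons.
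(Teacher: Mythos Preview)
The paper does not give a proof of this lemma at all; it simply attributes it to \cite[Lemma~4.11]{Naaijkens2015-xj} and moves on. Your final paragraph, where you say you would cite that reference, therefore matches the paper exactly.

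Your sketch of the underlying computation is essentially correct and is the standard argument, but two remarks are in order. First, your statement that ``$F^{\bar c_i, k}_{\rho_1}$ is nonzero only when $k$ is of the form $q_i n \bar q_{i'}$'' is misleading: the point is rather that the map $(i', m_1) \mapsto q_i m_1 \bar q_{i'}$ is a bijection from $\{1,\dots,\abs{C}\}\times N_C$ onto $G$ (by Lemma~\ref{lem:unique g=qn}), so the sum over $k\in G$ in \eqref{eq:F inductive def} is exactly reparametrized by this substitution, with no vanishing of terms needed. Under this substitution one checks $\bar k \bar c_i k = \bar c_{i'}$ and $\bar k q_i m \bar q_{i''} = q_{i'}(\bar m_1 m)\bar q_{i''}$, and setting $m_2 = \bar m_1 m$ the two sides match term by term. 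Second, the Schur orthogonality relation \eqref{eq:Schur} is not actually used anywhere: the only representation-theoretic input is the homomorphism property $\sum_{j'} R^{jj'}(m_1)R^{j'j''}(m_2) = R^{jj''}(m_1 m_2)$, which you correctly identify, and the normalization factor $\abs{N_C}/\dimR$ falls out of elementary bookkeeping $\frac{\abs{N_C}}{\dimR}\cdot\big(\frac{\dimR}{\abs{N_C}}\big)^2 = \frac{\dimR}{\abs{N_C}}$.
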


\begin{lemma}(\cite[Eq. (5.1)]{Naaijkens2015-xj})
\label{lem:FdaggerF identity}
    We have
    $$\sum_v (F^{RC;u_1 v}_\rho)^* F^{RC;u_2 v}_\rho = \delta_{u_1,u_2} \bigg(\frac{\dimR}{|N_C|}\bigg)^2 \mathds{1} \quad \text{and} \quad \sum_v F^{RC;u_1 v}_\rho (F^{RC;u_2 v}_\rho)^* = \delta_{u_1,u_2} \bigg(\frac{\dimR}{|N_C|}\bigg)^2 \mathds{1}$$
\end{lemma}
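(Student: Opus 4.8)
The statement is a direct computation from Definition~\ref{def:RC ribbons} together with the elementary properties \eqref{eq:F elementary props} of the ribbon operators. I will present the argument for a positive ribbon $\rho$; the negative case is identical, because the discrepancy in the multiplication rule of \eqref{eq:F elementary props} becomes immaterial once a Kronecker delta collapses the relevant sum.

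First I would unpack the adjoint: writing $u_1=(i_1,j_1)$, $u_2=(i_2,j_2)$, $v=(i',j')$, the identity $(F^{h,g}_\rho)^*=F^{\bar h,g}_\rho$ from \eqref{eq:F elementary props} gives
$$(F^{RC;u_1 v}_\rho)^* = \frac{\dimR}{|N_C|}\sum_{m\in N_C} R^{j_1 j'}(m)\, F^{c_{i_1},\, q_{i_1} m \bar q_{i'}}_\rho .$$
Multiplying by $F^{RC;u_2 v}_\rho$ and applying $F^{h,g}_\rho F^{h',g'}_\rho=\delta_{g,g'}F^{h'h,g}_\rho$, the product $F^{c_{i_1},\,q_{i_1}m\bar q_{i'}}_\rho F^{\bar c_{i_2},\,q_{i_2}m'\bar q_{i'}}_\rho$ acquires a factor $\delta_{q_{i_1}m,\,q_{i_2}m'}$. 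By uniqueness of the decomposition $g=qn$ with $q\in Q_C$, $n\in N_C$ (Lemma~\ref{lem:unique g=qn}), this forces $i_1=i_2$ and $m=m'$, and then the leftover group label is $\bar c_{i_1}c_{i_1}=1$, so the product equals $F^{1,\,q_{i_1}m\bar q_{i'}}_\rho=T^{q_{i_1}m\bar q_{i'}}_\rho$. Summing over $v=(i',j')$ and over $m,m'$ leaves
$$\sum_v (F^{RC;u_1 v}_\rho)^* F^{RC;u_2 v}_\rho = \Big(\frac{\dimR}{|N_C|}\Big)^2 \delta_{i_1,i_2}\sum_{i'}\sum_{j'}\sum_{m\in N_C} R^{j_1 j'}(m)\,\overline{R^{j_2 j'}(m)}\, T^{q_{i_1}m\bar q_{i'}}_\rho .$$

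Next I would carry out the two remaining group-theoretic sums. The sum over $j'$ is $\sum_{j'}R^{j_1 j'}(m)\overline{R^{j_2 j'}(m)}=(R(m)R(m)^*)_{j_1 j_2}=\delta_{j_1,j_2}$ by unitarity of the single matrix $R(m)$ (this is not the Schur orthogonality relation); together with $\delta_{i_1,i_2}$ this produces $\delta_{u_1,u_2}$ and removes all $m$-dependence of the coefficient. It then remains to show $\sum_{i'}\sum_{m\in N_C} T^{q_{i_1}m\bar q_{i'}}_\rho=\mathds{1}$ for fixed $i_1$. The map $(i',m)\mapsto q_{i_1}m\bar q_{i'}$ is a bijection $\{1,\dots,|C|\}\times N_C\to G$: it is injective by Lemma~\ref{lem:unique g=qn} (if $q_{i_1}m_1\bar q_{i'_1}=q_{i_1}m_2\bar q_{i'_2}$ then $m_2^{-1}m_1=\bar q_{i'_2}q_{i'_1}\in N_C$, which forces $i'_1=i'_2$ and then $m_1=m_2$), and both sides have cardinality $|C|\,|N_C|=|G|$. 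Hence the sum collapses to $\sum_{g\in G}T^g_\rho=L^1_\rho=\mathds{1}$, where the last equality follows by induction from the recursion \eqref{eq:F inductive def} and the single-triangle base cases. This proves the first identity. For the second identity, $\sum_v F^{RC;u_1 v}_\rho (F^{RC;u_2 v}_\rho)^*$, the computation is entirely parallel: the product $F^{\bar c_{i_1},\,q_{i_1}m\bar q_{i'}}_\rho F^{c_{i_2},\,q_{i_2}m'\bar q_{i'}}_\rho$ again carries $\delta_{i_1,i_2}\delta_{m,m'}$ and reduces to $T^{q_{i_1}m\bar q_{i'}}_\rho$, the $j'$-sum $\sum_{j'}\overline{R^{j_1 j'}(m)}R^{j_2 j'}(m)=\delta_{j_1,j_2}$ again follows from unitarity, and the $(i',m)$-sum is handled exactly as above.

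I do not expect a genuine obstacle here; the only points requiring care are (i) recognizing that the collapse of the double sum over $N_C\times N_C$ comes from uniqueness of the $Q_C\times N_C$ factorization (Lemma~\ref{lem:unique g=qn}) rather than from Schur orthogonality, and (ii) verifying the bijection $(i',m)\mapsto q_{i_1}m\bar q_{i'}$ so that the leftover sum of flux projectors becomes $\sum_{g\in G}T^g_\rho=\mathds{1}$. Keeping track of the positive/negative distinction in \eqref{eq:F elementary props} requires a moment's attention but, as noted, changes nothing.
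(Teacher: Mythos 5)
Your proof is correct. Note that the paper itself offers no proof of this lemma --- it is simply cited from \cite[Eq.~(5.1)]{Naaijkens2015-xj} --- so your argument is a self-contained verification rather than a variant of anything in the text. The two points you flag as requiring care are exactly the right ones, and both check out: the product rule in \eqref{eq:F elementary props} produces $\delta_{q_{i_1}m,\,q_{i_2}m'}$, which via Lemma~\ref{lem:unique g=qn} collapses to $\delta_{i_1,i_2}\delta_{m,m'}$ and kills the group label in the first slot; the $j'$-sum is indeed plain unitarity of the single matrix $R(m)$ rather than Schur orthogonality, which is what makes the coefficient $m$-independent so that the remaining double sum $\sum_{i'}\sum_{m}T^{q_{i_1}m\bar q_{i'}}_\rho$ can be reindexed by the bijection $(i',m)\mapsto q_{i_1}m\bar q_{i'}$ onto $G$ and evaluated as $\sum_{g}T^g_\rho=L^1_\rho=\mathds{1}$ (the last step following inductively from \eqref{eq:F inductive def}). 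The positive/negative distinction is immaterial for the reason you give: once $i_1=i_2$ the two orderings $\bar c_{i_1}c_{i_1}$ and $c_{i_1}\bar c_{i_1}$ both equal $1$.
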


\subsection{Detectors of topological charge} \label{sec:topological charge detectors}

Recall Definition \ref{def:charge detectors} of the `charge detectors' $\knRC[\sigma]$. These satisfy the following basic properties.

\begin{lemma}(\cite[Eq. (B77)]{Bombin2007-uw}) \label{lem:basic properties of K}
    The $\knRC[\sigma]$ are orthogonal projectors and
    \begin{equation*}
        \knRC[\sigma] K^{R'C'}_\sigma = \delta_{RC, R'C'} \knRC[\sigma], \quad 
        \sum_{RC} \knRC[\sigma] = \I.
    \end{equation*}
\end{lemma}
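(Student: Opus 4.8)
The statement to prove is Lemma \ref{lem:basic properties of K}: the charge detectors $\knRC[\sigma]$ attached to a closed ribbon $\sigma$ are orthogonal projectors satisfying $\knRC[\sigma] K^{R'C'}_\sigma = \delta_{RC,R'C'} \knRC[\sigma]$ and $\sum_{RC} \knRC[\sigma] = \I$. The cited reference is \cite[Eq.~(B77)]{Bombin2007-uw}, so in spirit the plan is simply to import that computation, but let me sketch how I would carry it out self-contained.

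The plan is to reduce everything to the algebra generated by $A_{\partial_0\sigma}^h$ and $B_{\partial_0\sigma}^g$ at the single site $s = \partial_0 \sigma = \partial_1 \sigma$ (using that $\sigma$ is closed). For a closed ribbon the operators $F_\sigma^{h,g}$ restricted to the relevant subspace behave exactly like the quantum double algebra generators at $s$: indeed $A_s^h = F^{h,1}_{\rho_\star(s)}$ and $B_s^g = F^{1,g}_{\rho_\triangle(s)}$ are the special cases of closed ribbons, and the commutation relations \eqref{eq:[A,F] and [B,F]}, \eqref{eq:Aelementary}, \eqref{eq:Belementary}, \eqref{eq:ABonsamesite} give $F_\sigma^{h,g}$ the multiplication rule of a crossed-product-type algebra. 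First I would record the key algebraic identity $F_\sigma^{h_1,g_1} F_\sigma^{h_2,g_2} = \delta_{g_1,g_2}\, F_\sigma^{h_2 h_1, g_1}$ for a positive closed ribbon (from \eqref{eq:F elementary props}) together with $(F_\sigma^{h,g})^* = F_\sigma^{\bar h, g}$. Then, writing $\knRC[\sigma] = \dimchbasis \sum_{m\in N_C}\chi_R(m)^* \sum_{q\in Q_C} F_\sigma^{qm\bar q,\, qr_C\bar q}$, I would compute $\knRC[\sigma] K^{R'C'}_\sigma$ directly: the $\delta_{g_1,g_2}$ forces $q r_C \bar q = q' r_{C'} \bar{q'}$, which (by Lemma \ref{lem:unique g=qn}-type reasoning, i.e. the uniqueness built into the iterator sets) forces $C = C'$ and $q = q'$; the remaining sum over $N_C$ collapses via the Schur orthogonality relation for characters \eqref{eq:Schur2} (or rather the first-column Schur relation) to give $\delta_{R,R'}\knRC[\sigma]$. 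Self-adjointness $(\knRC[\sigma])^* = \knRC[\sigma]$ follows from $(F_\sigma^{h,g})^* = F_\sigma^{\bar h,g}$, the substitution $m\mapsto \bar m$, and $\chi_R(\bar m)^* = \chi_R(m)$ for a unitary representation; combined with $\knRC[\sigma]\knRC[\sigma] = \knRC[\sigma]$ this gives that each is an orthogonal projector. Orthogonality of the family is the $RC\neq R'C'$ case just computed.

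For completeness $\sum_{RC}\knRC[\sigma] = \I$, I would sum over all conjugacy classes $C$ and all irreps $R\in(N_C)_{irr}$. Using $\dim R / |N_C|$ times $\sum_R (\dim R)\chi_R(m)^*$, the sum over irreps of $N_C$ gives $|N_C|\,\delta_{m,1}$ by the completeness of irreducible characters (columns-of-the-character-table relation, \eqref{eq:Schur2} applied at $h_2 = 1$, so $\sum_{R}\chi_R(m)^*\chi_R(1) = |N_C|\delta_{m,1}$ since the conjugacy class of $1$ in $N_C$ is just $\{1\}$ and its centralizer is all of $N_C$). Thus $\sum_{R\in(N_C)_{irr}}\knRC[\sigma] = \sum_{q\in Q_C} F_\sigma^{1,\, q r_C\bar q} = \sum_{c\in C} F_\sigma^{1,c} = \sum_{c\in C} T_\sigma^c$. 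Summing now over all conjugacy classes $C$ partitions $G$, so $\sum_{RC}\knRC[\sigma] = \sum_{g\in G} T_\sigma^g = \sum_g F_\sigma^{1,g}$, and this equals $\I$ by \eqref{eq:Belementary} (for the closed ribbon $\rho_\triangle(s)$ this is exactly $\sum_g B_s^g = \I$; for a general closed ribbon one uses that $\sum_g T_\sigma^g$ is a sum over a complete set of flux eigenprojectors and hence the identity — alternatively invoke \eqref{eq:F elementary props} to see $\sum_g T_\sigma^g$ is an idempotent of full rank, or cite \cite{Bombin2007-uw} Eq.~(B75)).

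The main obstacle I anticipate is bookkeeping with the iterator sets $Q_C$ and the representatives $r_C$: one must be careful that $q r_C \bar q = q' r_C \bar q'$ with $q,q'\in Q_C$ forces $q = q'$ (this is exactly the defining property of the iterator set, and the analogue of Lemma \ref{lem:unique g=qn}), and that the cross-class terms $q r_C \bar q = q' r_{C'} \bar{q'}$ with $C\neq C'$ simply cannot occur because the left side lies in $C$ and the right in $C'$. Once that combinatorial point is pinned down, the rest is a mechanical application of Schur orthogonality, exactly as in Bombin--Martin-Delgado. So the cleanest writeup is: (1) state the closed-ribbon multiplication law, (2) do the $\knRC[\sigma]K^{R'C'}_\sigma$ computation, deducing projector, self-adjoint, and orthogonality simultaneously, (3) do the completeness sum. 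Given that all of this is in \cite{Bombin2007-uw}, it is legitimate to keep the proof to a short paragraph citing Eq.~(B77) and (B75)–(B77) there, which is presumably what the authors do.
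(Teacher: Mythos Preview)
Your proposal is correct, and you anticipated the paper's approach exactly: the paper gives no proof at all for this lemma, simply stating it with the citation to \cite[Eq.~(B77)]{Bombin2007-uw}. Your detailed sketch of the underlying computation (multiplication law \eqref{eq:F elementary props} forcing $C=C'$ and $q=q'$ via the iterator-set property, then character convolution via Schur orthogonality, self-adjointness via $m\mapsto\bar m$, and completeness via $\sum_R \dim R\,\chi_R(m)^* = |N_C|\delta_{m,1}$ followed by $\sum_g T_\sigma^g = \I$) is sound and would serve as a self-contained proof if one were wanted.
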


\subsection{Actions on the \ffgs} \label{sec:actions on FFGS}

In this subsection we consider several ways in which the ribbon operators $F_{\rho}^{RC;uv}$, the projectors $D_{s}^{RC;u}$ (Definition \ref{def:Wigner projectors}), and the label changers $A_{s}^{RC;u_2 u_1}$ (Definition \ref{def:label changers}) act on the frustration free ground state. We will work in the GNS representation $(\pi_0, \caH_0, \ket{\Omega_0})$ of the frustration free ground state $\omega_0$, and will in the remainder of this section drop $\pi_0$ from the notation. \ie for any $O \in \cstar$ we simply write $O$ instead of $\pi_0(O)$.

Let us first note the following.
\begin{lemma} \label{lem:omega_0(O) = 1}
    If $O$ is a unitary or a projector and $\omega_0(O) = 1$, then $O \ket{\Omega_0} = \ket{\Omega_0}$.
\end{lemma}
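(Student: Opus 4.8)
The statement is a basic fact about GNS vectors. The plan is to use the Cauchy--Schwarz inequality in the GNS representation, exploiting that $\ket{\Omega_0}$ is a unit vector and $\omega_0(O) = \inner{\Omega_0}{O\,\Omega_0} = 1$.

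First I would treat the projector case. If $O$ is an orthogonal projector with $\omega_0(O) = 1$, then $\inner{\Omega_0}{O\,\Omega_0} = 1$, so $\inner{\Omega_0}{(\I - O)\,\Omega_0} = 0$. Since $\I - O$ is a positive operator (it is itself a projector), we may write $\I - O = (\I-O)^*(\I-O)$, whence $\norm{(\I - O)\ket{\Omega_0}}^2 = \inner{\Omega_0}{(\I-O)\,\Omega_0} = 0$. Therefore $(\I - O)\ket{\Omega_0} = 0$, i.e. $O\ket{\Omega_0} = \ket{\Omega_0}$.

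For the unitary case, if $O$ is unitary with $\omega_0(O) = \inner{\Omega_0}{O\,\Omega_0} = 1$, I would compute $\norm{O\ket{\Omega_0} - \ket{\Omega_0}}^2 = \norm{O\Omega_0}^2 - 2\,\mathrm{Re}\,\inner{\Omega_0}{O\,\Omega_0} + \norm{\Omega_0}^2 = 1 - 2 + 1 = 0$, using that $O$ is an isometry so $\norm{O\Omega_0} = \norm{\Omega_0} = 1$ and that $\inner{\Omega_0}{O\,\Omega_0} = 1$ is real. Hence $O\ket{\Omega_0} = \ket{\Omega_0}$.

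There is no real obstacle here; this is a routine Hilbert-space computation. The only thing to be mindful of is that both arguments rely on $\ket{\Omega_0}$ being a \emph{unit} vector representing $\omega_0$, which is part of the GNS data $(\pi_0, \caH_0, \ket{\Omega_0})$ fixed earlier, and on the convention (stated in the excerpt) that ``projector'' means a self-adjoint idempotent, so that $\I - O$ is again a positive operator.
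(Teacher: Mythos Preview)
Your proof is correct and follows essentially the same elementary Hilbert-space reasoning as the paper. The paper's version is slightly more unified: it observes that in both cases $\norm{O} \leq 1$, so $1 = \langle \Omega_0, O\,\Omega_0\rangle$ forces equality in Cauchy--Schwarz and hence $O\ket{\Omega_0} = \ket{\Omega_0}$, whereas you handle the projector and unitary cases separately with explicit norm computations.
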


\begin{proof}
    We have
    $$ 1 = \omega_0(O) = \langle \Omega_0, O \, \Omega_0 \rangle.$$
    since $\norm{O} \leq 1$ this is only possible if $O \ket{\Omega_0} = \ket{\Omega_0}$.
\end{proof}

\begin{lemma} \label{lem:change ribbon operator label}
    Let $\rho$ be a finite ribbon with $\partial_0 \rho = \site$. Then
    \begin{equation*}
		A^{RC;u_2 u_1}_{\site} F^{RC;u_1 v}_{\rho} | \Omega_0 \rangle = F^{RC;u_2 v}_{\rho} | \Omega_0 \rangle.
  \end{equation*}
\end{lemma}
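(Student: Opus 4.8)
The statement asserts that acting on the ground state vector, the label-changer $A_{\site}^{RC;u_2u_1}$ applied to $F_\rho^{RC;u_1v}\ket{\Omega_0}$ converts the label $u_1$ into $u_2$. My plan is to unpack both sides using their definitions in terms of the underlying ribbon operators $F^{h,g}$ and gauge transformations $A_s^h$, and then use the commutation relation \eqref{eq:[A,F] and [B,F]} between $A_{s_0}^k$ and $F_\rho^{h,g}$ together with the fact that $\omega_0$ is the frustration free ground state, so $A_{v_0}\ket{\Omega_0}=\ket{\Omega_0}$ (Lemma \ref{lem:omega_0(O) = 1}), which gives $\frac{1}{|G|}\sum_{k}A_{\site}^k\ket{\Omega_0}=\ket{\Omega_0}$.

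First I would write $u_1=(i_1,j_1)$, $u_2=(i_2,j_2)$, $v=(i',j')$ and expand
\[
A_{\site}^{RC;u_2u_1}F_\rho^{RC;u_1v}=\left(\frac{\dimR}{|N_C|}\right)^2\sum_{m,m'\in N_C}R^{j_2j_1}(m)^*\,R^{j_1j'}(m')^*\;A_{\site}^{q_{i_2}m\dash q_{i_1}}\,F_\rho^{\dash c_{i_1},\,q_{i_1}m'\dash q_{i'}}.
\]
Using the first relation in \eqref{eq:[A,F] and [B,F]} (with the ribbon positive; the negative case is identical since that relation has the same form for $A_{s_0}$), $A_{s_0}^kF_\rho^{h,g}=F_\rho^{khk^{-1},kg}A_{s_0}^k$ with $k=q_{i_2}m\dash q_{i_1}$, $h=\dash c_{i_1}$, $g=q_{i_1}m'\dash q_{i'}$. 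Since $m\in N_C$ commutes with $r_C$, conjugating $\dash c_{i_1}=q_{i_1}\dash r_C\dash q_{i_1}$ by $q_{i_2}m\dash q_{i_1}$ yields $q_{i_2}\dash r_C\dash q_{i_2}=\dash c_{i_2}$, and $kg=q_{i_2}mm'\dash q_{i'}$. So after the commutation the summand becomes $F_\rho^{\dash c_{i_2},\,q_{i_2}mm'\dash q_{i'}}A_{\site}^{q_{i_2}m\dash q_{i_1}}$.

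The main point — and the only place any real work happens — is then a Schur-orthogonality manipulation. I would substitute $M=mm'$ as the new summation variable (for fixed $m$, as $m'$ ranges over $N_C$ so does $M$), so that the operator part depends only on $M$ while the coefficients become $\sum_{m}R^{j_2j_1}(m)^*R^{j_1j'}(\dash m M)^*$. Expanding $R(\dash m M)=R(\dash m)R(M)$ and summing over $m$ via \eqref{eq:Schur} collapses the $j_1$ index, producing a factor proportional to $R^{j_2j'}(M)^*$ times $\delta$'s that also kill the mismatched index, thereby reconstituting exactly the expansion of $F_\rho^{RC;u_2v}$ up to the residual $A_{\site}^{q_{i_2}m\dash q_{i_1}}$ factors. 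These remaining gauge transformations, once the Schur sum has been done, organize into an average $\frac{1}{|N_C|}\sum$ (or $\frac{1}{|G|}\sum_{k\in G}$, after checking the combinatorics) of $A_{v_0}^k$ acting on $\ket{\Omega_0}$, which equals $\ket{\Omega_0}$ because $\omega_0(A_{v_0})=1$ and hence $A_{v_0}\ket{\Omega_0}=\ket{\Omega_0}$ by Lemma \ref{lem:omega_0(O) = 1}. Collecting constants gives the claimed identity.

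\textbf{Main obstacle.} The bookkeeping of the numerical prefactors $(\dimR/|N_C|)$ and the precise way the leftover gauge transformations $A_{\site}^{q_{i_2}m\dash q_{i_1}}$ repackage into a projector-like average that annihilates trivially on $\ket{\Omega_0}$ is the delicate part; one must be careful that after the Schur sum the dependence on $m$ in the operator part is genuinely only through $A_{v_0}^{\bullet}$ and that the index-$i_1$ sum (implicit in the label $u_1$ being fixed) does not reintroduce a dependence. A clean way to avoid index gymnastics is to first note (or cite Lemma \ref{lem:a and B commute with A_v and B_f} and the structure of $A_s^{RC;u}$) that $A_{\site}^{RC;u_2u_1}=\sum_{u}A_{\site}^{RC;u_2u}A_{\site}^{RC;uu_1}$-type relations hold, reducing to the case $u_1=u_2=u$ where $A_{\site}^{RC;uu}=D_{\site}^{RC;u}A_{\site}^{RC;uu}=\cdots$, but I expect the direct computation above, done carefully, to be the shortest route.
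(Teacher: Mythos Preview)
Your approach is the same as the paper's: expand the definitions, commute $A_{\site}^{q_{i_2}m\bar q_{i_1}}$ past $F_\rho^{\bar c_{i_1},\,q_{i_1}m'\bar q_{i'}}$ using \eqref{eq:[A,F] and [B,F]}, then use gauge invariance of $|\Omega_0\rangle$ together with Schur orthogonality. The computation you outline with the substitution $M=mm'$ and the expansion $R^{j_1 j'}(\bar m M)^*=\sum_l R^{l j_1}(m)\,R^{l j'}(M)^*$ is exactly what the paper does.

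However, your self-identified ``main obstacle'' rests on a misconception that would, if left uncorrected, block the argument. You propose to do the Schur sum over $m$ first and then handle the ``leftover'' $A_{\site}^{q_{i_2}m\bar q_{i_1}}$ by organizing them into an average $\frac{1}{|G|}\sum_k A_{v_0}^k$. This cannot work in that order: after the substitution $M=mm'$, the operator part $F_\rho^{\bar c_{i_2},\,q_{i_2}M\bar q_{i'}}\,A_{\site}^{q_{i_2}m\bar q_{i_1}}$ still depends on $m$ through the $A$ factor, so you cannot perform the Schur sum $\sum_m R^{j_2 j_1}(m)^* R^{l j_1}(m)$ on the coefficients alone. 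The fix is simpler than the averaging you envision: since $A_{v_0}|\Omega_0\rangle=|\Omega_0\rangle$ and $A_{v_0}^h A_{v_0}=A_{v_0}$, \emph{every individual} gauge transformation satisfies $A_{\site}^h|\Omega_0\rangle=|\Omega_0\rangle$, not just the average. The paper therefore acts on $|\Omega_0\rangle$ and drops the $A_{\site}^{q_{i_2}m\bar q_{i_1}}$ \emph{before} the Schur step; after that the operator depends only on $M$, the $m$-sum is purely on scalar matrix elements, and Schur orthogonality \eqref{eq:Schur} gives $\delta_{j_2,l}\,\frac{|N_C|}{\dimR}$, reproducing $F_\rho^{RC;u_2 v}|\Omega_0\rangle$ on the nose with no leftover bookkeeping.
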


\begin{proof}
    Let $u_1 = (i_1, j_2)$, $u_2 = (i_2, j_2)$ and $v = (i', j')$, then
    \begin{align*}
		A^{RC;u_2 u_1}_{\site} F^{RC;u_1 v} &= \left( \frac{\dimR}{|N_C|} \right)^2 \sum_{m, n} \, R^{j_2 j_1}(m)^* R^{j_1 j'}(n)^* A_{s_0}^{q_{i_2} m \bar q_{i_1}} F_{\rho}^{\bar c_{i_2}, q_{i_2} n \bar q_{i'}} \\
		&= \left( \frac{\dimR}{|N_C|}  \right)^2 \sum_{m, n} \,  R^{j_2 j_1}(m)^* R^{j_1 j'}(n)^* \, F_{\rho}^{\bar c_{i_1}, q_{i_1} mn \bar q_{i'}} \, A_{s_0}^{q_{i_2} m \bar q_{i_1}}.\\
 	\intertext{where we used Eq. \eqref{eq:[A,F] and [B,F]}. Since $A_{s_0}^h | \Omega_0 \rangle = | \Omega_0 \rangle$ (Lemma \ref{lem:omega_0(O) = 1}) for all $h \in G$ we then find}
		A^{RC;u_2 u_1}_{\site} F^{RC;u_1 v}_{\rho} | \Omega_0 \rangle &= \left( \frac{\dimR}{|N_C|} \right)^2 \sum_{m, n}  \,  R^{j_2 j_1}(m)^* R^{j_1 j'}(n)^* \, F_{\rho}^{\bar c_{i_1}, q_{i_1} mn \bar q_{i'}} \, | \Omega_0 \rangle \\
      &= \left( \frac{\dimR}{|N_C|} \right)^2 \sum_{m, m'} \sum_l \, R^{j_2 j_1}(m)^* R^{j_1 l}(\bar m)^* R^{l j'}(m')^* \, F_{\rho}^{\bar c_{i_2}, q_{i_2} m' \bar q_{i'}} | \Omega_0 \rangle \\
      &= \left( \frac{\dimR}{|N_C|} \right) \sum_{m} \sum_l R^{j_2 j_1}(m)^* R^{l j_1}(m) \, F_{\rho}^{RC;(i_2, l) v}| \Omega_0 \rangle \\
      &= F_{\rho}^{RC;u_2, v} | \Omega_0 \rangle
    \end{align*}
    where we substituted $m' = mn$ in the second line, and used Eq. \eqref{eq:Schur} in the last step.
\end{proof}

\begin{lemma} \label{lem:D kills the ground state}
	We have
	\begin{equation*}
		D^{RC; u}_s | \Omega_0 \rangle = \delta_{RC, R_1 C_1} | \Omega_0 \rangle
	\end{equation*}
	where $R_1 C_1$ is the trivial representation. \ie $C_1 = \{ 1 \}$, so $N_{C_1} = G$, and $R_1$ is the trivial representation of $G$.
\end{lemma}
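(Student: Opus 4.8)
\textbf{Proof plan for Lemma \ref{lem:D kills the ground state}.}

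The plan is to reduce the statement to an application of the definition of $D_s^{RC;u}$ together with the two facts that $A_s^h\ket{\Omega_0}=\ket{\Omega_0}$ and $B_f^1\ket{\Omega_0}=\ket{\Omega_0}$ for all $h$ and all faces, which follow from Lemma \ref{lem:omega_0(O)=1} applied to the unitaries $A_v^h$ and to the projectors $B_f$ (recall $\omega_0(A_v)=\omega_0(B_f)=1$, hence $\omega_0(A_v^h)=1$ for all $h$ since $A_v\ket{\Omega_0}=\ket{\Omega_0}$ forces $A_v^h\ket{\Omega_0}=\ket{\Omega_0}$ using $A_v=\frac{1}{|G|}\sum_h A_v^h$ and the fact that each $A_v^h$ is unitary with $A_v^h\ket{\Omega_0}$ all unit vectors averaging to a unit vector). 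So the operative computation is: write $D_s^{RC;u}=A_s^{RC;u}B_s^{c_i}$ with $u=(i,j)$, where $A_s^{RC;u}=\frac{\dim R}{|N_C|}\sum_{m\in N_C}R^{jj}(m)^*A_s^{q_i m\bar q_i}$.

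First I would handle the flux projector $B_s^{c_i}$. Since $B_f^1\ket{\Omega_0}=\ket{\Omega_0}$, the ground state has trivial flux at every face, so $B_s^{g}\ket{\Omega_0}=\delta_{g,1}\ket{\Omega_0}$ (using $\sum_g B_s^g=\I$ and orthogonality of the $B_s^g$). Hence $B_s^{c_i}\ket{\Omega_0}=\delta_{c_i,1}\ket{\Omega_0}$. Now $c_i=1$ is possible only when $1\in C$, i.e. when $C=C_1=\{1\}$, in which case $c_i=1$ is the unique element and $N_{C_1}=G$. Therefore $D_s^{RC;u}\ket{\Omega_0}=0$ unless $C=C_1$.

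For the case $C=C_1$: here $q_i=1$ (the iterator set of the trivial class is trivial) and $B_s^{c_i}\ket{\Omega_0}=\ket{\Omega_0}$, so $D_s^{RC_1;u}\ket{\Omega_0}=A_s^{RC_1;u}\ket{\Omega_0}=\frac{\dim R}{|G|}\sum_{m\in G}R^{jj}(m)^*A_s^{m}\ket{\Omega_0}=\frac{\dim R}{|G|}\big(\sum_{m\in G}R^{jj}(m)^*\big)\ket{\Omega_0}$, using $A_s^m\ket{\Omega_0}=\ket{\Omega_0}$. By Schur orthogonality \eqref{eq:Schur} applied with $R$ against the trivial representation $R_1$ (whose single matrix entry is $\equiv 1$), $\sum_{m\in G}R^{jj}(m)^*=\delta_{R,R_1}\,|G|$ (for $R=R_1$ this is just $|G|$, and for $R\neq R_1$ it vanishes). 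Thus $D_s^{RC_1;u}\ket{\Omega_0}=\delta_{R,R_1}\ket{\Omega_0}$. Combining the two cases gives $D_s^{RC;u}\ket{\Omega_0}=\delta_{RC,R_1C_1}\ket{\Omega_0}$, as claimed. I do not anticipate a genuine obstacle here; the only point requiring a little care is the bookkeeping that for the trivial class $q_i=1$ and $\dim R_1=1$, so that the normalisation works out, and that the character/Schur identity is being applied in the right direction (summing a diagonal matrix element of $R$ over the group picks out whether $R$ is trivial).
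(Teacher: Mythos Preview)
Your proof is correct, but it takes a more direct computational route than the paper. The paper argues as follows: first it observes that $D_s^{R_1C_1} = A_{v(s)} B_{f(s)}$, so $\omega_0(D_s^{R_1C_1}) = 1$ and hence $D_s^{R_1C_1}\ket{\Omega_0} = \ket{\Omega_0}$; then for $RC \neq R_1C_1$ it uses the orthogonality $D_s^{RC} D_s^{R_1C_1} = 0$ from Lemma~\ref{lem:DRCprops} to conclude $\omega_0(D_s^{RC;u}) = \omega_0(D_s^{RC;u} D_s^{R_1C_1}) = 0$, and finally invokes Lemma~\ref{lem:omega_0(O) = 1} to get $D_s^{RC;u}\ket{\Omega_0} = 0$. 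In contrast, you unpack the definition $D_s^{RC;u} = A_s^{RC;u} B_s^{c_i}$ and compute directly: the flux projector $B_s^{c_i}$ kills $\ket{\Omega_0}$ unless $C = C_1$, and then a single application of Schur orthogonality against the trivial representation handles the $R$-label. Your approach is more elementary in that it does not require the orthogonality of the full Wigner projectors $D_s^{RC}$ (which itself rests on a Schur-type argument), while the paper's approach is cleaner once that machinery is in place and avoids splitting into cases on $C$.
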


\begin{proof}
	 Note that $D_{s}^{R_1 C_1} = A_{v(s)} B_{f(s)}$, so the frustration free ground state satisfies $\omega_0(D_{s}^{R_1 C_1}) = 1$ which proves the claim in the case that $RC = R_1 C_1$. Using Lemma \ref{lem:pure components projector Lemma} and Lemma \ref{lem:DRCprops} we find $\omega_0(D^{RC;u}_s) = \omega_0(D^{RC;u}_s D^{R_1C_1}_s) = 0$. Finally, since $D^{RC;u}_s$ is a projector and $| \Omega_0 \rangle$ is the GNS vector of $\omega_0$, it follows that $D^{RC;u}_s | \Omega_0 \rangle = 0$ (Lemma \ref{lem:omega_0(O) = 1}).
\end{proof}

\begin{lemma} \label{lem:ribbon label projector}
	Let $\rho$ be a finite ribbon with $\partial_0 \rho = \site$. Then
	\begin{equation*}
		D^{RC;u_1}_{\site} F^{RC;u_2 v}_{\rho} | \Omega_0 \rangle = \delta_{u_1, u_2} F^{RC;u_1 v}_{\rho} | \Omega_0 \rangle.
	\end{equation*}
\end{lemma}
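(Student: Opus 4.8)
The plan is to reduce the claim about the action of the Wigner sub-projector $D^{RC;u_1}_{\site}$ on the anyon-creating ribbon operator to two ingredients: the label-changing identity of Lemma \ref{lem:change ribbon operator label} and the commutation relations \eqref{eq:[A,F] and [B,F]} between $A^h_{s_0}$, $B^g_{s_0}$, and the basic ribbon operators $F^{h,g}_\rho$. Write $u_2 = (i_2, j_2)$, $v = (i', j')$, and recall $D^{RC;u_1}_{\site} = A^{RC;u_1}_{\site} B^{c_{i_1}}_{\site}$ where $A^{RC;u_1}_{\site} = \frac{\dimR}{|N_C|}\sum_{m\in N_C} R^{j_1 j_1}(m)^* A^{q_{i_1} m \bar q_{i_1}}_{\site}$ and $u_1 = (i_1, j_1)$.

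First I would apply the flux projector $B^{c_{i_1}}_{\site}$. Using the second line of \eqref{eq:[A,F] and [B,F]} (the relation $B^k_{s_0} F^{h,g}_\rho = F^{h,g}_\rho B^{hk}_{s_0}$), commute $B^{c_{i_1}}_{\site}$ through $F^{RC;u_2 v}_\rho$; since $F^{RC;u_2 v}_\rho$ is built from terms $F^{\bar c_{i_2}, \cdot}_\rho$, this turns $B^{c_{i_1}}_{\site}$ into $B^{\bar c_{i_2} c_{i_1}}_{\site}$ acting directly on $\ket{\Omega_0}$. Because $\omega_0(B^1_{\site}) = \omega_0(B_{f(\site)}) = 1$, Lemma \ref{lem:omega_0(O) = 1} gives $B^g_{\site}\ket{\Omega_0} = \delta_{g,1}\ket{\Omega_0}$, so this factor forces $c_{i_2} = c_{i_1}$, i.e. $\delta_{i_1, i_2}$, and otherwise kills the vector. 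This handles the conjugacy-class label. Then I would use the gauge-invariance of the ground state, $A^h_{\site}\ket{\Omega_0} = \ket{\Omega_0}$ (again Lemma \ref{lem:omega_0(O) = 1}, since $\omega_0(A_{v(\site)}) = 1$), to rewrite $A^{RC;u_1}_{\site} = A^{RC;u_1 u_1}_{\site}$ when acting on $\ket{\Omega_0}$ — more precisely, observe that with $i_1 = i_2$ the operator $A^{RC;u_1}_{\site}$ coincides with the label changer $A^{RC;u_1 u_1}_{\site}$ of Definition \ref{def:label changers}. Applying Lemma \ref{lem:change ribbon operator label} with source label $u_1$ then gives $A^{RC;u_1 u_1}_{\site} F^{RC;u_1 v}_\rho \ket{\Omega_0} = F^{RC;u_1 v}_\rho\ket{\Omega_0}$, which combined with the $\delta_{i_1,i_2}$ already extracted yields the claimed $\delta_{u_1,u_2} F^{RC;u_1 v}_\rho\ket{\Omega_0}$.

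The one subtlety — and the step I expect to require the most care — is the bookkeeping of the $j$-labels in $u_1, u_2$: the flux projector only sees the $i$-labels, so after the $B$-projection collapses $i_2 \to i_1$ we still have $F^{RC;u_2 v}_\rho$ with $u_2 = (i_1, j_2)$ and a possibly different $j_2 \neq j_1$. I would handle this by expanding $A^{RC;u_1}_{\site} F^{RC;u_2 v}_\rho$ directly (rather than quoting Lemma \ref{lem:change ribbon operator label} verbatim, which is stated for matching labels) along the lines of the computation in the proof of Lemma \ref{lem:change ribbon operator label}: commute each $A^{q_{i_1} m \bar q_{i_1}}_{\site}$ through $F^{\bar c_{i_1}, q_{i_1} n \bar q_{i'}}_\rho$ using the first line of \eqref{eq:[A,F] and [B,F]}, push it onto $\ket{\Omega_0}$ where it acts trivially, substitute $m' = mn$, and then apply the Schur orthogonality relation \eqref{eq:Schur} to the sum $\sum_m R^{j_1 j_1}(m)^* R^{j_1 l}(m)$. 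This produces $\delta_{j_1, j_2}$ together with the correctly-relabelled ribbon operator, giving the full $\delta_{u_1,u_2}$. Assembling these pieces completes the proof.
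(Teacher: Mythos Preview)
Your proposal is correct and follows essentially the same route as the paper's proof: commute $B^{c_{i_1}}_{\site}$ and the $A^{q_{i_1} m \bar q_{i_1}}_{\site}$ through the ribbon operator via \eqref{eq:[A,F] and [B,F]}, use $B^g_{\site}\ket{\Omega_0}=\delta_{g,1}\ket{\Omega_0}$ and $A^h_{\site}\ket{\Omega_0}=\ket{\Omega_0}$ to extract $\delta_{i_1,i_2}$ and drop the gauge transformation, then substitute in the remaining double sum and apply Schur orthogonality \eqref{eq:Schur} to obtain $\delta_{j_1,j_2}$. Your initial detour through Lemma~\ref{lem:change ribbon operator label} is unnecessary (as you yourself note), and the Schur sum you wrote should carry a $j_2$ rather than a second $j_1$, but the substance of the argument is exactly the paper's.
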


\begin{proof}
    The proof is a computation using the basic commutation rules of the $A_{s_0}^h$ and $B_{s_0}^g$ with the ribbon operators (Eq. \eqref{eq:[A,F] and [B,F]}) and the fact that $B_{s_0}^g | \Omega_0 \rangle = \delta_{g, e} | \Omega_0 \rangle$ and $A_{s_0}^h | \Omega_0 \rangle = | \Omega_0 \rangle$ for all $h, g \in G$ (Lemma \ref{lem:omega_0(O) = 1}).
    
    Let $u_1 = (i_1, j_2)$, $u_2 = (i_2, j_2)$ and $v = (i', j')$, then
    \begin{align*}
    D^{RC;u_1}_{\site} F^{RC;u_2 v}_{\rho} | \Omega_0 \rangle  &= \left( \frac{\dimR}{|N_C|} \right)^2 \sum_{m, n \in N_C} \, R^{j_1 j_1}(m)^* R^{j_2 j'}(n)^* A_{s_0}^{q_{i_1} m \bar q_{i_1}} \, B_{s_0}^{c_{i_1}} \, F^{\bar c_{i_2}, q_{i_2} n \bar q_{i'}}_{\rho} \, | \Omega_0 \rangle \\
      &= \left( \frac{\dimR}{|N_C|} \right)^2 \delta_{i_1, i_2} \, \sum_{m, n \in N_C} \, R^{j_1 j_1}(m)^* R^{j_2 j'}(n)^* F^{\bar c_{i_1}, q_{i_1} m n \bar q_{i'}}_{\rho} \, A_{s_0}^{q_{i_1} m \bar q_{i_1}} \, | \Omega_0 \rangle \\
      &= \delta_{i_1, i_2} \, \left( \frac{\dimR}{|N_C|} \right)^2 \sum_{n, n' \in N_C} \, \sum_{l} R^{j_1 l}(n')^* R^{l j_1}(\bar n)^* R^{j_2 j'}(n)^* \, F^{\bar c_{i_1}, q_{i_1} n' \bar q_{i'}}_{\rho} \, |\Omega_0 \rangle \\
      &= \delta_{i_1, i_2} \delta_{j_1, j_2} \, \left( \frac{\dimR}{|N_C|} \right) \sum_{n'} R^{j_1 j'}(n')^* F^{\bar c_{i_1}, q_{i_1} n' \bar q_{i'}}_{\rho} | \Omega_0 \rangle = \delta_{u_1, u_2} \, F^{RC;u_1 v}_{\rho} \, | \Omega_0 \rangle
    \end{align*}
    where we used Schur orthogonality to get the last line.
\end{proof}

\subsection{Properties of $\mu_{\rho}^{RC;uv}$ and $\chi_{\rho}^{RC;uv}$} \label{sec:properties of mu and chi}

\subsubsection{Various actions on the frustration free ground state}

Recall that we defined $\chi_{\rho}^{RC;u_1 u_2} = \pi_0 \circ \mu_{\rho}^{RC;u_1 u_2} : \cstar \rightarrow \caB{(\caH_0)}$ where $(\pi_0, \caH_0, \ket{\Omega_0})$ is the GNS triple of the frustration free ground state $\omega_0$. In the following we will drop $\pi_0$ from the notation. \ie for any $O \in \cstar$ we simply write $O$ instead of $\pi_0(O)$.

\begin{lemma} \label{lem:change ampli label}
	Let $\rho$ be a half-infinite ribbon with $\partial_0 \rho = \site$. For any $O \in \cstar$ we have
	\begin{equation*}
		\chi^{RC;u_1 u_2}_{\rho}( O A^{RC;u_3 u_2}_{\site}  ) | \Omega_0 \rangle = \chi^{RC;u_1 u_3}_{\rho}(O) | \Omega_0 \rangle.
	\end{equation*}
	In particular,
	\begin{equation*}
		\chi^{RC;u_1 u_2}_{\rho}( A^{RC;u_3 u_2}_{\site} ) | \Omega_0 \rangle = \chi^{RC;u_1 u_3}_{\rho}(\I) | \Omega_0 \rangle = \delta_{u_1 u_3} | \Omega_0 \rangle.
	\end{equation*}
\end{lemma}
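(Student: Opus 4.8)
The plan is to avoid the hopeless attempt to commute $A^{RC;u_3 u_2}_{\site}$ past the arbitrary observable $O$ (these do not commute in general), and instead to exploit the fact that $\chi^{RC;u_1 u_2}_\rho = \pi_0 \circ \mu^{RC;u_1 u_2}_\rho$ with $\pi_0$ a $*$-homomorphism, together with the multiplicativity of the maps $\mu_\rho^{RC;u_1u_2}$ (item 4 of Proposition \ref{prop:ampli properties}, which holds for all $O,O'\in\cstar$). This gives
\begin{equation*}
    \chi^{RC;u_1 u_2}_\rho\big(O\, A^{RC;u_3 u_2}_{\site}\big)
    = \sum_{w \in I_{RC}} \chi^{RC;u_1 w}_\rho(O)\,\chi^{RC;w u_2}_\rho\big(A^{RC;u_3 u_2}_{\site}\big),
\end{equation*}
so after applying both sides to $\ket{\Omega_0}$ it suffices to establish the auxiliary identity
\begin{equation*}
    \chi^{RC;w u_2}_\rho\big(A^{RC;u_3 u_2}_{\site}\big)\ket{\Omega_0} = \delta_{w,u_3}\ket{\Omega_0}
\end{equation*}
for all $w,u_2,u_3\in I_{RC}$; substituting it back collapses the sum to the term $w=u_3$ and produces the stated equality.

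To prove the auxiliary identity I would first observe that $A^{RC;u_3 u_2}_{\site}$ is a finite linear combination of the gauge transformations $A^h_{\site}$ and hence lies in $\cstar[loc]$, so by item 1 of Proposition \ref{prop:ampli properties} there is $n_0$ with $\mu^{RC;w u_2}_\rho(A^{RC;u_3 u_2}_{\site}) = \mu^{RC;w u_2}_{\rho_n}(A^{RC;u_3 u_2}_{\site})$ for all $n\ge n_0$; thus I may work with the finite ribbon $\rho_n$, which still satisfies $\partial_0\rho_n=\site$. Expanding $\mu_{\rho_n}$ in its defining sum, applying it to $\ket{\Omega_0}$, and using Lemma \ref{lem:change ribbon operator label} to rewrite $A^{RC;u_3 u_2}_{\site}F^{RC;u_2 v}_{\rho_n}\ket{\Omega_0}=F^{RC;u_3 v}_{\rho_n}\ket{\Omega_0}$, one obtains
\begin{equation*}
    \chi^{RC;w u_2}_{\rho_n}\big(A^{RC;u_3 u_2}_{\site}\big)\ket{\Omega_0}
    = \Big(\tfrac{|N_C|}{\dimR}\Big)^2 \sum_v \big(F^{RC;w v}_{\rho_n}\big)^* F^{RC;u_3 v}_{\rho_n}\,\ket{\Omega_0}.
\end{equation*}
The orthogonality relation of Lemma \ref{lem:FdaggerF identity} then evaluates $\sum_v (F^{RC;w v}_{\rho_n})^* F^{RC;u_3 v}_{\rho_n} = \delta_{w,u_3}(\dimR/|N_C|)^2\,\I$, the two prefactors cancel, and what remains is exactly $\delta_{w,u_3}\ket{\Omega_0}$.

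Finally, the ``In particular'' assertion is just the case $O=\I$ of the main identity, combined with $\chi^{RC;u_1 u_3}_\rho(\I)=\pi_0\big(\mu^{RC;u_1 u_3}_\rho(\I)\big)=\delta_{u_1,u_3}\I$, which is item 2 of Proposition \ref{prop:ampli properties}. I expect the only genuine obstacle to be conceptual rather than computational: recognizing at the outset that one should \emph{not} move $A^{RC;u_3 u_2}_{\site}$ through $O$, but rather use the homomorphism/multiplicativity structure of $\chi_\rho^{RC;\cdot\,\cdot}$ to split it off and reduce to the finite-ribbon identities (Lemmas \ref{lem:change ribbon operator label} and \ref{lem:FdaggerF identity}) already proved; everything after that is bookkeeping.
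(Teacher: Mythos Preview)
Your proof is correct, but it takes a more circuitous route than the paper's. The paper simply expands the definition
\[
\chi^{RC;u_1 u_2}_{\rho}\big(O A^{RC;u_3 u_2}_{\site}\big)
= \lim_{n}\Big(\tfrac{|N_C|}{\dimR}\Big)^2\sum_v \big(F_{\rho_n}^{RC;u_1 v}\big)^*\, O\, A^{RC;u_3 u_2}_{\site}\, F_{\rho_n}^{RC;u_2 v},
\]
applies both sides to $\ket{\Omega_0}$, and invokes Lemma~\ref{lem:change ribbon operator label} to rewrite $A^{RC;u_3 u_2}_{\site} F_{\rho_n}^{RC;u_2 v}\ket{\Omega_0}=F_{\rho_n}^{RC;u_3 v}\ket{\Omega_0}$; the resulting expression is recognised immediately as $\chi^{RC;u_1 u_3}_{\rho}(O)\ket{\Omega_0}$. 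Your concern about ``commuting $A^{RC;u_3 u_2}_{\site}$ past $O$'' is a non-issue in this direct approach: in the very definition of $\mu_{\rho_n}^{RC;u_1 u_2}$ the argument is sandwiched as $(F)^*\,(\text{argument})\,F$, so the label changer already sits to the right of $O$ and adjacent to the ribbon operator. Your detour through multiplicativity (item~4 of Proposition~\ref{prop:ampli properties}) and Lemma~\ref{lem:FdaggerF identity} is valid and conceptually clean, but it introduces an extra sum over $w$ and an extra lemma that the direct computation avoids entirely.
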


\begin{proof}
	By definition
	\begin{equation*}
		\chi^{RC;u_1 u_2}_{\rho}( O A_s^{RC;u_3 u_2}) = \lim_{n \uparrow \infty}  \, \bigg(\frac{|N_C|}{\dimR} \bigg)^2 \,  \sum_v  \big( F_{\rho_n}^{RC;u_1 v} \big)^*  \, O A_{\site}^{RC;u_3 u_2} \, F_{\rho_n}^{RC;u_2 v}
	\end{equation*}
	so using Lemma \ref{lem:change ribbon operator label} we get
	\begin{align*}
		\chi^{RC;u_1 u_2}_{\rho}( O A_{\site}^{RC;u_3 u_2}) | \Omega_0 \rangle &= \lim_{n \uparrow \infty} \, \bigg(\frac{|N_C|}{\dimR} \bigg)^2 \,  \sum_v \, \big( F_{\rho_n}^{RC;u_1 v} \big)^*  \, O \,  F_{\rho_n}^{RC;u_3 v} | \Omega_0 \rangle \\
                &= \chi_{\rho}^{RC;u_1 u_3}(O)
	\end{align*}
	as required.

	The last claim follows immediately from the first and item 2 of Lemma \ref{prop:ampli properties}.
\end{proof}

It follows that
\begin{lemma} \label{lem:ampli label projector}
	Let $\rho$ be a half-infinite ribbon with $\partial_0 \rho = \site$. For any $O \in \cstar$ we have
	\begin{equation*}
		\chi^{RC;u_2 u_1}_{\rho}( O D_{\site}^{RC;u_3} ) | \Omega_0 \rangle = \delta_{u_1, u_3} \chi_{\rho}^{RC;u_2 u_1}(O) \, | \Omega_0 \rangle.
	\end{equation*}
\end{lemma}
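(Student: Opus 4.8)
The plan is to reduce the statement to Lemma~\ref{lem:change ampli label} and Lemma~\ref{lem:D kills the ground state}, together with the factorisation $D_{\site}^{RC;u_3} = A_{\site}^{RC;u_3} B_{\site}^{c_{i_3}}$ noted just after Definition~\ref{def:Wigner projectors} (where $u_3 = (i_3, j_3)$). The key observation is that the ``label changer'' $A_{\site}^{RC;u_3 u_3}$ of Definition~\ref{def:label changers} coincides with the projector $A_{\site}^{RC;u_3}$ appearing in $D_{\site}^{RC;u_3}$, since setting $u_2 = u_1 = u_3$ in the definition of $A_s^{RC;u_2 u_1}$ gives exactly the expression for $A_s^{RC;u_3}$. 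So $D_{\site}^{RC;u_3} = A_{\site}^{RC;u_3 u_3} B_{\site}^{c_{i_3}}$.

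First I would handle the $B_{\site}^{c_{i_3}}$ factor. Writing out $\chi_{\rho}^{RC;u_2 u_1}(O\,D_{\site}^{RC;u_3})$ as the limit of $\big(\tfrac{|N_C|}{\dimR}\big)^2 \sum_v (F_{\rho_n}^{RC;u_2 v})^*\, O\, A_{\site}^{RC;u_3 u_3} B_{\site}^{c_{i_3}}\, F_{\rho_n}^{RC;u_1 v}$, I would commute $B_{\site}^{c_{i_3}}$ to the right through $F_{\rho_n}^{RC;u_1 v}$ using the commutation rule Eq.~\eqref{eq:[A,F] and [B,F]} for $B_{s_0}^k F_\rho^{h,g}$. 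Recalling that $F_\rho^{RC;u_1 v}$ is built from terms $F_\rho^{\bar c_{i_1}, q_{i_1} m \bar q_{i'}}$ with $u_1 = (i_1, j_1)$, the rule $B_{s_0}^k F_\rho^{h,g} = F_\rho^{h,g} B_{s_0}^{hk}$ turns $B_{\site}^{c_{i_3}}$ acting on $\ket{\Omega_0}$ (after passing through) into $B_{\site}^{\bar c_{i_1} c_{i_3}}\ket{\Omega_0} = \delta_{i_1, i_3}\ket{\Omega_0}$, using $B_{s_0}^g\ket{\Omega_0} = \delta_{g,1}\ket{\Omega_0}$ from Lemma~\ref{lem:omega_0(O) = 1}. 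This contributes a factor $\delta_{i_1, i_3}$.

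Next I would deal with the $A_{\site}^{RC;u_3 u_3}$ factor: with $i_1 = i_3$ forced, I apply Lemma~\ref{lem:change ampli label} (in the form $\chi^{RC;u_2 u_1}_\rho(O\, A^{RC;u_3 u_1}_{\site})\ket{\Omega_0} = \chi^{RC;u_2 u_3}_\rho(O)\ket{\Omega_0}$) to absorb it, but I must also account for the residual $R^{jj}$-dependence: $A_{\site}^{RC;u_3}$ has matrix coefficient $R^{j_3 j_3}(m)^*$ rather than the full $R^{j_3 j_1}(m)^*$ appearing in $A_{\site}^{RC;u_3 u_1}$. The cleanest route is to carry out the computation directly in the style of Lemma~\ref{lem:ribbon label projector}: commute $A_{\site}^{RC;u_3}$ past $F_{\rho_n}^{RC;u_1 v}$ using Eq.~\eqref{eq:[A,F] and [B,F]}, discard the resulting $A_{s_0}^h$ on $\ket{\Omega_0}$, and collapse the sum over $N_C$ via the Schur orthogonality relation \eqref{eq:Schur}, which produces $\delta_{j_1, j_3}$ and replaces $F_{\rho_n}^{RC;u_1 v}$ by $F_{\rho_n}^{RC;u_3 v}$. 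Combined with the $\delta_{i_1, i_3}$ from the previous step this gives $\delta_{u_1, u_3}$, and the remaining sum is exactly $\chi_\rho^{RC;u_2 u_1}(O)\ket{\Omega_0}$ after relabelling. I expect the main obstacle to be bookkeeping: keeping the two labels $u_1 = (i_1,j_1)$ and $u_3 = (i_3,j_3)$ and the iterator elements $q_i$ straight while commuting the $A$'s and $B$'s through the ribbon operators, and making sure the Schur sum is applied to the correct pair of indices. Since $O$ is to the left of everything and does not interact with the site-$\site$ operators or with $\ket{\Omega_0}$, it simply rides along, so no subtlety arises there; the identity then extends from $\cstar[loc]$ to all of $\cstar$ by continuity exactly as in the proofs of Proposition~\ref{prop:ampli properties} and Lemma~\ref{lem:change ampli label}.
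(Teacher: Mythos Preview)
Your proposal is correct and essentially matches the paper's approach: the paper simply expands the definition of $\chi_\rho^{RC;u_2 u_1}$ and applies Lemma~\ref{lem:ribbon label projector} (which states $D^{RC;u_3}_{\site} F^{RC;u_1 v}_{\rho_n}\ket{\Omega_0} = \delta_{u_1,u_3} F^{RC;u_1 v}_{\rho_n}\ket{\Omega_0}$) to each term in the sum, whereas you unpack that lemma's proof inline---factoring $D_{\site}^{RC;u_3} = A_{\site}^{RC;u_3} B_{\site}^{c_{i_3}}$, commuting each piece through the ribbon operator, and collapsing with Schur orthogonality. Your opening remark about reducing to Lemma~\ref{lem:D kills the ground state} is a red herring (that lemma concerns the trivial representation $R_1C_1$, not the $\delta_{u_1,u_3}$ you need), but the actual computation you outline is exactly the content of Lemma~\ref{lem:ribbon label projector}, so the argument goes through.
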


\begin{proof}
	We have
	\begin{align*}
		\chi^{RC;u_2 u_1}_{\rho}(O D_{\site}^{RC;u_3}) | \Omega_0 \rangle &= \lim_{n \uparrow \infty} \, \bigg(\frac{|N_C|}{\dimR} \bigg)^2 \,  \sum_{v} \big(F_{\rho_n}^{RC;u_2 v} \big)^* \, O D_{\site}^{RC;u_3} \, F_{\rho_n}^{RC;u_1 v} \, | \Omega_0 \rangle \\
		\intertext{Using Lemma \ref{lem:ribbon label projector} this becomes}
							     &= \delta_{u_1, u_3} \, \lim_{n \uparrow \infty} \, \bigg(\frac{|N_C|}{\dimR} \bigg)^2 \, \sum_v \, \big( F_{\rho_n}^{RC;u_2 v} \big) \, O \, F_{\rho_n}^{RC;u_1 v} | \Omega_0 \rangle\\
							     &= \delta_{u_1, u_3} \, \chi_{\rho}^{RC;u_2 u_1}(O) | \Omega_0 \rangle,
	\end{align*}
	where in the last step we used $\chi_{\rho}^{RC;u_1 u_2} = \pi_0 \circ \mu_{\rho}^{RC;u_1 u_2}$ and the definition of $\mu_{\rho}^{RC;u_1 u_2}$ given in Proposition \ref{prop:ampli properties}.
\end{proof}

\begin{lemma} \label{lem:chi preserves constraints}
    Let $\rho$ be a half-infinite ribbon with $\partial_0 \rho = \site$. For any vertex $v \neq v(\site)$ and any face $f \neq v(\site)$ we have
    $$ \chi_{\rho}^{RC;uu}( A_v ) \ket{\Omega_0} = \chi_{\rho}^{RC;uu}(B_f) \ket{\Omega_0} = \ket{\Omega_0}. $$
\end{lemma}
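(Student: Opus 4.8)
The plan is to prove the stronger algebraic statement that $\mu_{\rho}^{RC;uu}(A_v) = A_v$ and $\mu_{\rho}^{RC;uu}(B_f) = B_f$ as elements of $\cstar$. Granting this, the Lemma is immediate: applying $\pi_0$ gives $\chi_{\rho}^{RC;uu}(A_v) = \pi_0(A_v)$ and $\chi_{\rho}^{RC;uu}(B_f) = \pi_0(B_f)$, and since $\omega_0$ is the frustration free ground state we have $\omega_0(A_v) = \omega_0(B_f) = 1$, so $\pi_0(A_v)\ket{\Omega_0} = \pi_0(B_f)\ket{\Omega_0} = \ket{\Omega_0}$ by Lemma \ref{lem:omega_0(O) = 1}.

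To establish $\mu_{\rho}^{RC;uu}(A_v) = A_v$, I would first use item 1 of Proposition \ref{prop:ampli properties}: as $A_v \in \cstar[loc]$, there is an $n_0$ with $\mu_{\rho}^{RC;uu}(A_v) = \mu_{\rho_n}^{RC;uu}(A_v)$ for all $n \geq n_0$. The only combinatorial input is that the endpoint vertex of the truncated ribbon eventually leaves $v$: since $\rho$ is half-infinite, each edge of $\Gamma$ carries at most one of its triangles, and every triangle $\tau$ with $v(\partial_1 \tau) = v$ uses one of the finitely many edges incident to $v$; hence $v(\partial_1 \rho_n) = v$ for only finitely many $n$. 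Fix $n \geq n_0$ large enough that also $v(\partial_1 \rho_n) \neq v$. Since $v \neq v(\partial_0 \rho_n) = v(\site)$ by hypothesis, Eq. \eqref{eq:A,B commute with F except at endpts} gives $[F_{\rho_n}^{h,g}, A_v^k] = 0$ for all $h, g, k \in G$; unfolding Definition \ref{def:RC ribbons} and averaging over $k$ to form $A_v$ then yields $[A_v, F_{\rho_n}^{RC;uw}] = 0$ for every $w \in I_{RC}$.

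With this in hand the computation closes using Lemma \ref{lem:FdaggerF identity}:
\begin{equation*}
\mu_{\rho_n}^{RC;uu}(A_v) = \bigg(\frac{|N_C|}{\dimR}\bigg)^2 \sum_{w} \big(F_{\rho_n}^{RC;uw}\big)^* A_v\, F_{\rho_n}^{RC;uw} = A_v \bigg(\frac{|N_C|}{\dimR}\bigg)^2 \sum_{w} \big(F_{\rho_n}^{RC;uw}\big)^* F_{\rho_n}^{RC;uw} = A_v,
\end{equation*}
so $\mu_{\rho}^{RC;uu}(A_v) = A_v$. The case of $B_f$ is identical: every triangle $\tau$ with $f(\partial_1 \tau) = f$ uses one of the three edges of the face $f$, so $f(\partial_1 \rho_n) = f$ for only finitely many $n$; picking $n \geq n_0$ with $f(\partial_1\rho_n) \neq f$ and using $f \neq f(\partial_0\rho_n) = f(\site)$, Eq. \eqref{eq:A,B commute with F except at endpts} gives $[B_f, F_{\rho_n}^{RC;uw}] = 0$, and the same manipulation with Lemma \ref{lem:FdaggerF identity} gives $\mu_{\rho_n}^{RC;uu}(B_f) = B_f$, hence $\mu_{\rho}^{RC;uu}(B_f) = B_f$. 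The only step requiring any attention is the bookkeeping that the endpoint of $\rho_n$ escapes a fixed vertex or face; I do not expect a genuine obstacle, since everything else is a direct application of the $F^*F$ orthogonality relation and the endpoint commutation rules already collected in the appendix.
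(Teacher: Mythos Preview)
Your proof is correct. The endpoint bookkeeping is sound: any triangle $\tau$ with $v(\partial_1\tau)=v$ (direct or dual) is supported on one of the finitely many edges at distance $\leq 1$ from $v$, and since a half-infinite ribbon uses each edge at most once, the condition $v(\partial_1\rho_n)=v$ can hold for only finitely many $n$; similarly for faces. With that in hand, Eq.~\eqref{eq:A,B commute with F except at endpts} and Lemma~\ref{lem:FdaggerF identity} close the argument exactly as you wrote.

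Your route differs slightly from the paper's one-line proof. The paper invokes the endpoint commutation relations Eq.~\eqref{eq:[A,F] and [B,F]} together with $A_v\ket{\Omega_0}=B_f\ket{\Omega_0}=\ket{\Omega_0}$, i.e.\ it works directly at the level of the action on $\ket{\Omega_0}$. You instead prove the stronger operator identity $\mu_{\rho}^{RC;uu}(A_v)=A_v$ in $\cstar$ by first arranging that $v$ is not an endpoint of $\rho_n$, then using the away-from-endpoints commutation Eq.~\eqref{eq:A,B commute with F except at endpts} and the $F^*F$ orthogonality. What your approach buys is a clean algebraic statement independent of $\ket{\Omega_0}$, at the cost of the small combinatorial check that the endpoint eventually escapes; the paper's version avoids that check but only yields the weaker vector identity. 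Both are short, and yours is arguably the more transparent of the two.
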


\begin{proof}
    This follows from the definition of $F^{RC;uv}_\rho$, Eq. \eqref{eq:[A,F] and [B,F]}, and the fact that $A_v \ket{\Omega_0} = B_f \ket{\Omega_0} = \ket{\Omega_0}$ for any $v \in \latticevert$ and any $f \in \latticeface$.
\end{proof}

\subsubsection{A tool to prove non-degeneracy of the amplimorphism representation}

We continue to work in the GNS representation $(\pi_0, \caH_0, \ket{\Omega_0})$ of the frustration free ground state $\omega_0$ and again drop $\pi_0$ from the notation.

\begin{definition} \label{def:magic map}
	For any finite ribbon $\rho$ with $\partial_0 \rho = \site$ and any $RC$ and $u, v \in I_{RC}$ define a linear map $t_{\rho}^{RC;u v} : \cstar \rightarrow \cstar$ by
	\begin{equation}
		t_{\rho}^{RC;uv}( O ) :=  \bigg(\frac{\dimR}{|N_C|} \bigg)^2 \, \sum_{w, z} \, F_{\rho}^{RC;u w}  \, O \, \big( F_{\rho}^{RC;z w} \big)^* A_{\site}^{RC;z v} D_{\site}^{RC;v}.
	\end{equation}
\end{definition}

Recall that for a half-infinite ribbon $\rho$ we write $\rho_n$ for the finite ribbons consisting of the first $n$ triangles of $\rho$. We have the following remarkable property:
\begin{lemma} \label{lem:magic map}
	Let $\rho$ be a half-infinite ribbon with $\partial_0 \rho = \site$. For any local $O$ whose support does not intersect $\rho \setminus \rho_n$ we have
	\begin{equation}
		\chi_{\rho}^{RC;u_1 v_1} \big( t_{\rho_n}^{RC;u_2 v_2}(O) \big) | \Omega_0 \rangle = \delta_{u_1 u_2} \delta_{v_1 v_2} \, O | \Omega_0 \rangle.
	\end{equation}
\end{lemma}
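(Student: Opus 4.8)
\textbf{Proof plan for Lemma \ref{lem:magic map}.}

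The plan is to compute $\chi_{\rho}^{RC;u_1 v_1}\big(t_{\rho_n}^{RC;u_2 v_2}(O)\big)\ket{\Omega_0}$ directly by unpacking all the definitions and pushing everything through to the ground state vector, where the known actions of ribbon operators, label changers, and Wigner projectors simplify things dramatically. First I would write out $t_{\rho_n}^{RC;u_2 v_2}(O)$ using Definition \ref{def:magic map} and then apply $\chi_{\rho}^{RC;u_1 v_1}=\pi_0\circ\mu_{\rho}^{RC;u_1 v_1}$, using item~1 of Proposition \ref{prop:ampli properties} to replace the half-infinite $\mu_{\rho}^{RC;u_1 v_1}$ by the finite $\mu_{\rho_m}^{RC;u_1 v_1}$ for $m$ large enough (this is legitimate since $t_{\rho_n}^{RC;u_2 v_2}(O)$ is local: its support is contained in $\supp(\rho_n)\cup\supp(O)$, which does not meet $\rho\setminus\rho_m$ for $m$ large). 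Choosing $m=n$ (or noting that the ribbon operators for $\rho_m$ with $m>n$ factor through those of $\rho_n$ by Lemma \ref{lem:decomposition of F}), I would be left with an expression of the form
\begin{equation*}
\Big(\tfrac{|N_C|}{\dim R}\Big)^2\Big(\tfrac{\dim R}{|N_C|}\Big)^2\sum_{v,w,z}\big(F_{\rho_n}^{RC;u_1 v}\big)^*\,F_{\rho_n}^{RC;u_2 w}\,O\,\big(F_{\rho_n}^{RC;z w}\big)^*\,A_{\site}^{RC;z v_2}\,D_{\site}^{RC;v_2}\,\ket{\Omega_0}.
\end{equation*}

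Next I would use the fact that $\supp(O)$ is disjoint from $\supp(\rho_n)$ (this is exactly the hypothesis that $O$ does not intersect $\rho\setminus\rho_n$, combined with the freedom in which part of $\rho$ we call $\rho_n$ — more precisely $O$ can be taken supported away from $\rho_n$ after enlarging $n$; if strict disjointness from $\rho_n$ itself is not available one instead commutes $O$ past $F_{\rho_n}^{RC;u_2 w}(F_{\rho_n}^{RC;zw})^*$ after first collapsing the latter product) to move $O$ to the left, past $F_{\rho_n}^{RC;u_2 w}(F_{\rho_n}^{RC;zw})^*$. Actually the cleaner route is: first apply $D_{\site}^{RC;v_2}$ and $A_{\site}^{RC;z v_2}$ to $\ket{\Omega_0}$. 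By Lemma \ref{lem:D kills the ground state}, $D_{\site}^{RC;v_2}\ket{\Omega_0}$ is either $\ket{\Omega_0}$ (if $RC$ is trivial) or $0$; but since $A_{\site}^{RC;z v_2}$ appears to the left, it is better to keep the $A$ and $D$ together and instead observe that $\big(F_{\rho_n}^{RC;z w}\big)^*A_{\site}^{RC;z v_2}D_{\site}^{RC;v_2}\ket{\Omega_0}$ can be handled by the adjoint of Lemma \ref{lem:change ribbon operator label} and Lemma \ref{lem:ribbon label projector}. Summing $\big(F_{\rho_n}^{RC;zw}\big)^* A_{\site}^{RC;zv_2}$ over $z$ and using Lemma \ref{lem:change ribbon operator label} (with endpoints swapped / adjointed) converts this into $\big(F_{\rho_n}^{RC;v_2 w}\big)^*$ acting with an extra $D_{\site}^{RC;v_2}$ insertion, and then Lemma \ref{lem:ribbon label projector} forces the free label to match.

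After these reductions the summand collapses: the sum over $z$ produces $\big(F_{\rho_n}^{RC;v_2 w}\big)^*\ket{\Omega_0}$ up to the $D$-insertion, the sum over $w$ combines $F_{\rho_n}^{RC;u_2 w}\big(F_{\rho_n}^{RC;v_2 w}\big)^*$ which by Lemma \ref{lem:FdaggerF identity} equals $\delta_{u_2 v_2}\big(\tfrac{\dim R}{|N_C|}\big)^2\mathds{1}$ — wait, that would kill the $u_2\neq v_2$ case, so the correct bookkeeping must keep $O$ between them; hence I instead commute $O$ out first (using disjoint supports) so that $\sum_w F_{\rho_n}^{RC;u_2 w}\big(F_{\rho_n}^{RC;z w}\big)^* = \delta_{u_2 z}\big(\tfrac{\dim R}{|N_C|}\big)^2$ by Lemma \ref{lem:FdaggerF identity}, collapsing the $w$-sum and setting $z=u_2$; then $A_{\site}^{RC;u_2 v_2}D_{\site}^{RC;v_2}\ket{\Omega_0}$ remains, and finally the leftover $\big(F_{\rho_n}^{RC;u_1 v}\big)^*$ summed against whatever $v$-dependence survives, together with Lemma \ref{lem:change ampli label} / Lemma \ref{lem:ampli label projector}, yields $\delta_{u_1 u_2}\delta_{v_1 v_2}O\ket{\Omega_0}$. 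The normalization constants $\big(\tfrac{|N_C|}{\dim R}\big)^2$ from $\mu$, $\big(\tfrac{\dim R}{|N_C|}\big)^2$ from $t$, and $\big(\tfrac{\dim R}{|N_C|}\big)^2$ from Lemma \ref{lem:FdaggerF identity} must be tracked and should cancel to give exactly $1$.

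\textbf{Main obstacle.} The delicate point is the order of operations: whether one commutes $O$ through the ribbon operators \emph{before} or \emph{after} applying the $F^*F$ contraction identity (Lemma \ref{lem:FdaggerF identity}), since doing it in the wrong order would spuriously force $u_2=v_2$. The hypothesis ``$\supp(O)$ does not intersect $\rho\setminus\rho_n$'' is what licenses replacing the half-infinite $\mu_\rho$ by $\mu_{\rho_n}$ while keeping $O$ free to commute past the tail; getting this disjointness bookkeeping exactly right — and correctly handling the interplay between the $A_{\site}^{RC;\cdot}$ label changers at the base site and the $D_{\site}^{RC;v_2}$ projector when they hit $\ket{\Omega_0}$ (via Lemmas \ref{lem:change ribbon operator label}, \ref{lem:ribbon label projector}, \ref{lem:change ampli label}, \ref{lem:ampli label projector}) — is the crux. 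Everything else is a finite-group Schur-orthogonality computation of the kind already carried out in the preceding lemmas.
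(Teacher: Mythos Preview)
Your proposal contains the right list of ingredients but has a genuine structural gap that prevents it from going through as written.

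\textbf{First, a missing term.} When you unpack $\chi_{\rho}^{RC;u_1 v_1}=\pi_0\circ\mu_{\rho}^{RC;u_1 v_1}$ and replace the half-infinite $\mu$ by the finite $\mu_{\rho_n}$, the displayed expression you write is missing the rightmost factor $F_{\rho_n}^{RC;v_1 v}$ coming from Definition~\ref{def:finite mu}. Without it, $D_{\site}^{RC;v_2}$ and $A_{\site}^{RC;z v_2}$ act directly on $\ket{\Omega_0}$, which is why you ended up invoking Lemma~\ref{lem:D kills the ground state} and getting confused. With the missing $F_{\rho_n}^{RC;v_1 v}\ket{\Omega_0}$ in place, those operators act on a ribbon vector instead, and Lemmas~\ref{lem:ribbon label projector} and~\ref{lem:change ribbon operator label} apply cleanly.

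\textbf{Second, the disjointness claim is wrong.} The hypothesis says the support of $O$ does not intersect $\rho\setminus\rho_n$. This allows $O$ to be supported on $\rho_n$ itself; nothing forbids it. Your repeated move of ``commuting $O$ out first (using disjoint supports)'' past $F_{\rho_n}^{RC;u_2 w}\big(F_{\rho_n}^{RC;z w}\big)^*$ is therefore unjustified, and ``enlarging $n$'' only makes $\rho_n$ bigger, not smaller, so it does not help. You correctly sensed in your ``main obstacle'' paragraph that the order of operations is the crux, but your proposed resolution is the wrong one.

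\textbf{How the paper resolves it.} The paper never commutes $O$ through $F_{\rho_n}$ operators. Instead it first applies Lemmas~\ref{lem:change ampli label} and~\ref{lem:ampli label projector} at the level of $\chi_{\rho}$ (for the half-infinite ribbon) to absorb $A_{\site}^{RC;z v_2}$ and $D_{\site}^{RC;v_2}$; this already produces the $\delta_{v_1,v_2}$ and converts the second $\chi$-label from $v_1$ to the summed label $u$. Only then does it pass to a finite $\rho_m$ with $m>n$ and decompose $\rho_m=\rho_n\rho'$ via Lemma~\ref{lem:decomposition of F}. The key commutation is of the tail operators $F_{\rho'}^{RC;\cdot}$ through $O$ and through the $F_{\rho_n}$ factors: this \emph{is} licensed, since $\rho'\subset\rho\setminus\rho_n$ is disjoint from both $\supp(O)$ (by hypothesis) and $\supp(\rho_n)$ (ribbons do not self-intersect). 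Once the two $F_{\rho'}$ factors are adjacent, Lemma~\ref{lem:FdaggerF identity} collapses the $v$-sum; the $u$-sum on the $\rho_n$ factors collapses similarly; and one is left with $\sum_w\big(F_{\rho_n}^{RC;u_1 w}\big)^*F_{\rho_n}^{RC;u_2 w}\,O\,\ket{\Omega_0}$ with the two ribbon factors adjacent and $O$ to their right, giving $\delta_{u_1,u_2}$ without ever moving $O$.
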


\begin{proof}
	We compute
	\begin{align*}
		\chi_{\rho}^{RC;u_1 v_1} \big( t_{\rho_n}^{RC;u_2 v_2}(O) \big) | \Omega_0 \rangle &= \bigg(\frac{\dimR}{|N_C|} \bigg)^2 \, \sum_{w, u}  \chi^{RC;u_1 v_1}_\rho \left(  F_{\rho_n}^{RC;u_2 w} \, O \, \big( F_{\rho_n}^{RC;u w} \big)^* \, A^{RC;u v_2}_\site D^{RC;v_2}_\site  \right) | \Omega_0 \rangle \\
		\intertext{Using Lemma \ref{lem:change ampli label} and Lemma \ref{lem:ampli label projector}, this becomes: }
      &= \delta_{v_1, v_2} \, \bigg(\frac{\dimR}{|N_C|} \bigg)^2 \, \sum_{w, u} \, \chi^{RC;u_1 u}_\rho \left( F_{\rho_n}^{RC;u_2 w} \, O \, \big( F_{\rho_n}^{RC;u w} \big)^* \right) | \Omega_0 \rangle \\
      \intertext{Taking $m > n$ large enough we get}
      &= \delta_{v_1, v_2} \, \sum_{u, v, w}  \, \big( F_{\rho_m}^{RC;u_1 v} \big)^* F_{\rho_n}^{RC;u_2 w} \, O \, \big( F_{\rho_n}^{RC;u w} \big)^* F_{\rho_m}^{RC;u v} | \Omega_0 \rangle \\
      \intertext{Decomposing $\rho_m = \rho_n \rho'$ we get, using Lemma \ref{lem:decomposition of F}}
      &= \delta_{v_1, v_2} \, \bigg(\frac{\abs{N_C}}{\dimR} \bigg)^2 \,  \sum_{u, v, w} \, \sum_{y, z} \, \big(  F_{\rho_n}^{RC;u_1 y} F_{\rho'}^{RC;y v}  \big)^* \, F_{\rho_n}^{RC;u_2 w} \, O  \\
        & \quad\quad\quad\quad\quad \times \, \big( F_{\rho_n}^{RC;u w} \big)^* \, F_{\rho_n}^{RC;u z} F_{\rho'}^{RC;z v} \, | \Omega_0 \rangle \\
        \intertext{Since the support of $O$ does not intersect $\rho \setminus \rho_n \supset \rho'$ we have $[F_{\rho'}^{RC;z v}, O] = 0$. We also have $[F_{\rho'}^{RC;z v},F_{\rho_n}^{RC;u',w'}] = 0$ for all $u',w' \in I_{RC}$ since $\rho'$ and $\rho_n$ are disjoint. We can therefore commute $F_{\rho'}^{RC;z v}$ to the left and get,}
        &= \delta_{v_1, v_2} \, \bigg(\frac{\abs{N_C}}{\dimR} \bigg)^2 \,  \sum_{w} \, \sum_{y, z} \, \bigg(   \sum_{v} \big(F_{\rho'}^{RC;y v}\big)^*\, F_{\rho'}^{RC;z v} \bigg)  \,  \\
        & \quad\quad\quad\quad\quad \times \, \big( F_{\rho_n}^{RC;u_1 y}\big)^* \, F_{\rho_n}^{RC;u_2 w}  O \bigg( \, \sum_u \, \big( F_{\rho_n}^{RC;u w} \big)^* \, F_{\rho_n}^{RC;u z} \bigg) \, | \Omega_0 \rangle \\
      \intertext{Using Lemma \ref{lem:FdaggerF identity}, the sum over $u$ yields a $\delta_{w, z}$ and the sum over $v$ yields a $\delta_{y, z}$ so}
      &= \delta_{v_1, v_2} \,\bigg(\frac{\dimR}{|N_C|} \bigg)^2 \,  \sum_{w} \, \big(  F_{\rho_n}^{RC;u_1 w} \big)^* \, F_{\rho_n}^{RC;u_2 w} \, O \, | \Omega_0 \rangle  \\
      &= \delta_{u_1, u_2} \delta_{v_1, v_2} \, O | \Omega_0 \rangle
	\end{align*}
	as required.
\end{proof}

\section{Properties of string nets}
\label{sec:properties of string nets}

\subsection{Direct paths and flux}

Recall the definitions of direct paths and the direct path of a ribbon from Section \ref{subsec:preliminary notions}.

If a direct path $\gamma$ is supported in a region $S \subset \latticeedge$ and $\al \in \gc[S]$ is a gauge configuration on $S$ then we define the flux of $\al$ through $\gamma$ to be
$$ \phi_{\gamma}(\al) := \prod_{e \in \gamma} \, \al_e, $$
where the product is ordered according to the order of $\gamma$. We have $\phi_{\dash{\gamma}}(\alpha) = \dash{\phi_{{\gamma}}(\alpha)}$ and if $\gamma = \gamma_1 \gamma_2$ then we have $\phi_\gamma(\alpha) = \phi_{\gamma_1} (\alpha) \phi_{\gamma_2} (\alpha)$.

Similarly, we say a finite ribbon $\rho = \{ \tau_i \}_i^l$ is supported in $S \subset \latticeedge$ if for all $i = 1, \cdots, l$ we have $e_{\tau_i} \in S$ or $\bar e_{\tau_i} \in S$. In that case the direct path $\rho^{dir}$ is supported in $S$ and we put $\phi_{\rho}(\al) := \phi_{\rho^{dir}}(\al)$. This is consistent with Definition \ref{def:flux thourgh ribbon}.

\subsection{Fluxes of string-nets}

\begin{definition} \label{def:face-move}
    We say two direct paths $\gamma_1$ and $\gamma_2$ are related by a face-move over $f \in \latticeface$ if $\gamma_1 = \gamma' \gamma^{f}_1 \gamma''$ and $\gamma_2 = \gamma' \gamma^{f}_2 \gamma''$ for direct paths $\gamma', \gamma'', \gamma_1^f$ and $\gamma_2^f$ such that $\gamma_1^f \bar \gamma_2^f$ is a closed direct path consisting of three edges circling the face $f$.
\end{definition}

\begin{lemma} \label{lem:trivial face-move}
    Let $\gamma_1$, $\gamma_2$ be direct paths in $\eregion$ that are related by a face-move over $f \in \fregion$, and let $\al \in \gc$ be such that $B_f \ket{\al} = \ket{\al}$, then
    $$\phi_{\gamma_1}(\al) = \phi_{\gamma_2}(\al).$$
\end{lemma}

\begin{proof}
    From the definition, we have $\gamma_1 = \gamma' \gamma^{f}_1 \gamma''$ and $\gamma_2 = \gamma' \gamma^{f}_2 \gamma''$ for direct paths $\gamma', \gamma'', \gamma_1^f$ and $\gamma_2^f$ such that $\gamma_1^f \bar\gamma_2^f$ is a closed direct path consisting of three edges circling the face $f$.

    It follows from $B_f \ket{\al} = \ket{\al}$ that $\phi_{\gamma_1^f \bar\gamma_2^f }(\al) = 1$, or $\phi_{\gamma^f_1}(\al) = \phi_{\gamma^f_2}(\al)$. It follows that
    $$ \phi_{\gamma_1}(\al) =  \phi_{\gamma'}(\al) \phi_{\gamma^f_1}(\al) \phi_{\gamma''}(\al) = \phi_{\gamma'}(\al) \phi_{\gamma_2^f}(\al) \phi_{\gamma''}(\al) = \phi_{\gamma_2}(\al) $$
    as required.
\end{proof}

Recall the fiducial ribbons $\nu_n$ and boundary ribbons $\beta_n$ defined in Section \ref{subsec:local gauge configurations and boundary conditions}, see Figure \ref{fig:fiducial_and_boundary_ribbons}.
\begin{lemma} \label{lem:bcinC}
    Let $C \in (G)_{cj}$ and $i = 1, \cdots, \abs{C}$. If $\alpha \in \packi$ then we have
    $$\phi_{\bdy}(\al) = \dash{ \phi_{\fidu}(\al)  } \, c_i \, \phi_{\fidu}(\al) \in C.$$
\end{lemma}

\begin{proof}
    Let $\gamma_{\site}$ be the direct path of $\rho_{\triangle}(\site)$ so $\phi_{\gamma_{\site}}(\alpha) = c_i \in C$. Let $\gamma_{\fidu}$ be the direct path of $\fidu$, and $\gamma_{\bdy}$ the direct path of the boundary ribbon $\bdy$. Consider the direct path $\gamma = \gamma_{\site} \gamma_{\fidu} \dash{\gamma_{\bdy}} \, \dash{\gamma_{\fidu}}$.

    Since $\al \in \packi$ we have by definition that $B_f \ket{\al} = \ket{\al}$ for all $f \in \dfregion$. Since $\gamma$ can be shrunk to the empty ribbon through a sequence of face-moves over faces $f \in \dfregion$, it follows from Lemma \ref{lem:trivial face-move} that $\phi_{\gamma}(\al) = \phi_{\emptyset}(\al) = 1$, which is equivalent to
    $$ \phi_{\bdy}(\al) = \phi_{\gamma_{\bdy}}(\al) = \dash{\phi_{\gamma_{\fidu}}(\al)} \, \phi_{\gamma_{\site}}(\al) \,\ \phi_{\gamma_{\fidu}}(\al) = \dash{\phi_{\fidu}(\al)} \, c_i \, \phi_{\fidu}(\al) $$
    as required.
\end{proof}

\begin{lemma} \label{lem:inNC}
    Let $\alpha \in \packib$, then $\bar q_i \phi_{\fidu}(\alpha) q_{i(b)} \in N_C$.
\end{lemma}

\begin{proof}
    From Lemma \ref{lem:bcinC} we have
    $$ \phi_{\bdy}(\al) = \dash{ \phi_{\fidu}(\al)  } \, c_i \, \phi_{\fidu}(\al) = \dash{ \phi_{\fidu}(\al)  } \,q_i \, r_C \, \bar q_i \, \phi_{\fidu}(\al), $$
    in particular, $\phi_{\bdy}(\al) = \phi_{\bdy}(b(\al)) \in C$, so we have a unique label $i(b) \in \{ 1, \cdots, \abs{C} \}$ such that $\phi_{\bdy}(\al) = q_{i(b)} r_C \bar q_{i(b)}$. Usign this we obtain
    $$  r_C = \bar q_{i(b)}\dash{ \phi_{\fidu}(\al)  } \,q_i \, r_C \, \bar q_i \, \phi_{\fidu}(\al) \, q_{i(b)}. $$
    This shows that $\bar q_i \phi_{\fidu}(\al) \, q_{i(b)} \in N_C$, as required.
\end{proof}

\subsection{The action of gauge groups on string nets}
\label{app:Uisfaithful}

Recall the group of gauge transformations $\gauge$ consisting of unitaries of the form $\mathcal{U}(\{g_v\}) = \prod_{v \in \vregion} \, A_v^{g_v}$ with $g_v \in G$ for each $v \in \vregion$. These gauge transformations act in the bulk of $\eregion$, they are all supported on $\eregion \setminus \partial \eregion$.

We define \emph{boundary gauge transformations} acting on $\hilb_n$ in a similar way.

\begin{definition} \label{def:boundary gauge transformations}
    Recall $\partial \vregion = \vregion[n+1] \setminus \vregion$ and let $\partial \gauge$ be the group of unitaries of the form $\mathcal{U}(\{g_v\}) = \prod_{v \in \vertex(\partial \eregion)} \tilde{A}_v^{g_v}$ with $g_v \in G$ for each $v \in \partial \vregion$. Here $\tilde{A}_{v}^g$ is the restriction of $A_v^g$ to $\hilb_n$. We call $\partial \gauge$ the group of boundary gauge transformations.
\end{definition}

Note that the boundary gauge transformations are supported on $\eregion \setminus \eregion[n-1]$.

\begin{figure}
    \centering
    \includegraphics[width=0.5\textwidth]{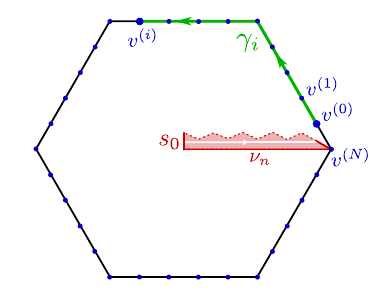}
    \caption{Vertices $v^{(i)}$ and direct paths $\sigma_i$ on the boundary of the region $\eregion$, defined relative to the fiducial ribbon $\fidu$. The region $\eregion$ is indicated by the black hexagon. The individual edges of $\eregion$ are not shown.}
    \label{fig:boundary gauge construction}
\end{figure}

Before proving the Lemma on the free and transitive action of the gauge group $\gauge$, we show the following result, which will help us prove uniqueness statements for gauge transformations.

\begin{lemma} \label{lem:unique gauge tranformations}
    If $\alpha \in \gc$ is any gauge configuration on $\eregion$ and $U \in \gauge$ is such that $U \ket{\alpha} = \ket{\alpha}$, then $U = \mathds{1}$.
\end{lemma}

\begin{proof}
    Since $U \in \gauge$ it is of the form $U = \prod_{v \in \vregion} \, A_v^{g_v}$ for group elements $g_v \in G$. For any edge $e = (v, v')$ with $v \in \partial \vregion$ and $v' \in \vregion$ we have $\alpha_e = \alpha_e \bar g_{v'}$, so $g_v' = 1$. This shows that in fact $U \in \mathcal{G}_{n-1}$. Proceeding inductively, we find that $U \in \mathcal{G}_0$, i.e. $U$ is of the form $U = A_{v_0}^{g_{v_0}}$. Finally, by considering any edge $e = (v_0, v)$ we find that $\alpha_e = g_{v_0} \alpha_e$ so $g_{v_0} = 1$ and $U = \mathds{1}$.
\end{proof}

Recall the boundary gauge transformations $\partial \gauge$ of Definition \ref{def:boundary gauge transformations}.

\begin{lemma} \label{lem:transitive boundary action}
    For any pair of boundary conditions $b, b' \in \bc[C]$ that are compatible with conjugacy class $C$ there is a boundary gauge transformation $U_{b' b} \in \partial \gauge$ such that for any $\alpha \in \packC[C;ib]$ we have $U_{b' b} \ket{\alpha} = \ket{\alpha'}$ for an $\alpha' \in \packC[C;ib']$ that satisfies $\alpha_e = \alpha'_e$ for all $e \in \eregion[n-1]$.
\end{lemma}

\begin{proof}
    Fix a conjugacy class $C$ and a flux $c_i \in C$. We will first prove the claim for the simple boundary condition $b_0$ corresponding to the string net of Figure \ref{fig:simple_string_net}. i.e. let $b$ be an arbitrary boundary condition compatible with $C$. We will construct a boundary gauge transformation $U_{b_0 b} \in \partial \gauge$ such that for any $\alpha \in \packC[C;ib]$ we have $U \ket{\alpha} = \ket{\alpha'}$ with $\alpha' \in \packC[C;ib_0]$ such that $\alpha_e = \alpha'_e$ for all $e \in \eregion[n-1].$

    To that end, let $\partial \vregion = \{v^{(0)}, v^{(1)}, \cdots, v^{(N)}\}$ be a labeling of the vertices in $\partial \vregion$ as in Figure \ref{fig:boundary gauge construction}. For $i = 1, \cdots, N$, let $\sigma_i$ be the direct path proceeding counterclockwise around $\partial \eregion$ from $v^{(0)}$ to $v^{(i)}$ as in the Figure. Let $\sigma_{N+1}$ be the direct path that circles $\partial \eregion$ in a counterclockwise direction starting and ending at $v^{(0)}$.

    Since $\phi_{\sigma_{N+1}}(b) \in C$ it can be written as $\phi_{\sigma_{N+1}}(b) = c_{i'} = q c_i \bar q$ for some $q \in G$. Set $g_{v^{(i)}} := \bar q \, \phi_{\sigma_i}(b)$ and $U_{b_0 b} = \prod_{i = 1}^N \, \tilde A_{v^{(i)}}^{g_{v^{(i)}}} \in \partial \gauge$. Take $\alpha \in \packC[C;ib]$ and let $\alpha'$ be the unique string net such that $U \ket{\alpha} = \ket{\alpha'}$. Then for any $i = 1, \cdots, N-1$ we set $e = (v^{(i)}, v^{(i+1)}) \in \partial \eregion$, and find $$b(\alpha')_e = \alpha'_e = g_{v^{(i)}} \, \alpha_e \, g_{v^{(i+1)}} = \bar q \phi_{\sigma_i}(b)  b_e \bar \phi_{\sigma_{i+1}}(b) q = 1.$$
    Furthermore, for the final boundary edge $e = (v^{(N)}, v^{(0)})$ we have
    $$b(\alpha')_e = \alpha'_e = g_{v^{(N)}} \alpha_e \bar g_{v^{(0)}} = \bar q \phi_{\sigma_{N+1}}(b) q = c_i.$$
    We conclude that $b(\alpha') = b$. Moreover, since $U_{b_0 b}$ is supported on $\eregion \setminus \eregion[n-1]$ we have $\alpha_e = \alpha'_e$ for all $e \in \eregion[n-1]$. This proves the existence part of the claim is this special case.

    Using the same arguments one can show that if $\alpha \in \packC[C;ib_0]$, then $U_{b_0 b}^* \ket{\alpha} = \ket{\alpha'}$ for a string net $\alpha' \in \packC[C;ib_0]$ such that $\alpha_e = \alpha'_e$ for all $e \in \eregion[n-1]$.

    Let us now prove the claim for general boundary conditions $b, b'$ compatible with $C$.
    
    We set $U_{b' b} = U_{b_0 b'}^* U_{b_0 b} \in \partial \gauge$ where the boundary gauge transformations $U_{b_0 b}$ and $U_{b_0 b'}$ are as constructed above. then for any $\alpha \in \packC[C;ib]$ we have $U_{b' b} \ket{\alpha} = \ket{\alpha'}$ for a string net $\alpha' \in \packC[C;ib']$ such that $\alpha_e = \alpha'_e$ for all $e \in \eregion[n-1]$. This proves the general case.
\end{proof}

Recall the Definition \ref{def:string nets Cib(m)} of the collections of string nets $\packC[C;i b](m)$:

Fix a conjugacy class $C$, a boundary condition $b$ compatible with $C$ and a label $i = 1, \cdots, |C|$. For any $m \in N_C$ we have
$$\packib(m) := \{\alpha \in \packib \, : \, \phi_\fidu(\alpha) = q_i m \dash q_{i'}\}$$
where $i' = i(b)$.

\begin{lemma} \label{lem:transitive bulk action}
    For any two $\alpha, \alpha' \in \packib(m)$ there is a unique gauge transformation $U \in \dgauge$ such that $U \ket{\alpha} = \ket{\alpha'}$. Moreover, if $\alpha \in \packib(m)$ and $U \in \dgauge$ then $U \ket{\alpha} = \ket{\alpha'}$ with $\alpha' \in \packib(m)$. i.e. $\dgauge$ acts freely and transitively on $\packib(m)$.
\end{lemma}

\begin{proof}
    Fix a conjugacy class $C$ and a flux $c_i \in C$. We will first prove the claim for the simple boundary condition $b_0$ corresponding to the string net of Figure \ref{fig:simple_string_net}. Denote by $\alpha^{(0)} \in \packC[C;i b_0](1)$ the string net depicted in that figure. It has trivial gauge configuration everywhere except at the red edges. We will first show that for any $\alpha \in \packC[C;ib_0]$ there is a $U \in \dgauge$ such that $U \ket{\alpha} = \ket{\alpha^{(0)}}$.

    Let $v_* \in \partial \vregion$ be the vertex as defined in Figure \ref{fig:simple_string_net}. For any site $v \in \vregion \cup \partial \vregion$, let $\gamma_v$ be a direct path from $v_*$ to $v$ that does not contain any of the red edges (Lemma \ref{lem:trivial face-move}). i.e. $\gamma_v$ is forbidden from crossing the fiducial ribbon. See Figure \ref{fig:simple_string_net} for an example. Define $g_v := \phi_{\gamma_v}(\alpha)$ for all $v \in \vregion \cup \partial \vregion$. Note that since $\alpha$ satisfies the flat gauge condition for all faces except for $f_0$, the group elements $g_v$ are independent of the choice of path $\gamma_v$, as long as we stick to paths that do not include red edges. (This strip acts as a branch cut.) Note further that since the boundary condition is trivial everywhere except on the red boundary edge, we have $g_v = 1$ for all $v \in \partial \vregion$. Moreover, $g_{v_0} = 1$ because we can take $\gamma_{v_0}$ to run along the direct part of the fiducial ribbon as in Figure \ref{fig:paths_I_II}. Since $\alpha \in \packC[C;i b_0](1)$, we have $\phi_{\fidu}(\alpha) = 1$, therefore $g_{v_0} = 1$.

    Let
    $$U = \prod_{v \in \vregion} \, A_v^{g_v} \,\, \prod_{v \in \partial \vregion} \, \tilde A_v^{g_v} = \prod_{v \in \dvregion} \, A_v^{g_b}$$
    where we used that $g_v = 1$ for all $v \in \partial \vregion$ and for $v = v_0$. i.e. we have $U \in \dgauge$.

    We now let $\alpha' \in \packC[C;i b_0]$ be the unique string net such that $U \ket{\alpha} = \ket{\alpha'}$. We will show that $\alpha' = \alpha^{(0)}$.

    Let $e = (v, v')$ be an edge that is not red. Then $\alpha'_e = g_{v} \alpha_e \bar g_{v'} = 1$ because $g_v \alpha_e$ is the flux of $\alpha$ through $\gamma_{v} e$, which is a path from $v^{(N)}$ to $v'$ that does not involve red edges. As noted before, that implies $g_v \alpha_e = g_{v'}$. We see that $\alpha'_e = 1$ for all edges $e$ except possibly the red edges.

    Let us now consider a red edge $e = (v, v')$ which we take to be oriented upwards so that $\alpha^{(0)}_e = c_i$, see Figure \ref{fig:paths_I_II}. Let $I$ be the path from $v_0$ to $v$ and $II$ the path from $v_0$ to $v'$ as shown in Figure \ref{fig:paths_I_II}. Then $\gamma_{v_0} I$ is a path from $v_*$ to $v$ and since $g_{v_0} = 1$ we have $g_{v} = \phi_I(\alpha)$. Similarly, we have $g_{v'} = \phi_{II}$. Let $\gamma_{s_0}$ be the direct path which starts and ends at $v_0$ and circles $f_0$ in a counterclockwise direction. The closed loop $I e \overline{II}$ can be shrunk to $\gamma_{s_0}$ by a sequence of face-moves (Definition \ref{def:face-move}) over faces $f \in \dfregion$. Since $\alpha \in \packC[C;ib_0]$ we have $B_f \ket{\al} = \ket{\al}$ for all $f \in \dfregion$ so it follows from Lemma \ref{lem:trivial face-move} that $c_i = \phi_{\gamma_{s_0}}(\al) = \phi_{I e \overline{II}}(\al) = g_v \al_{e} \bar g_{v'} = \al_e'$.
    
    Let now $\alpha_1, \alpha_2 \in \packC[C;i b_0](1)$ be arbitrary. We have just shown that there are gauge transformations $U_1, U_2 \in \dgauge$ such that $U_1 \ket{\alpha_1} = U_2 \ket{\alpha_2} = \ket{\alpha^{(0)}}$. It follows that the gauge transformation $U = U_2^* U_1 \in \dgauge$ satisfies $U \ket{\alpha} = \ket{\alpha'}$. i.e. we have shown the existence claim in the special case of $\packC[C;i b_0](1)$.

    Let us now generalise to $\alpha_1, \alpha_2 \in \packC[C;i b_0](m)$ for arbitrary $m \in N_C$. i.e. these string nets satisfy $\phi_{\fidu}(\alpha_1) = \phi_{\fidu}(\alpha_2) = q_i m \bar q_{i}$ where we noted that $i(b_0) = i$.

    Acting with the gauge transformation $U_{v_0} = A_{v_0}^{q_i \bar m \bar q_i}$ yields $U_{v_0} \ket{\alpha_1} = \ket{\alpha'_1}$ and $U_{v_0} \ket{\alpha_2} = \ket{\alpha'_2}$ for string nets $\alpha'_1, \alpha'_2 \in \packC[C;i](1)$. Here the flux $c_i$ at $s_0$ was preserved because $q_i \bar m \bar q_i$ commutes with $c_1$, and the action of $A_{v_0}^{q_i \bar m \bar q_i}$ multiplies the flux through $\fidu$ from the left by $q_i \bar m \bar q_i$, thus trivializing it.

    Applying the above result, we have a  gauge transformation $U \in \dgauge$ such that $U \ket{\alpha'_1} = \ket{\alpha'_2}$. Since $U$ commutes with $U^*_{v_0}$ we then find
    $$U \ket{\alpha_1} = U U_{v_0}^* \ket{\alpha'_1} = U_{v_0}^* \ket{\alpha'_2} = \ket{\alpha_2}.$$
    This proves the existence claim in the case of $\packC[C;ib_0](m)$ for arbitrary $m \in N_C$.

    Let us now consider a general boundary condition $b$ that is compatible with $C$. Take $\alpha_1, \alpha_2 \in \packC[C;ib](m) \subset \packC[C;ib]$ for some $m \in N_C$. Lemma \ref{lem:transitive boundary action} provides a boundary gauge transformation $U_{b_0 b} \in \partial \gauge$ such that $U_{b_0 b} \ket{\alpha_1} = \ket{\alpha'_1}$ and $U_{b_0 b} \ket{\alpha_2} = \ket{\alpha'_2}$ for string nets $\alpha'_1, \alpha'_2 \in \packC[C;ib_0](m')$ for some $m' \in N_C$. Here we noted that since $\alpha_1$ and $\alpha_2$ have the same flux through the fiducial ribbon $\fidu$ and both are acted on by the same boundary gauge transformation $U_{b_0 b}$, the resulting string nets $\alpha'_1$ and $\alpha'_2$ also have the same flux through $\fidu$ (though possibly different from the fluxes of $\alpha_1$ and $\alpha_2$).

    Using the result obtained above, we have a gauge transformation $U \in \dgauge$ such that $U \ket{\alpha'_1} = \ket{\alpha'_2}$. since $U$ commutes with $U_{b_0 b}$ we find
    $$U \ket{\alpha_1} = U U_{b_0 b}^* \ket{\alpha'_1} = U_{b_0 b}^* U \ket{\alpha'_1} = U_{b_0 b}^* \ket{\alpha'_2} = \ket{\alpha_2}.$$
    This proves the existence part of the claim in full generality.

    As for uniqueness, take $\alpha_1, \alpha_2 \in \packC[C;ib](m)$ and suppose that $U, U' \in \dgauge$ both satisfy $U \ket{\alpha_1} = U' \ket{\alpha_1} = \ket{\alpha_2}$. Then $U' U^* \ket{\alpha_2} = \ket{\alpha_2}$ and it follows from Lemma \ref{lem:unique gauge tranformations} that $U' = U$.

    It remains to show that if $\alpha \in \packib(m)$ and $U \in \dgauge$, then $U \ket{\alpha} = \ket{\alpha'}$ for an $\alpha' \in \packib(m)$. To see this it is sufficient to note that $U$ is supported on $\eregion \setminus \partial \eregion$ and therefore it cannot change the boundary condition. Further, By Lemma \ref{lem:a and B commute with A_v and B_f} any gauge transformation not supported on $v_0$ commutes with the projectors $B_{\site}^{c_i}$, so $U \in \dgauge$ cannot change the label $i$. Finally, if $v \in \dvregion$ then either no edges incident on $v$ belong to the direct path of the fiducial ribbon, in which case $\phi_{\fidu}(\alpha') = \phi_{\fidu}(\alpha)$ is obvious. Or, precisely two edges incident on $v$ are part of the fiducial ribbon, say $e_i$ and $e_{i+1}$ where we have labeled the direct edges of the fiducial ribbon $\{e_1, \cdots, e_n \}$ along the orientation of $\fidu$. In that case, if $\ket{\alpha'} = A_v^h \ket{\alpha}$, then $\alpha'_{e_i} = \alpha_{e_i} \bar h$ and $\alpha'_{e_{i+1}} = h \alpha_{e_{i+1}}$, and $\alpha$ and $\alpha'$ agree on all other edges. It follows that
    $$\phi_{\fidu}(\alpha') = \prod_{j = 1}^n \alpha'_{e_j} = \prod_{j = 1}^{i-1} \alpha_{e_j}  \times \alpha_{e_i} \bar h \, h \, \alpha_{e_{i+1}} \times \prod_{j = i+2}^n \, \alpha_{e_j} = \phi_{\fidu}(\alpha).$$
    We see that no $A_v^h \in \dgauge$ changes the flux through the fiducial ribbon. Since $\dgauge$ is generated by these on-site gauge transformations, we get the required result. 
\end{proof}

\begin{figure}[t!]
    \centering
    \begin{subfigure}[t]{0.49\textwidth}
    \centering
    \includegraphics[width=\textwidth]{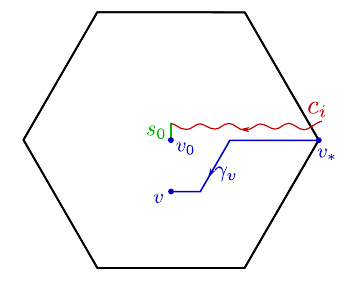}
    \caption{A simple string net $\alpha^{(0)} \in \packC[C;i]$ that is non-trivial only on the dual part of the fiducial ribbon $\fidu$. The corresponding boundary condition $b_0$ is trivial everywhere except at one edge.}
    \label{fig:simple_string_net}      
    \end{subfigure}
    \begin{subfigure}[t]{0.49\textwidth}
    \centering
    \includegraphics[width=\textwidth]{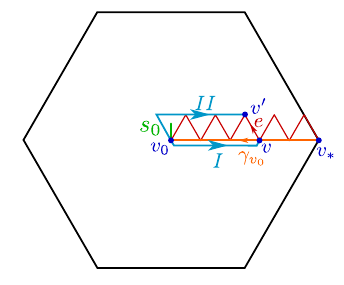}
    \caption{Paths used in the proof of Lemma \ref{lem:transitive bulk action}.}
    \label{fig:paths_I_II}   
    \end{subfigure}
    \caption{}
\end{figure}

\subsection{Action of ribbon operators on string-net states}
\label{sec:action of T and L on string nets}

\begin{lemma} \label{lem:flux projector}
    Suppose $\rho$ is a finite ribbon supported within $S \subset \latticeedge$ and $\al \in \gc[S]$. Then
    $$ T_{\rho}^g \ket{\al} = \delta_{\phi_{\rho}(\al), g} \ket{\al}  $$
    for any $g \in G$. In particular, $[T_{\rho}^g, T_{\rho'}^{g'}] = 0$ for all ribbons $\rho, \rho'$ and any $g, g' \in G$.
\end{lemma}

\begin{proof}
    We prove the Lemma by induction. If $\ep$ is the empty ribbon then $T_{\ep}^g = \delta_{1, g} \I$, which says that the flux through the empty ribbon is always trivial. If $\rho = \{\tau \}$ consists of a single dual triangle then $T_{\rho}^g = F_{\rho}^{1, g} = \delta_{1, g} \I$ which says that the flux through $\rho^{dir} = \emptyset$ is always trivial. If $\rho = \{ \tau \}$ consists of a single direct triangle then $T_{\rho}^g = F_{\rho}^{1, g} = T_{\tau}^g$ which acts on the string net as $T_{\tau}^g \ket{\al} = \delta_{\al_{e_{\tau}}, g} \ket{\al} = \delta_{\phi_{\tau}(\al), g} \ket{\al}$, as required.

    Now suppose $\rho = \rho' \tau$ and suppose the claim is true for the ribbon $\rho'$. Then $T_{\rho}^g = \sum_k \, T_{\rho'}^{k} T_{\tau}^{\bar k g}$. Using the above we get
    $$ T_{\rho}^g \, \ket{\al} = \sum_{k \in G}  \delta_{\phi_{\rho'}(\al), k} \delta_{\phi_{\tau}(\al), \bar k g} \ket{\al} = \delta_{\phi_{\rho}(\al), g} \ket{\al},$$
    as required.

    The commutativity can now be shown as follows. Let $S \subset \latticeedge$ be finite and such that $S$ contains the supports of $T_{\rho}^g$ and $T_{\rho'}^{g'}$. Then for any $\al \in \gc[S]$ we have
    $$ T_{\rho}^g T_{\rho'}^{g'} \ket{\al} = \delta_{\phi_{\rho}(\al), g} \delta_{\phi_{\rho'}(\al), g'} \ket{\al} = T_{\rho'}^{g'} T_{\rho}^g \ket{\al}. $$
    Since $\ket{\al}$ for $\al \in \gc[S]$ is an orthonormal basis for $\caH_S$, the claim follows.
\end{proof}

Let us now consider the boundary ribbon $\bdy$. Its alternating decomposition (cf. Definition \ref{def:alternating decomposition}) $\bdy = I_1J_1 \cdots I_N J_N$ has the direct parts $I_i = \{\tau_i\}$ consisting of a single triangle with $e_{\tau_i} \in \partial \eregion$. The dual parts $J_i$ for $i = 1, \cdots, N-1$ consist of one or two dual triangles each, corresponding to the edges of $\eregion \setminus \partial \eregion$ attached to each boundary vertex in $\partial \vregion$. See Figure \ref{fig:boundary triangles}. For each boundary vertex $v$, let us write $J_v$ for the corresponding dual ribbon. Let us moreover order the boundary vertices $\partial \vregion = \{ v^{(1)}, \cdots, v^{(N)}\}$ counterclockwise as in Figure \ref{fig:counterclockwise boundary labeling}.

\begin{figure}
    \centering
    \includegraphics[width=0.5\textwidth]{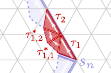}
    \caption{Examples of direct triangles $\tau_i$ and dual triangles $\tau^*_{i, 1}, \tau^*_{i, 2}$ that make up the boundary ribbon $\bdy$. The direct parts $I_1 = \{\tau_1\}$ and $I_2 = \{\tau_2\}$ as well as the first dual part $J_1 = \{ \tau^*_{1, 1}, \tau^*_{1, 2}\}$ of $\bdy$ are depicted.}
    \label{fig:boundary triangles}
\end{figure}

\begin{figure}
    \centering
    \includegraphics[width=0.5\textwidth]{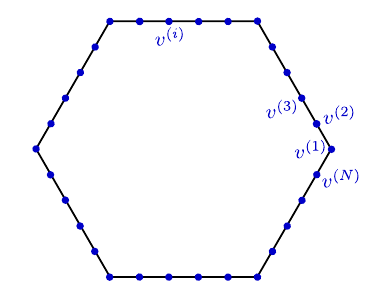}
    \caption{A counterclockwise labeling of the boundary vertices $\partial \vregion$.}
    \label{fig:counterclockwise boundary labeling}
\end{figure}

Let $\alpha \in \gc$ be a gauge configuration on $\eregion$. With the notations just established, it follows from Lemma \ref{lem:L decomposition} that
\begin{equation} \label{eq:boundary L on gauge config}
    L_{\bdy}^h \ket{\alpha} = \prod_{i = 1}^N  L_{J_{v^{(i)}}}^{\bar K_i h K_i}  \ket{\alpha}
\end{equation}
where
$$K_i = \prod_{j = 1}^i \phi_{\tau_i}(\alpha)$$
is the flux of $\alpha$ through $\tau_1 \cdots \tau_i$. Note that $K_N = \phi_{\bdy}(b)$.

We can now prove
\begin{lemma} \label{lem:boundary L on string nets}
    Let $\alpha \in \packC[C;ib]$ and $h \in G$ that commutes with $\phi_{\bdy}(b)$. Then $L_{\bdy}^{h} \ket{\alpha} = \ket{\alpha'}$ for a string net $\alpha' \in \packC[C;ib]$ such that $\phi_{\fidu}(\alpha') = \phi_{\fidu}(\alpha) h$.
\end{lemma}

\begin{proof}
    Using Eq. \eqref{eq:boundary L on gauge config} one easily checks that for each face, except possibly the final one that contains the site $s_n$ (Figure \ref{fig:boundary triangles}), the action of $L_{\bdy}^h$ preserves the trivial flux constraints.

    For that final face, label its edges as in Figure \ref{fig:final face}. From Eq. \eqref{eq:boundary L on gauge config} we see that the operator $L_{\bdy}^h$ acts on the edge degrees of freedom of this triangle as $L_{e_2}^{\bar K_1 \bar h K_1} R_{e_3}^{\bar K_N \bar h K_N}$. On the string net state $\alpha$ this becomes
    $$L_{e_2}^{\bar K_1 \bar h K_1} R_{e_3}^{\bar K_N \bar h K_N} \, \ket{ \alpha_{e_1} } \otimes \ket{ \alpha_{e_2} } \otimes \ket{ \alpha_{e_3} }  = \ket{ \alpha_{e_1} } \otimes \ket{ \bar K_1 \bar h K_1 \alpha_{e_2} } \otimes \ket{ \alpha_{e_3} \bar K_N h K_N }.$$
    
    Noting that $K_N = \phi_{\bdy}(b) $ and $K_1 = \al_{e_1}$ we see that the resulting flux measured at $s_n$ is
    $$ \bar h \al_{e_1} \al_{e_2} \al_{e_3}  \dash{\phi_{\bdy}(b)} \, h \phi_{\bdy}(b) = \bar h \dash{\phi_{\bdy}(b)} \, h \phi_{\bdy}(b)$$
    where we used that $\alpha$ satisfies the trivial flux constraint $\alpha_{e_1} \alpha_{e_2} \alpha_{e_3} = 1$. We now use that $h$ commutes with the boundary flux $\phi_{\bdy}(b)$ to see that the trivial flux condition is also maintained in the final face.

    As already noted, $L_{\bdy}^h$ acts on the degree of freedom at the edge $e_3$ as $R_{e_3}^{\bar K_N \bar h K_N}$, using again that $K_N = \phi_{\bdy}(b)$ and $h$ commutes with $\phi_{\bdy}(b)$ this is the same as $R_{e_3}^{\bar h}$.  since $e_2$ is the final direct edge of the fiducial ribbon $\fidu$, this immediately implies the final claim.
\end{proof}

\begin{figure}
    \centering
    \includegraphics[width=0.2\textwidth]{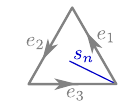}
    \caption{A labeling of the edges of the face $f(s_n)$.}
    \label{fig:final face}
\end{figure}

Recall from Definition \ref{def:eta^Cib(m)} the unit vectors
$$ \ket{\eta_n^{C;ib}(m)} = \frac{1}{\abs{ \packib(m) }^{1/2}} \, \sum_{\al \in \packib(m)} \, \ket{\al}. $$

\begin{lemma} \label{lem:action of T_bdy on eta^C;ib(m)}
    For any $g \in G$, $C \in (G)_{cj}, i = 1, \cdots, \abs{C}, b \in \bc[C]$, and $m \in N_C$ we have
    $$T_{\bdy}^g \ket{\eta_n^{C;ib}(m)} = \delta_{g, q_{i(b)} r_C \, \bar q_{i(b)}} \ket{\eta_n^{C;ib}(m)}. $$
\end{lemma}

\begin{proof}
    By definition, $\ket{\eta_n^{C;ib(m)}}$ is a linear combination of states $\ket{\al}$ with $\al \in \packib(m)$ hence $\phi_{\fidu}(\al) = q_i m \bar q_{i(b)}$ (cf. Definition \ref{def:string nets Cib(m)}) and $\phi_{f_0}(\al) = c_i$. It follows from Lemmas \ref{lem:bcinC} and \ref{lem:inNC} that for all these string nets we have $\phi_{\bdy}(\al) = \bar \phi_{\fidu(\al)} c_i \phi_{\fidu}(\al) = q_{i(b)} r_C \, \bar q_{i(b)}$. The result now follows immediately from Lemma \ref{lem:flux projector}.
\end{proof}

Recall from Definition \ref{def:representation basis} the unit vectors
$$  \qdpure = \left( \frac{\dimR}{\abs{N_C}} \right)^{1/2} \, \sum_{m \in N_C} \, R^{j j'}(m) \, \ket{\eta_n^{C;ib}(m)}$$
where $u = (i, j) \in I_{RC}$ and $v = (b, j') \in I'_{RC}$.

\begin{lemma} \label{lem:topological charge detector}
    We have
    $$ K_{\bdy}^{R_1 C_1} \ket{ \eta_n^{R_2 C_2 ;u v} } = \delta_{R_1 C_1, R_2 C_2}\, \ket{\eta_n^{R_2 C_2; uv}}.  $$
\end{lemma}

\begin{proof}
    Let $u = (i, j) \in I_{R_2 C_2}$ and $v = (b, j') \in I'_{R_2 C_2}$, then
    \begin{align*}
        K_{\bdy}^{R_1 C_1} \, \ket{\eta_n^{R_2 C_2; u v}} &= \sum_{\substack{m_1 \in N_{C_1} \\ m_2 \in N_{C_2}}} \, \chi_{R_1}(m_1)^* \, R_2^{j j'}(m_2)^* \, \sum_{q \in Q_{C_1}} \, L_{\bdy}^{q m_1 \bar q} \, T_{\bdy}^{q r_{C_1} \bar q} \, \ket{\eta_n^{C_2;ib}(m_2)} \\
        \intertext{from Lemma \ref{lem:action of T_bdy on eta^C;ib(m)} and noting that $q r_{C_1} \bar q = q_{i(b)} r_{C_2} \bar q_{i(b)}$ implies $C_1 = C_2$ and $q = q_{i(b)}$ we get}
        &= \delta_{C_1, C_2} \, \sum_{m_1, m_2 \in N_{C_2}} \, \chi_{R_1}(m_1)^* \, R_{2}^{j j'}(m_2)^* \, L_{\bdy}^{q_{b(i)} m_1 \bar q_{i(b)}} \, \ket{ \eta_n^{C_2;ib}(m_2) } \\
        \intertext{noting that $\ket{ \eta_n^{C_2;ib}(m_2) }$ is a linear combination of $\ket{\al}$ for $\al \in \packib(m_2)$, and for each such $\al$ we have $\phi_{\bdy} = q_{i(b)} r_{C_2} \bar q_{i(b)}$, it follows from Lemma \ref{lem:boundary L on string nets} that}
        &= \delta_{C_1, C_2} \, \sum_{m_1, m_2 \in N_{C_2}} \, \chi_{R_1}(m_1)^* \, R_{2}^{j j'}(m_2)^* \, \ket{ \eta_n^{C_2;ib}(m_2 m_1) } \\
        \intertext{changing variables to $M = m_2 m_1$ and $m = m_1$ and writing the character as a trace this becomes}
        &= \delta_{C_1, C_2} \, \sum_{M, m \in N_{C_2}} \, \sum_{l, l'} \, R_1^{l l}(m)^* \, R_2^{j l'}(M)^* \, R_2^{j' l'}(m) \, \ket{\eta_n^{C_2;ib}(M)} \\
        \intertext{Finally, applying Schur orthogonality we get}
        &= \delta_{R_1 C_1, R_2 C_2} \, \ket{\eta_n^{R_2 C_2;uv}},
    \end{align*}
    finishing the proof.
\end{proof}

\subsection{Action of Wigner projectors and label changers on string-net states}

Recall the unit vectors (Definitions \ref{def:eta^Cib(m)}) and \ref{def:representation basis})
$$ \ket{\eta_n^{C;ib}(m)} = \frac{1}{\abs{ \packib(m) }^{1/2}} \, \sum_{\al \in \packib(m)} \, \ket{\al} $$
and
$$  \qdpure = \left( \frac{\dimR}{\abs{N_C}} \right)^{1/2} \, \sum_{m \in N_C} \, R^{j j'}(m)^* \, \ket{\eta_n^{C;ib}(m)}$$
where $u = (i, j) \in I_{RC}$ and $v = (b, j') \in I'_{RC}$.

Recall from Definition \ref{def:Wigner projectors} the Wigner projectors
$$\qd := \frac{\dimR}{|N_C|} \sum_{m\in N_C} \chi_R(m)^* \sum_{q \in Q_C} A_{s} ^{q  m \dash{q}} B_{s}^{q r_C \dash{q}}$$
and for each $u = (i, j) \in I_{RC}$
$$\qd[RC;u] := \frac{\dimR}{|N_C|} \sum_{m \in N_C} R^{jj}(m)^* A_{s} ^{q_i m\dash{q}_i} B_{s}^{c_i}.$$

\begin{lemma} \label{lem:action of Wigner projecitons on string-net condensates}
    We have
    $$ D_{\site}^{R_1C_1} \ket{ \eta_n^{R_2C_2;uv} } = \delta_{R_1C_1, R_2C_2} \ket{\eta_n^{R_2C_2;uv}}. $$
\end{lemma}

\begin{proof}
    Let $u = (i, j) \in I_{RC}$ and $v = (b, j') \in I'_{RC}$. Then
    \begin{align*}
        D_{\site}^{R_1C_1} \ket{\eta_n^{R_2C_2;uv}} &= \bigg( \frac{\dim R_1}{\abs{N_{C_1}}} \bigg)\bigg( \frac{\dim R_2}{ \abs{N_{C_2}} } \bigg)^{1/2} \, \sum_{\substack{ m_1 \in N_{C_1} \\ m_2 \in N_{C_2} }} \, \chi_{R_1}(m_1)^*  R_2^{j j'}(m_2)^* \, \\
        & \quad\quad\quad \times\sum_{q \in Q_{C_1}} \, A_{\site}^{q m_1 \bar q} B_{\site}^{q r_{C_1} \bar q} \, \ket{\eta_n^{C_2;ib}(m_2)} \\
        &= \delta_{C_1, C_2} \, \frac{ \dim R_1 (\dim R_2)^{1/2} }{\abs{N_{C_1}}^{3/2}} \,\sum_{m_1, m_2 \in N_{C_1}} \, \chi_{R_1}(m_1)^*  R_2^{j j'}(m_2)^* \, A_{\site}^{q_i m_1 \bar q_i} \, \ket{\eta_n^{C_1;ib}(m_2)} \\
        \intertext{using Lemma \ref{lem:N_C action on VCib} this becomes}
        &= \delta_{C_1, C_2} \, \frac{ \dim R_1 (\dim R_2)^{1/2} }{\abs{N_{C_1}}^{3/2}} \,\sum_{m_1, m_2 \in N_{C_1}} \, \chi_{R_1}(m_1)^*  R_2^{j j'}(m_2)^*  \ket{\eta_n^{C_1;ib}(m_1m_2)} \\
        \intertext{changing variables to $m = m_1$ and $M = m_1 m_2$, and using using the Schur orthogonality relation \eqref{eq:Schur} we get}
        &= \delta_{R_1C_1, R_2C_2} \, \bigg( \frac{\dim R_1}{\abs{N_{C_1}}} \bigg)^{1/2} \, \sum_{M \in N_{C_1}} \, R_2^{j j'}(M)^*  \ket{\eta_n^{C_1;ib}(M)} \\
        &= \delta_{R_1C_1, R_2C_2} \, \ket{\eta_n^{R_1C_1;uv}}.
    \end{align*}
\end{proof}

\begin{lemma} \label{lem:action of Wigner sub-projector on string-net condensates}
    We have
    $$ D_{\site}^{RC;u_1} \, \ket{\eta_n^{RC;u_2 v}} = \delta_{u_1, u_2} \ket{\eta_n^{RC;u_1 v}}. $$
\end{lemma}

\begin{proof}
    Let $u_1 = (i_1, j_1)$, $u_2 = (i_2, j_2)$ and $v = (b, j')$. Then
    \begin{align*}
        D_{\site}^{RC;u_1} \ket{\eta_n^{RC;u_2 v}} &= \bigg( \frac{\dim R}{\abs{N_C}} \bigg)^{3/2} \, \sum_{m_1, m_2 \in N_C} \, R^{j_1 j_1}(m_1)^* R^{j_2 j'}(m_2)^* \, A_{\site}^{q_{i_1} m_1 \bar q_{i_1}} B_{\site}^{c_{i_1}} \, \ket{\eta_n^{C;i_2 b}(m_2)} \\
        \intertext{noting that $B_{\site}^{c_{i_1}} \, \ket{\eta_n^{C;i_2 b}(m_2)} = \delta_{i_1, i_2}$ and using Lemma \ref{lem:N_C action on VCib} this becomes}
        &= \delta_{i_1, i_2} \, \bigg( \frac{\dim R}{\abs{N_C}} \bigg)^{3/2} \, \sum_{m_1, m_2 \in N_C} \, R^{j_1 j_1}(m_1)^* R^{j_2 j'}(m_2)^* \, \ket{\eta_n^{C;i_2 b}(m_1m_2)} \\
        \intertext{changing variables to $m = m_1$ and $M = m_1 m_2$, and using Schur orthogonality \eqref{eq:Schur} we get}
        &= \delta_{u_1, u_2} \, \bigg( \frac{\dim R}{\abs{N_C}} \bigg)^{1/2} \, \sum_{M \in N_C}  R^{j_2 j'}(M)^* \, \ket{\eta_n^{C;i_2 b}(M)} = \delta_{u_1, u_2} \, \ket{\eta_n^{RC;u_2 v}}.
    \end{align*}
\end{proof}

Recall the operators from Definition \ref{def:label changers}: 
$$  A_{s}^{RC; u_2 u_1} := \frac{\dimR}{|N_C|}\sum_{m \in N_C} R^{j_2j_1}(m)^* A_{s}^{q_{i_2} m \dash{q}_{i_1}} \qquad  u_1 = (i_1,j_1) \quad u_2 = (i_2,j_2)  $$ 
and 
$$  \tilde{A}_n^{RC; v_2 v_1} :=\frac{\dimR}{|N_C|} \sum_{m \in N_C} R^{j'_2j'_1}(m)  U_{b_2 b_1} L_{\bdy}^{{q_{i(b_1)} \dash{m} \,  \dash{q}_{i(b_1)}}} \qquad v_1 = (b_1,j'_1) \quad u_2 = (b_2,j'_2)  $$
where $U_{b_2 b_1}$ is a unitary provided by Lemma \ref{lem:transitive boundary action}, which we choose such that $U_{b_2 b_1} = (U_{b_1 b_2})^*$. It follows from Lemma \ref{lem:transitive boundary action} that the unitary $U_{b_2 b_1}$ yields a bijection between $\packbc[C;ib_1]$ and $\packbc[C;ib_2]$ whenever $b_1, b_2 \in \bc[C]$.

It was shown in Lemma \ref{lem:N_C action on VCib} that the gauge transformations $A_{\site}^{q_i m \bar q_i}$ for $m \in N_C$ yield a left group action of $N_C$ on the vectors $\ket{\eta_n^{C;ib}(m)}$. We show now that the operators $L_{\bdy}^{q_{i(b)} \bar m \bar q_{i(b)}}$ for $m \in N_C$ yield a right action of $N_C$ on these vectors.

\begin{lemma} \label{lem:right action on VCib}
    For any $m_1, m_2 \in N_C$ we have
    $$ L_{\bdy}^{q_{i(b)} \bar m_1 \bar q_{i(b)} } \, \ket{\eta_n^{C;ib}(m_2)} = \ket{\eta_n^{C;ib}(m_2 \bar m_1)}. $$
\end{lemma}

\begin{proof}
    From Lemma \ref{lem:boundary L on string nets} and the fact that $L_{\bdy}^{q_{i(b)} \bar m_1 \bar q_{i(b)}}$ is unitary, we see that this operator yields a bijection from $\packib(m_2)$ to $\packib(m_2 \bar m_1)$. It follows that
    \begin{align*}
        L_{\bdy}^{q_{i(b)} \bar m_1 \bar q_{i(b)} } \, &\ket{\eta_n^{C;ib}(m_2)} = \frac{1}{\abs{\packib(m_2)}^{1/2}} \, \sum_{\al \in \packib(m_2)} \, L_{\bdy}^{q_{i(b)} \bar m_1 \bar q_{i(b)} }\, \ket{\al} \\
        &= \frac{1}{\abs{\packib(m_2 \bar m_1)}^{1/2}} \, \sum_{\al \in \packib(m_2 \bar m_1)} \, \ket{\al} = \ket{\eta_n^{C;ib}(m_2 \bar m_1)}.
    \end{align*}
\end{proof}

We can now show
\begin{lemma} \label{lem:aconverter}
    For any $u, u_1, u_2 \in I_{RC}$ and any $v, v_1, v_2 \in I'_{RC}$ we have
    $$A_{\site}^{RC; u_2 u_1} \qdpure[RC;u_1 v] = \qdpure[RC;u_2 v], \quad \quad \tilde{A}_n^{RC; v_2 v_1} \qdpure[RC;u v_1] = \qdpure[RC; u, v_2]$$
    as well as
    $$(A_{\site}^{RC; u_1 u_2})^* \qdpure[RC;u_1 v] = \qdpure[RC;u_2 v], \quad \quad (\tilde{A}_n^{RC; v_1 v_2})^* \qdpure[RC;u v_1] = \qdpure[RC; u, v_2].$$
\end{lemma}

\begin{proof}
    We prove the claim about the action of $\tilde A_n^{RC;v_2 v_1}$. The claim about $(\tilde{A}_n^{RC; v_1 v_2})^*$ is proven in exactly the same way, and the claims about $A_{\site}^{RC; u_2 u_1}$ and its hermitian conjugate have similar but simpler proofs. Let $u = (i, j)$, $v_1 = (b_1, j'_1)$ and $v_2 = (b_2, j'_2)$, then
    \begin{align*}
        \tilde{A}_n^{RC; v_2 v_1} \qdpure[RC;u v_1] &= \left( \frac{\dimR}{\abs{N_C}} \right)^{3/2} \, \sum_{m_1, m_2 \in N_C} \, R^{j'_2 j'_1}(m_2) \, R^{j j'_1}(m_1)^* \, U_{b_2 b_1} \, L_{\bdy}^{q_{i(b_1)} \bar m_2 \bar q_{i(b_1)}} \, \ket{\eta_n^{C;ib_1}(m_1)} \\
        \intertext{using Lemma \ref{lem:right action on VCib} and the basic properties of $U_{b_2 b_1}$}
        &= \left( \frac{\dimR}{\abs{N_C}} \right)^{3/2} \, \sum_{m_1, m_2 \in N_C} \, R^{j'_2 j'_1}(m_2) \, R^{j j'_1}(m_1)^* \, \ket{\eta_n^{C;ib_2}(m_1 \bar m_2)} \\
        \intertext{letting $M = m_1 \bar m_2$ and $m = m_2$, and using Schur orthogonality, this becomes}
        &= \left( \frac{\dimR}{\abs{N_C}} \right)^{1/2} \, \sum_{M \in N_C}  \, R^{j j'_2}(M)^* \, \ket{\eta_n^{C;ib_2}(M)} = \qdpure[RC;u v_2].
    \end{align*}
\end{proof}

This Lemma tells us that $u$ is a ``bulk" label, as the operator that changes $u_1$ to $u_2$ is $A_{\site}^{RC; u_2 u_1} \in \cstar[{\eregion[1]}]$. We also see that $v$ is a ``boundary" label, as the operator that changes $v_1$ to $v_2$ is $\tilde{A}_n^{RC; v_2 v_1} \in \cstar[{\eregion \setminus \eregion[n-1]}]$.

We can also detect the boundary data by operators supported on $\eregion \setminus \eregion[n-1]$. Recall from Definition \ref{def:boundary condition projector} the projectors $P_b$ supported on $\partial \eregion$ that project onto states with boundary condition $b \in \bc$.

\begin{lemma}
    \label{lem:action of two different label changers on eta}
    For any $v_1,v_2, v \in I'_{RC}$ such that $v_1 = (b_0, j'_1)$ and $v_2 = (b_0, j'_2)$ (\ie they have the same boundary label $b_0$), we have $$(\tilde{A}_n^{RC;v_2 v})^{*} \tilde{A}_n^{RC; v_1 v} \ket{\qdstpure[RC;u v]}= \delta_{v_1v_2}\ket{\qdstpure[RC;u v]}$$
\end{lemma}
\begin{proof}

    Using lemma \ref{lem:aconverter} we get,
    \begin{align*}
        (\tilde{A}_n^{RC;v v_2})^{*} \tilde{A}_n^{RC; v_1 v} \ket{\qdstpure[RC;u v]} &= (\tilde{A}_n^{RC;v_2 v})^{*}  \ket{\qdstpure[RC;u v_1]}\\
        &= \left( \frac{\dimR}{\abs{N_C}} \right)^{3/2} \sum_{m,m' \in N_C} R^{j'_2 j'}(m')^* R^{jj'_1}(m)^*  \ket{\eta_n^{C;ib}(mm')}\\
        \intertext{Now we relabel $mm' = M$ and use Schur orthogonality to get}
        &=  \left( \frac{\dimR}{\abs{N_C}} \right)^{1/2} \sum_{M \in N_C} \sum_{ j_3'} \delta_{j_3' j} \delta_{j'_2 j_1'}  R^{j'_3 j'}(M)^*  \ket{\eta_n^{C;ib}(M)}\\
        &= \delta_{j_1', j'_2}  \ket{\qdstpure[RC;u v]} = \delta_{v_1, v_2}  \ket{\qdstpure[RC;u v]}
    \end{align*}
\end{proof}

\printbibliography

@ARTICLE{Fidkowski2010-gj,
  title    = "Entanglement spectrum of topological insulators and
              superconductors",
  author   = "Fidkowski, L.",
  abstract = "We study two a priori unrelated constructions: the spectrum of
              edge modes in a band topological insulator or superconductor with
              a physical edge, and the ground state entanglement spectrum in an
              extended system where an edge is simulated by an entanglement
              bipartition. We prove an exact relation between the ground state
              entanglement spectrum of such a system and the spectrum edge
              modes of the corresponding spectrally flattened Hamiltonian. In
              particular, we show that gapless edge modes result in
              degeneracies of the entanglement spectrum.",
  journal  = "Phys. Rev. Lett.",
  volume   =  104,
  number   =  13,
  pages    = "130502",
  month    =  apr,
  year     =  2010,
  keywords = "Finiteness of sectors",
  archivePrefix = "arXiv",
  %primaryClass  = "cond-mat.str-el",
  eprint        = "0909.2654"
}

@ARTICLE{Li2008-le,
  title     = "Entanglement spectrum as a generalization of entanglement
               entropy: identification of topological order in non-Abelian
               fractional quantum Hall effect states",
  author    = "Li, H. and Haldane, F. D. M.",
  abstract  = "We study the ``entanglement spectrum'' (a presentation of the
               Schmidt decomposition analogous to a set of ``energy levels'')
               of a many-body state, and compare the Moore-Read model wave
               function for the nu=5/2 fractional quantum Hall state with a
               generic 5/2 state obtained by finite-size diagonalization of the
               second-Landau-level-projected Coulomb interactions. Their
               spectra share a common ``gapless'' structure, related to
               conformal field theory. In the model state, these are the only
               levels, while in the ``generic'' case, they are separated from
               the rest of the spectrum by a clear ``entanglement gap'', which
               appears to remain finite in the thermodynamic limit. We propose
               that the low-lying entanglement spectrum can be used as a
               ``fingerprint'' to identify topological order.",
  journal   = "Phys. Rev. Lett.",
  publisher = "APS",
  volume    =  101,
  number    =  1,
  pages     = "010504",
  month     =  jul,
  year      =  2008,
  keywords  = "Finiteness of sectors",
  archivePrefix = "arXiv",
  %primaryClass  = "cond-mat.mes-hall",
  eprint        = "0805.0332"
}

@ARTICLE{Nayak2008-ef,
  title     = "{Non-Abelian} anyons and topological quantum computation",
  author    = "Nayak, C. and Simon, S. H. and Stern, A. and Freedman,
               M. and Das Sarma, S.",
  journal   = "Rev. Mod. Phys.",
  publisher = "American Physical Society",
  volume    =  80,
  number    =  3,
  pages     = "1083--1159",
  month     =  sep,
  year      =  2008,
  keywords  = "3dTC\_sectors;Finiteness of sectors",
  archivePrefix = "arXiv",
  %primaryClass  = "cond-mat.str-el",
  eprint        = "0707.1889"
}

@ARTICLE{Freedman1998-tz,
  title    = "{\textit{P/NP}}, and the quantum field computer",
  author   = "Freedman, M.",
  abstract = "The central problem in computer science is the conjecture that
              two complexity classes, P (polynomial time) and NP
              (nondeterministic polynomial time-roughly those decision problems
              for which a proposed solution can be checked in polynomial time),
              are distinct in the standard Turing model of computation: P not
              equal NP. As a generality, we propose that each physical theory
              supports computational models whose power is limited by the
              physical theory. It is well known that classical physics supports
              a multitude of implementation of the Turing machine. Non-Abelian
              topological quantum field theories exhibit the mathematical
              features necessary to support a model capable of solving all \#P
              problems, a computationally intractable class, in polynomial
              time. Specifically, Witten [Witten, E. (1989) Commun. Math. Phys.
              121, 351-391] has identified expectation values in a certain
              SU(2)-field theory with values of the Jones polynomial [Jones, V.
              (1985) Bull. Am. Math. Soc. 12, 103-111] that are \#P-hard
              [Jaeger, F., Vertigen, D. \& Welsh, D. (1990) Math. Proc. Comb.
              Philos. Soc. 108, 35-53]. This suggests that some physical system
              whose effective Lagrangian contains a non-Abelian topological
              term might be manipulated to serve as an analog computer capable
              of solving NP or even \#P-hard problems in polynomial time.
              Defining such a system and addressing the accuracy issues
              inherent in preparation and measurement is a major unsolved
              problem.",
  journal  = "Proc. Natl. Acad. Sci. U. S. A.",
  volume   =  95,
  number   =  1,
  pages    = "98--101",
  month    =  jan,
  year     =  1998
}

@ARTICLE{Gould1993-bt,
  title     = "Quantum double finite group algebras and their representations",
  author    = "Gould, M. D.",
  abstract  = "The quantum double construction is applied to the group algebra
               of a finite group. Such algebras are shown to be semi-simple and
               a complete theory of characters is developed. The irreducible
               matrix representations are classified and applied to the
               explicit construction of R-matrices: this affords solutions to
               the Yang-Baxter equation associated with certain induced
               representations of a finite group. These results are applied in
               the second paper of the series to construct unitary
               representations of the Braid group and corresponding link
               polynomials.",
  journal   = "Bull. Aust. Math. Soc.",
  publisher = "Cambridge University Press",
  volume    =  48,
  number    =  2,
  pages     = "275--301",
  month     =  oct,
  year      =  1993
}

@ARTICLE{Doplicher1971-jd,
  title     = "{Local observables and particle statistics I}",
  author    = "Doplicher, S. and Haag, R. and Roberts, J. E.",
  abstract  = "We consider the family of those states which become
               asymptotically indistinguishable from the vacuum for
               observations in far away regions of space. The pure states of
               this family may be subdivided into superselection sectors
               labelled by generalized charge quantum numbers. The principle of
               locality implies that within this family one may define a
               natural product composition (leading for instance from single
               particle states ton-particle states). Intrinsically associated
               with then-fold product of states of one sector there is a
               unitary representation ofP (n), the permutation group ofn
               elements, analogous in its role to that arising in wave
               mechanics from the permutations of the arguments of ann-particle
               wave function. We show that each sector possesses a ``statistics
               parameter'' $\lambda$ which determines the nature of the
               representation ofP (n) for alln and whose possible values are 0,
               $\pm$d −1 (d a positive integer). A sector with $\lambda$ $\neq$
               0 has a unique charge conjugate (``antiparticle'' states); if
               $\lambda$=d −1 the states of the sector obey para-Bose
               statistics of orderd, if $\lambda$=−d −1 they obey para-Fermi
               statistics of orderd. Some conditions which restrict $\lambda$
               to $\pm$ 1 (ordinary Bose or Fermi statistics) are given.",
  journal   = "Commun. Math. Phys.",
  publisher = "Springer Science and Business Media LLC",
  volume    =  23,
  number    =  3,
  pages     = "199--230",
  month     =  sep,
  year      =  1971,
  keywords  = "3dTC\_sectors;Finiteness of sectors",
  %language  = "en"
}

@ARTICLE{Doplicher1974-hb,
  title    = "Local observables and particle statistics {II}",
  author   = "Doplicher, S. and Haag, R. and Roberts, J. E.",
  abstract = "Starting from the principles of local relativistic Quantum Theory
              without long range forces, we study the structure of the set of
              superselection sectors (charge quantum numbers) and its
              implications for the particle aspects of the theory. Without
              assuming the commutation properties (or even the existence) of
              unobservable fields connecting different sectors (charge-carrying
              fields), one has a particle-antiparticle symmetry, an intrinsic
              notion of statistics for identical particles, and a
              spin-statistics theorem. Particles in ``pseudoreal sectors''
              cannot be their own antiparticles (a variant of Carruthers'
              theorem). We also show how scattering states and transition
              probabilities are obtained in this frame.",
  journal  = "Commun. Math. Phys.",
  volume   =  35,
  pages    = "49--85",
  month    =  mar,
  year     =  1974,
  keywords = "3dTC\_sectors;Finiteness of sectors"
}

@article{fredenhagen1989superselection,
  title={Superselection sectors with braid group statistics and exchange algebras: I. General theory},
  author={Fredenhagen, K. and Rehren, K.H. and Schroer, B.},
  journal={Communications in Mathematical Physics},
  volume={125},
  pages={201--226},
  year={1989},
  publisher={Springer}
}

@article{fredenhagen1992superselection,
  title={Superselection sectors with braid group statistics, II},
  author={Fredenhagen, K. and Rehren, K.H. and Schroer, B.},
  journal={Rev. Math. Phys. SI1},
  year = {1992},
  pages={113--157}
}

@article{frohlich1990braid,
  title={Braid statistics in local quantum theory},
  author={Fr{\"o}hlich, J. and Gabbiani, F.},
  journal={Reviews in Mathematical Physics},
  volume={2},
  number={03},
  pages={251--353},
  year={1990},
  publisher={World Scientific}
}

@article{szlachanyi1993quantum,
  title={Quantum symmetry and braid group statistics in G-spin models},
  author={Szlach{\'a}nyi, K. and Vecserny{\'e}s, P.},
  journal={Communications in Mathematical Physics},
  volume={156},
  pages={127--168},
  year={1993},
  publisher={Springer}
}

@article{vecsernyes1994quantum,
  title={On the quantum symmetry of the chiral Ising model},
  author={Vecserny{\'e}s, P.},
  journal={Nuclear Physics B},
  volume={415},
  number={3},
  pages={557--588},
  year={1994},
  publisher={Elsevier},
  archivePrefix = "arXiv",
  eprint = "hep-th/9306118"
}

@article{fuchs1994quantum,
  title={The quantum symmetry of rational field theories},
  author={Fuchs, J. and Ganchev, A. and Vecserny{\'e}s, P.},
  journal={Theoretical and Mathematical Physics},
  volume={98},
  number={3},
  pages={266--276},
  year={1994},
  publisher={Springer},
  archivePrefix = "arXiv",
  eprint = "hep-th/9407013"
}

@article{nill1997quantum,
  title={Quantum chains of Hopf algebras with quantum double cosymmetry},
  author={Nill, F. and Szlach{\'a}nyi, K.},
  journal={Communications in Mathematical Physics},
  volume={187},
  number={1},
  pages={159--200},
  year={1997},
  publisher={Springer},
  archivePrefix = "arXiv",
  eprint = "hep-th/9509100"
}

@ARTICLE{Kitaev2003-qr,
  title    = "Fault-tolerant quantum computation by anyons",
  author   = "Kitaev, A. Y.",
  abstract = "A two-dimensional quantum system with anyonic excitations can be
              considered as a quantum computer. Unitary transformations can be
              performed by moving the excitations around each other.
              Measurements can be performed by joining excitations in pairs and
              observing the result of fusion. Such computation is
              fault-tolerant by its physical nature.",
  journal  = "Ann. Phys.",
  volume   =  303,
  number   =  1,
  pages    = "2--30",
  month    =  jan,
  year     =  2003,
  keywords = "3dTC\_sectors;Finiteness of sectors",
  archivePrefix = "arXiv",
  eprint        = "quant-ph/9707021"
}

@article{alicki2007statistical,
  title={A statistical mechanics view on Kitaev's proposal for quantum memories},
  author={Alicki, Robert and Fannes, M and Horodecki, Micha{\l}},
  journal={Journal of Physics A: Mathematical and Theoretical},
  volume={40},
  number={24},
  pages={6451},
  year={2007},
  publisher={IOP Publishing},
  archivePrefix = "arXiv",
  eprint        = "quant-ph/0702102"
}

@ARTICLE{Naaijkens2010-aq,
  title         = "Localized endomorphisms in Kitaev's toric code on the plane",
  author        = "Naaijkens, P.",
  abstract      = "We consider various aspects of Kitaev's toric code model on
                   a plane in the C^*-algebraic approach to quantum spin
                   systems on a lattice. In particular, we show that elementary
                   excitations of the ground state can be described by
                   localized endomorphisms of the observable algebra. The
                   structure of these endomorphisms is analyzed in the spirit
                   of the Doplicher-Haag-Roberts program (specifically, through
                   its generalization to infinite regions as considered by
                   Buchholz and Fredenhagen). Most notably, the statistics of
                   excitations can be calculated in this way. The excitations
                   can equivalently be described by the representation theory
                   of D(Z\_2), i.e., Drinfel'd's quantum double of the group
                   algebra of Z\_2.",
  month         =  dec,
  year          =  2010,
  keywords      = "3dTC\_sectors;Finiteness of sectors",
  archivePrefix = "arXiv",
  eprint        = "1012.3857",
  journal={Reviews in Mathematical Physics},
  volume={23},
  number={04},
  pages={347--373},
  year={2011},
  publisher={World Scientific}
}

@ARTICLE{Naaijkens2015-xj,
  title         = "Kitaev's quantum double model from a local quantum physics
                   point of view",
  author        = "Naaijkens, P.",
  abstract      = "A prominent example of a topologically ordered system is
                   Kitaev's quantum double model $\mathcal\{D\}(G)$ for finite
                   groups $G$ (which in particular includes $G =
                   \mathbb\{Z\}_2$, the toric code). We will look at these
                   models from the point of view of local quantum physics. In
                   particular, we will review how in the abelian case, one can
                   do a Doplicher-Haag-Roberts analysis to study the different
                   superselection sectors of the model. In this way one finds
                   that the charges are in one-to-one correspondence with the
                   representations of $\mathcal\{D\}(G)$, and that they are in
                   fact anyons. Interchanging two of such anyons gives a
                   non-trivial phase, not just a possible sign change. The case
                   of non-abelian groups $G$ is more complicated. We outline
                   how one could use amplimorphisms, that is, morphisms $A \to
                   M_n(A)$ to study the superselection structure in that case.
                   Finally, we give a brief overview of applications of
                   topologically ordered systems to the field of quantum
                   computation.",
  month         =  aug,
  year          =  2015,
  keywords      = "Finiteness of sectors",
  archivePrefix = "arXiv",
  eprint        = "1508.07170",
  journal={Advances in algebraic quantum field theory},
  pages={365--395},
  year={2015},
  publisher={Springer}
}

@ARTICLE{Naaijkens2012-fh,
  title     = "Haag duality and the distal split property for cones in the
               toric code",
  author    = "Naaijkens, P.",
  journal   = "Lett. Math. Phys.",
  publisher = "Springer Science and Business Media LLC",
  volume    =  101,
  number    =  3,
  pages     = "341--354",
  month     =  sep,
  year      =  2012,
  keywords  = "Finiteness of sectors",
  archivePrefix = "arXiv",
  eprint        = "1106.4171v2"
}

@ARTICLE{Fiedler2015-na,
  title     = "Haag duality for Kitaev's quantum double model for abelian
               groups",
  author    = "Fiedler, L. and Naaijkens, P.",
  abstract  = "We prove Haag duality for cone-like regions in the ground state
               representation corresponding to the translational invariant
               ground state of Kitaev's quantum double model for finite abelian
               groups. This property says that if an observable commutes with
               all observables localized outside the cone region, it actually
               is an element of the von Neumann algebra generated by the local
               observables inside the cone. This strengthens locality, which
               says that observables localized in disjoint regions commute. As
               an application, we consider the superselection structure of the
               quantum double model for abelian groups on an infinite lattice
               in the spirit of the Doplicher--Haag--Roberts program in
               algebraic quantum field theory. We find that, as is the case for
               the toric code model on an infinite lattice, the superselection
               structure is given by the category of irreducible
               representations of the quantum double.",
  journal   = "Rev. Math. Phys.",
  publisher = "World Scientific Pub Co Pte Lt",
  volume    =  27,
  number    =  09,
  pages     = "1550021",
  month     =  oct,
  year      =  2015,
  keywords  = "3dTC\_sectors;Finiteness of sectors",
  archivePrefix = "arXiv",
  eprint        = "1406.1084"
}

@phdthesis{naaijkens2012anyons,
  title={Anyons in infinite quantum systems: QFT in d= 2+ 1 and the toric code},
  author={Naaijkens, Pieter},
  year={2012},
  school={Radboud Universiteit Nijmegen}
}

@article{naaijkens2013kosaki,
  title={Kosaki-Longo index and classification of charges in 2D quantum spin models},
  author={Naaijkens, Pieter},
  journal={Journal of Mathematical Physics},
  volume={54},
  number={8},
  year={2013},
  publisher={AIP Publishing},
  archivePrefix = "arXiv",
  eprint = "1303.4420"
}

@ARTICLE{Cha2018-ke,
  title     = "{The complete set of infinite volume ground states for Kitaev's
               abelian quantum double models}",
  author    = "Cha, M. and Naaijkens, P. and Nachtergaele, B.",
  abstract  = "… the set of infinite volume ground states of Kitaev 's quantum
               double model on Z2 for an arbitrary finite abelian … of this
               paper is a complete classification of the set of infinite volume
               ground …",
  journal   = "Communications in Mathematical Physics",
  publisher = "Springer",
  volume    =  357,
  number    =  1,
  pages     = "125--157",
  year      =  2018,
  archivePrefix = "arXiv",
  eprint        = "1608.04449"
}

@ARTICLE{Cui2019-mr,
  title         = "Kitaev's quantum double model as an error correcting code",
  author        = "Cui, S. X. and Ding, D. and Han, X. and Penington,
                   G. and Ranard, D. and Rayhaun, B. C. and
                   Shangnan, Z.",
  abstract      = "Kitaev's quantum double models in 2D provide some of the
                   most commonly studied examples of topological quantum order.
                   In particular, the ground space is thought to yield a
                   quantum error-correcting code. We offer an explicit proof
                   that this is the case for arbitrary finite groups. Actually
                   a stronger claim is shown: any two states with zero energy
                   density in some contractible region must have the same
                   reduced state in that region. Alternatively, the local
                   properties of a gauge-invariant state are fully determined
                   by specifying that its holonomies in the region are trivial.
                   We contrast this result with the fact that local properties
                   of gauge-invariant states are not generally determined by
                   specifying all of their non-Abelian fluxes -- that is, the
                   Wilson loops of lattice gauge theory do not form a complete
                   commuting set of observables. We also note that the methods
                   developed by P. Naaijkens (PhD thesis, 2012) under a
                   different context can be adapted to provide another proof of
                   the error correcting property of Kitaev's model. Finally, we
                   compute the topological entanglement entropy in Kitaev's
                   model, and show, contrary to previous claims in the
                   literature, that it does not depend on whether the ``log dim
                   R'' term is included in the definition of entanglement
                   entropy.",
  month         =  aug,
  year          =  2019,
  keywords      = "Finiteness of sectors",
  archivePrefix = "arXiv",
  %primaryClass  = "quant-ph",
  eprint        = "1908.02829",
  journal={Quantum},
  volume={4},
  pages={331},
  year={2020},
  publisher={Verein zur F{\"o}rderung des Open Access Publizierens in den Quantenwissenschaften}
}

@ARTICLE{Bombin2007-uw,
  title         = "A Family of {Non-Abelian} Kitaev Models on a Lattice:
                   Topological Confinement and Condensation",
  author        = "Bombin, H. and Martin-Delgado, M. A.",
  abstract      = "We study a family of non-Abelian topological models in a
                   lattice that arise by modifying the Kitaev model through the
                   introduction of single-qudit terms. The effect of these
                   terms amounts to a reduction of the discrete gauge symmetry
                   with respect to the original systems, which corresponds to a
                   generalized mechanism of explicit symmetry breaking. The
                   topological order is either partially lost or completely
                   destroyed throughout the various models. The new systems
                   display condensation and confinement of the topological
                   charges present in the standard non-Abelian Kitaev models,
                   which we study in terms of ribbon operator algebras.",
  number        =  11,
  pages         = "1--28",
  month         =  dec,
  year          =  2007,
  keywords      = "Mendeley Import (Apr 22);DTCM;Finiteness of sectors",
  archivePrefix = "arXiv",
  eprint        = "0712.0190",
  journal={Physical Review B},
  volume={78},
  number={11},
  pages={115421},
  year={2008},
  publisher={APS},
}

@inproceedings{ogata2023type,
  title={Type of Local von Neumann Algebras in Abelian Quantum Double Models},
  author={Ogata, Yoshiko},
  booktitle={Annales Henri Poincar{\'e}},
  pages={1--35},
  year={2023},
  organization={Springer},
  archivePrefix = "arXiv",
  eprint        = "2212.09036"
}

@article{chuah2024boundary,
  title={Boundary algebras of the Kitaev quantum double model},
  author={Chuah, Chian Yeong and Hungar, Brett and Kawagoe, Kyle and Penneys, David and Tomba, Mario and Wallick, Daniel and Wei, Shuqi},
  journal={Journal of Mathematical Physics},
  volume={65},
  number={10},
  year={2024},
  publisher={AIP Publishing},
  archivePrefix = "arXiv",
  eprint = "2309.13440"
}

@ARTICLE{Ogata2022-wp,
  title     = "A derivation of braided C*-tensor categories from gapped ground
               states satisfying the approximate Haag duality",
  author    = "Ogata, Y.",
  abstract  = "We derive braided C*-tensor categories from gapped ground states
               on two-dimensional quantum spin systems satisfying some
               additional condition, which we call the approximate Haag
               duality.",
  journal   = "J. Math. Phys.",
  publisher = "AIP Publishing",
  volume    =  63,
  number    =  1,
  pages     = "011902",
  month     =  jan,
  year      =  2022,
  keywords  = "Finiteness of sectors",
  archivePrefix = "arXiv",
  eprint        = "2106.15741"
}

@ARTICLE{Cha2020-rz,
  title     = "On the Stability of Charges in Infinite Quantum Spin Systems",
  author    = "Cha, M. and Naaijkens, P. and Nachtergaele, B.",
  abstract  = "We consider a theory of superselection sectors for infinite
               quantum spin systems, describing charges that can be
               approximately localized in cone-like regions. The primary
               examples we have in mind are the anyons (or charges) in
               topologically ordered models such as Kitaev's quantum double
               models, and perturbations of such models. In order to cover the
               case of perturbed quantum double models, the
               Doplicher--Haag--Roberts approach, in which strict localization
               is assumed, has to be amended. To this end we consider
               endomorphisms of the observable algebra that are almost
               localized in cones. Under natural conditions on the reference
               ground state (which plays a role analogous to the vacuum state
               in relativistic theories), we obtain a braided tensor
               $$C^*$$-category describing the sectors. We also introduce a
               superselection criterion selecting excitations with energy below
               a threshold. When the threshold energy falls in a gap of the
               spectrum of the ground state, we prove stability of the entire
               superselection structure under perturbations that do not close
               the gap. We apply our results to prove that all essential
               properties of the anyons in Kitaev's abelian quantum double
               models are stable against perturbations.",
  journal   = "Communications in Mathematical Physics",
  publisher = "Springer Science and Business Media LLC",
  volume    =  373,
  number    =  1,
  pages     = "219--264",
  month     =  jan,
  year      =  2020,
  archivePrefix = "arXiv",
  eprint        = "1804.03203"
}

@ARTICLE{Bols2023-qg,
  title={The double semion state in infinite volume},
  author={Bols, Alex and Kjaer, Boris and Moon, Alvin},
  journal={Annales Henri Poincar{\'e}},
  pages={1--45},
  year={2024},
  organization={Springer},
  archivePrefix = "arXiv",
  eprint        = "2306.13762"
}

@ARTICLE{Shi2019-tl,
  title         = "Fusion rules from entanglement",
  author        = "Shi, B. and Kato, K. and Kim, I. H.",
  abstract      = "We derive some of the axioms of the algebraic theory of
                   anyon [A. Kitaev, Ann. Phys., 321, 2 (2006)] from a
                   conjectured form of entanglement area law for
                   two-dimensional gapped systems. We derive the fusion rules
                   of topological charges and show that the multiplicities of
                   the fusion rules satisfy these axioms. Moreover, even though
                   we make no assumption about the exact value of the constant
                   sub-leading term of the entanglement entropy of a disk-like
                   region, this term is shown to be equal to $\ln
                   \mathcal\{D\}$, where $\mathcal\{D\}$ is the total quantum
                   dimension of the underlying anyon theory. These derivations
                   are rigorous and follow from the entanglement area law
                   alone. More precisely, our framework starts from two local
                   entropic constraints, which are implied by the area law.
                   From these constraints, we prove what we refer to as the
                   ``isomorphism theorem.'' The existence of superselection
                   sectors and fusion multiplicities follows from this theorem,
                   even without assuming anything about the parent Hamiltonian.
                   These objects and the axioms of the anyon theory are shown
                   to emerge from the structure and the internal
                   self-consistency relations of the information convex sets.",
  month         =  jun,
  year          =  2019,
  keywords      = "Finiteness of sectors",
  archivePrefix = "arXiv",
  %primaryClass  = "cond-mat.str-el",
  eprint        = "1906.09376",
  journal={Annals of Physics},
  volume={418},
  pages={168164},
  year={2020},
  publisher={Elsevier}

}

@article{bravyi2011short,
  title={A short proof of stability of topological order under local perturbations},
  author={Bravyi, S. and Hastings, M. B.},
  journal={Communications in Mathematical Physics},
  volume={307},
  pages={609--627},
  year={2011}, 
  publisher={Springer},
  archivePrefix = "arXiv",
  %primaryClass  = "math-ph",
  eprint        = "1001.4363"
}

@article{bravyi2010topological,
  title={Topological quantum order: stability under local perturbations},
  author={Bravyi, S. and Hastings, M. B. and Michalakis, S.},
  volume={51},
  number={9},
  year={2010},
  publisher={AIP Publishing},
  journal ={Journal of Mathematical Physics},
  archivePrefix = "arXiv",
  %primaryClass  = "quant-ph",
  eprint        = "1001.0344"
}

@ARTICLE{Kitaev2006-ts,
  title    = "Anyons in an exactly solved model and beyond",
  author   = "Kitaev, A. Y.",
  abstract = "A spin-1/2 system on a honeycomb lattice is studied. The
              interactions between nearest neighbors are of XX, YY or ZZ type,
              depending on the direction of the link; different types of
              interactions may differ in strength. The model is solved exactly
              by a reduction to free fermions in a static Z2 gauge field. A
              phase diagram in the parameter space is obtained. One of the
              phases has an energy gap and carries excitations that are Abelian
              anyons. The other phase is gapless, but acquires a gap in the
              presence of magnetic field. In the latter case excitations are
              non-Abelian anyons whose braiding rules coincide with those of
              conformal blocks for the Ising model. We also consider a general
              theory of free fermions with a gapped spectrum, which is
              characterized by a spectral Chern number $\nu$. The Abelian and
              non-Abelian phases of the original model correspond to $\nu$=0
              and $\nu$=$\pm$1, respectively. The anyonic properties of
              excitation depend on $\nu$ mod 16, whereas $\nu$ itself governs
              edge thermal transport. The paper also provides mathematical
              background on anyons as well as an elementary theory of Chern
              number for quasidiagonal matrices.",
  journal  = "Ann. Phys.",
  volume   =  321,
  number   =  1,
  pages    = "2--111",
  month    =  jan,
  year     =  2006,
  archivePrefix = "arXiv",
  %primaryClass  = "cond-mat.mes-hall",
  eprint        = "cond-mat/0506438"
}

@ARTICLE{Kawagoe2020-vt,
  title     = "Microscopic definitions of anyon data",
  author    = "Kawagoe, K. and Levin, M.",
  journal   = "Phys. Rev. B Condens. Matter",
  publisher = "American Physical Society (APS)",
  volume    =  101,
  number    =  11,
  month     =  mar,
  year      =  2020,
  keywords  = "Category theory;Qual stuff;Finiteness of sectors",
  copyright = "https://link.aps.org/licenses/aps-default-license",
  archivePrefix = "arXiv",
  %primaryClass  = "cond-mat.str-el",
  eprint        = "1910.11353"
}

@article{levin2005string,
  title={String-net condensation: A physical mechanism for topological phases},
  author={Levin, Michael A and Wen, Xiao-Gang},
  journal={Physical Review B—Condensed Matter and Materials Physics},
  volume={71},
  number={4},
  pages={045110},
  year={2005},
  publisher={APS},
  archivePrefix = "arXiv",
  eprint        = "cond-mat/0404617"
}

@ARTICLE{Vadnerkar2023-mm,
  title         = "Superselection sectors in the 3d Toric Code",
  author        = "Vadnerkar, S.",
  abstract      = "We rigorously define superselection sectors in the 3d
                   (spatial dimensions) Toric Code Model on the infinite
                   lattice $\mathbb\{Z\}^3$. We begin by constructing
                   automorphisms that correspond to infinite flux strings, a
                   phenomenon that's only possible in open manifolds. We then
                   classify all ground state superselection sectors containing
                   infinite flux strings, and find a rich structure that
                   depends on the geometry and number of strings in the
                   configuration. In particular, for a single infinite flux
                   string configuration to be a ground state, it must be
                   monotonic. For configurations containing multiple infinite
                   flux strings, we define ``infinity directions'' and use that
                   to establish a necessary and sufficient condition for a
                   state to be in a ground state superselection sector.
                   Notably, we also find that if a state contains more than 3
                   infinite flux strings, then it is not in a ground state
                   superselection sector.",
  month         =  aug,
  year          =  2023,
  keywords      = "Finiteness of sectors",
  archivePrefix = "arXiv",
  %primaryClass  = "math-ph",
  eprint        = "2308.06883"
}

@article{keyl2006entanglement,
  title={Entanglement, Haag-duality and type properties of infinite quantum spin chains},
  author={Keyl, M. and Matsui, T. and Schlingemann, D. and Werner, R. F.},
  journal={Reviews in Mathematical Physics},
  volume={18},
  number={09},
  pages={935--970},
  year={2006},
  publisher={World Scientific},
  archivePrefix = "arXiv",
  %primaryClass  = "math-ph",
  eprint        = "math-ph/0604071"
}

@article{yan2022ribbon,
  title={Ribbon operators in the generalized Kitaev quantum double model based on Hopf algebras},
  author={Yan, Bowen and Chen, Penghua and Cui, Shawn X},
  journal={Journal of Physics A: Mathematical and Theoretical},
  volume={55},
  number={18},
  pages={185201},
  year={2022},
  publisher={IOP Publishing},
  archivePrefix = "arXiv",
  eprint        = "2105.08202"
}

@article{jones2023local,
  title={Local topological order and boundary algebras},
  author={Jones, Corey and Naaijkens, Pieter and Penneys, David and Wallick, Daniel},
  year={2023},
  archivePrefix = "arXiv",
  eprint        = "2307.12552"
}

@BOOK{Bratteli2012-gd,
  title     = "Operator Algebras and Quantum Statistical Mechanics: Volume 1:
               C*- and W*- Algebras. Symmetry Groups. Decomposition of States",
  author    = "Bratteli, O. and Robinson, D. W.",
  abstract  = "In this book we describe the elementary theory of operator
               algebras and parts of the advanced theory which are of
               relevance, or potentially of relevance, to mathematical physics.
               Subsequently we describe various applications to quantum
               statistical mechanics. At the outset of this project we intended
               to cover this material in one volume but in the course of
               develop ment it was realized that this would entail the omission
               of various interesting topics or details. Consequently the book
               was split into two volumes, the first devoted to the general
               theory of operator algebras and the second to the applications.
               This splitting into theory and applications is conventional but
               somewhat arbitrary. In the last 15-20 years mathematical
               physicists have realized the importance of operator algebras and
               their states and automorphisms for problems offield theory and
               statistical mechanics. But the theory of 20 years ago was
               largely developed for the analysis of group representations and
               it was inadequate for many physical applications. Thus after a
               short honey moon period in which the new found tools of the
               extant theory were applied to the most amenable problems a
               longer and more interesting period ensued in which mathematical
               physicists were forced to redevelop the theory in relevant
               directions. New concepts were introduced, e. g. asymptotic
               abelian ness and KMS states, new techniques applied, e. g. the
               Choquet theory of barycentric decomposition for states, and new
               structural results obtained, e. g. the existence of a continuum
               of nonisomorphic type-three factors.",
  publisher = "Springer",
  month     =  dec,
  year      =  1987,
  keywords  = "3dTC\_sectors;Finiteness of sectors",
  %language  = "en"
}

@book{Weidmann,
  title={Linear Operations in Hilbert Spaces},
  author={Weidmann, J.},
  year={1980},
  publisher={Springer},
  doi={https://doi.org/10.1007/978-1-4612-6027-1}
}

@book{kadison1986fundamentals,
  title={Fundamentals of the theory of operator algebras. Volume II: Advanced theory},
  author={Kadison, R. V. and Ringrose, J. R.},
  year={1986},
  publisher={Academic press New York}
}

@ARTICLE{Tsui1982-et,
  title     = "{Two-Dimensional} Magnetotransport in the Extreme Quantum Limit",
  author    = "Tsui, D C and Stormer, H L and Gossard, A C",
  journal   = "Phys. Rev. Lett.",
  publisher = "American Physical Society",
  volume    =  48,
  number    =  22,
  pages     = "1559--1562",
  month     =  may,
  year      =  1982,
  keywords  = "Finiteness of sectors"
}

@ARTICLE{Levin2004-mj,
  title         = "String-net condensation: A physical mechanism for
                   topological phases",
  author        = "Levin, Michael A and Wen, Xiao-Gang",
  abstract      = "We show that quantum systems of extended objects naturally
                   give rise to a large class of exotic phases - namely
                   topological phases. These phases occur when the extended
                   objects, called ``string-nets'', become highly fluctuating
                   and condense. We derive exactly soluble Hamiltonians for 2D
                   local bosonic models whose ground states are string-net
                   condensed states. Those ground states correspond to 2D
                   parity invariant topological phases. These models reveal the
                   mathematical framework underlying topological phases: tensor
                   category theory. One of the Hamiltonians - a spin-1/2 system
                   on the honeycomb lattice - is a simple theoretical
                   realization of a fault tolerant quantum computer. The higher
                   dimensional case also yields an interesting result: we find
                   that 3D string-net condensation naturally gives rise to both
                   emergent gauge bosons and emergent fermions. Thus,
                   string-net condensation provides a mechanism for unifying
                   gauge bosons and fermions in 3 and higher dimensions.",
  number        = "April",
  month         =  apr,
  year          =  2004,
  keywords      = "Topological phases;DTCM;Finiteness of sectors",
  archivePrefix = "arXiv",
  primaryClass  = "cond-mat.str-el",
  eprint        = "cond-mat/0404617"
}

\end{document}